\documentclass[aps,pra,groupedaddress,tightenlines,notitlepage,nofootinbib,onecolumn,longbibliography, superscriptaddress,11pt]{revtex4-2}
\usepackage{amsmath,amssymb,graphicx,amsmath,amssymb,amsthm,physics,bm,bbm,color,xcolor,mathtools,anyfontsize,booktabs,multirow}%
\usepackage[sans]{dsfont}

\usepackage{tikz}
\usetikzlibrary{arrows.meta}
\usepackage{amsthm}
\usepackage[colorlinks=true, citecolor=magenta, linkcolor=blue,            urlcolor=blue, linktocpage=true,hyperfootnotes=false]{hyperref}

\newcommand\numeq[1]%
  {\stackrel{\scriptscriptstyle(\mkern-1.5mu#1\mkern-1.5mu)}{=}}

\newcommand\numleq[1]%
  {\stackrel{\scriptscriptstyle(\mkern-1.5mu#1\mkern-1.5mu)}{\le}}

\usepackage{newpxtext,newpxmath,array}
\def\tr{\operatorname{tr}}
\def\id{\operatorname{id}}

\newcommand{\dket}[1]{\vert #1\rangle\!\rangle}
\newcommand{\dbra}[1]{\langle\!\langle#1\vert}

\newcommand{\n}{\mathcal{N}}
\newcommand{\m}{\mathcal{M}}
\newcommand{\U}{\mathcal{U}}
\newcommand{\Q}{\mathcal{Q}}
\newcommand{\T}{\mathcal{T}}
\newcommand{\R}{\mathcal{R}^{\mathbbm{1}}}
\newcommand{\rp}{\mathcal{R}^{\pi}}
\newcommand{\NS}{{\rm S}_{A'\not\to B}}
\newcommand{\V}{\mathcal{V}}
\newcommand{\bS}{\mathbb{S}}
\newcommand{\bD}{\mathbb{D}}
\newcommand{\St}{\mathrm{St}}
\newcommand{\Ch}{\mathrm{Ch}}
\newcommand{\SCh}{\mathrm{SCh}}

\usepackage{cleveref}

\usepackage{accents}

\def\tr{\operatorname{tr}}
\def\supp{\operatorname{supp}}
\def\St{\operatorname{St}}
\def\Ch{\operatorname{Ch}}
\def\SCh{\operatorname{SCh}}

\def\id{\operatorname{id}}

\newtheorem{theorem}{Theorem}
\newtheorem*{theorem*}{Theorem}
\newtheorem{proposition}{Proposition}%
\newtheorem*{proposition*}{Proposition}
\newtheorem{corollary}{Corollary}
\newtheorem{lemma}{Lemma}
\newtheorem*{lemma*}{Lemma}
\newtheorem*{example*}{Example}%
\newtheorem{definition}{Definition}%

\begin{document}
\title{Conditional channel entropy sets fundamental limits on thermodynamic quantum information processing}

\author{Himanshu Badhani}\email{himanshubadhani@gmail.com}
\affiliation{q4i, Centre for Quantum Science and Technology (CQST), Center for Security, Theory and Algorithmic Research (CSTAR), International Institute of Information Technology Hyderabad, Gachibowli 500032, Telangana, India}

\author{Siddhartha Das}
\email{das.seed@iiit.ac.in}
\affiliation{q4i, Centre for Quantum Science and Technology (CQST), Center for Security, Theory and Algorithmic Research (CSTAR), International Institute of Information Technology Hyderabad, Gachibowli 500032, Telangana, India}

\begin{abstract}
The thermodynamic resourcefulness of quantum channels primarily depends on their underlying causal structure and their ability to generate quantum correlations. We quantify this interplay within the resource theory of athermality for bipartite quantum channels in the presence of a side channel acting as memory, referred to as the resource theory of conditional athermality. For channels with trivial output Hamiltonians, we characterize the optimal one-shot rates for distilling the identity gate from a given channel, as well as the cost of simulating the channel using the identity gate, under conditional Gibbs-preserving superchannels. We show that these rates have a direct trade-off relation with the conditional channel entropies, attributing operational significance to signaling in quantum processes. Furthermore, we establish an asymptotic equipartition property for the conditional channel min-entropy for classes of channels that are either tele-covariant or no-signaling from the non-conditioning input to the conditioning output. As a consequence, we demonstrate asymptotic reversibility of the resource theory for these channels. The asymptotic conditional athermality capacity of a tele-covariant channel is half the superdense coding capacity of its Choi state. Our work establishes the conditional channel entropy as a primitive information-theoretic concept for quantum processes, elucidating its potential for wider applications in quantum information science.
\end{abstract}

\maketitle

\section{Introduction}
\subsection{Motivation and Background}
Quantum processes, also called quantum channels, are able to generate correlations such as entanglement in quantum systems that have no classical counterpart~\cite{EPR35}. They encompass quantum states, measurements, and their transformations, thereby providing a general description of physical systems, their dynamics, and measurement processes in quantum theory. All quantum information processing and computation can be understood as transformations of states and gates, where these transformations correspond to compositions of quantum channels, including state preparation and measurement. Motivated by these basic and practical considerations, it is pertinent to understand how thermodynamic constraints limit the synthesis of bipartite quantum channels and the extraction of unitary gates locally from such channels. 

To address this, we pose the following two related questions for an arbitrary bipartite quantum channel $\n_{A'B'\to AB}$:
\begin{itemize}
\item \textit{To synthesize a single copy of $\n$ from a decoupled bipartite channel $\id \otimes \Q$ via conditional Gibbs-preserving superchannels, what is the minimum dimension of the identity (or any unitary) gate $\id$ required, for some side channel $\Q$?}
\item \textit{To extract a decoupled channel $\id \otimes \Q$ from a single use of $\n$ via conditional Gibbs-preserving superchannels, what is the maximum possible dimension of $\id$, for some side channel $\Q$?}
\end{itemize}
The restriction of allowed physical transformations for synthesis or extraction to conditional Gibbs-preserving superchannels implicitly encodes thermodynamic constraints, as will become clear in later discussions.

The answers to these questions provide fundamental limitations on the formation and distillation of bipartite quantum channels in the presence of a quantum side channel. We show that the one-shot and asymptotic optimal rates for an arbitrary channel under the action of conditional Gibbs-preserving superchannels are characterized by channel divergences between the given channel and a decoupled channel of the form $\T^\beta\otimes\Q$, where $\T^\beta$ is the absolutely thermal channel that always outputs the thermal state with inverse temperature $\beta$. The conditional Gibbs-preserving superchannels $\Omega$ preserve conditional thermality, $\Omega(\mathcal{T}^\beta\otimes\Q)=\mathcal{T}^\beta\otimes\Q'$ for some channel $\Q'$. The absolutely thermal channel has maximum channel entropy and minimum channel free energy~\cite{BGD25,BGD25b}. When the channels have trivial output Hamiltonians, the one-shot and asymptotic optimal rates are related to the conditional channel entropies introduced in~\cite{DGP24} (see also~\cite{CE16}). This provides a clear operational interpretation of conditional channel entropies. 

The transformation of a bipartite channel $\n_{A'B'\to AB}$ to $\id\otimes\Q$ incurs erasure of the intricate causal structure of $\n$. $\id\otimes\Q$ is decoupled and has no correlation generating ability, while $\n$ can be rich in correlation generating abilities. If $\n$ has any signaling or causal influence from $A'\to B$, $B'\to A$, or both then that needs to be erased. Analogously, the transformation of $\id\otimes\Q$ to $\n$ requires synthesis or preparation of intricate causal structure from a decoupled one. Our results highlight the interplay between the causal structure of a bipartite quantum channel~\cite{PHHH06,Per21,Sim26}, its ability to manipulate bipartite correlations~\cite{DFP11,Das19,BDWW19,KHD22,RMSZ25}, the cost of its synthesis, and its usefulness for extracting local unitary operations. We make several contributions toward determining fundamental limits on the conversion of bipartite channels under conditional Gibbs-preserving superchannels. In particular, the properties of conditional channel entropies capture how causal structure influences both the distillation yield and the formation cost of bipartite quantum channels. We present our main results in Section \ref{sec:result} and discuss their relations to prior works in Section \ref{sec:prior}; see preliminaries in Section \ref{section:prelim}. 

\subsection{Main Results}\label{sec:result}
 \begin{table}[ht!]
\centering
\small 
\renewcommand{\arraystretch}{2.0}
\begin{tabular}{|p{3cm}|p{12cm}|}
\hline
\multicolumn{2}{|c|}{\textbf{Conditional min-entropy}: {quantum \textcolor{red}{states} $\rho_{AB}$ vs.~\textcolor{blue}{channels} $\mathcal{N}_{A'B'\to AB}$}} \\ \hline

Definition & 
\begin{tabular}{@{}p{12cm}@{}} 
\textcolor{red}{$S_{\infty}(A|B)_{\rho}:=-\inf_{\sigma\in\mathrm{St}(B)}D_{\infty}(\rho_{AB}\Vert\mathbbm{1}_A\otimes\sigma_B)$}~\cite{Ren05,KRS09}\\ \hline 
\textcolor{violet}{$S_{\infty}[A|B]_{\mathcal{N}}:= -\inf_{\mathcal{Q}\in\mathrm{Ch}(B',B)}D_{\infty}[\mathcal{N}\Vert\mathcal{R}^\mathbbm{1}_{A'\to A}\otimes\mathcal{Q}_{B'\to B}]$~\cite{DGP24}} 
\end{tabular} \\ \hline

Dimensional bounds & 
\begin{tabular}{@{}p{12cm}@{}} 
\textcolor{red}{$-\log \min\{|A|,|B|\}\leq S_{\infty}(A|B)_{\rho}\leq \log|A|$}~\cite{Ren05} \\ \hline 
\textcolor{blue}{$-\log \min\{ |A'||B'||B|,|A'|^2|A|\} \leq S_{\infty}[A|B]_{\mathcal{N}}\leq \log|A|$} 
\end{tabular} \\ \hline

Minimal value & 
\begin{tabular}{@{}p{12cm}@{}} 
\textcolor{red}{$S_{\infty}(A|B)_{\rho}=-\log \min\{|A|,|B|\}$ iff $\rho_{AB}$ is a maximally entangled state.}~\cite{KRS09} \\ \hline 
\textcolor{blue}{$S_{\infty}[A|B]_{\mathcal{N}}=-\log \min\{ |A'||B'||B|,|A'|^2|A|\}$ iff $\mathcal{N}_{A'B'\to AB}$ is a maximally entangling operation.} 
\end{tabular} \\ \hline

Negative values & 
\begin{tabular}{@{}p{12cm}@{}} 
\textcolor{red}{$S_{\infty}(A|B) < 0$ implies $\rho_{AB}$ is (NPT) entangled.}~\cite{T21} \\ \hline 
\textcolor{blue}{$S_{\infty}[A|B]_{\mathcal{N}} < -\log |A|$ implies $\mathcal{N}_{A'B'\to AB}$ is signaling from $A'\to B$. 
\newline
$S_{\infty}[A|B]_{\mathcal{N}} < -\log |A'|$ implies $\mathcal{N}_{A'B'\to AB}$ is (NPT) entangling.} 
\end{tabular} \\ \hline

Nonnegative values & 
\begin{tabular}{@{}p{12cm}@{}} 
\textcolor{red}{For all PPT states $\rho_{AB}$, $S_{\infty}(A|B) \geq 0$.}~\cite{T21} \\ \hline 
\textcolor{blue}{For all completely PPT-preserving channels $\mathcal{N}_{A'B'\to AB}$, $S_{\infty}[A|B]_{\mathcal{N}}\geq -\log |A'|$.} 
\end{tabular} \\ \hline

Maximum value & 
\begin{tabular}{@{}p{12cm}@{}} 
\textcolor{red}{$S_{\infty}(A|B)=\log|A|$ iff $\rho_{AB}=\pi_A\otimes\omega_B$ for any $\omega\in\mathrm{St}(B)$.}~\cite{T21} \\ \hline 
\textcolor{blue}{$S_{\infty}[A|B]_{\mathcal{N}}=\log |A|$ iff $\mathcal{N}_{A'B'\to AB}=\mathcal{R}^{\pi}_{A'\to A}\otimes\mathcal{Q}_{B'\to B}$ for any $\mathcal{Q}\in\mathrm{Ch}(B',B)$.} 
\end{tabular} \\ \hline

Asymptotic equipartition property & 
\begin{tabular}{@{}p{12cm}@{}} 
\textcolor{red}{$\lim_{\varepsilon\to 0^+}\lim_{n\to \infty}\frac{1}{n}S^\varepsilon_{\infty} (A^n|B^n)_{\rho^{\otimes n}}=S(A|B)_{\rho}$.}~\cite{TCR09} \\ \hline 
\textcolor{blue}{In general, $\lim_{\varepsilon\to 0^+}\lim_{n\to \infty}\frac{1}{n}S^\varepsilon_{\infty} [A^n|B^n]_{\mathcal{N}^{\otimes \n}}\leq \inf_{\psi\in\St(RA'B')}S(A|RB)_{\n(\psi)}$ and $ \lim_{\varepsilon\to 0^+}\lim_{n\to \infty}\frac{1}{n}S^\varepsilon_{\infty} [A^n|B^n]_{\mathcal{N}^{\otimes \n}}\leq S(R_AA|R_BB)_{\Phi^\n}-\log|A'|$. If $\n$ is no-signaling from $A'\to B$ or tele-covariant, then $\lim_{\varepsilon\to 0^+}\lim_{n\to \infty}\frac{1}{n}S^\varepsilon_{\infty} [A^n|B^n]_{\mathcal{N}^{\otimes n}}=S[A|B]_\n$.} 
\end{tabular} \\ \hline
\end{tabular}
\vspace{10pt} 
\caption{We compare features between the conditional min-entropies of an arbitrary bipartite quantum state $\rho_{AB}$ and an arbitrary bipartite quantum channel $\mathcal{N}_{A'B'\to AB}$. $S[A|B]_{\n}$ is the von Neumann conditional entropy of $\n$~\cite{DGP24} and $S(A|B)_{\rho}$ is the von Neumann conditional entropy of $\rho$~\cite{CA97,HOW06}.  $\R(\cdot):=\tr(\cdot)\mathbbm{1}$ and $\Phi^{\n}_{R_AAR_BB}:=\n(\Phi_{R_AA'}\otimes\Phi_{R_BB'})$ is the Choi state of $\n$, where $\Phi_{R_AA'}$ denotes a maximally entangled state. For a quantum channel $\n_{A'B'\to AB}$, $S[A|B]_{\n}=S(R_AA|R_BB)_{\Phi^\n}-\log|A'|$ if $\n$ is tele-covariant and $S[A|B]_{\n}=\inf_{\psi\in\St(RA'B')}S(A|RB)_{\n(\psi)}$ if $\n$ is no-signaling from $A'\to B$~\cite{DGP24}.}\label{tab:characteristics}
\end{table}

We analyze quantum information processing under thermodynamic constraints. We consider a canonical and physically motivated scenario where the interaction between two systems and transformations are in general described by bipartite quantum channels. A natural, freely occurring process is depicted by a bipartite quantum channel called conditional absolutely thermal channel that is tensor-product of quantum channels $\T^\beta\otimes\Q$: the first is a absolutely thermalizing process $\mathcal{T}^{\beta}$ that drives all inputs to equilibrium (thermal or Gibbs state $\gamma^\beta$~\cite{Len78}) with a bath at inverse temperature $\beta$ and the side channel $\Q$, by contrast, represents non-equilibrating dynamics. We focus on two related aspects of quantum processes: a) resource theory of the conditional athermality of quantum processes, b) the properties of conditional channel min-entropy and its interconnectedness with the underlying causal structure of the given bipartite channel (\autoref{tab:characteristics}). The two aspects are related as the optimal one-shot conditional athermality distillation yield and formation cost of a bipartite quantum channel depend on its conditional channel entropies.

 We formalize the framework of the resource theory of conditional athermality for bipartite quantum channels. We consider the conditional absolutely thermal channels $\mathcal{T}^\beta\otimes\Q$ as free objects, where $\Q_{B'\to B}$ is any quantum channel and $\mathcal{T}^\beta_{A'\to A}(\cdot)=\tr(\cdot)\gamma^\beta$ for a given thermal state $\gamma^\beta_A$. To show the conditional athermal resourcefulness of a bipartite quantum channel $\n_{A'B'\to AB}$ relative to free objects, we write $(\n,\mathcal{T}^\beta_{A'\to A}\otimes\Q_{B'\to B})$ for some side channel $\Q$. The free operations are conditional Gibbs-preserving superchannels (CGPSs) $\Theta^\beta$ that preserve conditionally absolute thermalization, meaning that $\Theta^\beta(\T^\beta_{A'\to A}\otimes\Q_{B'\to B})=\T^\beta_{C'\to C}\otimes\Q'_{D'\to D}$ for some side channels $\Q,\Q'$. The standard resource units are taken to be the conditional identity channels $(\id\otimes\Q,\mathcal{R}^{\pi}\otimes\Q)$ and they are resource equivalent to conditional unitary channels $(\mathcal{U}\otimes\Q,\mathcal{R}^{\pi}\otimes\Q)$, where $\gamma^\beta_A=\pi_A:=\frac{1}{|A|}\mathbbm{1}_A$ for trivial Hamiltonian $\widehat{H}_A\propto\mathbbm{1}_A$ and $\mathcal{R}^{\pi}(\cdot):=\tr(\cdot)\pi$. 
 
 We consider resource interconversion under CGPSs $\Theta^{\beta}$ up to an error $\varepsilon\in[0,1]$ and determine the exact one-shot rates for both the resource distillation yield ${\rm Dist^{\varepsilon}(\n,\T^\beta)}$ and the formation cost ${\rm Cost^{\varepsilon}(\n,\T^\beta)}$ of an arbitrary channel $\n_{A'B'\to AB}$, $A'\simeq A$, (\autoref{th:dist-for-cost}):
\begin{align}
\mathrm{Dist}^{(1,\varepsilon)}(\mathcal{N},\T^\beta)
&=\frac{1}{2}\inf_{\Q\in\Ch({B',B})}D_{\mathrm{H}}^{\varepsilon^2}[\n_{A'B'\to AB}\Vert\T^\beta_{A'\to A}\otimes \Q_{B'\to B}],\\
\mathrm{Cost}^{(1,\varepsilon)}(\mathcal{N},\T^\beta)
&=\frac{1}{2}\inf_{\mathcal{Q}\in\Ch({B',B})}D_{\infty}^{\varepsilon}[\n_{A'B'\to AB}\Vert\T^\beta_{A'\to A}\otimes\mathcal{\mathcal{Q}}_{B'\to B}],
\end{align}
where $D^{\varepsilon}_H[\cdot\|\cdot]$ and $D^\varepsilon_{\infty}[\cdot\|\cdot]$ are hypothesis-testing and max channel relative entropies respectively~\cite{CMW16}. $\mathrm{Dist}^{(1,\varepsilon)}(\mathcal{N},\T^\beta)$ is the maximum number of conditional identity $\id_2$ channels that can be distilled from a single-use of a bipartite resource channel $(\mathcal{N},\T^\beta\otimes\Q)$ under CGPSs for the worst-possible side channel $\Q$. $\mathrm{Cost}^{(1,\varepsilon)}(\mathcal{N},\T^\beta)$ is the minimum number of conditional qubit identity $\id_2$ channels that needs to be diluted to prepare a single copy of a bipartite resource channel $(\mathcal{N},\T^\beta\otimes\Q)$ under CGPSs for the best-possible side channel $\Q$. 

The conditional athermality distillation and formation capacities of a quantum channel $\n$ are the optimal asymptotic rates of the distillation yield and formation cost, respectively, as the error vanishes over asymptotically many (parallel) uses of the channel $\n$:
\begin{align}
    {\rm Dist}^{(\infty,0)}(\n,\T^\beta) &:=\lim_{\varepsilon\to 0^+}\lim_{n\to\infty}\frac{1}{n}{\rm Dist}^{(1,\varepsilon)}(\n^{\otimes n},{\T^{\beta}}^{\otimes n}),\\
      {\rm Cost}^{(\infty,0)}(\n,\T^\beta) &:=\lim_{\varepsilon\to 0^+}\lim_{n\to\infty}\frac{1}{n}{\rm Cost}^{(1,\varepsilon)}(\n^{\otimes n},{\T^{\beta}}^{\otimes n}).
\end{align}
A direct consequence of the resource-theoretic approach is the fact that the resource distillation capacity of a channel is never greater than its formation capacity,
\begin{equation}
      {\rm Dist}^{(\infty,0)}(\n,\T^\beta)\leq   {\rm Cost}^{(\infty,0)}(\n,\T^\beta),
\end{equation}
and it follows from \autoref{cor:asym-dist-lb} that the resource distillation capacity is upper bounded as
\begin{equation}
    {\rm Dist}^{(\infty,0)}(\n,\T^\beta)\leq \frac{1}{2}\inf_{\Q\in\Ch(B',B)}D^{\rm reg}[\n_{A'B'\to AB}\|\T^\beta_{A'\to A}\otimes\Q_{B'\to B}],
\end{equation}
where $D^{\rm reg}[\n\|\m]:=\lim_{n\to\infty}\frac{1}{n}D[\n^{\otimes n}\|\m^{\otimes n}]$ is the regularized relative entropy between channels $\n,\m$~\cite{CMW16,FFRS20}.

We observe that for the trivial Hamiltonian, the resource theory of conditional athermality reduces to the resource theory of conditional purity as $\gamma^\beta_A=\pi_A$ and $\T^\beta_{A'\to A}=\rp_{A'\to A}$ when $\widehat{H}_A\propto \mathbbm{1}_A$. For a quantum channel $\n_{A'B'\to AB}$ with $\widehat{H}_A\propto \mathbbm{1}_A$, its one-shot resource distillation yield and formation cost depend on its hypothesis-testing conditional entropy $S^\varepsilon_H[A|B]_{\n}$ and $\varepsilon$-smoothed conditional min-entropy $S^{\varepsilon}_{\infty}[A|B]_{\n}$ respectively (\autoref{cor:one-shot-con-purity-rates}),
    \begin{align}
\mathrm{Dist}^{(1,\varepsilon)}(\mathcal{N},\mathcal{R}^\pi)
&=\frac{1}{2}\left(\log |A|-S^{\varepsilon^2}_H[A|B]_{\n}\right),\\
\mathrm{Cost}^{(1,\varepsilon)}(\mathcal{N},\mathcal{R}^\pi)
&=\frac{1}{2}\left(\log|A|-S^\varepsilon_{\infty}[A|B]_{\n}\right).
\end{align}
The resource distillation and formation capacities of a quantum channel $\n_{A'B'\to AB}$, $A'\simeq A$, are lower bounded as (\autoref{prop:asy-dist-lb} and \autoref{prop:asy-form-lb})
\begin{align}
  \frac{1}{2}  \min\left\{\log|A|-S^{\not\to}[A|B]_{\n},\log|A'||A|-S(R_AA|R_BB)_{\Phi^{\n}}\right\}&\leq {\rm Dist}^{(\infty,0)}(\n,\rp)\\
    &\leq {\rm Cost}^{(\infty,0)}(\n,\rp),
\end{align}
where $S^{\not\to}[A|B]_{\n}:=\inf_{\psi\in\St(RA'B')}S(A|RB)_{\n(\psi)}$ is the no-signaling min-entropy~\cite{DGP24} and $\Phi^{\n}_{R_AAR_BB}:=\n(\Phi_{R_AA}\otimes\Phi_{R_BB})$ is the Choi state of $\n_{A'B'\to AB}$.

We prove that the asymptotic equipartition property holds for the conditional min-entropy of a quantum channel $\n_{A'B'\to AB}$ if it is either tele-covariant~\cite{GC99,HHH99,DBW20,DBWH21} (\autoref{th:equipartition}) or no-signaling from $A'\to B$~\cite{BGNP01,PHHH06} (\autoref{thm:ns-asym-con-ent}),
\begin{align}
    \lim_{\varepsilon\to 0^+}\lim_{n\to \infty}\frac{1}{n}S^\varepsilon_{\infty}[A^n|B^n]_{\n^{\otimes n}}=S[A|B]_{\n},
\end{align}
where $S[A|B]_{\n}:=-\inf_{\Q\in\Ch(B',B)}D[\n\|\R\otimes\Q]$ is the von Neumann conditional channel entropy~\cite{DGP24}. The asymptotic equipartition property along with \autoref{th:dist-for-cost} allows us to exactly determine the conditional purity distillation and formation capacities of a quantum channel $\n_{A'B'\to AB}$ obeying certain symmetries. If a channel $\n_{A'B'\to AB}$ is tele-covariant (\autoref{thm:cap-tel}) or no-signaling channel from $A'\to B$ (\autoref{thm:asym-ns}), then
\begin{equation}\label{eq:cap-ns-tel}
    {\rm Dist}^{(\infty,0)}(\n,\rp)= {\rm Cost}^{(\infty,0)}(\n,\rp)=\frac{1}{2}(\log|A|-S[A|B]_{\n}),
\end{equation}
and equivalently, the following asymptotic equipartition property holds,
\begin{equation}
     \lim_{\varepsilon\to 0^+}\lim_{n\to \infty}\frac{1}{n}S^{\varepsilon^2}_H[A^n|B^n]_{\n^{\otimes n}}=S[A|B]_{\n}=\lim_{\varepsilon\to 0^+}\lim_{n\to \infty}\frac{1}{n}S^\varepsilon_{\infty}[A^n|B^n]_{\n^{\otimes n}}.
\end{equation}
Thus, the resource theory of conditional purity is asymptotically reversible for quantum channels $\n_{A'B'\to AB}$ that are tele-covariant or no-signaling from $A'\to B$. Furthermore, we find that the superdense coding capacity~\cite{ZB03,BDL+04} of the Choi state $\Phi^{\n}$ of a tele-covariant channel $\n_{A'B'\to AB}$, where $A'\simeq A$, is double the conditional purity capacity of the channel, see Eq.~\eqref{eq:sdc-dist-cost},
\begin{equation}
    \frac{1}{2}{\rm Sdc}(R_AA;R_BB)_{\Phi^{\n}}= {\rm Dist}^{(\infty,0)}(\n,\rp)= {\rm Cost}^{(\infty,0)}(\n,\rp).
\end{equation}
We note that for a bipartite quantum channel $\mathcal{R}_{A'B'\to AB}$ that is conditional uniformly mixing channel, i.e., $ \mathcal{R}=\mathcal{R}^{\pi}_{A'\to A}\otimes\Q_{B'\to B}$ for any side channel $\Q$, we have, ${S}^{\varepsilon=0}_{H}[A|B]_{\mathcal{R}}=S[A|B]_{\mathcal{R}}=S_{\infty}[A|B]_{\mathcal{R}}=\log|A|$; if $\Q$ is taken to be $\rp_{B'\to B}$, then $\mathcal{R}=\rp_{A'\to A}\otimes\rp_{B'\to B}$ is both tele-covariant and no-signaling from $A'\to B$, and a free object in the resource theory of athermality for channels with trivial output Hamiltonian. We can then interpret Eq.~\eqref{eq:cap-ns-tel} as
\begin{equation}
   {\rm Dist}^{(\infty,0)}(\n,\rp)= {\rm Cost}^{(\infty,0)}(\n,\rp)  =\frac{1}{2}(S[A|B]_{\rp\otimes\Q}-S[A|B]_{\n}),
\end{equation}
for any side channel $\Q$. That is, the resource distillation and formation capacities of a channel $\n_{A'B'\to AB}$ that is tele-covariant or no-signaling from $A'\to B$, is equal to half of the difference between the von Neumann conditional entropies of a conditional uniformly mixing channel and the channel $\n$ itself.

Consider two quantum channels $\n_{A'B'\to AB}$ and $\m_{C'D'\to CD}$, where $\widehat{H}_{A}\propto \mathbbm{1}_A$ and $\widehat{H}_{C}\propto\mathbbm{1}_C$, that are tele-covariant or no-signaling from $A'\to B$. The asymptotic optimal rate of the resource (conditional athermality or purity) gain $\Delta[\n\to\m]$ during transformation of $\n$ to $\m$ under the action of CGPSs is equal to,
\begin{equation}
    \Delta(\n\to\m,\rp_{A'\to A})={\rm Dist}^{(\infty,0)}(\n,\rp)-{\rm Cost}^{(\infty,0)}(\m,\rp)=\frac{1}{2}\left[\log\frac{|A|}{|C|}+S[C|D]_{\m}-S[A|B]_{\n}\right].
\end{equation}
For a quantum channel $\n_{A'B'C'\to ABC}$ that is tripartite tele-covariant~\cite{DBWH21} or no-signaling from $A'\to BC$,
\begin{align}
    \Delta(\n\to\tr_C\circ\n,\rp_{A'\to A})&={\rm Dist}^{(\infty,0)}(\n,\rp_{A'\to A})-{\rm Cost}^{(\infty,0)}(\tr_C\circ\n,\rp_{A'\to A})\\
    &=\frac{1}{2}\left(S[A|B]_{\n}-S[A|BC]_{\n}\right)\geq 0,\label{eq:s-subadd}
\end{align}
where $S[A|B]_{\n}:=\inf_{\Q\in\Ch(B'C',B)}D[\tr_C\circ\n\|\T^\beta\otimes\Q]$ for a tripartite channel $\n_{A'B'C'\to ABC}$. Inequality~\eqref{eq:s-subadd} follows because the channel entropy satisfies the strong subaddivity property~\cite{DGP24}: the conditional entropy $S[A|B]_{\n}$ of a quantum channel $\n_{A'B'C'\to ABC}$ is nonincreasing upon conditioning, $S[A|B]_{\n}\geq S[A|BC]_{\n}$.

\begin{figure}[h]
\centering
\begin{tikzpicture}[
    xscale=1.7, 
    axis/.style={->, thick},
    tick/.style={thick},
    guide/.style={dotted, thick},
    label/.style={font=\small},
    annot/.style={font=\small, red},
    >=Stealth
]

\draw[axis]  (-4,0) -- (4,0);

\foreach \x/\lab in {
    -4/{-3\log|A|},
    -2/{-2\log|A|},
     0/{-\log|A|},
     2/{0},
     4/{\log|A|}
}{
    \draw[tick] (\x,0.15) -- (\x,-0.15);
    \draw[guide] (\x,0) -- (\x,3.0);
    \node[label, below] at (\x,-0.2) {$\lab$};
}

\draw[<-, thick] (-4,0) -- (-3.4,0);

\node[
    annot,
    draw,
    fill=green!10,
    rectangle,
    inner sep=4pt
] at (-4.0,0.6) {\scriptsize $\mathrm{SWAP}$} ;
\node[annot] at (-3.3,0.6) {\scriptsize [Thm. \ref{thm:SwapBound}]};

\node[
    annot,
    draw,
    fill=green!10,
    rectangle,
    inner sep=4pt
] at (-2,0.6) {\scriptsize $\widehat{\mathcal{C}}_U$};
\node[annot] at (-1.4,0.6) {\scriptsize [Prop. \ref{thm:contr-unit}]};

\node[
    annot,
    draw,
    fill=green!10,
    rectangle,
    inner sep=4pt
] at (0,0.6) {\scriptsize $\mathcal{U}_A\otimes\mathcal{U}_B$};
\node[annot] at (0.85,0.6) {\scriptsize [Eq. \eqref{eq:nsu-product}]};

\node[
    annot,
    draw,
    fill=green!10,
    rectangle,
    inner sep=4pt
] at (4,0.6) {\scriptsize $\mathcal{R}^\pi$};

\draw[annot,->] (-2.0,1.2) -- (-0.6,1.2);
\node[annot, above] at (-1.3,1.2) {\scriptsize $\mathcal{C}_U$ [Prop.~\ref{thm:contr-unit}]};

\draw[annot,->] (0.0,1.4) -- (1.5,1.4);
\node[annot, above] at (0.9,2.2) {\scriptsize $\mathrm{C{\text -}SEP{\text -}P}$ [Cor.\ \ref{prop:seperable}],};
\node[annot, above] at (0.9,1.8) {\scriptsize $\mathrm{C{\text -}PPT{\text -}P}$ [Prop.\ \ref{prop:ppt}],};
\node[annot, above] at (0.8,1.4) {\scriptsize $\NS$ [Prop. \ref{prop:nonsig-unit}]};

\end{tikzpicture}

\caption{Summary of the signaling properties of channels and the bounds on the conditional channel min-entropy $S_{\infty}[A|B]_{\mathcal N}$ for different classes of bipartite  channels $\n_{A'B'\to AB}$. For simplicity, we have assumed that $|A'|=|B'|=|B|=|A|$. The further the conditional entropy of a unitary channel lies to the left of $-\log|A|$, the greater the signaling from $A' \to B$ exhibited by the channel. Maximally entangling unitary channels sit at the left extremal point of $-3\log |A|$, while the controlled unitary channels $\mathcal{C}_\U$ are lower bounded by $-2\log|A|$, indicating a clear gap between the signaling ability of swap-like and controlled unitary channels. $\widehat{\mathcal{C}}_\U$ denotes the controlled unitary with orthogonal unitary operators. The conditional min-entropy of channels $\n$ in the sets ${\rm C{\text -}SEP{\text -}P}$, ${\rm C{\text -}PPT{\text -}P}$, and $\NS$ are all lower bounded by $-\log|A|$. See Section \ref{sec:signaling} for related results in generality.}
\label{fig:channel-hierarchy}
\end{figure}

The dependence of conditional athermality (and purity) distillation and formation rates on conditional channel entropies reflects an intrinsic relation between the channel's thermodynamic resource content and its underlying causal structure. We consider broad classes of channels to illustrate the interdependence between underlying causal structure and conditional entropy (\autoref{fig:channel-hierarchy} and Section \ref{sec:unitaries}). We summarize some of our main results on the conditional min-entropy of a bipartite quantum channel in \autoref{tab:characteristics} in blue text. In the table, we make comparison with analogous observations for conditional min-entropy of states~\cite{T21}. The properties of conditional channel entropies can be of independent interest and serve as tools to analyze other informational aspects of quantum processes in quantum communication and computation, many-body quantum physics, open quantum systems, etc.

\subsection{Related works}\label{sec:prior}
The resource theory of athermality of quantum states has been widely studied under various settings~\cite{Len78,Gou24,Los19}, along with different classes of free operations such as Gibbs-preserving channels, thermal operations, covariant thermal operations, and catalytic assistance~\cite{BHO+13,BHN+15,AHJ18,LBT19,LS21,JGW25}. The resource theory of athermality of states generalizes the resource theory of purity of states~\cite{HO12,WY16,SKW+18}. In parallel, the thermodynamic aspects of quantum channels have been investigated to better understand the energetics of quantum processors, as well as their informational and computational limitations~\cite{NG15,ZA25,DS25,LMB25,HWS+25,BGC+26,BGD25b}.

Our work complements an alternative line of research in which the resourcefulness of athermal channels is studied under the assumption that unitaries commuting with the Hamiltonian are free operations~\cite{NG15,FBB19}. Although the identity channel trivially commutes with the Hamiltonian, preserving a quantum state in isolation from the environment incurs an energy cost. This observation motivates the need for a resource theory in which identity and noiseless quantum channels are not considered free. The resource theory of purity of quantum channels provides a step in this direction for trivial Hamiltonians~\cite{YZGZ20,YHW19}. Recent works~\cite{DS25,BGD25,BGD25b}, including the present one, align with and further advance this ongoing effort for non-trivial Hamiltonians. 

In \cite{BGD25b,BGD25}, thermodynamic cost of the transformation between two channels was determined and the basic thermodynamic concept of the channel free energy was introduced using both axiomatic and operational approaches. The free energy of quantum channels was shown to be dependent on different information processing capabilities of a channel like entanglement-assisted classical capacity~\cite{BSST02} and private randomness distillation capacity~\cite{YHW19}.

We directly generalize the resource theory of athermality for single-input single-output quantum channels under the action of Gibbs-preserving superchannels (GPSs)~\cite{BGD25b,BGD25}. Furthermore, we introduce a resource theory of conditional athermality for bipartite quantum channels under CGPSs. This framework reduces to the standard resource theory of athermality for quantum channels when the bipartite channels are decoupled (i.e., tensor-product channels between the two parties). Our analysis also shows that the resource theory of athermality is asymptotically reversible under the action of GPSs, even when the input and output dimensions of the channels differ. In addition, we generalize the resource theory of purity of quantum channels to a resource theory of conditional purity for bipartite channels under conditional uniformity-preserving superchannels.

Our work provides operational interpretations of the hypothesis-testing, max-, and regularized relative entropies between quantum channels and decoupled quantum channels in thermodynamic settings. Building upon known limitations in channel discrimination tasks~\cite{CMW16,CE16,Das19,WW19b,WBHK20,FFRS20}, we determine asymptotic rates for the distillation and formation of conditional purity for broad families of quantum channels. Our contributions also connect to broader developments in the resource theories of quantum channels, including entanglement, asymmetry, purity, and coherence~\cite{BDWW19,Das19,LW19,SCG20,LY20,TWH20,DBWH21}, as well as to the thermodynamic interpretation of negative entropic quantities~\cite{RAR+11,JGW25,doBL25,BGC+26} and the role of quantum correlations in thermodynamics~\cite{Ben03,PHS15,GMN+15,BRLW17,LMB25,BGD25}.

The clear operational interpretation of the conditional entropy of quantum states~\cite{CA97}, as established in~\cite{HOW05,HOW06}, represents a landmark result in quantum information theory. In a similar spirit, we expect that our operational characterization of conditional channel entropies will yield new insights into quantum processes relevant to quantum information processing and computation, and will prove to be broadly useful across quantum information science.

\subsection{Outline}
The paper is organized as follows. In Section~\ref{section:prelim}, we introduce the necessary preliminaries and notations used throughout the paper. We also review generalized divergences and their channel extensions, which form the basis of entropic quantities. In particular, we introduce the generalized entropy and conditional entropy of bipartite quantum channels, along with their smoothed variants. In Section~\ref{section:athermality}, we introduce the framework for the resource theory of conditional athermality for bipartite quantum channels, where the free objects are conditional absolutely thermalizing channels and the free operations are conditional Gibbs-preserving superchannels. Within this framework, we formalize the operational tasks of conditional athermality distillation and formation, and determine limitations on their one-shot and asymptotic rates. In Section~\ref{section:min-entropy}, we prove several properties of the conditional min-entropy of quantum channels, along with its bounds and the asymptotic equipartition property. In Section~\ref{sec:asym-rate}, we consider bipartite channels with trivial Hamiltonians where the resource theory of conditional athermality reduces to the resource theory of conditional purity under conditional purity-preserving superchannels. We show that the conditional distillation and formation capacities are related to conditional channel entropies, thereby providing an operational interpretation of these quantities. We prove that the resource theory of conditional purity is asymptotically reversible under conditional purity-preserving superchannels. We discuss the relation of the conditional purity capacity of a tele-covariant channel with the superdense coding capacity of its Choi state. In Section~\ref{sec:signaling}, we explore the implications of the causal structure of a bipartite channel on its conditional channel entropy and the conditional purity distillation and formation rates. We prove the asymptotic reversibility of the resource theory of conditional purity for semicausal channels that are no-signaling from non-connditioning input to conditioning output. We consider several examples of bipartite channels of practical interest and discuss bounds on their conditional channel min-entropies. Finally, we conclude in Section~\ref{section:discussions} with a brief discussion of potential open problems.

\section{Preliminaries}\label{section:prelim}
\subsection{Standard notations and definitions}
We consider separable Hilbert spaces of finite-dimensions. Quantum states are described by positive semidefinite operators with unit trace. Quantum channels are physical transformations of quantum states and described by completely positive, trace-preserving linear maps. Quantum superchannels describe physical transformations of quantum channels and its action on a quantum channel is described by the concatenation of the channel with pre-processing and post-processing quantum channels.

A quantum system $A$ and associated the Hilbert space $\mathcal{H}_A$ are both denoted with $A$ for simplicity, $\dim(\mathcal{H}_A)=|A|$ and $AB:=A\otimes B$. ${\rm L}(A,B)$ denotes the set of all linear transformations (or operators) from $A\to B$. $\mathrm{L}(A)$ and $\mathrm{L}_+(A)$ denotes the set of bounded operators and positive semidefinite operators defined on $A$, respectively. $X_A$ denotes $X\in{\rm L}(A)$ and $\mathbbm{1}_A$ is the identity operator on $A$. $\mathrm{St}(A)$ denotes the set of all density operators on $A$; $\mathrm{St}_{\leq}(A)$ denotes the set of all subnormalized density operators on $A$. $\mathrm{Ch}(A',A)$ denotes the set of all quantum channels from $\mathrm{L}(A')\to \mathrm{L}(A)$. A superoperator $\mathcal{N}_{A\to B}$ denotes a linear map from ${\rm L}(A)\to {\rm L}(B)$. $\id_{A'\to A}$ denotes the identity channel, where $A'\simeq A$. Let $\Gamma_{RA'}:=\sum_{i,j=0}^{d-1}\ket{ii}\bra{jj}_{RA'}$, where $d=\min\{|R|,|A'|\}$ and $\{\ket{i}\}_{i=0}^{d-1}$ is an orthonormal set of vectors, denote the maximally entangled operator and $\Phi_{RA'}:=\frac{1}{d}\Gamma_{RA'}$ is a maximally entangled state. The Choi operator of a linear map $\n:{\rm L}(A')\to {\rm L}(A)$ is given by $\Gamma^{\n}_{RA}:=\id_{R}\otimes\n_{A'\to A}(\Gamma_{RA'})$, where $R\simeq A'$. For a quantum channel $\n_{A'\to A}$, its Choi state is $\Phi^{\n}_{RA}:=\id_R\otimes\n_{A'\to A}(\Phi_{RA'})$, where $R\simeq A'$. The Choi state of a bipartite quantum channel $\n_{A'B'\to AB}$ is $\Phi^{\n}_{R_AAR_BB}=\n(\Phi_{R_AA'}\otimes\Phi_{R_BB'})$. For a bipartite state $\rho_{AB}$, we use $\rho_A:=\tr_B(\rho_{AB})$.

An operator $U_{A'\to A}$ is unitary if $U^\dag U=\mathbbm{1}_{A'}$ and $UU^\dag=\mathbbm{1}_A$ and corresponding channel is $\U_{A'\to A}(\cdot)=U(\cdot)U^\dag$, and it holds that $A'\simeq A$. An operator $V_{A'\to A}$ is isometry if  $V^\dag V=\mathbbm{1}_{A'}$ and $VV^\dag=\Pi_{A}$, where $|A'|\leq |A|$ and $\Pi_A$ is projection on $A$ with ${\rm rank}(\Pi_A)=|A'|$; the corresponding isometry channel is $\V_{A'\to A}(\cdot)=V(\cdot)V^\dag$. Let $\mathrm{SCh}((A',A),(B',B))$ denote the set of all quantum superchannels that maps $\mathrm{Ch}(A',A)\to \mathrm{Ch}(B',B)$, and the action of any $\Theta\in\mathrm{SCh}((A',A),(B',B))$ on $\n\in\Ch(A',A)$ can be expressed as~\cite{CDGP08,CDP09}
\begin{equation}
    \Theta(\n)= \mathcal{Q}_{CA\to B}\circ\mathcal{N}_{A'\to A}\circ\mathcal{P}_{B'\to CA'},
\end{equation}
where $\mathcal{P}$ is some channel that pre-processes $\n$ and $\Q$ is some channel that post-processes $\n$.

 Let ${\rm SEP}(A;B)$ denote the set of all separable states $\rho_{AB}$ that is separable between $A,B$, i.e., all $\rho\in\St(AB)$ of the form $\rho_{AB}=\sum_xp_X(x)\omega^x_A\otimes\tau^x_B$ for some probability distribution $\{p_X(x)\}_x$ and $\omega^x\in\St(A)$ and $\tau^x\in\St(B)$ for each $x$. Let ${\rm Ent}(A;B)$ be the set of all states $\rho_{AB}$ that is entangled (not separable) between $A,B$. Let ${\rm PPT}(A;B)$ denote the set of all states $\rho_{AB}$ that remains positive under the partial transposition on $B$, i.e., ${\rm T}_B(\rho_{AB})\geq 0$ for transposition ${\rm T}_B$ with respect to some basis on $B$ (without loss of generality)~\cite{HHHH09}. A bipartite state $\rho_{AB}$ is called NPT if $\rho_{AB}\notin{\rm PPT}(A;B)$.

The action of a swap operator is ${{U}^{\rm SWAP}}_{A'B'\to AB}(\ket{i}_{A'}\otimes\ket{j}_{B'})=\ket{j}_A\otimes\ket{i}_B$ for each vectors in some orthornomal bases $\{\ket{i}_{A'}\}_i$ and $\{\ket{j}_{B'}\}_j$, and $\{\ket{j}_A\}_j$ and $\{\ket{i}_B\}$ are some orthonormal sets of vectors. Its Choi operator has the form $\Gamma^{{\mathcal{S}}_{A:B}}=\Gamma_{R_AB}\otimes\Gamma_{R_BA}=\Gamma_{R_AA:R_BB}$. We call a unitary operation $\U_{A'B'\to AB}$ as maximally entangling or a swap-like operation (channel), denoted by $\U\in{\rm SWAP}_{A;B}$, if its Choi operator $\Phi^{\U}_{R_AAR_BB}\in{\rm Ent}(R_AA;R_BB)$ is maximally entangled of Schmidt rank $\min\{|A'||A|,|B'||B|\}$. Here, ${\rm SWAP}(A;B)$ denotes the set of all swap-like operations from $A'B'\to AB$. A swap operation composed with pre-processing local unitaries on $A',B'$ and post-processing local isometries on $A,B$ is a swap-like operation.

We denote the Hamiltonian of a quantum system $A$ with $\widehat{H}_A$. $\gamma^\beta_A:={\exp(-\beta\widehat{H}_A)}/{Z^\beta_{A}}$, where $Z^\beta_A:=\tr[\exp(-\beta\widehat{H}_A)]$ is the partition function, denotes a thermal or Gibbs state of $A$ associated with inverse temperature $\beta$ and $\widehat{H}_A$. The Hamiltonian $\widehat{H}_{AB}$ of two noninteracting quantum systems $A$ and $B$ is given by $\widehat{H}_A+\widehat{H}_B:=\widehat{H}_A\otimes\mathbbm{1}_B+\mathbbm{1}_A\otimes\widehat{H}_B$ as the interaction Hamiltonian is absent, $\widehat{H}^{\rm int}_{AB}=0$. Let $\mathcal{R}^{\omega}_{A'\to A}$ denote the completely positive linear map that replaces the input with $\omega_A$, $\mathcal{R}^{\omega}_{A'\to A}(X_{A'}):=\tr(X_{A'})\omega_A$. The absolutely thermal channel is defined as $\mathcal{T}^\beta_{A'\to A}(\cdot)=\mathcal{R}^{\gamma^\beta}_{A'\to A}(\cdot)=\tr(\cdot)\gamma^\beta_A$. For a trivial Hamiltonian $\widehat{H}_A\propto \mathbbm{1}_A$, $\gamma^\beta_A$ is equal to the maximally mixed state $\pi_A:=\frac{1}{|A|}\mathbbm{1}_A$.

$\norm{X}_p:=(\tr[\abs{X}^p])^{1/p}$ is the Schatten $p$-norm of an operator $X$ for $p\in[1,\infty)$ (definition is also extended to $p\in(0,1)$ but then $\norm{X}_p$ is not a norm), where $\norm{X}_1$ is also called the trace-norm. Fidelity between $\rho_A\in\St(A)$ and $\sigma_A\in{\rm L}_+(A)$ is defined as $F(\rho_A,\sigma_A):=\norm{\sqrt{\rho}\sqrt{\sigma}}_1^2$, and generalized fidelity is defined as ${F}_{\leq}(\rho,\sigma):=\left[\sqrt{F(\rho,\sigma)}+\sqrt{(1-\tr(\rho))(1-\tr(\sigma))}\right]^2$ for $\rho,\sigma\in{\rm St}_{\leq}(A)$. The generalized purified distance $P(\rho,\sigma):=\sqrt{1-F_{\leq}(\rho,\sigma)}$ for $\rho,\sigma\in\St_{\leq}(A)$.

\subsection{Generalized divergence}
The generalized state divergence ${\bf D}:\St(A)\times\St(A)\to \mathbb{R}$ is a real-valued function on a pair of states that is monotone under the action of an arbitrary quantum channel $\n_{A\to B}$, i.e., ${\bf D}(\rho\Vert\sigma)\geq {\bf D}(\n(\rho)\Vert\n(\sigma))$ for arbitrary $\rho,\sigma\in\St(A)$ and arbitrary $\n\in\Ch(A,B)$~\cite{Ren05,Dat09,CMW16}. Some examples of the generalized state divergences are hypothesis-testing relative entropy $D^{\varepsilon}_H$, quantum relative entropy $D$~\cite{Ume62}, sandwiched R\'enyi relative entropy $D_{\alpha}$ for $\alpha\in[\frac{1}{2},1)\cup(1,\infty)$~\cite{MDSFT13,WWY14}, Petz-R\'enyi relative entropy $\overline{D}_{\alpha}$ for $\alpha\in[0,1)\cup(1,2]$, trace-distance, purified distance, etc.

For arbitrary $\rho\in{\rm L}(A)$ and arbitrary $\sigma\in{\rm L}_+(A)$,
\begin{itemize}
  \item {Quantum Relative Entropy:}
  \begin{equation}
      D(\rho \| \sigma) := \tr[\rho(\log \rho - \log \sigma)] 
  \end{equation}
    provided $\supp(\rho) \subseteq {\supp}(\sigma)$, and is $+\infty$ otherwise.
    
    \item Hypothesis-Testing relative entropy: For $\varepsilon \in [0,1]$,
    \begin{equation}
  D_H^\varepsilon(\rho \| \sigma) = -\log \inf \{ \tr(\Lambda \sigma) : 0 \leq \Lambda \leq \mathbbm{1}, \tr(\Lambda \rho) \geq 1 - \varepsilon \}. 
  \end{equation}
  
    \item Sandwiched R\'enyi relative entropy: For $\alpha \in [\frac{1}{2}, 1) \cup (1, \infty)$,
    \begin{equation}
        {D}_\alpha(\rho \| \sigma) = \frac{1}{\alpha - 1} \log \tr \left[ \left( \sigma^{\frac{1-\alpha}{2\alpha}} \rho \sigma^{\frac{1-\alpha}{2\alpha}} \right)^\alpha \right],
    \end{equation}
    where ${D}_\alpha(\rho \| \sigma) = +\infty$ for $\alpha > 1$ if ${\supp}(\rho) \not\subseteq {\supp}(\sigma)$.

    \item Petz-R\'enyi relative entropy: For $\alpha \in [0, 1) \cup (1, \infty)$,
    \begin{equation}
        \overline{D}_\alpha(\rho \| \sigma) = \frac{1}{\alpha - 1} \log \tr \left( \rho^\alpha \sigma^{1-\alpha} \right)
    \end{equation}    where $\overline{D}_\alpha(\rho \| \sigma) = +\infty$ for $\alpha > 1$ if $\supp(\rho) \not\subseteq \supp(\sigma)$.

    \item {Quantum max-relative entropy:}
    \begin{equation}
        D_{\max}(\rho \| \sigma) = \log \inf \{ \lambda \geq 0 : \rho \leq \lambda \sigma \}
    \end{equation}
    where $D_{\max}(\rho \| \sigma) = +\infty$ if $\supp(\rho) \not\subseteq {\supp}(\sigma)$.
\end{itemize}
We have $\lim_{\alpha\to 1}{D}_\alpha(\rho\|\sigma)=D(\rho\|\sigma)=\lim_{\alpha\to 1}\overline{D}_\alpha(\rho\|\sigma)$, $D_{\infty}(\rho\|\sigma):=\lim_{\alpha\to \infty}{D}_\alpha(\rho\|\sigma)=D_{\max}(\rho\|\sigma)$, and $\overline{D}_0(\rho\|\sigma)=D^{\varepsilon=0}_H(\rho\|\sigma)$. We consider $\log$ with respect to base $2$ unless stated otherwise.

{\it Generalized channel divergence}: The generalized channel divergence $\mathbf{D}:\Ch(A,B)\times\Ch(A,B)\to \mathbb{R}$ is a real-valued function that is based on the generalized state divergence ${\bf D}(\cdot\|\cdot)$ and defined between a quantum channel $\n_{A\to B}$ relative to a completely-positive map $\m_{A\to B}$ as~\cite{LKDW18}(see also~\cite{SPSD25})
\begin{equation}\label{eq:gen-ch-div}
    \mathbf{D}[\n\|\m]:=\sup_{\rho\in\St(RA)}{\bf D}(\id\otimes\n(\rho_{RA})\|\id\otimes\m(\rho_{RA})),
\end{equation}
where supremum is over all possible states with arbitrary $|R|$. The generalized channel divergence between two quantum channels is monotone under the action of quantum superchannels~\cite[Theorem 1]{DGP24}. The generalized channel divergence ${\bf D}[\cdot\|\cdot]$ is called sandwiched R\'enyi relative entropy $D_{\alpha}[\cdot\|\cdot]$, hypothesis-testing relative entropy $D^\varepsilon_H[\cdot\|\cdot]$, quantum relative entropy $D[\cdot\|\cdot]$, purified distance $P[\cdot,\cdot]$, etc.~when the associated generalized state divergence ${\bf D}(\cdot\|\cdot)$ is taken to be $D_{\alpha}(\cdot\|\cdot)$, $D^\varepsilon_H(\cdot\|\cdot)$, $D(\cdot\|\cdot)$, purified distance $P(\cdot,\cdot)$, etc., respectively. It suffices to optimize over pure input states $\rho\in\St(RA)$ such that $R\simeq A$ in Eq.~\eqref{eq:gen-ch-div} for $D^\varepsilon_H[\cdot\|\cdot]$, $D_{\alpha}[\cdot\|\cdot]$ (when $\alpha\in[\frac{1}{2})\cup(1,\infty))$, $D[\cdot\|\cdot]$, and $P[\cdot,\cdot]$. For quantum channels $\n,\m\in\Ch(A,B)$, $D_{\infty}[\n\|\m]=D_{\infty}(\Gamma^\n_{R_AB}\|\Gamma^\m_{R_AB})=D_{\infty}(\Phi^\n_{R_AB}\|\Phi^m_{R_AB})$, where $R_A\simeq A$~\cite{WBHK20}.

The regularized generalized channel divergence ${\bf D}^{\rm reg}[\n\|\m]$ for channels $\n,\m\in\Ch(A,B)$ is defined as~\cite{WBHK20,FFRS20}
\begin{equation}\label{eq:reg-gen-div}
    {\bf D}^{\rm reg}[\n\|\m]:=\lim_{n\to\infty}\frac{1}{n}{\bf D}[\n^{\otimes n}\|\m^{\otimes n}].
\end{equation}

\subsection{Entropies and conditional entropies}
The generalized state and channel divergences have been used to provide a collective method to define entropic and information measures; see, for example,~\cite{T21,BCY26,DGP24}. The entropy functions quantify the amount of randomness or uncertainty present in a state. The conditional entropy functions quantify the amount of randomness or uncertainty present in a state when part of the system, or memory, is accessible to the observer. See \autoref{tab:entropy_bounds_vertical}, where we compare dimensional bounds for entropy and conditional entropy of random variables, states, and channels.

\begin{table}[h]
\centering
\renewcommand{\arraystretch}{1.5}
\begin{tabular}{|l|c|c|}
\hline
& \textbf{Min-entropy} & \textbf{Conditional min-entropy} \\ \hline
Random variables & $0 \leq H_{\infty}(X) \leq \log |X|$~\cite{Ren05,T21} & $\textcolor{orange}{0} \leq H_{\infty}(X|Y) \leq \log |X|$ ~\cite{Ren05,T21} \\ \hline
Quantum states & $\textcolor{orange}{0} \leq S_{\infty}(A)_{\rho} \leq \log |A|$~\cite{Ren05} & $\textcolor{red}{-\log |A|} \leq S_{\infty}(A|B)_{\rho} \leq \log |A|$~\cite{KRS09} \\ \hline
Quantum channels & $\textcolor{red}{-\log |A|} \leq S_{\infty}[A]_{\mathcal{M}} \leq \log |A|$~\cite{GW21} & $\textcolor{blue}{-\log |A'| |B'| |B|} \leq S_{\infty}[A|B]_{\mathcal{N}} \leq \log |A|$ \\ \hline
\end{tabular}
\caption{We compare dimensional bounds for min-entropies~\cite{Ren05,BGC+26} vs.~conditional min-entropies~\cite{T21,DGP24} of random variables, quantum states, and quantum channels. We have min-entropy $H(X)$ and conditional min-entropy $H(X|Y)$ for classical probability distributions, min-entropy $S_{\infty}(A)_{\rho}$ and conditional min-entropy $S_{\infty}(A|B)_{\rho}$ for quantum states, and min-entropy $S_{\infty}[A]_{\mathcal{N}}$ and conditional min-entropy $S[A|B]_{\mathcal{N}}$ of point-to-point channels $\mathcal{M}_{A'\to A}$ and bipartite quantum channels $\mathcal{N}_{A'B'\to AB}$ (\autoref{prop:entropybound1}), respectively.}
\label{tab:entropy_bounds_vertical}
\end{table}

The generalized entropy ${\bf S}(A)_{\rho}$ of a quantum state $\rho_A$ is defined as
\begin{equation}
    {\bf S}(A)_{\rho}:= - \mathbf{D}(\rho_A\|\mathbbm{1}_A),
\end{equation}
for ${\bf D}(\cdot\|\cdot)$ being quantum relative entropy, sandwiched R\'enyi relative entropy, hypothesis-testing relative entropy etc. The generalized entropy reduces to the von Neumann entropy $S(A)_{\rho}:=S(\rho_A):=-\tr[\rho\log\rho]$ when the associated state divergence is quantum relative entropy.

The generalized conditional entropy ${\bf S}(A|B)_{\rho}$ of a bipartite quantum state $\rho_{AB}$ is defined as 
\begin{equation}\label{eq:gen-con-st}
    {\bf S}(A|B)_{\rho}:= {\bf S}^{\uparrow}(A|B)_{\rho}:= -\inf_{\sigma\in\St(B)}{\bf D}(\rho_{AB}\|\mathbbm{1}_A\otimes\sigma_B),
\end{equation}
for ${\bf D}(\cdot\|\cdot)$ being quantum relative entropy, sandwiched R\'enyi relative entropy, hypothesis-testing relative entropy, Petz-R\'enyi relative entropy, etc. We can also define a related quantifier for the sandwiched R\'enyi conditional entropy in the following way: for $\alpha\in[\frac{1}{2},1)\cup(1,\infty)$ and a quantum state $\rho_{AB}$,
\begin{equation}
     {S}^{\downarrow}_{\alpha}(A|B)_{\rho}:= -{D}_{\alpha}(\rho_{AB}\|\mathbbm{1}_A\otimes\rho_B)\leq {S}_{\alpha}(A|B)_{\rho}.
\end{equation}
For the von Neumann conditional entropy of a quantum state $\rho_{AB}$, we have
\begin{align}
    S(A|B)_{\rho}:=-\inf_{\sigma\in\St(B)}D(\rho_{AB}\|\mathbbm{1}_A\otimes\sigma_B)=-D(\rho_{AB}\|\mathbbm{1}_A\otimes\rho_B)=S(AB)_{\rho}-S(B)_{\rho}.
\end{align}

\textit{Generalized conditional channel entropy}: The generalized conditional channel entropy of a bipartite quantum channel $\mathcal{N}_{A'B'\to AB}$ is defined as~\cite{DGP24}
\begin{equation}\label{eq:gen-con-ent}
    {\bf S}[A|B]_{\mathcal{N}}:=-\inf_{\mathcal{Q}\in\Ch(B',B)}{\bf D}[\mathcal{N}_{A'B'\to AB}\Vert\mathcal{R}^{\mathbbm{1}}_{A'\to A}\otimes\mathcal{Q}_{B'\to B}],
\end{equation}
for ${\bf D}[\cdot\|\cdot]$ being quantum relative entropy, sandwiched R\'enyi relative entropy, hypothesis-testing relative entropy, etc.~between completely positive maps. For a replacer channel $\mathcal{R}^{\omega}_{A'B'\to AB}$, i.e., $\mathcal{R}^{\omega}_{A'B'\to AB}(\cdot)=\tr(\cdot)\omega_{AB}$ for a given $\omega\in\St(AB)$, ${\bf S}[A|B]_{\n}={\bf S}(A|B)_{\omega}$ whenever ${\bf D}(\rho\otimes\sigma\|\tau\otimes\sigma)={\bf D}(\rho\|\tau)$. The generalized channel entropy ${\bf S}[A]_{\Q}$ of a quantum channel $\Q_{A'\to A}$ is defined as ${\bf S}[\Q]:=-{\bf D}[\Q\Vert\R]$. 

A channel $\Q_{B'\to B}$ in Eq.~\eqref{eq:gen-con-ent} can be considered to be a side channel in a similar spirit as a state $\sigma_B$ in Eq.~\eqref{eq:gen-con-st} can be considered as a side information. The entropic and conditional entropic functions are named after the generalized state or channel divergences they are based on.

\textit{Smoothened relative-entropy:} The max-relative entropy function $D_{\max}(\rho\Vert\sigma)$ can show sensitive dependence on the tails of the spectrum of $\rho$, such that small changes in $\rho$ can lead to drastic changes in max-relative entropy. Therefore the concept of smoothened max-relative entropy is introduced which is more robust against finite errors that inevitably creep in during any operational implementation. To define smoothed relative entropy we first introduce the $\varepsilon$-ball $\mathcal{B}^\varepsilon(\rho)$ around the state $\rho$ in Hilbert space $A$ as
\begin{align}
    \mathcal{B}^\varepsilon(\rho):=\{\sigma\in\St_\le (A): P(\rho,\sigma)\le \varepsilon\}.
\end{align}
The smoothed max-relative entropy is given as $D_{\infty}^\varepsilon(\rho\Vert\sigma):=\inf_{\rho'\in\mathcal{B}^\varepsilon(\rho)}D_\infty(\rho'\Vert\sigma)$. We can similarly define a $\varepsilon$-ball $\mathcal{B}^\varepsilon[\n]$ around the channel $\n_{A'\to A}$ as 
\begin{align}
    \mathcal{B}^\varepsilon[\n]:=\{\mathcal{Q}\in \mathrm{Ch}(A',A): P[\n,\mathcal{Q}]\le \varepsilon\}.
\end{align}
where $P[\cdot,\cdot]$ is the purified distance. The smoothed max-relative entropy defined with respect to the $\varepsilon$-ball defined above is given as 
\begin{align}
D_{\infty}^\varepsilon[\n\Vert\mathcal{M}]:=\inf_{\mathcal{Q}\in\mathcal{B}[\n]}D_{\infty}^\varepsilon[\mathcal{Q}\Vert\mathcal{M}].
\end{align}
and the smoothed conditional channel min-entropy for the channel $\n_{A'B'\to AB}$ is given as
\begin{align}\label{eq:sm-min-ent}
    S_{\infty}^\varepsilon[A|B]_{\n}:=&\sup_{\mathcal{M}\in \mathcal{B}^\varepsilon[\n]}S_{\infty}[A|B]_{\mathcal{M}}\\
    =&-\inf_{\mathcal{Q}\in\Ch(B',B)}D_\infty^\varepsilon[\mathcal{N}_{A'B'\to AB}\Vert\mathcal{R}^{\mathbbm{1}}_{A'\to A}\otimes\mathcal{Q}_{B'\to B}]
\end{align}

\section{Conditional athermality distillation and formation}\label{section:athermality}
When we look around, we observe that different objects can possess different temperatures; even within a single system, one part may have thermalized while another remains far from thermal equilibrium. From a dynamical perspective, certain processes drive systems toward thermalization, whereas other coexisting processes often don't exhibit such tendency. A simplified (toy-model) representation of such naturally occurring dynamics can be described by a bipartite quantum channel of the form $\mathcal{T}^{\beta}_{A' \to A} \otimes \mathcal{Q}_{B' \to B}$, where $\mathcal{T}^\beta$ tends to thermalize all inputs to $\gamma^\beta_A$. A thermal state $\gamma^\beta_A$ is in equilibrium with a thermal bath at a temperature $\beta\in(0,\infty)$. This decomposition formalizes the idea that two processes, one strictly thermal $\mathcal{T}^{\beta}$ and the other described by an arbitrary channel $\mathcal{Q}$, coexist and act in parallel on distinct subsystems. We are going to refer to bipartite quantum channels of the form $\mathcal{T}^\beta\otimes\Q$ as partially (or conditionally) thermalizing channel. Notice that for a given $\T^\beta_{A'\to A}$, thermal state $\gamma^\beta_A$ of the output gets fixed. Also for $\widehat{H}_A\propto \mathbbm{1}_A$, we have $\gamma^\beta_A=\pi_A=\frac{1}{|A|}\mathbbm{1}_A$.

\textit{Resource theory of conditional athermality}: To formalize these intuitive observations, we introduce the resource theory of conditional athermality of bipartite quantum processes. For simplicity, we consider the conditional athermal resource theory for an arbitrary bipartite quantum channel $\mathcal{N}_{A'B'\to AB}$. The free objects are bipartite quantum channels of the form $\mathcal{T}^\beta_{A'\to A}\otimes\mathcal{Q}_{B'\to B}$. The equilibrium state $\gamma^\beta$ corresponding to the bath at temperature $\beta$ fixes the absolutely thermal channel $\mathcal{T}^\beta_{A'\to A}$ for parts $A', A$ of the system for a given conditional athermal resource theory. That is, $\mathcal{T}^{\beta'}\otimes\mathcal{Q}$ is not a free object but rather a resource when the bath for $A$ is at temperature $\beta$ and $\beta'\neq \beta$~\cite{BGD25}. The free operations are conditional Gibbs-preserving superchannels (CGPSs) $\Theta^{\beta}$. A CGPS, $\Theta^{\beta}\in\SCh((A'B',AB),(C'D',CD))$, is defined as a superchannel that preserves conditionally thermalizing channel $\mathcal{T}^\beta$, i.e., for an arbitrary $\Q\in\Ch(B',B)$ there exists some $\Q'\in\Ch(D',D)$ such that
\begin{equation}
    \Theta^{\beta}(\mathcal{T}^\beta_{A'\to A}\otimes\mathcal{Q}_{B'\to B})=\mathcal{T}^\beta_{C'\to C}\otimes\mathcal{Q}'_{D'\to D}.
\end{equation}
These channels can be characterized as postprocessing with a conditional Gibbs covariant channel~\cite{JGW25}. We discuss these properties in Appendix~\ref{app:freesup}. The resourceful objects are bipartite quantum channels $\mathcal{N}_{A'B'\to AB}$ that are not partially thermalizing on $A$, i.e., channels that are not of the form $\mathcal{T}^\beta_{A'\to A}\otimes\mathcal{Q}_{B'\to B}$. To denote athermal resourcefulness of a quantum channel $\n_{A'B'\to AB}$ with respect to the given free object $\mathcal{T}^\beta_{A'\to A}\otimes\Q_{B'\to B}$, we may write $(\n,\mathcal{T}^\beta\otimes\mathcal{Q})$ whenever required. Two bipartite quantum channels $\n,\m$ are said to be conditional athermality equivalent if there exists some quantum channels $\Q, \Q'$ and a CGPS $\Theta_1^\beta(\mathcal{T}^\beta\otimes\Q)=\mathcal{T}^\beta\otimes\Q'$  that can convert $\n\to \m$, and similarly another CGPS $\Theta_2^\beta(\mathcal{T}^\beta\otimes\Q')=\mathcal{T}^\beta\otimes\Q$ that can convert $\m\to \n$, and we write $(\n,\mathcal{T}^\beta\otimes\Q)\sim (\m,\mathcal{T}^\beta\otimes\Q')$.

\textit{Resourcefulness}: We inspect channel free energies to determine the standard resource unit. The thermal channel free energy $F_{\rm T}^\beta[{\cal P}]$ of an arbitrary quantum channel $\mathcal{P}_{A'\to A}$ is given by $F_{\rm T}^\beta[{\cal P}]:=(-\log Z^{\beta}_A+D[\mathcal{P}\Vert\mathcal{T}^\beta])\beta^{-1} \ln 2$ and its resource-theoretic free energy $F^\beta[{\cal P}]=F_{\rm T}^\beta[{\cal P}]-F_{\rm T}^\beta[{\T^\beta}]$~\cite{BGD25b,BGD25}. For a given $\T^\beta_{A'\to A}$, the free energy of a quantum channel $\mathcal{P}_{A'\to A}$ is maximum only if $\mathcal{P}_{A'\to A}$ is an isometry channel and they are minimum iff $\mathcal{P}=\mathcal{T}^\beta$~\cite{BGD25,BGD25b} (also see \autoref{lem:aap-free-opt}). The resource-theoretic thermal channel free energy $F^\beta$ is additive for tensor-product quantum channels with noninteracting output Hamiltonians~\cite{BGD25}. For a bipartite quantum channel $\mathcal{P}_{A'\to A}\otimes\mathcal{Q}_{B'\to B}$, we have $F_{\rm T}^\beta[\mathcal{P}\otimes\mathcal{Q}]-F^\beta_{\rm T}[\mathcal{T}^\beta\otimes\Q]=F^{\beta}[\mathcal{P}]=\beta^{-1}D[\mathcal{P}\|\mathcal{T}^\beta]\ln2$. This leads to the fact that among tensor-product quantum channels $\mathcal{P}_{A'\to A}\otimes\mathcal{Q}_{B'\to B}$ for a fixed channel $\Q$, $F^\beta$ is maximum only if $\mathcal{P}$ is an isometry channel and minimum iff $\mathcal{P}=\mathcal{T}^\beta$. As Hamiltonian in general can be arbitrary, the choice of trivial Hamiltonian is made for the standard resource unit in the resource theory of athermality. The channel free energy $F^\beta[\mathcal{P}]$ of a unitary process is minimal when its output Hamiltonian is fully degenerate~\cite{BGD25}. In the resource theory of athermality under Gibbs-preserving superchannels~\cite{BGD25}, all unitary channels $\U_{A'\to A}$ with trivial Hamiltonian are resource equivalent to each other. The standard resource unit of size $m$ in the resource theory of athermality is taken to be the identity channel $\id_{m}$ with trivial Hamiltonian and input dimension of $m$, without loss of generality. 

\textit{Conditional athermality resource monotones}: The generalized resource-theoretic conditional thermal free energy ${\bf F}^\beta[A|B]_{\n}$ of an arbitrary bipartite quantum channel $\n_{A'B'\to AB}$ is defined with respect to conditional bath at temperature $\beta$ (or conditionally thermalizing channel on $A'$),
\begin{equation}
    {\bf F}^\beta[A|B]_{\n}:=\beta^{-1}\ln2\inf_{\Q\in\Ch(B',B)}{\bf D}[\n\|\mathcal{T}^\beta\otimes\Q].
\end{equation}
${\bf F}^\beta[A|B]_{\n}$ is nonincreasing under the action of an arbitrary CGPS $\Theta\in\SCh((A'B',AB),(C'D',CD))$. When ${\bf D}[\cdot\|\cdot]$ is channel sandwiched R\'enyi relative entropy ${D}_{\alpha}[\cdot\|\cdot]$ for $\alpha\in[\frac{1}{2},\infty]$, where $D_{\alpha}=D$ at $\alpha=1$ and $D_{\alpha}=D_{\infty}$ at $\alpha=\infty$, then $F_{\alpha}^\beta[A|B]_{\n}\geq 0$ for all $\n\in\Ch(A'B',AB)$. For a tensor-product channel $\mathcal{N}=\mathcal{P}_{A'\to A}\otimes\mathcal{Q}_{B'\to B}$, the sandiwched R\'enyi conditional free energy reduces to the sandwiched R\'enyi free energy of the channel $\mathcal{P}_{A'\to A}$, $F_{\alpha}^\beta[A|B]_{\n}=F^{\beta}_{{\rm T},\alpha}[A]_{\T^\beta}-F^{\beta}_{{\rm T},\alpha}[A|B]_{\n}=F_{\alpha}^\beta[A]_{\mathcal{P}}=\beta^{-1}\ln2~D_{\alpha}[\mathcal{P}\|\mathcal{T}^\beta]$ for all $\alpha\in[\frac{1}{2},\infty]$. For a bipartite replacer channel $\mathcal{R}^{\rho}_{A'B'\to AB}$, $F_{\alpha}^\beta[A|B]_{\mathcal{R}^\rho}=\beta^{-1}\ln2~\inf_{\omega\in\St(B)}{D}_{\alpha}(\rho_{AB}\|\gamma^\beta_A\otimes\omega_B)$ for $\alpha\in[\frac{1}{2},\infty]$. For a conditional isometry channel $\V_{A'\to A}\otimes\Q_{B'\to B}$ with $\widehat{H}_{A}\propto\mathbbm{1}_A$, where $\V$ is an isometry channel, $F^\beta[\V\otimes\Q]=\beta^{-1}\ln|A'||A|$ (see ~\autoref{lem:iso-genH}).

\textit{Standard resource unit}: All conditional unitary channels $\mathcal{U}_{A'\to A}\otimes\Q_{B'\to B}$ with trivial Hamiltonian $\widehat{H}_A\propto\mathbbm{1}_A$ are resource equivalent to each other. That is, $(\mathcal{U}\otimes\Q,\rp\otimes\Q)\sim (\id\otimes\Q,\rp\otimes\Q)$, where $\widehat{H}_A\propto \mathbbm{1}_A$, for any unitary channel $\mathcal{U}_{A'\to A}$, $\id_{A'\to A}$, and some channel $\Q$, since CGPSs can be defined as appropriate unitary pre-processing that can take any unitary channel to any other unitary channel. As Hamiltonian in general can be arbitrary, we set our standard conditional athermal (resource) unit with respect to trivial Hamiltonian. The standard resource unit of size $m$ is a conditional identity channel $(\id_m\otimes\Q,\mathcal{R}^{\pi}\otimes\Q)$, for the $m$-dimensional identity channel and some side channel $\Q_{B'\to B}$.

\textit{Resource conversion}: Resource conversion distance is a conditional athermality resource monotone that captures the distance between input resource channel and output resource channel under the action of CGPS. For the conversion of a resource channel $(\n_{A'B'\to AB},\mathcal{T}^{\beta}_{A'\to A}\otimes\mathcal{Q}_{B'\to B})$ to another resource channel $(\m_{C'D'\to CD},\mathcal{T}^{\beta}_{C'\to C}\otimes\mathcal{Q}'_{D'\to D})$ under the action of CGPS $\Theta^\beta$, the athermal resource conversion distance in terms of purified channel distance $P[\cdot,\cdot]$ is given by
\begin{align}\label{eq:conversion-dist}
d_{\mathrm{CGPS}}\Big((\mathcal{N},\T^{\beta}\otimes\mathcal{Q})\to(\mathcal{M},\T^{\beta}\otimes\mathcal{Q}')\Big):=\min_{\Theta\in\SCh} \{P\left[
\mathcal{N},\Theta(\mathcal{M})\right]:
\Theta(\T^{\beta}\otimes \mathcal{Q})=\T^{\beta}\otimes \mathcal{Q}'\}.
\end{align}

\textit{Conditional athermality distillation and formation}: For brevity, we at times denote the conditional identity channel $\id\otimes\Q$ as $\id_m\otimes \Q$ or $\id_m$ for $m$-dimensional input on $\id$, whenever it is clear from the context. In the task of conditional athermality distillation, we want to estimate the largest possible standard conditional athermality resource unit $(\id_m\otimes\Q',\mathcal{R}^\pi\otimes \Q')$ that a resource channel $(\n,\T^\beta\otimes\Q)$ can be transformed to, under the action of some CGPS up to an error $\varepsilon\in[0,1]$. In the task of conditional athermality dilution, we would like to form a channel $(\n,\mathcal{T}^\beta\otimes\Q)$ using the smallest standard resource unit $(\id_m\otimes\Q',\mathcal{R}^\pi\otimes \Q')$, under the action of CGPS up to an error $\varepsilon\in[0,1]$. To capture these notions, we define
\begin{align}
   & {\rm Dist}^{(1,\varepsilon)}((\n,\T^\beta\otimes\Q)\to (\id_m\otimes\Q',\mathcal{R}^\pi\otimes\Q')) \nonumber\\
    &\qquad\qquad := \sup_{m}\left\{
\log m : d_{\mathrm{CGPS}}\left((
\mathcal{N},\T^{\beta}\otimes\mathcal{Q})\to(\mathrm{id}_m\otimes\mathcal{Q}',\mathcal{R}^{\pi}\otimes\mathcal{Q}')\right)\le \varepsilon\right\}.\\
& \mathrm{Cost}^{(1,\varepsilon)}((\id_m\otimes\Q',\mathcal{R}^\pi\otimes\Q')\to (\n,\T^\beta\otimes\Q)) \nonumber\\
& \qquad\qquad :=\inf_{m}
\Bigl\{
\log m : d_{\mathrm{CGPS}}\left((
\mathrm{id}_m \otimes \mathcal{Q}',
\mathcal{R}^{\pi} \otimes \mathcal{Q}')
\to
(\mathcal{N}, \T^{\beta} \otimes \mathcal{Q})\right) \le \varepsilon\Bigr\}.
\end{align}

\begin{definition}[One-shot conditional athermality distillation and formation]
The one-shot conditional athermality distillation yield and formation cost of a quantum bipartite channel $\n_{A'B'\to AB}$ under the action of some CGPSs up to an error $\varepsilon\in[0,1]$, are defined as
\begin{align}
\mathrm{Dist}^{(1,\varepsilon)}(\mathcal{N}_{A'B'\to AB},\T^\beta_{A'\to A})
&:=\inf_{\Q\in\Ch}{\rm Dist}^{(1,\varepsilon}((\n,\T^\beta\otimes\Q)\to (\id_m\otimes\Q',\mathcal{R}^\pi\otimes\Q')),\\
\mathrm{Cost}^{(1,\varepsilon)}(\mathcal{N}_{A'B'\to AB},\T^\beta_{A'\to A})
&:=\inf_{\Q\in\Ch}\mathrm{Cost}^{(1,\varepsilon)}((\id_m\otimes\Q'',\mathcal{R}^\pi\otimes\Q'')\to (\n,\T^\beta\otimes\Q)),
\end{align}
for some quantum channels $\Q',\Q''$.
\end{definition}

$\mathrm{Dist}^{(1,\varepsilon)}(\mathcal{N}_{A'B'\to AB},\T^\beta_{A'\to A})$ is the maximum number of conditional qubit unitary channel, with trivial output Hamiltonian, that can be extracted from a single use of a resource channel $(\mathcal{N}_{A'B'\to AB},\T^\beta_{A'\to A}\otimes\Q_{B'\to B})$ through some CGPS for the worst-possible side channel $\Q$, up to an error $\varepsilon$. $\mathrm{Cost}^{(1,\varepsilon)}(\mathcal{N}_{A'B'\to AB},\T^\beta_{A'\to A})$ is the minimum number of conditional qubit unitary channel, with trivial output Hamiltonian, that is required to prepare a single copy of a resource channel $(\mathcal{N}_{A'B'\to AB},\T^\beta_{A'\to A}\otimes\Q_{B'\to B})$ through some CGPS with access to the best-possible side channel $\Q$, up to an error $\varepsilon$. 

\begin{definition}[Conditional athermality distillation and formation capacity]
The capacities of the conditional athermality distillation yield and formation cost of a quantum bipartite channel $\n_{A'B'\to AB}$ under the action of some CGPSs, are defined as the optimal distillation and formation rates over asymptotically many i.i.d.~uses of $\n$ are defined respectively as
\begin{align}
\mathrm{Dist}^{(\infty,0)}(\mathcal{N}_{A'B'\to AB},\T^\beta_{A'\to A})
&:=\lim_{\varepsilon\to 0^+}\lim_{n\to\infty}\frac{1}{n}\mathrm{Dist}^{(1,\varepsilon)}(\mathcal{N}^{\otimes n}_{{A'}^n{B'}^n\to A^nB^n},{\T^\beta}^{\otimes n}_{{A'}^{n}\to A^n}),\\
\mathrm{Cost}^{(\infty,0)}(\mathcal{N}_{A'B'\to AB},\T^\beta_{A'\to A})
&:=\lim_{\varepsilon\to 0^+}\lim_{n\to\infty}\frac{1}{n}\mathrm{Cost}^{(1,\varepsilon)}(\mathcal{N}^{\otimes n}_{{A'}^n{B'}^n\to A^nB^n},{\T^\beta}^{\otimes n}_{{A'}^{n}\to A^n}),
\end{align}
where ${\mathcal{T}^{\beta}}^{\otimes n}_{{A'}^n\to A^n}:=(\mathcal{T}^{\beta}_{A'\to A})^{\otimes n}$ is the tensor-product of $n$ $\T^{\beta}_{A'\to A}$, assuming that the channel outputs $A_1',A_2',\ldots,A_n'$ are noninteracting.
\end{definition}

We present one of the main results in the following theorem where we determine the one-shot conditional athermality distillation and formation rates of a quantum channel $\mathcal{N}_{A'B'\to AB}$, see Appendix~\ref{app:proof-dist-for-cost} for the detailed proof.
\begin{theorem}[Single-shot yield and cost]\label{th:dist-for-cost}
The one-shot conditional athermality distillation yield and formation cost of a quantum channel $\mathcal{N}_{A'B'\to AB}$ up to an error $\varepsilon\in[0,1]$, is given by
\begin{align}
\mathrm{Dist}^{(1,\varepsilon)}(\mathcal{N}_{A'B'\to AB},\T^\beta_{A'\to A})
&=\frac{1}{2}\inf_{\Q\in\Ch({B',B})}D_{\mathrm{H}}^{\varepsilon^2}[\n_{A'B'\to AB}\Vert\T^\beta_{A'\to A}\otimes \Q_{B'\to B}],\\
\mathrm{Cost}^{(1,\varepsilon)}(\mathcal{N}_{A'B'\to AB},\T^\beta_{A'\to A})
&=\frac{1}{2}\inf_{\mathcal{Q}\in\Ch({B',B})}D_{\infty}^{\varepsilon}[\n_{A'B'\to AB}\Vert\T^\beta_{A'\to A}\otimes\mathcal{\mathcal{Q}}_{B'\to B}].\label{eq:athermality-cost-rate-a}
\end{align}
\end{theorem}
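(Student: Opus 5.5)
Both identities will be proved by matching converse and achievability bounds. The analysis is anchored on two features of the unit resource $(\id_m\otimes\Q',\rp_m\otimes\Q')$. First, its Choi operator relative to the free object is tractable: $\Phi^{\id_m}=\Phi_{R_AA}$, $\Phi^{\rp_m}=\pi_{R_A}\otimes\pi_A$, and the two commute. Hence
\[
D_\infty[\id_m\otimes\Q'\|\rp_m\otimes\Q']=2\log m,
\]
and for the hypothesis-testing quantity the optimal test is the projector onto $\Phi_{R_AA}$ (tensored with $\mathbbm{1}$ on the side systems). This gives
\[
D_H^{\delta}[\id_m\otimes\Q'\|\rp_m\otimes\Q']=2\log m-\log(1-\delta).
\]
This is the origin of the factor $\tfrac12$. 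Second, the monotones $\inf_{\Q}{\bf D}[\cdot\|\T^\beta\otimes\Q]$ are nonincreasing under CGPSs, by data processing of channel divergences under superchannels~\cite[Theorem 1]{DGP24} together with the defining property of CGPSs.

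\emph{Converse bounds.} Suppose a CGPS $\Theta$ satisfies $P[\Theta(\n),\id_m\otimes\Q']\le\varepsilon$ and $\Theta(\T^\beta\otimes\Q)=\rp\otimes\Q'$. Take an optimal test $\Lambda$ on $\Phi_{R_AA}$, run on the output of $\Theta$ and pulled back through the pre- and post-processing of $\Theta$. This gives a test for $\n$ versus $\T^\beta\otimes\Q$ with type-I error at most $\varepsilon^2$, since purified distance $\varepsilon$ implies $\tr\Lambda\Theta(\n)\ge1-\varepsilon^2$. Its type-II error is $1/m^2$. Therefore $2\log m\le D_H^{\varepsilon^2}[\n\|\T^\beta\otimes\Q]$, and the worst $\Q$ yields the infimum.

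For the cost, suppose $\Theta(\id_m\otimes\Q'')\in\mathcal{B}^\varepsilon[\n]$ with $\Theta(\rp\otimes\Q'')=\T^\beta\otimes\Q$. By monotonicity of $D_\infty$,
\[
D_\infty^\varepsilon[\n\|\T^\beta\otimes\Q]\le D_\infty[\id_m\otimes\Q''\|\rp\otimes\Q'']=2\log m .
\]

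\emph{Achievability, distillation.} I will build a measure-and-prepare CGPS. Fix $\Q$ and let $\psi_{RA'B'}$ and $\Lambda$ be optimal in $D_H^{\varepsilon^2}[\n\|\T^\beta\otimes\Q]$. The pre-processing will feed $\psi$ into the channel, ignoring the input on the target side except for a classical flag. The post-processing will measure $\{\Lambda,\mathbbm{1}-\Lambda\}$ on $RAB$. On outcome $\Lambda$ it implements $\id_m$, and otherwise a correction channel $\mathcal{C}$ chosen so that the free object is sent to $\rp_m\otimes\Q'$.

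With $p=\tr\Lambda(\T^\beta\otimes\Q)(\psi)$, the free object maps to $p\,\id_m+(1-p)\,\mathcal{C}$. Choosing $\mathcal{C}=(\rp_m-p\,\id_m)/(1-p)$ requires complete positivity, which holds iff $p\le 1/m^2$. The side channel $\Q'$ can be taken trivial, or equal to a fixed replacer, so it is independent of the outcome. For the resource, the output is $q\,\id_m+(1-q)\,\mathcal{C}$ with $q\ge1-\varepsilon^2$, and its purified distance to $\id_m$ is at most $\varepsilon$. Taking $m^2=\lfloor 1/p\rfloor$ gives the claim, up to rounding that I will absorb as in~\cite{BGD25}.

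\emph{Achievability, cost.} Take $\m\in\mathcal{B}^\varepsilon[\n]$ and $\Q$ attaining $D_\infty[\m\|\T^\beta\otimes\Q]=2\log m$. Construct a superchannel that consumes $\id_m$ by teleportation-style simulation: feed $\Phi_m$ through to obtain a resource state, then apply a measurement that outputs the Choi operator $\Gamma^{\m}$ on success. Since $\Gamma^{\m}\le m^2\,\Gamma^{\T^\beta\otimes\Q}$, the completion $(m^2\,\T^\beta\otimes\Q-\m)/(m^2-1)$ is CP. Mix so that the unit resource yields $\m$ and the replacer $\rp_m\otimes\Q''$ yields $\T^\beta\otimes\Q$. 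This mirrors the single-party construction of~\cite{BGD25}.

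The main obstacle is realizing this as a genuine superchannel acting on channels rather than on Choi states. Converting a Choi-level map into a simulation of the channel $\m$ requires port-based or postselected teleportation, and postselection is not free. I would first try the resolution via the characterization of free superchannels in Appendix~\ref{app:freesup}, namely postprocessing with conditional Gibbs-covariant channels. Failing that, I would use the fact that the unit's $D_\infty$ is saturated by its Choi state, which lets the superchannel act through a Choi-to-channel isomorphism constructed with the $m^2$ Bell-measurement outcomes of the $\id_m$ pair.
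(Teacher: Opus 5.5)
Your two converse bounds and your distillation achievability are correct and follow the paper's argument. The paper also builds a measure-and-prepare superchannel from an optimal $D_H^{\varepsilon^2}$ test. Its completion channel is $\id_m^\perp/(m^2-1)$, which is your $\mathcal{C}$ at $p=1/m^2$; your inequality version saves you from tuning the test to hit $1/m^2$ exactly. For the cost converse, the paper derives $\Theta(\id_m\otimes\Q')\le m^2\,\T^\beta\otimes\Q$ from $\rp_m=\frac{1}{m^2}(\id_m+\id_m^\perp)$, which is equivalent to your data-processing bound on $D_\infty$. Your pulled-back test is a cleaner justification of the paper's bare assertion that measure-and-prepare protocols are optimal for distillation.

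The genuine gap is the achievability half of the cost formula, which you leave open. Your diagnosis of the obstacle is mistaken. Nothing forces the superchannel to manufacture $\m$ out of the unit resource, so teleportation, Choi-to-channel conversion and postselection are beside the point. Neither fallback you list produces a superchannel: the appendix characterization of CGPSs only restates the constraint.

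The missing idea is the one you already use for distillation, with the roles reversed. The post-processing of a superchannel is an arbitrary channel acting on the stored target input, so it can implement $\m$ itself; the unit only has to supply a flag.

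Concretely, fix $\m\in\mathcal{B}^\varepsilon[\n]$ and $\Q$ with $\m\le m^2\,\T^\beta\otimes\Q$ in the CP order.
\begin{itemize}
\item Pre-processing: store the input of the target channel, prepare $\Phi_m$, and send one half through the unit, with any fixed state into its side input.
\item Post-processing: discard the side output and measure $\{\Phi_m,\mathbbm{1}-\Phi_m\}$. On success, apply $\m$ to the stored input; otherwise apply $\mathcal{F}:=(m^2\,\T^\beta\otimes\Q-\m)/(m^2-1)$.
\end{itemize}
The map $\mathcal{F}$ is trace preserving, and it is completely positive exactly by the assumed inequality. The success probability is $1$ for $\id_m\otimes\Q''$ and $\tr[\Phi_m(\pi\otimes\pi)]=1/m^2$ for $\rp_m\otimes\Q''$, whatever $\Q''$ is. Hence
\[
\Theta(\id_m\otimes\Q'')=\m,\qquad \Theta(\rp_m\otimes\Q'')=\tfrac{1}{m^2}\,\m+\tfrac{m^2-1}{m^2}\,\mathcal{F}=\T^\beta\otimes\Q .
\]
So $d_{\mathrm{CGPS}}\le P[\n,\m]\le\varepsilon$ whenever $2\log m\ge D_\infty[\m\|\T^\beta\otimes\Q]$. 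This gives the cost upper bound $\frac12 D^\varepsilon_\infty[\n\|\T^\beta\otimes\Q]$, up to rounding $m$ up.

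This is exactly the sufficiency the paper uses without comment when it rewrites the conversion distance as $\min_{\mathcal{E}}\{P[\n,\mathcal{E}]:\mathcal{E}\le m^2\,\T^\beta\otimes\Q\}$. Adding it completes your proof.
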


We note that the one-shot conditional athermality distillation yield and formation cost of a resource channel $(\mathcal{N}_{A'B'\to AB},\T^\beta_{A'\to A})$ can also be expressed as half the resource-theoretic $\varepsilon$-smoothened conditional hypothesis-testing and max free energies,
\begin{align}
\mathrm{Dist}^{(1,\varepsilon)}(\mathcal{N}_{A'B'\to AB},\T^\beta_{A'\to A})
&=\frac{\beta}{2\ln 2}F^{\beta,\varepsilon^2}_H[A|B]_{\n},\\
    \mathrm{Cost}^{(1,\varepsilon)}(\mathcal{N}_{A'B'\to AB},\T^\beta_{A'\to A})& =\frac{\beta}{2\ln 2}F^{\beta,\varepsilon}_\infty[A|B]_{\n},
\end{align}
where 
\begin{align}
    F^{\beta,\varepsilon}_H[A|B]_{\n} & :=\beta^{-1}\ln 2\inf_{\Q\in\Ch({B',B})}D_{\mathrm{H}}^{\varepsilon}[\n_{A'B'\to AB}\Vert\T^\beta_{A'\to A}\otimes \Q_{B'\to B}],\nonumber\\
    F^{\beta,\varepsilon}_\infty[A|B]_{\n}&:=\beta^{-1}\ln 2\inf_{\mathcal{Q}\in\Ch({B',B})}D_{\infty}^{\varepsilon}[\n_{A'B'\to AB}\Vert\T^\beta_{A'\to A}\otimes\mathcal{\mathcal{Q}}_{B'\to B}].
\end{align}

We note that \autoref{th:dist-for-cost} implies that the resource theory of athermality for arbitrary single-input single-output quantum channels is asymptotically reversible under the action of Gibbs-preserving superchannels. The proof arguments in \cite{BGD25,BGD25b} follow for arbitrary channel even when the input and output system dimensions are not the same. Another direct consequence of \autoref{th:dist-for-cost} is the following observation.
\begin{corollary}\label{cor:asym-dist-lb}
For a bipartite quantum channel $\n_{A'B'\to AB}$,
\begin{equation}\label{eq:asym-dist-lb}
    {\rm Dist}^{(\infty,0)}(\n_{A'B'\to AB},\T^\beta_{A'\to A})\leq \frac{1}{2}\inf_{\Q\in\Ch(B',B)}D^{\rm reg}[\n_{A'B'\to AB}\|\T^\beta_{A'\to A}\otimes\Q_{B'\to B}],
\end{equation}
where $D^{\rm reg}[\cdot\|\cdot]$ is defined with channel relative entropy $D[\cdot\|\cdot]$ in Eq.~\eqref{eq:reg-gen-div}.
\end{corollary}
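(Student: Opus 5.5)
Starting from \autoref{th:dist-for-cost} applied to $\n^{\otimes n}$ and ${\T^\beta}^{\otimes n}$, we have
\begin{equation}
\frac{1}{n}\mathrm{Dist}^{(1,\varepsilon)}(\n^{\otimes n},{\T^\beta}^{\otimes n})=\frac{1}{2n}\inf_{\Q_n\in\Ch({B'}^n,B^n)}D_{\mathrm{H}}^{\varepsilon^2}[\n^{\otimes n}\Vert{\T^\beta}^{\otimes n}\otimes\Q_n].
\end{equation}
The infimum is over all side channels on $n$ copies, including non-product ones. Restricting it to product side channels $\Q_n=\Q^{\otimes n}$ with fixed $\Q\in\Ch(B',B)$ only increases it. After reordering systems, ${\T^\beta}^{\otimes n}\otimes\Q^{\otimes n}=(\T^\beta\otimes\Q)^{\otimes n}$, so the rate is bounded by $\frac{1}{2n}D_{\mathrm{H}}^{\varepsilon^2}[\n^{\otimes n}\Vert(\T^\beta\otimes\Q)^{\otimes n}]$.

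Next we convert hypothesis testing to relative entropy. For states, the standard bound is $D_H^{\delta}(\rho\|\sigma)\le \frac{1}{1-\delta}\big(D(\rho\|\sigma)+h(\delta)\big)$, where $h$ is the binary entropy; it follows from data processing of $D$ under the binary test measurement. Take any input $\psi$ achieving the channel $D_H$ (or within $\eta$ of its supremum). Then
\begin{equation}
D_H^{\varepsilon^2}(\id\otimes\n^{\otimes n}(\psi)\|\id\otimes(\T^\beta\otimes\Q)^{\otimes n}(\psi))\le\frac{1}{1-\varepsilon^2}\Big(D[\n^{\otimes n}\|(\T^\beta\otimes\Q)^{\otimes n}]+h(\varepsilon^2)\Big).
\end{equation}
This gives the analogous inequality between channel divergences. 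We assume the relevant channel divergences are finite; if the right side is $+\infty$ the bound is trivial.

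Dividing by $n$ and taking $n\to\infty$ kills $h(\varepsilon^2)/n$. The limit $\frac1n D[\n^{\otimes n}\|\m^{\otimes n}]\to D^{\rm reg}[\n\|\m]$ exists by the definition in Eq.~\eqref{eq:reg-gen-div}; it can also be justified via superadditivity and Fekete's lemma. Using $\limsup$ throughout avoids any existence issue on the left-hand side. This gives
\begin{equation}
\lim_{n}\frac1n\mathrm{Dist}^{(1,\varepsilon)}\le \frac{1}{2(1-\varepsilon^2)}D^{\rm reg}[\n\|\T^\beta\otimes\Q].
\end{equation}
Letting $\varepsilon\to0^+$ removes the prefactor. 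Since $\Q$ was arbitrary, taking the infimum over $\Q\in\Ch(B',B)$ yields Eq.~\eqref{eq:asym-dist-lb}.

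The only delicate step is the order of limits and the infimum. Restricting to product $\Q^{\otimes n}$ must come before the $n$-limit, and the infimum over $\Q$ is taken last, after the $\varepsilon$-limit. This ordering is legitimate because the bound holds for each fixed $\Q$ uniformly in the limiting procedure. A second point to check is the weak-converse inequality for channels: apply it state-wise for the optimal input, which is fine since the channel $D$ is a supremum over the same inputs.
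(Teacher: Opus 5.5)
Your proof is correct, and its skeleton is the paper's. You apply \autoref{th:dist-for-cost} to $\n^{\otimes n}$, relax the infimum over $\Q_n\in\Ch({B'}^n,B^n)$ to product side channels $\Q^{\otimes n}$, convert hypothesis testing into regularized relative entropy for each fixed $\Q$, and take the infimum over $\Q$ last.

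The difference lies in the conversion step. The paper cites the asymptotic identity $\lim_{\varepsilon\to 0^+}\lim_{n\to\infty}\frac{1}{n}D^{\varepsilon}_H[\n^{\otimes n}\|\m^{\otimes n}]=D^{\rm reg}[\n\|\m]$ from \cite{WW19b} (the paper's display omits the $\frac{1}{n}$). That is a channel Stein-type result with both an achievability and a converse part. You instead prove only the direction actually used, via the weak converse $D_H^{\delta}(\rho\|\sigma)\le\frac{1}{1-\delta}\bigl(D(\rho\|\sigma)+h(\delta)\bigr)$, applied input by input and then maximized over inputs.

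Your route has three advantages:
\begin{itemize}
\item It is self-contained, needing nothing beyond data processing of $D$.
\item It gives the explicit finite-$n$ estimate $\frac{1}{n}\mathrm{Dist}^{(1,\varepsilon)}(\n^{\otimes n},{\T^\beta}^{\otimes n})\le\frac{1}{2(1-\varepsilon^2)}\cdot\frac{1}{n}\bigl(D[\n^{\otimes n}\|(\T^\beta\otimes\Q)^{\otimes n}]+h(\varepsilon^2)\bigr)$.
\item It makes clear that the $\varepsilon\to0^+$ limit in the capacity definition is what lets a weak converse suffice.
\end{itemize}

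The extra content of the cited equality is that, for a fixed product side channel, the bound is asymptotically attained. This buys nothing for the corollary, because the preceding relaxation to product side channels is already only an inequality.
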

\begin{proof}
    It follows from the definition and \autoref{th:dist-for-cost} that
    \begin{align}
       2~{\rm Dist}^{(\infty,0)}(\mathcal{N},\mathcal{T}^\beta) &= \lim_{\varepsilon\to 0^+}\lim_{n\to\infty}\frac{1}{n}\inf_{\Q\in\Ch({B'}^n,B^n)}D^{\varepsilon^2}_H[\n^{\otimes n}\|{\mathcal{T}^\beta}^{\otimes n}_{{A'}^n\to A^n}\otimes \Q]\\
         &\leq \lim_{\varepsilon\to 0^+}\lim_{n\to\infty}\frac{1}{n}\inf_{\Q\in\Ch({B'},B)}D^{\varepsilon^2}_H[\n^{\otimes n}\|{\mathcal{T}^\beta}^{\otimes n}_{{A'}\to A}\otimes \Q^{\otimes n}].
    \end{align}
    We know that $\lim_{\varepsilon\to 0^+}\lim_{n\to\infty}D^{\varepsilon}_H[\n^{\otimes n}\|\m^{\otimes n}]=D^{\rm reg}[\n\|\m]$~\cite{WW19b}. Therefore, for an arbitrary $\Q\in\Ch(B',B)$,
    \begin{align}
     \lim_{\varepsilon\to 0^+}\lim_{n\to\infty}  \frac{1}{n} D^{\varepsilon^2}_H[\n^{\otimes n}\|{{\mathcal{T}^\beta}}^{\otimes n}_{{A'}\to A}\otimes \Q^{\otimes n}]=D^{\rm reg}[\n\|{\mathcal{T}^\beta}\otimes\Q],
    \end{align}
    and taking infimum over $\Q\in\Ch(B',B)$ on both sides, we arrive at the inequality~\eqref{eq:asym-dist-lb}.
\end{proof}
We make an important remark here that the protocols for the conditional athermality distillation yield and formation cost of a quantum channel $\n$ can be laced with adaptive strategies~\cite{CMW16,Das19}, where adaptive channels are conditional Gibbs-preserving~\cite{JGW25}, for many uses of $\n$. The non-adaptive strategy is special case of adaptive strategy. However, based on the recent results on the limitations on channel discrimination tasks~\cite{WW19b,FFRS20,BKSD23}, we can claim that $\frac{1}{2}\inf_{\Q\in\Ch(B',B)}D^{\rm reg}[\n_{A'B'\to AB}\|\T^\beta_{A'\to A}\otimes\Q_{B'\to B}]$ is an upper bound also for the asymptotic rate of the conditional athermality distillation assisted with CGPSs using adaptive strategy. One can also consider restricted class of free operations instead of CGPSs. The distillation yield won't be larger under restricted class of free operations and the formation cost won't be lesser under restricted class of free operations, when compared to the resource theory under CGPSs.

{\it Distillation yield and formation cost relation}: Let us recall the relation between $D^{\varepsilon}_H(\rho\|\sigma)$ and $D_{\infty}(\rho\|\sigma)$ for $\rho\in\St(A)$, $\sigma\in{\rm L}_+(A)$, and $\varepsilon\in(0,1)$,
\begin{equation}
    D^{\varepsilon}_H(\rho\|\sigma)\leq D_{\infty}(\rho\|\sigma)+\log\left(\frac{1}{1-\varepsilon}\right).
\end{equation}
Utilizing this for channel divergences based on the definitions, for quantum channels $\n_{A'B'\to AB}$, $\T^{\beta}_{A'\to A}$, and $\Q_{B'\to B}$, and $\varepsilon\in(0,1)$, we have
\begin{equation}
     D^{\varepsilon}_H[\n\|\T^\beta\otimes\Q]\leq D_{\infty}[\n\|\T^\beta\otimes\Q]+\log\left(\frac{1}{1-\varepsilon}\right),
\end{equation}
implying that asymptotically the distillation yield is never greater than the formation cost,
\begin{align}
    \mathrm{Dist}^{(1,\varepsilon)}(\mathcal{N}_{A'B'\to AB},\T^\beta_{A'\to A}) &\leq \mathrm{Cost}^{(1,\delta=0)}(\mathcal{N}_{A'B'\to AB},\T^\beta_{A'\to A})+\frac{1}{2}\log\left(\frac{1}{1-\varepsilon^2}\right),\\
      \mathrm{Dist}^{(\infty,0)}(\mathcal{N}_{A'B'\to AB},\T^\beta_{A'\to A})&\leq     \mathrm{Cost}^{(\infty,0)}(\mathcal{N}_{A'B'\to AB},\T^\beta_{A'\to A}).\label{eq:asym-dist-cost-ineq}
\end{align}
In a proper resource theory, the resource formation capacity of an object is never lesser than the resource distillation capacity of that object when the allowed transformations are free operations. This is because if resource distillation capacity of an object was strictly more than its resource formation capacity under free operations, one could first distill resource from that object and then would form the object while gaining standard resource units for free and all this under free operations. Free operations in a proper resource theory are those that cannot generate resource units, thereby implying inequality~\eqref{eq:asym-dist-cost-ineq}, i.e., the conditional athermality distillation capacity of a channel is always lesser than its conditional formation capacity. 

We remark that the zero-error one-shot conditional athermality distillation and formation rates of a bipartite quantum channel $\mathcal{N}_{A'B'\to AB}$ that is tensor-product of two local quantum channels, $\n=\mathcal{P}_{A'\to A}\otimes\Q_{B'\to B}$, are equal to the zero-error one-shot athermality distillation and formation rates of a quantum channel $\mathcal{P}$, respectively. This follows from \autoref{th:dist-for-cost} and \cite[Theorem~2]{DGP24}.

If we consider $B',B$ in a bipartite quantum channel $\n_{A'B'\to AB}$ to be trivial ($B'\simeq B\simeq \mathbb{C}$), then $\n$ is a single-input, single-output quantum channel from $A'\to A$. Conditional athermal resource channels $(\n_{A'B'\to AB},\mathcal{T}^\beta_{A'\to A}\otimes\Q_{B'\to B})$ and $(\id_{A'\to A}\otimes\mathcal{Q}'_{B'\to B},\mathcal{R}^\pi_{A'\to A}\otimes\Q'_{B'\to B})$ reduce to athermal resource channels $(\n_{A'\to A},\T^\beta)$ and $(\id_{A'\to A},\mathcal{R}^\pi_{B'\to B})$, respectively, for trivial $B',B$. This leads to the observation that the resource theory of conditional athermality of bipartite quantum channels provides a unifying framework to the resource theory of athermality of quantum channels~\cite{BGD25,BGD25b}, the resource theory of purity of quantum channels~\cite{YZGZ20,BGD25}, the resource theory of athermality of quantum states~\cite{WW19,JGW25}, and the resource theory of purity of quantum states~\cite{HHO03,WY16}, when the set of free operations contains all free operations (see~\cite[Section IV.A.]{BGD25} for relevant discussions).

The causal structure of a bipartite quantum channel $\n_{A'B'\to AB}$ in general can be more intricate than a tensor-product channel $\mathcal{P}_{A'\to A}\otimes\Q_{B'\to B}$. There are several questions regarding the kind of causal structure a bipartite quantum channel $\n_{A'B'\to AB}$ has: signaling or no-signaling from $A'\to B$, separability preserving, entangling, entanglement-breaking, etc. A bipartite channel $\mathcal{P}_{A'\to A}\otimes\Q_{B'\to B}$ is no-signaling from $A'\to B$ as well as separability preserving (non-entangling). If we consider Ash holds $A',A$ and Bob holds $B',B$, then a channel $\mathcal{P}_{A'\to A}\otimes\Q_{B'\to B}$ represents quantum channels locally acted by Ash and Bob on their respective systems. The conditional athermality formation rate of a quantum channel $\n_{A'B'\to AB}$ allows us to estimate the amount of conditional athermality $\id_m\otimes\Q'$ required to form the channel $\n$ which could have intricate causal structure. We discuss these aspects in coming sections in detail for quantum channels $\mathcal{N}_{A'B'\to AB}$ whose $\widehat{H}_{A}\propto \mathbbm{1}_A$.

\section{Conditional channel min-entropy}\label{section:min-entropy}
In this section, we prove some of the main properties of conditional min-entropy, such as asymptotic equipartition property for tele-covariant channels, continuity, dimensional bounds, etc. Asymptotic equipartition property of conditional min-entropy will be applied to determine the asymptotic rates of the conditional purity formation cost of tele-covariant bipartite quantum channels. 

The sandwiched R\'enyi conditional entropy $S_{\alpha}[A|B]_{\n}$ of an arbitrary quantum channel $\n_{A'B'\to AB}$, for $\alpha\in[\frac{1}{2},\infty]$, is defined as~\cite{DGP24}
\begin{equation}
    S_{\alpha}[A|B]_{\n}:=-\inf_{\Q\in\Ch(B',B)}D_{\alpha}[\mathcal{N}_{A'\to A}\Vert\R_{A'\to A}\otimes\Q_{B'\to B}],
\end{equation}
where $\R_{A'\to A}(\cdot):=\tr(\cdot)\mathbbm{1}_A$, $\lim_{\alpha\to 1}S_{\alpha}[A|B]_{\n}=S[A|B]_{\n}$ is von Neumann conditional entropy, and $\lim_{\alpha\to\infty}S_{\alpha}[A|B]_{\n}=S_{\infty}[A|B]_{\n}$ is the conditional min-entropy. The sandwiched conditional entropy $S_{\alpha}[A|B]_{\n}$, for $\alpha\in[\frac{1}{2},\infty]$, of a bipartite quantum channel $\n_{A'B'\to AB}$ satisfies the following properties~\cite{DGP24}:
\begin{enumerate}
    \item It is nondecreasing under the actions actions of an arbitrary conditional uniformity preserving superchannels $\Theta^{\pi}$, 
    \begin{equation}
        S_{\alpha}[A|B]_{\n}\geq S_{\alpha}[C|D]_{\Theta^{\pi}(\n)},
    \end{equation}
 for $\Theta^{\pi}\in\SCh((A'B',AB),(C'D',CD))$ such that $ \Theta^{\pi}(\mathcal{R}^\pi_{A'\to A}\otimes\Q_{B'\to B})=\mathcal{R}^\pi_{C'\to C}\otimes\Q'_{D'\to D}$ (see~Eq.~\eqref{eq:cups} and discussions around it).
 \item Conditioning doesn't increase entropy,
 \begin{equation}\label{eq:conditioning-prop}
     S_{\alpha}[A|B]_{\n}\leq S_{\alpha}[A]_{\n},
 \end{equation}
 where $S_{\alpha}[A]_{\n}:=S_{\alpha}[\tr_B\circ\n_{A'B'\to AB}]=-D_{\alpha}[\tr_B\circ\n\Vert\R_{A'B'\to A}]$. $\tr_B\circ\n_{A'B'\to AB}$ is a reduced channel of $\n$~\cite{DGP24} and $S_{\alpha}[\Q]:=-D_{\alpha}[\Q_{C'\to C}\Vert\R_{C'\to C}]$ is the sandwiched R\'enyi entropy of a quantum channel $\Q_{C'\to C}$~\cite{GW21,Yua19} (see also~\cite{SPSD25}). 
    \item For an arbitrary replacer channel $\mathcal{R}^{\rho}_{A'B'\to AB}$,
    \begin{equation}
        S_{\alpha}[A|B]_{\mathcal{R}^\rho}=S_{\alpha}(A|B)_{\rho}.
    \end{equation}
    \item For a tensor-product bipartite channel $\n=\mathcal{P}_{A'\to A}\otimes\Q_{B'\to B}$, the sandwiched R\'enyi conditional entropy reduces to the sandwiched R\'enyi entropy of ${\cal P}\in\Ch(A',A)$,
    \begin{equation}
        S_{\alpha}[A|B]_{\n}=S_{\alpha}[A]_{\mathcal{P}}.
    \end{equation}
    \item For $\alpha_1,\alpha_2\in[\frac{1}{2},\infty]$ and $\alpha_1\leq \alpha_2$, we have $S_{\alpha_1}[A|B]_\n\geq S_{\alpha_2}[A|B]_{\n}$. In particular, $S_{\infty}[A|B]_{\n}\leq S[A|B]_{\n}$.
\end{enumerate}

\begin{lemma}
    The sandwiched R\'enyi conditional entropy $S_{\alpha}[A|B]_{\n}$, for $\alpha\in[\frac{1}{2},\infty]$, of a biparite channel $\n_{A'B'\to AB}$ is maximum, $S_{\alpha}[A|B]_{\n}=\log|A|$ iff $\n$ is conditional uniformly mixing channel, $\n=\mathcal{R}^{\pi}_{A'\to A}\otimes\Q_{B'\to B}$ for arbitrary $\Q\in\Ch(B',B)$. Also, $S_{\alpha}[A|B]_{\n}\leq \log |A|$.
\end{lemma}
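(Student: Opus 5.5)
The upper bound comes first. The approach is to lower bound the channel divergence by a single well-chosen input. For any $\Q\in\Ch(B',B)$ and any input state, say $\psi_{RA'B'}$, we have
\[
D_\alpha[\n\|\R\otimes\Q]\geq D_\alpha\bigl(\n(\psi)\,\big\|\,\mathbbm{1}_A\otimes(\id_R\otimes\Q)(\psi_{RB'})\bigr).
\]
Take any pure input and write $\omega_{RAB}:=\n(\psi)$ and $\sigma_{RB}:=(\id_R\otimes\Q)(\psi_{RB'})$, so that $\sigma_{RB}$ is a normalized state. Apply the data-processing inequality for $D_\alpha$ (valid for $\alpha\in[\frac12,\infty]$) under the partial trace over $R$ and $B$. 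This gives
\[
D_\alpha(\omega_{RAB}\|\mathbbm{1}_A\otimes\sigma_{RB})\geq D_\alpha(\omega_A\|\mathbbm{1}_A)\geq -\log|A|,
\]
because $-D_\alpha(\omega_A\|\mathbbm{1}_A)=S_\alpha(\omega_A)\leq\log|A|$. Taking the infimum over $\Q$ yields $S_\alpha[A|B]_\n\leq\log|A|$.

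For the "if" direction, let $\n=\rp\otimes\Q$ and choose the side channel $\Q$ itself in the infimum. Then $\n=\frac{1}{|A|}(\R\otimes\Q)$, so for every input the outputs satisfy $\rho=\frac{1}{|A|}\tau$. Using the scaling property $D_\alpha(\rho\|c\rho)=-\log c$, we get $D_\alpha[\n\|\R\otimes\Q]=-\log|A|$, hence $S_\alpha[A|B]_\n\geq\log|A|$. Combined with the upper bound, this gives equality.

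For the "only if" direction, assume $S_\alpha[A|B]_\n=\log|A|$.
\begin{enumerate}
\item The infimum over $\Q$ ranges over the compact set $\Ch(B',B)$, and the channel divergence is lower semicontinuous (the sup over inputs of lsc functions), so there is a minimizer $\Q^*$ with $D_\alpha[\n\|\R\otimes\Q^*]=-\log|A|$.
\item Rewrite the divergence as $D_\alpha[\n\|\rp\otimes\Q^*]-\log|A|$, using the same scaling property. This forces $D_\alpha[\n\|\rp\otimes\Q^*]=0$.
\item Both maps are channels, so for every input the outputs are normalized states with $D_\alpha(\n(\psi)\|(\rp\otimes\Q^*)(\psi))=0$. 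Faithfulness of the sandwiched R\'enyi divergence on normalized states for $\alpha\geq\frac12$ then gives $\n(\psi)=(\rp\otimes\Q^*)(\psi)$ for all $\psi$.
\item Equivalently, the Choi operators coincide, so $\n=\rp\otimes\Q^*$.
\end{enumerate}

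The main obstacle is the attainment of the infimum together with faithfulness at $\alpha=\frac12$ and $\alpha=\infty$. If attainment is problematic, I would take a minimizing sequence $\Q_k$, pass to a convergent subsequence by compactness, and use lower semicontinuity of $D_\alpha$ in its second argument. Faithfulness at $\alpha=\frac12$ follows from the fidelity being equal to $1$ only for equal states. At $\alpha=\infty$ it follows because $\rho\leq\sigma$ with equal traces forces $\rho=\sigma$.
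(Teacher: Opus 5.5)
Your proof is correct, and its core (the rescaling $D_\alpha[\n\|\R\otimes\Q]=D_\alpha[\n\|\rp\otimes\Q]-\log|A|$ combined with faithfulness of $D_\alpha$ on normalized states) is exactly the paper's argument. The paper also uses this identity for the upper bound. Since $D_\alpha[\n\|\rp\otimes\Q]\geq 0$ for any pair of channels, it gets $S_\alpha[A|B]_\n=\log|A|-\inf_{\Q}D_\alpha[\n\|\rp\otimes\Q]\leq\log|A|$ in one line. Your route through data processing to the $A$-marginal and $S_\alpha(\omega_A)\leq\log|A|$ is valid but unnecessary. It amounts to the same nonnegativity fact, $D_\alpha(\omega_A\|\pi_A)\geq 0$, applied after a partial trace. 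Your write-up genuinely adds something in the attainment step. The paper goes from $\inf_{\Q}D_\alpha[\n\|\rp\otimes\Q]=0$ straight to ``$\n=\rp\otimes\Q$ for some $\Q$'', which tacitly assumes the infimum is achieved. Your argument supplies that justification: $\Ch(B',B)$ is compact, and $\Q\mapsto D_\alpha[\n\|\R\otimes\Q]$ is lower semicontinuous as a supremum over inputs of lower semicontinuous functions. Your separate checks of faithfulness at $\alpha=\frac12$ (via fidelity) and $\alpha=\infty$ (via $\rho\leq\sigma$ with equal traces) handle the endpoint cases that the paper leaves implicit.
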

\begin{proof}
    \begin{align}
        S_{\alpha}[A|B]_{\n}&=-\inf_{\Q\in\Ch(B',B)}D_{\alpha}[\n_{A'B'\to AB}\|\R_{A'\to A}\otimes\Q_{B'\to B}]\\
        & =\log|A|-\inf_{\Q\in\Ch(B',B)}D_{\alpha}[\n_{A'B'\to AB}\|\mathcal{R}^{\pi}_{A'\to A}\otimes\Q_{B'\to B}].
    \end{align}
     For $\alpha\in[\frac{1}{2},\infty]$, we know that $D_{\alpha}[\mathcal{N}\|\mathcal{M}]\geq 0$ for any two arbitrary channels $\mathcal{N},\m$ and the inequality saturates ($=0$) iff $\n=\m$. We can then conclude that $ S_{\alpha}[A|B]_{\n}=\log |A|$ iff $\n=\mathcal{R}^{\pi}_{A'\to A}\otimes\Q_{B'\to B}$ for some $\Q\in\Ch(B',B)$ and also $S_{\alpha}[A|B]\leq \log|A|$.
\end{proof}

The Choi operator $\Gamma^{\n}$ of a bipartite channel $\mathcal{N}_{A'B'\to AB}$ is $\Gamma^\n_{R_AAR_BB}=\n(\Gamma_{R_AA'}\otimes\Gamma_{R_BB'})$, where $\Gamma_{R_AA'}$ denotes a maximally entangled operator. A maximally entangled operator $\Gamma_{R:A'B'}$ of $R:A'B'$ can be written as $\Gamma_{R:A'B'}=\Gamma_{R_AA'}\otimes\Gamma_{R_BB'}$, where $R=R_AR_B$ for $R\simeq A'B'$. We recall that for a bipartite quantum channel $\n_{A'B'\to AB}$~\cite{DGP24} (see also~\cite{CE16}),
\begin{align}
    S_{\infty}[A|B]_\n &:=-\inf_{\Q\in\Ch(B',B)}D_{\infty}[\n_{A'B'\to AB}\|\R_{A'\to A}\otimes\Q_{B'\to B}]\\
    &=-\inf_{\Q\in\Ch(B',B)}D_{\infty}(\Gamma^{\n}_{R_AAR_BB}\|\mathbbm{1}_{R_AA}\otimes\Gamma^{\Q}_{R_BB})\\
    &=-\inf_{\Q\in\Ch(B',B)}D_{\infty}(\Phi^{\n}_{R_AAR_BB}\|\mathbbm{1}_{R_AA}\otimes\Phi^{\Q}_{R_BB})-\log|A'|\\
    &=-\log \inf_{\Q\in\Ch(B',B)}\norm{(\Gamma^{\Q}_{R_BB})^{-\frac{1}{2}}\Gamma^\n_{R_AAR_BB}(\Gamma^{\Q}_{R_BB})^{-\frac{1}{2}}}_{\infty}\\
    &=-\log 
 \min\left\{\frac{\tr{M}}{|B'|}: 
 \Gamma^\n_{R_AAR_BB}\leq \mathbbm{1}_{R_AA}\otimes M, 0\leq M, \tr_B M=\frac{\tr M}{|B'|}\mathbbm{1}_{R_B}\right\},\label{eq:con-min-ent-sdp}
\end{align}
where $M\in{\rm L}_+(R_BB)$. Eq.~\eqref{eq:con-min-ent-sdp} provides semidefinite program (SDP) representation of the conditional channel min-entropy.

We first show that the conditional min-entropy of $\n_{A'B'\to AB}$ can be bounded by the conditional min-entropy of its Choi state $\Phi^{\n}_{R_AAR_BB}=\frac{1}{|A'||B'|}\Gamma^\n_{R_AAR_BB}$. This fact is crucial in making important observations like asymptotic equipartition property of the conditional min-entropy and proving that the conditional channel min-entropy is minimum iff the channel is a swap-like operation.

\begin{proposition}\label{thm:min-ent-bounds}
The conditional min-entropy $S_{\infty}[A|B]_{\mathcal{N}}$ of an arbitrary bipartite quantum channel $\mathcal{N}_{A'B'\to AB}$, with $|A'|<+\infty$, is bounded as
    \begin{align}
    S^{\downarrow}_{\infty}(R_AA|R_BB)_{\Phi^{\mathcal{N}}}-\log|A'|  \leq S_{\infty}[A|B]_{\mathcal{N}} & \leq  S_{\infty}(R_AA|R_BB)_{\Phi^{\mathcal{N}}}-\log|A'|,
    \end{align}
    where $\mathbf{S}^{\downarrow}(A|B)_{\rho}:=-{\bf D}(\rho_{AB}\Vert\mathbbm{1}_A\otimes\rho_B)$ and $\mathbf{S}(A|B)_{\rho}:=\inf_{\sigma\in\St(B)}{\bf D}(\rho_{AB}\|\mathbbm{1}_A\otimes\sigma_B)$.
\end{proposition}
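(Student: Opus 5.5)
The plan is to work directly from the Choi-state representation derived just before the proposition:
\[
S_{\infty}[A|B]_\n=-\inf_{\Q\in\Ch(B',B)}D_{\infty}(\Phi^{\n}_{R_AAR_BB}\|\mathbbm{1}_{R_AA}\otimes\Phi^{\Q}_{R_BB})-\log|A'|.
\]
Both bounds then follow by comparing the feasible set $\{\Phi^{\Q}_{R_BB}:\Q\in\Ch(B',B)\}$ with the state optimizations that define $S_\infty$ and $S^{\downarrow}_\infty$ of the Choi state. Here I read the definition of $\mathbf{S}(A|B)_\rho$ in the statement with the minus sign of Eq.~\eqref{eq:gen-con-st}.

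\emph{Upper bound.} The set of Choi states $\Phi^{\Q}_{R_BB}$ is contained in $\St(R_BB)$: these are exactly the states whose marginal on $R_B$ is $\pi_{R_B}$. Enlarging the feasible set can only decrease the infimum, so
\[
\inf_{\Q}D_\infty(\Phi^\n\|\mathbbm{1}_{R_AA}\otimes\Phi^\Q)\ \geq\ \inf_{\sigma\in\St(R_BB)}D_\infty(\Phi^\n\|\mathbbm{1}_{R_AA}\otimes\sigma_{R_BB}).
\]
Negating both sides and subtracting $\log|A'|$ gives $S_\infty[A|B]_\n\le S_\infty(R_AA|R_BB)_{\Phi^\n}-\log|A'|$.

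\emph{Lower bound.} I will exhibit one admissible side channel, namely the one whose Choi state is the marginal $\Phi^\n_{R_BB}$. This requires checking that $\Phi^{\n}_{R_BB}=\tr_{R_AA}\Phi^\n$ is a valid Choi state, that is, $\tr_B\Phi^\n_{R_BB}=\pi_{R_B}$. This holds because $\n$ is trace preserving: $\tr_{AB}\Phi^\n=\Phi_{R_AA'}\!\!\restriction_{R_A}\otimes\Phi_{R_BB'}\!\!\restriction_{R_B}=\pi_{R_A}\otimes\pi_{R_B}$, so the $R_B$ marginal is $\pi_{R_B}$. By Choi--Jamio\l{}kowski there is then a $\Q^\star\in\Ch(B',B)$ with $\Phi^{\Q^\star}=\Phi^\n_{R_BB}$; concretely, $\Q^\star(\cdot)=\tr_A\circ\n(\pi_{A'}\otimes\cdot)$. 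Plugging this channel into the infimum gives
\[
\inf_\Q D_\infty(\Phi^\n\|\mathbbm{1}\otimes\Phi^\Q)\le D_\infty(\Phi^\n\|\mathbbm{1}_{R_AA}\otimes\Phi^\n_{R_BB})=-S^{\downarrow}_\infty(R_AA|R_BB)_{\Phi^\n},
\]
which yields the left inequality.

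The only delicate point is this marginal-consistency check, i.e.\ verifying that the reduced Choi state defines a legitimate channel $B'\to B$ (trace preservation). The rest is monotonicity of an infimum under inclusion of feasible sets. Finiteness of $|A'|$ is used only so that $\Phi_{R_AA'}$ and the normalization $\log|A'|$ are well defined.
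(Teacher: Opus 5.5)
Your proposal is correct and matches the paper's proof. The upper bound comes from relaxing the Choi states $\Phi^{\Q}_{R_BB}$ to all of $\St(R_BB)$. The lower bound comes from plugging in the reduced channel $\tr_A\circ\n(\pi_{A'}\otimes\cdot)$, whose Choi state is $\Phi^{\n}_{R_BB}$; your explicit trace-preservation check for that channel and your reading of the missing minus sign in the statement's definition of $\mathbf{S}(A|B)_\rho$ are both right.
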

\begin{proof}
We first prove the upper bound, using definition,
    \begin{align}
        - S_{\infty}[A|B]_{\mathcal{N}} 
        & = \inf_{\mathcal{Q}\in\mathrm{Ch}(B',B)}D_{\infty}(\Phi^{\mathcal{N}}_{R_AAR_BB}\Vert\pi_{R_A}\otimes\mathbbm{1}_{A}\otimes\Phi^{\mathcal{Q}}_{R_BB})\nonumber\\
          & = \inf_{\mathcal{Q}\in\mathrm{Ch}(B',B)}D_{\infty}(\Phi^{\mathcal{N}}_{R_AAR_BB}\Vert\mathbbm{1}_{R_A}\otimes\mathbbm{1}_{A}\otimes\Phi^{\mathcal{Q}}_{R_BB})+\log|A'|\nonumber\\
         &\geq   \inf_{\sigma\in{\mathrm St}(R_BB)}D_{\infty}(\Phi^{\mathcal{N}}_{R_AAR_BB}\Vert\mathbbm{1}_{R_A}\otimes\mathbbm{1}_{A}\otimes\sigma_{R_BB})+\log|A'|\nonumber\\
        & = - S_{\infty}(R_AA|R_BB)_{\Phi^{\mathcal{N}}}+\log|A'|.
    \end{align}
To prove the lower bound, consider a reduced channel $\mathcal{T}^{\mathcal{N}}_{B'\to B}$ of $\mathcal{N}_{A'B'\to AB}$ defined as
\begin{equation}
    \mathcal{T}^{\mathcal{N}}_{B'\to B}(\cdot):= \tr_A\circ\mathcal{N}_{A'B'\to AB}(\pi_{A'}\otimes\cdot).
\end{equation}
Notice that $\Phi^{\mathcal{T}^{\mathcal{N}}}_{R_BB}=\Phi^{\mathcal{N}}_{R_BB}$. Then,
\begin{align}
  - S_{\infty}[A|B]_{\mathcal{N}} &= \inf_{\mathcal{Q}\in\mathrm{Ch}(B',B)}D_{\infty}[\mathcal{N}_{A'B'\to AB}\Vert\R_{A'\to A}\otimes\mathcal{Q}_{B'\to B}]\nonumber\\
  & \leq D_{\infty}[\mathcal{N}_{A'B'\to AB}\Vert\R_{A'\to A}\otimes\mathcal{T}^{\mathcal{N}}_{B'\to B}]\nonumber\\
  & = D_{\infty}(\Phi^{\mathcal{N}}_{R_AAR_BB}\Vert\mathbbm{1}_{R_A}\otimes\mathbbm{1}_{A}\otimes\Phi^{\mathcal{N}}_{R_BB})+\log|A'|\nonumber\\
  & = - S^{\downarrow}_{\infty}(R_AA|R_BB)_{\Phi^{\mathcal{N}}}+\log|A'|.
\end{align}
That is,
\begin{equation}
      S^{\downarrow}_{\infty}(R_AA|R_BB)_{\Phi^{\mathcal{N}}} \leq S_{\infty}[A|B]_{\mathcal{N}}+ \log|A'|  \leq  S_{\infty}(R_AA|R_BB)_{\Phi^{\mathcal{N}}}.
\end{equation}
\end{proof}

We have the following dimensional bound on the conditional min-entropy, which will also holds true for the conditional von Neumann entropy of a bipartite quantum channel.
\begin{proposition}\label{prop:entropybound1}
The conditional min-entropy of a bipartite channel $\mathcal{N}_{A'B'\to AB}$ is lower bounded as
     \begin{equation}
\max\{ -\log|B|+ S_{\infty}[AB]_{\mathcal{N}}, -\log|A'|-\log\min\{{\rm rank}(\Phi^\mathcal{N}_{R_AA}), {\rm rank}(\Phi^\mathcal{N}_{R_BB})\} \} \leq S_{\infty}[A|B]_{\mathcal{N}}.
     \end{equation}
In terms of the dimension of the systems of $\mathcal{N}_{A'B'\to AB}$, we have
     \begin{equation}\label{eq:dim-bound}
-\log \min\{ |A'||B'||B|,|A'|^2|A|\} \leq S_{\infty}[A|B]_{\mathcal{N}} \leq \log |A|.
     \end{equation}
\end{proposition}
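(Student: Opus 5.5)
The plan is to prove the two lower bounds in the first display separately. Each comes from one comparison channel, or from one operator inequality on the Choi state. The dimensional bounds then follow by elementary estimates on $S_{\infty}[AB]_{\n}$ and on ranks.

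\emph{First bound.} The infimum defining $S_\infty[A|B]_{\n}$ is upper bounded by the value at the single side channel $\Q=\mathcal{R}^{\pi}_{B'\to B}$. For this choice $\R_{A'\to A}\otimes\mathcal{R}^\pi_{B'\to B}=\frac{1}{|B|}\R_{A'B'\to AB}$. Since $D_\infty(\rho\|c\,\sigma)=D_\infty(\rho\|\sigma)-\log c$, this gives
\begin{equation}
D_\infty[\n\|\R_{A'\to A}\otimes\mathcal{R}^{\pi}_{B'\to B}]=D_\infty[\n\|\R_{A'B'\to AB}]+\log|B|=-S_\infty[AB]_{\n}+\log|B|.
\end{equation}
Hence $S_\infty[A|B]_{\n}\geq S_\infty[AB]_{\n}-\log|B|$.

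\emph{Second bound.} I would invoke the lower bound of \autoref{thm:min-ent-bounds}, namely $S_\infty[A|B]_\n\geq S^{\downarrow}_\infty(R_AA|R_BB)_{\Phi^\n}-\log|A'|$. It then suffices to prove a state inequality: for any $\rho_{XY}$,
\begin{equation}
\rho_{XY}\leq \min\{{\rm rank}\,\rho_X,{\rm rank}\,\rho_Y\}\;\mathbbm{1}_X\otimes\rho_Y .
\end{equation}
The ${\rm rank}\,\rho_X$ case goes through a purification $\ket{\psi}_{XYE}$ cut as $X:YE$. For any pure state and any $M>0$ on its support one has $\psi\leq\bra{\psi}M^{-1}\ket{\psi}M$. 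Taking $M=\mathbbm{1}_X\otimes\psi_{YE}$, the Schmidt decomposition gives $\bra{\psi}M^{-1}\ket{\psi}$ equal to the Schmidt rank, which is ${\rm rank}\,\rho_X$. Tracing out $E$ then yields $\rho_{XY}\leq{\rm rank}(\rho_X)\,\mathbbm{1}_X\otimes\rho_Y$.

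The ${\rm rank}\,\rho_Y$ case I would do by pinching. Let $\rho_Y=\sum_{j=1}^{r_Y}\lambda_j\ket{j}\bra{j}$ and use the rank-one projectors $\Pi_j=\ket{j}\bra{j}$. These $r_Y$ projectors cover the $Y$-support of $\rho_{XY}$, so the pinching inequality gives $\rho_{XY}\leq r_Y\sum_j(\mathbbm{1}\otimes\Pi_j)\rho_{XY}(\mathbbm{1}\otimes\Pi_j)$. Each block equals $\omega_j\otimes\Pi_j$ with $\omega_j\geq0$ and $\tr\omega_j=\lambda_j$. Because the operator norm is at most the trace, $\omega_j\otimes\Pi_j\leq\lambda_j\mathbbm{1}_X\otimes\Pi_j$, and summing over $j$ gives the claim. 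Applying this with $X=R_AA$ and $Y=R_BB$ gives the second term of the maximum.

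\emph{Dimensional bounds.} The upper bound $\log|A|$ is the preceding lemma. For the first branch, $-S_\infty[AB]_\n=D_\infty(\Gamma^\n\|\mathbbm{1})=\log\|\Gamma^\n\|_\infty\leq\log\tr\Gamma^\n=\log|A'||B'|$, so the first bound gives at least $-\log|A'||B'||B|$. For the second branch, ${\rm rank}\,\Phi^\n_{R_AA}\leq|A'||A|$, which gives at least $-\log|A'|^2|A|$. Taking the better of the two branches yields Eq.~\eqref{eq:dim-bound}.

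The main obstacle is the ${\rm rank}\,\rho_Y$ half of the operator inequality. The naive purification argument puts $\mathbbm{1}_Y$ rather than $\mathbbm{1}_X$ on the wrong factor, and pinching with spectral projectors fails when $\rho_Y$ has degenerate eigenvalues. This is why I would insist on rank-one eigenprojectors.
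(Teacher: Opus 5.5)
Your proof is correct. For the rank branch you follow the paper's route: the lower bound $S^{\downarrow}_{\infty}(R_AA|R_BB)_{\Phi^{\mathcal{N}}}-\log|A'|$ from \autoref{thm:min-ent-bounds}, combined with the state inequality $\rho_{XY}\leq\min\{{\rm rank}\,\rho_X,{\rm rank}\,\rho_Y\}\,\mathbbm{1}_X\otimes\rho_Y$. The paper cites that inequality from Tomamichel (Lemma 5.11). You instead prove it directly, by purification for ${\rm rank}\,\rho_X$ and by rank-one pinching for ${\rm rank}\,\rho_Y$, and both arguments are sound.

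The real addition is the first branch. The paper's proof never argues $S_{\infty}[A|B]_{\mathcal{N}}\geq S_{\infty}[AB]_{\mathcal{N}}-\log|B|$; it only establishes the rank term. Your choice $\mathcal{Q}=\mathcal{R}^{\pi}_{B'\to B}$, using $\mathcal{R}^{\mathbbm{1}}_{A'\to A}\otimes\mathcal{R}^{\pi}_{B'\to B}=|B|^{-1}\mathcal{R}^{\mathbbm{1}}_{A'B'\to AB}$, fills this gap in one line. This branch matters in general: for $\mathcal{R}^{\pi}\otimes\mathcal{R}^{\pi}$ it gives the tight value $\log|A|$, while the rank term is negative.

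For the dimensional bound, the paper obtains both terms from the rank branch, using ${\rm rank}\,\Phi^{\mathcal{N}}_{R_AA}\leq|A'||A|$ and ${\rm rank}\,\Phi^{\mathcal{N}}_{R_BB}\leq|B'||B|$. You instead get the $|A'||B'||B|$ term from the first branch via $\|\Gamma^{\mathcal{N}}\|_\infty\leq\tr\Gamma^{\mathcal{N}}=|A'||B'|$. Both routes are valid.

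Your upper bound via the preceding lemma is equivalent to the paper's appeal to $S_{\infty}[A|B]_{\mathcal{N}}\leq S[A|B]_{\mathcal{N}}\leq\log|A|$.
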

\begin{proof}
     The upper bound on $S_{\infty}[A|B]_{\mathcal{N}}$ follows directly from inequality~\eqref{eq:conditioning-prop}~\cite{DGP24},
    \begin{equation}
S_{\infty}[A|B]_{\mathcal{N}}\leq S[A|B]_{\mathcal{N}}\leq \log|A|.
    \end{equation}
    We may also observe that $S_{\infty}[A|B]_{\mathcal{N}}\leq  S_{\infty}(A|R_AR_BB)_{\Phi^{\mathcal{N}}}\leq \log|A|$.

  To derive the lower bounds on $S_{\infty}[A|B]_{\mathcal{N}}$, we notice that 
\begin{align}
    S_{\infty}[A|B]_{\mathcal{N}} & \geq   S^{\downarrow}_{\infty}(R_AA|R_BB)_{\Phi^{\mathcal{N}}}-\log|A'|  \\
    & \geq -\log\min\{{\rm rank}(\Phi^\n_{R_AA}), {\rm rank}(\Phi^\n_{R_BB})\}-\log|A'|\\
    &\geq -\log \min\{|A'|^2|A|,|A'||B'||B|\},
\end{align}
where we used $S^{\downarrow}(A|B)_{\rho}\geq -\log \min\{{\rm rank} (\rho_A), {\rm rank}(\rho_B)\}$~\cite[Lema 5.11]{T21} to arrive the second inequality.
\end{proof}

The theorem below states that the lower bound in inequality~\eqref{eq:dim-bound} is saturated if and only if the channel is a swap-like or maximally entangling operation, that is, if the Choi state of the channel is a maximally entangled state, $\Phi_{R_AA:R_BB}\in{\rm Ent}(R_AA;R_BB)$, of Schmidt rank $\min\{|A'||A|,|B'||B|\}$. We provide the detailed proof in Appendix~\ref{app:proof-SwapBound} where we show that the upper and lower bounds of $S_{\infty}[A|B]_\n$, as given in \autoref{thm:min-ent-bounds} coincide if and only if the channel is a maximally entangling operation.
\begin{theorem}\label{thm:SwapBound} 
The conditional min-entropy of a bipartite quantum channel $\n_{A'B'\to AB}$ achieves the minimum if and only if the channel is a maximally entangling unitary operation, i.e., $\n\in{\rm SWAP}(A;B)$,
\begin{equation}
    S_{\infty}[A|B]_{\n\in\mathrm{SWAP}(A;B)}=-\log \min\{|A'|^2|A|,|A'||B'||B|\}.
\end{equation}
\end{theorem}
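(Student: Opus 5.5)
The plan is to squeeze $S_{\infty}[A|B]_{\n}$ between the two Choi-state quantities of \autoref{thm:min-ent-bounds}, and to characterize when each of them reaches the dimensional floor. Write $d_1:=|A'||A|=|R_AA|$ and $d_2:=|B'||B|=|R_BB|$. Then the claimed minimum is
\[
-\log\min\{|A'|^2|A|,|A'||B'||B|\}=-\log|A'|-\log\min\{d_1,d_2\}.
\]
So it suffices to show that $S_{\infty}[A|B]_{\n}+\log|A'|=-\log\min\{d_1,d_2\}$ holds if and only if $\Phi^{\n}_{R_AAR_BB}$ is a pure maximally entangled state of Schmidt rank $\min\{d_1,d_2\}$ across $R_AA:R_BB$. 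The latter is exactly the definition of ${\rm SWAP}(A;B)$ once we note that a pure Choi state forces $\n$ to be an isometric (hence, for these dimensions, unitary/swap-like) channel.

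\emph{Sufficiency.} Let $\n\in{\rm SWAP}(A;B)$, so $\Phi^{\n}$ is maximally entangled of Schmidt rank $\min\{d_1,d_2\}$. For such a state the bound of~\cite{KRS09} gives $S_{\infty}(R_AA|R_BB)_{\Phi^{\n}}=-\log\min\{d_1,d_2\}$. The upper bound in \autoref{thm:min-ent-bounds} then yields
\[
S_{\infty}[A|B]_{\n}\le-\log|A'|-\log\min\{d_1,d_2\}.
\]
The reverse inequality is \autoref{prop:entropybound1}, so equality holds. As a consistency check I would also verify directly that $S^{\downarrow}_{\infty}$ of this Choi state equals the same value, since for a maximally entangled $\psi$ of rank $r$ one has $\psi\le r\,\mathbbm{1}\otimes\psi_B$ tightly.

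\emph{Necessity.} Suppose the minimum is attained. The lower bound in \autoref{thm:min-ent-bounds} gives
\[
S^{\downarrow}_{\infty}(R_AA|R_BB)_{\Phi^{\n}}\le S_{\infty}[A|B]_{\n}+\log|A'|=-\log\min\{d_1,d_2\}.
\]
Combined with the general bound $S^{\downarrow}_{\infty}(X|Y)_\rho\ge-\log\min\{{\rm rank}\rho_X,{\rm rank}\rho_Y\}$ used in \autoref{prop:entropybound1}, this forces the equality
\[
D_{\max}(\rho_{XY}\|\mathbbm{1}_X\otimes\rho_Y)=\log\min\{|X|,|Y|\},\qquad \rho=\Phi^{\n},\ X=R_AA,\ Y=R_BB,
\]
and in particular $\min\{{\rm rank}\rho_X,{\rm rank}\rho_Y\}=\min\{|X|,|Y|\}$. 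The key lemma I need is that this equality holds only if $\rho_{XY}$ is pure and maximally entangled of full Schmidt rank.

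\emph{Main obstacle: the equality lemma for $S^{\downarrow}_{\infty}$.} I would prove it in three steps.
\begin{enumerate}
\item For a pure state $\psi$ with Schmidt coefficients $\{\lambda_j\}$, compute explicitly
\[
\big\|(\mathbbm{1}\otimes\psi_Y)^{-1/2}\psi(\mathbbm{1}\otimes\psi_Y)^{-1/2}\big\|_\infty=\Big\|\sum_j\ket{jj}\Big\|^2=r,
\]
where $r$ is the Schmidt rank. Attaining $\min\{|X|,|Y|\}$ therefore needs full Schmidt rank.
\item For mixed $\rho=\sum_i p_i\psi_i$, bound the operator $(\mathbbm{1}\otimes\rho_Y)^{-1/2}\rho(\mathbbm{1}\otimes\rho_Y)^{-1/2}$ through its trace against $\mathbbm{1}_X\otimes\rho_Y$. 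Its trace there equals $\tr\rho=1$ times a factor, while $\tr(\mathbbm{1}_X\otimes\rho_Y)=|X|$. An operator of norm $\min\{|X|,|Y|\}$ must then be (close to) rank one and supported on a maximally entangled direction.
\item Turn that rank-one/support conclusion into purity of $\rho$ and flatness of its Schmidt spectrum.
\end{enumerate}
If this direct route becomes messy, the fallback is the von Neumann chain. Monotonicity in $\alpha$ gives $-S^{\downarrow}_{\infty}\ge -S(X|Y)$, and Araki--Lieb gives $-S(X|Y)_\rho\le\min\{\log|X|,\log|Y|\}$, with equality only for pure, maximally entangled, full-rank states. I would then try to transfer the equality case from $S^{\downarrow}_{\infty}$ down to $S$ via a continuity/strictness argument in $\alpha$. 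I expect this step to be where the real work lies.

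Once $\Phi^{\n}$ is shown to be pure and maximally entangled of rank $\min\{d_1,d_2\}$, purity implies $\n$ has a single Kraus operator, hence is an isometry channel. Maximal entanglement across $R_AA:R_BB$ then places $\n$ in ${\rm SWAP}(A;B)$ by definition.
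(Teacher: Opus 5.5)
Your sufficiency direction is correct and is the paper's argument: the upper bound of \autoref{thm:min-ent-bounds}, evaluated at a maximally entangled Choi state, meets the dimensional lower bound. Your necessity route is also the paper's: force equality in the lower bound $S^{\downarrow}_{\infty}(R_AA|R_BB)_{\Phi^{\n}}-\log|A'|$, then invoke an equality characterization of $S^{\downarrow}_{\infty}$. The gap is that this characterization is false, and your own step~1 already shows it. For a pure state the relevant norm equals the Schmidt rank $r$ whatever the Schmidt coefficients are, so $\sqrt{\lambda}\,|00\rangle+\sqrt{1-\lambda}\,|11\rangle$ has $S^{\downarrow}_{\infty}(X|Y)=-1$ for every $\lambda\in(0,1)$. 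When $|X|<|Y|$ even purity is not forced. Take $\rho=p\,\Phi^{+}+(1-p)\,|0\rangle\langle 0|\otimes|2\rangle\langle 2|$ on $\mathbb{C}^2\otimes\mathbb{C}^3$; then $(\mathbbm{1}\otimes\rho_Y)^{-1/2}\rho\,(\mathbbm{1}\otimes\rho_Y)^{-1/2}=2\Phi^{+}+|02\rangle\langle 02|$, so again $S^{\downarrow}_{\infty}=-\log 2$. Hence step~3 (flatness) cannot be extracted from $S^{\downarrow}_{\infty}$. The von Neumann fallback fails for the same reason: the equality set at $\alpha=\infty$ strictly contains the one at $\alpha=1$. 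The underlying problem is that the lower bound in \autoref{thm:min-ent-bounds} is $D_\infty$ evaluated at the single side channel $\mathcal{T}^{\n}$, and that cannot detect flatness. The paper's proof ends with the same false ``iff'', so it offers no repair.

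What can be salvaged depends on the dimensions. Suppose first $|B'||B|\le |A'||A|$. Then $S^{\downarrow}_{\infty}(R_AA|R_BB)_{\Phi^{\n}}=-\log|R_BB|$ does force $\Phi^{\n}=\psi$ to be pure with $\psi_{R_BB}$ of full rank. To see this, write $\rho=\sum_i p_i\psi_i\le\sum_i p_i r_i\,\mathbbm{1}\otimes\psi_{i,R_BB}$. Equality at the optimal vector $v$ excludes all components with $r_i<|R_BB|$. By Cauchy--Schwarz it also forces $v\propto(\mathbbm{1}\otimes\psi_{i,R_BB}^{-1})|\psi_i\rangle$ for every $i$, and this determines $\psi_i$ uniquely. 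From there you must work with the infimum over $\Q$, not with the bound. We have $D_{\infty}(\psi\|\mathbbm{1}\otimes\sigma)=\log\tr(\psi_{R_BB}\sigma^{-1})$. Since $\psi_{R_BB}$ is an interior point, $\sigma=\psi_{R_BB}+t\Delta$ with $\tr_B\Delta=0$ is a valid Choi state for small $t$, and $\tr(\psi_{R_BB}\sigma^{-1})=|R_BB|-t\tr(\psi_{R_BB}^{-1}\Delta)+O(t^2)$. So $\Q=\mathcal{T}^{\n}$ is optimal only if $\psi_{R_BB}^{-1}=K_{R_B}\otimes\mathbbm{1}_B$. Together with $\psi_{R_B}=\pi_{R_B}$ this gives $\psi_{R_BB}=\pi_{R_BB}$, i.e.\ maximal entanglement.

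If instead $|A'||A|<|B'||B|$, no argument can succeed, because the ``only if'' is false. Take $|A'|=1$ and $|A|=|B|=|B'|=2$. The replacer channel $\mathcal{R}^{\Phi^{+}_{AB}}_{B'\to AB}$ has $S_{\infty}[A|B]=S_{\infty}(A|B)_{\Phi^{+}}=-1=-\log\min\{2,4\}$, yet its Choi state $\pi_{R_B}\otimes\Phi^{+}_{AB}$ is mixed. Tensoring it with a two-qubit swap gives an analogous counterexample with $|A'|=2$.
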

The smooth conditional channel min-entropy for the swap-like operation is close to the extremal bound. In particular, we have
\begin{align}\label{eq:smooth-swap}
S_{\infty}^\varepsilon[A|B]_{\n\in\mathrm{SWAP}(A;B)}\le -\log \min\{|A'|^2|A|,|A'||B'||B|\} +\log\frac{1}{1-\varepsilon^{2}},
\end{align}
where we have used the fact that $S^\varepsilon_\infty[A|B]_\n\le S^\varepsilon_\infty(R_AA|R_BB)_{\Phi^\n}-\log|A'|$ (see inequality~\eqref{eq:smooth-channel-ineq}) and the following inequality from \cite[Lemma 5]{VDT+13} for a pure state $\psi_{AB}$,
\begin{align}
    S^{\varepsilon}_{\infty}(A|B)_\psi\le -S_{\infty}(A)_{\psi}+\log\frac{1}{1-\epsilon^2}.
\end{align}
Since the Choi state of a swap-like operation is maximally entangled, the inequality~\eqref{eq:smooth-swap} follows.

We now prove that the conditional min-entropy satisfies asymptotic continuity.
\begin{theorem}[Asymptotic continuity]\label{thm:conti}
Consider two arbitrary bipartite channels $\n_{A'B'\to AB}$ and $\m_{A'B'\to AB}$. For $\frac{1}{2}\norm{\n-\m}_\diamond\le\delta\in[0,1]$, the difference of their conditional min-entropy is bounded as
\begin{equation}
    \abs{S_{\infty}[A|B]_\n-S_{\infty}[A|B]_\m}\le \frac{1}{\ln 2}{|A|\min\{|A'||A|,|B'||B|\}\delta}.
\end{equation}
\end{theorem}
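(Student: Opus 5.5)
The plan is to work directly with the SDP/operator form of $S_{\infty}[A|B]$ in Eq.~\eqref{eq:con-min-ent-sdp}. Set
\[
\lambda_{\n}:=2^{-S_{\infty}[A|B]_{\n}}=\inf_{\Q\in\Ch(B',B)}\min\{\lambda\ge0:\Gamma^{\n}_{R_AAR_BB}\le\lambda\,\mathbbm{1}_{R_AA}\otimes\Gamma^{\Q}_{R_BB}\},
\]
and define $\lambda_{\m}$ in the same way. By symmetry it suffices to show $\log\lambda_{\m}-\log\lambda_{\n}\le \frac{1}{\ln2}|A|\min\{|A'||A|,|B'||B|\}\delta$.

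First, take an optimal $\Q_{\n}$, so that $\Gamma^{\n}\le\lambda_{\n}\mathbbm{1}_{R_AA}\otimes\Gamma^{\Q_{\n}}$. Write $\Gamma^{\m}=\Gamma^{\n}+\Delta$ with $\Delta\le\Delta_+$, where $\Delta_+$ is the positive part of $\Gamma^{\m}-\Gamma^{\n}$. Since the Choi state is one admissible input in the diamond norm, $\tr\Delta_+=\frac12\norm{\Gamma^{\m}-\Gamma^{\n}}_1\le|A'||B'|\delta$.

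The key step is to dominate $\Delta_+$ by something of the form $\eta\,\mathbbm{1}_{R_AA}\otimes\Gamma^{\mathcal{P}}_{R_BB}$ for some channel $\mathcal{P}$. I would use the conditioning replacer $\rp_{B'\to B}$, whose Choi operator is $\mathbbm{1}_{R_B}\otimes\pi_B$, so that $\mathbbm{1}_{R_AAR_BB}=|B|\,\mathbbm{1}_{R_AA}\otimes\Gamma^{\rp}$. Then I need an operator-norm bound $\Delta_+\le\norm{\Delta_+}_\infty\mathbbm{1}$, which gives
\[
\Gamma^{\m}\le\bigl(\lambda_{\n}+\eta\bigr)\,\mathbbm{1}_{R_AA}\otimes\Gamma^{\Q'},
\]
with $\eta=|B|\norm{\Delta_+}_\infty$ and $\Q'$ the convex mixture $\frac{\lambda_{\n}\Q_{\n}+\eta\rp}{\lambda_{\n}+\eta}$, which is again a channel. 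Hence $\lambda_{\m}\le\lambda_{\n}+\eta$.

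Next, I use $\log(1+x)\le x/\ln2$ to get $\log\lambda_{\m}-\log\lambda_{\n}\le\eta/(\lambda_{\n}\ln2)$. By the preceding lemma, $S_{\infty}[A|B]_{\n}\le\log|A|$, so $\lambda_{\n}\ge1/|A|$. This produces the prefactor $|A|$ in the statement.

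The main obstacle is getting the dimensional factor down to exactly $\min\{|A'||A|,|B'||B|\}$ rather than the crude $|A'||B'||B|$ that the naive bound $\norm{\Delta_+}_\infty\le\tr\Delta_+$ gives. To sharpen this, I would exploit the bipartite structure of the Choi operator in two ways, and take the better of the two.

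In the first route, I use $X_{R_AAR_BB}\le|A'||A|\,\mathbbm{1}_{R_AA}\otimes\tr_{R_AA}X$ for $X\ge0$. I then compare $\tr_{R_AA}\Delta_+$ with the Choi operators of $B$-marginal channels. Here the no-signaling-type constraint $\tr_BM\propto\mathbbm{1}_{R_B}$ in Eq.~\eqref{eq:con-min-ent-sdp} should be restored by mixing with $\rp$.

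In the second route, I bound $\norm{\Delta}_\infty$ through the Schmidt decomposition across $R_AA:R_BB$ of the purified Choi vectors. For vectors $\psi,\phi$ with Schmidt rank at most $r$, the difference $\psi\psi^\dagger-\phi\phi^\dagger$ has operator norm controlled by $\delta$ and support controllable by $r=\min\{|A'||A|,|B'||B|\}$.

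If neither route closes cleanly, I would first check the claimed constant on the swap example of \autoref{thm:SwapBound}. That should show which of the two routes is the right one.
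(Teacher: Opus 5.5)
\paragraph{What your argument proves.}
Your perturbative core is sound. You take an optimal side channel $\mathcal{Q}_{\mathcal{N}}$ and dominate $\Delta_+\le |B|\,\|\Delta_+\|_\infty\,\mathbbm{1}_{R_AA}\otimes\Gamma^{\mathcal{R}^{\pi}}$. You then mix the side channels, and use $\lambda_{\mathcal{N}}\ge 1/|A|$ together with $\log(1+x)\le x/\ln 2$. This rigorously gives
\[
|S_\infty[A|B]_{\mathcal{N}}-S_\infty[A|B]_{\mathcal{M}}|\le \tfrac{1}{\ln 2}\,|A|\,|A'||B'||B|\,\delta .
\]
The step you identify as the crux is replacing $|A'||B'||B|$ by $\min\{|A'||A|,|B'||B|\}$. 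You never carry it out: Routes 1 and 2 are only indicated. So, as written, the proposal does not prove the stated inequality.

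\paragraph{Why that step cannot be done.}
The stated constant is false. Take $|A'|=|A|=d$ and
\[
\mathcal{N}=\mathcal{R}^{\pi}_{A'\to A}\otimes\id_{B'\to B},\qquad \mathcal{M}=\bigl[(1-p)\mathcal{R}^{\pi}+p\,\id\bigr]_{A'\to A}\otimes\id_{B'\to B}.
\]
By the tensor-product property, $S_\infty[A|B]$ equals $-\log\|\Gamma_{R_AA}\|_\infty$ of the $A$-factor. Hence $S_\infty[A|B]_{\mathcal{N}}=\log d$ and $S_\infty[A|B]_{\mathcal{M}}=-\log\bigl(\tfrac{1-p}{d}+pd\bigr)$. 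The difference is $\log\bigl(1+p(d^2-1)\bigr)$, while $\tfrac12\|\mathcal{N}-\mathcal{M}\|_\diamond\le p$.

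For small $p$ the left side grows like $(d^2-1)p/\ln 2$. The claimed bound with $\delta=p$ is $d\min\{d^2,|B'||B|\}\,p/\ln 2$. This fails for $|B'|=|B|=1$ and every $d\ge 2$. It also fails for a genuinely bipartite case, $|B'|=|B|=2$ and $d=8$, since $63>32$.

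Your crude constant gives exactly $d^2$ here for trivial $B$, so it is essentially sharp in this regime. Checking the swap example would not have exposed the problem. The worst case sits near the entropy maximum, where $\lambda_{\mathcal{N}}=1/|A|$ is smallest.

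\paragraph{Where the paper's proof goes wrong.}
The paper follows your Route 2 in spirit. It uses the sublinear functional $\widetilde{\mathcal{E}}(\mathcal{N})=|A'|^{-1}2^{-S_\infty[A|B]_{\mathcal{N}}}$ on Choi states. It splits $\Phi^{\mathcal{N}}-\Phi^{\mathcal{M}}$ into rank-one pieces and bounds each by $2^{-S^{\downarrow}_\infty(R_AA|R_BB)}\le\min\{|A'||A|,|B'||B|\}$. Two steps there fail:
\begin{enumerate}
\item It uses $\tr_{R_AA}$ of a rank-one piece as the Choi state of a side channel. That operator's $R_B$-marginal need not be $\pi_{R_B}$, so the map is not trace preserving and is not admissible.
\item At the end it drops the factor $|A'|$ separating $\widetilde{\mathcal{E}}$ from $2^{-S_\infty}$.
\end{enumerate}

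\paragraph{What can be salvaged.}
Two ingredients repair the rank-one route:
\begin{itemize}
\item the inequality $X\le\min\{|A'||A|,|B'||B|\}\,\mathbbm{1}_{R_AA}\otimes\tr_{R_AA}X$ for all $X\ge 0$;
\item completing $Y=\tr_{R_AA}\Delta_+$ to a channel Choi operator via $Y+(c\,\mathbbm{1}_{R_B}-\tr_BY)\otimes\pi_B$, with $c=\|\tr_BY\|_\infty\le\tr\Delta_+$. This is essentially your Route 1.
\end{itemize}
Combining this repaired route with your own estimate, what can actually be proved is
\[
|S_\infty[A|B]_{\mathcal{N}}-S_\infty[A|B]_{\mathcal{M}}|\le \tfrac{1}{\ln 2}\,|A||A'||B'|\min\{|B|,|A'||A|\}\,\delta .
\]
This is consistent with the counterexample above.
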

We provide the detailed proof in Appendix~\ref{app:proof-conti}. 

\textit{Tele-covariant channels}~\cite{DBW20,DBWH21}: Let ${\rm G}$ and ${\rm H}$ be finite groups. Let $g\mapsto U_{A'}^{(g)}$, $h\mapsto V_{B'}^{(h)}$ be unitary representations of $G$ and $H$ acting on input systems $A'$ and $B'$, respectively. Let $(g,h) \mapsto W_A^{(g,h)}$, $(g,h) \mapsto Z_B^{(g,h)}$ be unitary representations acting on output systems $A$ and $B$, respectively. A bipartite quantum channel $\mathcal{N}_{A'B' \to AB}$ is said to be bicovariant with respect to these representations
if, for all input states $\rho_{A'B'}$ and all $g \in {\rm G}$, $h \in {\rm H}$~\cite{GC99,STM11},
\begin{align}
\mathcal{N}_{A'B' \to AB}
\Big(
  \mathcal{U}_{A'}^{(g)} \otimes \mathcal{V}_{B'}^{(h)} (\rho_{A'B'})
\Big)
=
\mathcal{W}_A^{(g,h)} \otimes \mathcal{Z}_B^{(g,h)}
\left(\mathcal{N}_{A'B' \to AB}(\rho_{A'B'})\right),
\end{align}
where $\mathcal{U}^{(g)}(\cdot):=U^{(g)}(\cdot){U^{(g)}}^\dag$, $\mathcal{V}^{(g)}(\cdot):=V^{(g)}(\cdot){V^{(g)}}^\dag$, $\mathcal{W}^{(g,h)}(\cdot):=W^{(g,h)}(\cdot){W^{(g,h)}}^\dag$, $\mathcal{Z}^{(g,h)}(\cdot):=Z^{(g,h)}(\cdot){Z^{(g,h)}}^\dag$. The channel $\n$ is called tele-covariant if it is bicovariant such that the input representations form unitary one-designs, i.e.,
\begin{align}
\frac{1}{|G|}
\sum_{g \in G}
U_{A'}(g)\,\rho_{A'}\,U_{A'}(g)^\dagger
= \pi_{A'}, \quad
\frac{1}{|H|}
\sum_{h \in H}
V_{B'}(h)\,\rho_{B'}\,V_{B'}(h)^\dagger
= \pi_{B'},
\end{align}
for all states $\rho_{A'}$ and $\rho_{B'}$. We refer the readers to \cite[Section V.C.]{DBWH21} for the definition and relevant results on multipartite tele-covariant channels, which generalize the concept of covariance to multipartite quantum channels with respect to the input representations forming unitary one-designs.

In general, for an arbitrary quantum channel $\n_{A'B'\to AB}$, it holds that~\cite{DGP24}
\begin{equation}
    S[A|B]_{\n}\leq S(R_AA|R_BB)_{\Phi^\n}-\log|A'|,
\end{equation}
and the inequality is known to be saturated, $S[A|B]_{\n}= S(R_AA|R_BB)_{\Phi^\n}-\log|A'|$, if $\n$ is tele-covariant~\cite[Theorem 4]{DGP24}.

We now prove that the conditional channel min-entropy exhibits asymptotic equipartition property: on an average, the conditional min-entropy of the asymptotically many uses of a tele-covariant channel approaches its von Neumann conditional channel entropy.

\begin{theorem}[Asymptotic equipartition property]\label{th:equipartition}
    For $\varepsilon\in(0,1)$ and asymptotically many i.i.d.~uses of an bipartite channel $\n_{A'B'\to AB}$, the smoothed conditional channel min-entropy satisfies the following bounds in terms of the conditional von Neumann entropy of the Choi state $\Phi^{\n}$,
\begin{align}
    \lim_{\varepsilon\to 0^+} \lim_{n\to \infty}\frac{1}{n} S_{\infty}^\varepsilon [A^n|B^n]_{\mathcal{N}^{\otimes n}}  \leq  S(R_AA|R_BB)_{\Phi^{\mathcal{N}}}-\log|A'|.
\end{align}
If $\n_{A'B'\to AB}$ is tele-covariant, then
\begin{align}
    \lim_{\varepsilon\to 0^+} \lim_{n\to \infty}\frac{1}{n} S_{\infty}^\varepsilon [A^n|B^n]_{\mathcal{N}^{\otimes n}} & \leq S[A|B]_{\n},\\
    \lim_{n\to \infty}\frac{1}{n} S_{\infty}^\varepsilon [A^n|B^n]_{\mathcal{N}^{\otimes n}}&\geq S[A|B]_{\n}.
\end{align}
\end{theorem}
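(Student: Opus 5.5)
The plan is to transfer everything to the Choi state and then use the known state asymptotic equipartition property (AEP) of Tomamichel--Colbeck--Renner~\cite{TCR09}.

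\emph{Upper bound.} Start from the smoothed analogue of the upper bound in \autoref{thm:min-ent-bounds}:
\[
S^\varepsilon_\infty[A|B]_\n\le S^\varepsilon_\infty(R_AA|R_BB)_{\Phi^\n}-\log|A'|.
\]
To prove it, take any channel $\m\in\mathcal{B}^\varepsilon[\n]$. The purified channel distance upper bounds the purified distance between the Choi states, since $\Phi^\n$ is one particular input. Hence $\Phi^\m$ lies in the $\varepsilon$-ball around $\Phi^\n$. Applying \autoref{thm:min-ent-bounds} to $\m$ then gives $S_\infty[A|B]_\m\le S_\infty(R_AA|R_BB)_{\Phi^\m}-\log|A'|\le S^\varepsilon_\infty(R_AA|R_BB)_{\Phi^\n}-\log|A'|$.

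Next apply this to $\n^{\otimes n}$. Its Choi state is $(\Phi^\n)^{\otimes n}$ after reordering systems, and its input dimension is $|A'|^n$. Dividing by $n$ and invoking the state AEP, $\lim_n\frac1n S^\varepsilon_\infty(R_A^nA^n|R_B^nB^n)_{(\Phi^\n)^{\otimes n}}=S(R_AA|R_BB)_{\Phi^\n}$ up to $O(\sqrt{\log(1/\varepsilon)/n})$ corrections. Taking $\varepsilon\to0^+$ gives the first inequality.

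For tele-covariant $\n$, \cite[Theorem 4]{DGP24} gives $S[A|B]_\n=S(R_AA|R_BB)_{\Phi^\n}-\log|A'|$. The second inequality is then immediate.

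\emph{Lower bound for tele-covariant $\n$.} I would use the lower bound of \autoref{thm:min-ent-bounds} in smoothed form. The key fact is that for a tele-covariant channel, every Choi-state operator close to $\Phi^\n$ can be converted back into a nearby channel, using the teleportation simulation of \cite{DBW20}. Given a smoothing state $\rho'$ near $(\Phi^\n)^{\otimes n}$, define a map $\m$ by teleporting through $\rho'$ and applying the covariant corrections.

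The step I expect to be hardest is making $\m$ a genuine channel that is $\varepsilon$-close to $\n^{\otimes n}$ in the channel purified distance, and not merely close at the Choi input. Teleportation simulation is a fixed LOCC-type superchannel, so data processing gives $P[\m,\n^{\otimes n}]\le P(\rho',(\Phi^\n)^{\otimes n})$. It should also be trace preserving, possibly after replacing $\rho'$ by its group-twirl, which preserves closeness because $(\Phi^\n)^{\otimes n}$ is twirl-invariant.

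Then bound $S_\infty[A^n|B^n]_\m$ from below. For the side channel, take $\Q$ to be the teleportation simulation of $\Q$'s Choi operator derived from the optimal $\sigma_{R_BB}$ of the state min-entropy; twirl $\sigma$ so that it is a valid Choi state. This gives $S_\infty[A^n|B^n]_\m\ge S_\infty(R_AA|R_BB)_{\rho'}-n\log|A'|$, because $D_\infty$ is monotone under the simulation superchannel. This yields
\[
S^\varepsilon_\infty[A^n|B^n]_{\n^{\otimes n}}\ge S^{\varepsilon}_\infty(R_A^nA^n|R_B^nB^n)_{(\Phi^\n)^{\otimes n}}-n\log|A'|.
\]
Dividing by $n$, applying the state AEP, and using \cite[Theorem 4]{DGP24} once more gives the third inequality for each fixed $\varepsilon\in(0,1)$.

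A fallback, if the twirling step fails, is to handle the $\sigma$-side via the $S^\downarrow$ bound. By the known duality between the $\uparrow$ and $\downarrow$ smooth min-entropies, that bound has the same i.i.d.\ limit.
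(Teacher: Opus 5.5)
Your two upper bounds follow the paper's proof exactly: the Choi input bounds the channel purified distance, then \autoref{thm:min-ent-bounds}, the state AEP of \cite{TCR09}, and \cite[Theorem 4]{DGP24}.

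For the lower bound you take a genuinely different route, and it works. The paper fixes the product side channel $\mathcal{K}^{\otimes n}$, $\mathcal{K}(\cdot)=\tr_A\circ\n(\pi_{A'}\otimes\cdot)$, and restricts the channel ball to permutation-covariant maps. It then passes to state smoothing at a de Finetti input via post-selection and converts the smoothed max-divergence into a sandwiched R\'enyi divergence. Tele-covariance enters only at the end, to evaluate $\lim_{\alpha\to1^+}D_\alpha[\n\|\R\otimes\mathcal{K}]$ through the maximally entangled input. You instead push state smoothing directly into channel smoothing by teleportation simulation.

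What you call monotonicity under the simulation superchannel really rests on two facts. First, $\rho'\mapsto\m^{\rho'}$ is linear and sends positive operators to CP maps, so $\rho'\le\lambda\,\mathbbm{1}_{R_AA}\otimes\sigma_{R_BB}$ implies $\Gamma^{\m^{\rho'}}\le\lambda\,\Gamma^{\m^{\mathbbm{1}\otimes\sigma}}$. Second, $\m^{\mathbbm{1}_{R_AA}\otimes\sigma_{R_BB}}=|A'|\,\R_{A'\to A}\otimes\Q'_{B'\to B}$, with
\[
\Q'(\tau)=\frac{1}{|A'|}\sum_{g,h}c_g\,\mathcal{Y}^{(g,h)}\bigl(\tr_{B'R_B}[F^h(\tau\otimes\sigma_{R_BB})]\bigr).
\]
Here $\tr_{R_A}E^g=c_g\mathbbm{1}_{A'}$ for Alice's POVM $\{E^g\}$, so $\sum_gc_g=|A'|$; $\{F^h\}$ is Bob's POVM and $\mathcal{Y}^{(g,h)}$ his unitary correction; and $\mathbbm{1}_A$ is untouched by Alice's correction. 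This $\Q'$ is CPTP for every state $\sigma$, so neither twirling nor the $S^{\downarrow}$ fallback is needed.

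The one repair needed is normalization, which twirling does not address. State smoothing balls contain subnormalized operators, whereas $\mathcal{B}^\varepsilon[\n^{\otimes n}]$ contains only channels, and teleporting through $\rho'$ with $\tr\rho'<1$ gives a trace-decreasing map. Replace $\rho'$ by $\hat\rho:=\rho'+(1-\tr\rho')\,\pi_{R_A^nA^n}\otimes\sigma$.
\begin{itemize}
\item Distance: $(\Phi^\n)^{\otimes n}$ is normalized and fidelity is monotone in the operator order, so $P(\hat\rho,(\Phi^\n)^{\otimes n})\le P(\rho',(\Phi^\n)^{\otimes n})$.
\item Entropy: $\tr\rho'\ge F(\rho',(\Phi^\n)^{\otimes n})\ge 1-\varepsilon^2$ and $\tr\rho'\le\lambda(|A'||A|)^n$, so $\hat\rho\le\frac{\lambda}{1-\varepsilon^2}\,\mathbbm{1}_{R_A^nA^n}\otimes\sigma$. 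The loss is at most $\log\frac{1}{1-\varepsilon^2}$, which is harmless after dividing by $n$.
\end{itemize}

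With this fix, your route gives the following one-shot sandwich for any tele-covariant channel, in particular $\n^{\otimes n}$:
\[
S^\varepsilon_\infty(R_AA|R_BB)_{\Phi^\n}-\log\tfrac{1}{1-\varepsilon^2}\le S^\varepsilon_\infty[A|B]_\n+\log|A'|\le S^\varepsilon_\infty(R_AA|R_BB)_{\Phi^\n},
\]
with exact equality at $\varepsilon=0$. The AEP, including the $O(1/\sqrt{n})$ corrections of \cite{TCR09} at each fixed $\varepsilon$, is then inherited directly from the state case. Your route needs no de Finetti reduction, R\'enyi conversion, additivity of channel R\'enyi divergences, or $\alpha\to1^+$ limit. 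It also bypasses the post-selection step, which requires the nontrivial direction: closeness at the de Finetti input yields channel closeness only after shrinking the radius by a factor polynomial in $n$.

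The paper's template, in exchange, uses only permutation invariance in the smoothing step and confines tele-covariance to evaluating $D[\n\|\R\otimes\mathcal{K}]$. That is why the same estimate \eqref{eq:lb-asym-min-smth} can be reused for no-signaling channels in \autoref{thm:ns-asym-con-ent}, a case your teleportation argument does not reach.
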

We provide the detailed proof in Appendix~\ref{app:proof_equipartition}. The following identity follows from the above theorem and \cite[Theorem 4]{DGP24}, for a tele-covariant quantum channel $\n_{A'B'\to AB}$,
\begin{equation}
     \lim_{\varepsilon\to 0^+} \lim_{n\to \infty}\frac{1}{n} S_{\infty}^\varepsilon [A^n|B^n]_{\mathcal{N}^{\otimes n}}  = S[A|B]_{\n}= S(R_AA|R_BB)_{\Phi^{\mathcal{N}}}-\log|A'|.
\end{equation}
The asymptotic equipartition property also holds for quantum channels $\n_{A'B'\to AB}$ that are no-signaling from $A'\to B$ (see \autoref{thm:ns-asym-con-ent}). We discuss some consequences of the asymptotic equipartition property of the conditional channel min-entropy for tele-covariant channels in the following section and for no-signaling channels in \autoref{sec:signaling}.

\section{Conditional purity distillation and formation capacity}\label{sec:asym-rate}
 If a quantum channel $\n_{A'B'\to AB}$ has output $A$ with fully degenerate Hamiltonian $\widehat{H}_{A}\propto \mathbbm{1}_A$ then the resource theory of conditional athermality reduces to the resource theory of conditional purity, given $\gamma^\beta_A=\pi_A$. The free operations are conditional uniformity preserving superchannels (CUPS) $\Theta^{\pi}$. A CUPS $\Theta^{\pi}\in\SCh((A'B',AB),(C'D',CD)))$ is defined as a superchannel that preserves conditional uniformly mixing channel $\mathcal{R}^{\pi}$, i.e., for an arbitrary $\Q\in\Ch(B',B)$ there exists some $\Q'\in\Ch(D',D)$ such that
\begin{equation}\label{eq:cups}
    \Theta^{\pi}(\mathcal{R}^\pi_{A'\to A}\otimes\Q_{B'\to B})=\mathcal{R}^\pi_{C'\to C}\otimes\Q'_{D'\to D}.
\end{equation}
We observe that a conditional athermal resource channel $(\n,\T^\beta\otimes\Q)$ becomes $(\n,\mathcal{R}^\pi\otimes\Q)$ when $\widehat{H}_A\propto \mathbbm{1}_A$, which is also a conditional purity resource channel. The conditional identity channel $\id_{A'\to A}\otimes\Q_{B'\to B}$ is conditional purity resource equivalent to all conditional unitary channels $\mathcal{U}_{A'\to A}\otimes\Q_{B'\to B}$ and can be deemed as the standard resource unit for the conditional purity. Following \autoref{th:dist-for-cost}, we have the following corollary that determines the one-shot conditional purity distillation and formation.

\begin{corollary}[One-shot conditional purity distillation and cost]\label{cor:one-shot-con-purity-rates}
    The one-shot conditional purity distillation yield $\mathrm{Dist}^{(1,\varepsilon)}(\mathcal{N},\mathcal{R}^\pi)$ and formation cost $\mathrm{Cost}^{(1,\varepsilon)}(\mathcal{N},\mathcal{R}^\pi)$ of a quantum channel $\n_{A'B'\to AB}$ are given as 
\begin{align}
\mathrm{Dist}^{(1,\varepsilon)}(\mathcal{N}_{A'B'\to AB},\mathcal{R}^\pi_{A'\to A})
&=\frac{1}{2}\inf_{\Q\in\Ch({B',B})}D_{\mathrm{H}}^{\varepsilon^2}[\n\Vert\mathcal{R}^\pi\otimes \Q]=\frac{1}{2}\left(\log |A|-S^{\varepsilon^2}_H[A|B]_{\n}\right),\\
\mathrm{Cost}^{(1,\varepsilon)}(\mathcal{N}_{A'B'\to AB},\mathcal{R}^\pi_{A'\to A})
&=\frac{1}{2}\inf_{\mathcal{Q}\in\Ch({B',B})}D_{\infty}^{\varepsilon}[\n\Vert\mathcal{R}^\pi\otimes\mathcal{Q}]=\frac{1}{2}\left(\log|A|-S^\varepsilon_{\infty}[A|B]_{\n}\right),\label{eq:one-shot-purity-cost}
\end{align}
where for a bipartite channel $\m_{A'B'\to AB}$, $S^\varepsilon_H[A|B]_{\m}$ is the hypothesis-testing conditional entropy obtained by taking the generalized channel divergence in Eq.~\eqref{eq:gen-con-ent} to be hypothesis-testing relative entropy and $S^{\varepsilon}_{\infty}[A|B]_{\m}$ is the $\varepsilon$-smoothed conditional min-entropy (Eq.~\eqref{eq:sm-min-ent}).
\end{corollary}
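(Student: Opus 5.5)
The first equality in each line is \autoref{th:dist-for-cost} specialised to $\widehat{H}_A\propto\mathbbm{1}_A$. The second equality in each line is a rescaling identity that moves between the reference map $\R$ and the replacer channel $\mathcal{R}^\pi$. So the plan is to specialise the theorem first, and then to check the scaling behaviour of $D^{\varepsilon}_H[\cdot\|\cdot]$ and $D^{\varepsilon}_\infty[\cdot\|\cdot]$ under multiplication of the second argument by a positive constant.

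\emph{Step 1 (specialisation).} For a trivial output Hamiltonian we have $\gamma^\beta_A=\pi_A$, hence $\T^\beta_{A'\to A}=\mathcal{R}^\pi_{A'\to A}$. The conditional Gibbs-preserving superchannels are then exactly the conditional uniformity-preserving superchannels of Eq.~\eqref{eq:cups}. Substituting into \autoref{th:dist-for-cost} gives the first equality of both lines directly. Nothing in that theorem depends on $\widehat H_A$ beyond the identity of the free channel.

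\emph{Step 2 (rescaling).} Write $\R_{A'\to A}=|A|\,\mathcal{R}^\pi_{A'\to A}$, so that $\R\otimes\Q=|A|(\mathcal{R}^\pi\otimes\Q)$ for every $\Q$. This is a fixed scalar multiple, independent of $\Q$ and of the input state.

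For hypothesis testing, take any input $\rho_{RA'B'}$ and any test $\Lambda$. Then $\tr[\Lambda\,(\id\otimes c\,\m)(\rho)]=c\,\tr[\Lambda\,(\id\otimes\m)(\rho)]$, while the constraint on $\Lambda$ involves only $\n$. Hence $D^{\varepsilon}_H(\cdot\|c\sigma)=D^{\varepsilon}_H(\cdot\|\sigma)-\log c$. Taking the supremum over inputs gives $D^{\varepsilon^2}_H[\n\|\R\otimes\Q]=D^{\varepsilon^2}_H[\n\|\mathcal{R}^\pi\otimes\Q]-\log|A|$. Taking the infimum over $\Q$ and using the definition $S^{\varepsilon^2}_H[A|B]_\n=-\inf_\Q D^{\varepsilon^2}_H[\n\|\R\otimes\Q]$ from Eq.~\eqref{eq:gen-con-ent} yields $\inf_\Q D^{\varepsilon^2}_H[\n\|\mathcal{R}^\pi\otimes\Q]=\log|A|-S^{\varepsilon^2}_H[A|B]_\n$.

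The max-divergence case works the same way. We have $\rho\le\lambda c\sigma$ if and only if $\rho\le(\lambda c)\sigma$, so $D_\infty(\rho\|c\sigma)=D_\infty(\rho\|\sigma)-\log c$. This shift passes through the supremum over inputs, through the infimum over the smoothing ball $\mathcal{B}^\varepsilon[\n]$, and through the infimum over $\Q$. Combined with Eq.~\eqref{eq:sm-min-ent}, it gives $\inf_\Q D^\varepsilon_\infty[\n\|\mathcal{R}^\pi\otimes\Q]=\log|A|-S^\varepsilon_\infty[A|B]_\n$.

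\emph{Main obstacle.} The only delicate point is to confirm that the smoothing in Eq.~\eqref{eq:sm-min-ent} is defined over channels near $\n$ and never touches the second argument. The ball $\mathcal{B}^\varepsilon[\n]$ is defined by purified distance to $\n$ alone, so the additive constant $\log|A|$ commutes with every optimisation involved. Given that, multiplying by $\tfrac12$ gives the stated formulas.
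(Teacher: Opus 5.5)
Your proposal is correct and follows the same route as the paper, which obtains the corollary directly from \autoref{th:dist-for-cost} by setting $\T^\beta_{A'\to A}=\mathcal{R}^\pi_{A'\to A}$ for $\widehat{H}_A\propto\mathbbm{1}_A$ and then using $\R=|A|\,\mathcal{R}^\pi$ to shift the divergences by $\log|A|$; the paper uses the same rescaling for the maximal value of $S_\alpha[A|B]_\n$ and in \autoref{prop:asy-dist-cap}. Your explicit check that this constant shift commutes with the supremum over inputs, the channel smoothing ball $\mathcal{B}^\varepsilon[\n]$, and the infimum over $\Q$ is exactly the step the paper leaves implicit.
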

 Thus, we have operational meaning of $S^\varepsilon_H[A|B]_{\n}$ and $S^\varepsilon_\infty[A|B]_{\n}$ for a bipartite quantum channel $\n_{A'B'\to AB}$, as for $\varepsilon\in[0,1]$
\begin{align}
   S^{\varepsilon}_H[A|B]_{\n} &=\log|A|-2~\mathrm{Dist}^{(1,\sqrt{\varepsilon})}(\mathcal{N}_{A'B'\to AB},\mathcal{R}^\pi_{A'\to A}),\nonumber\\
   S^\varepsilon_{\infty}[A|B]_{\n}&=\log|A|-2~\mathrm{Cost}^{(1,\varepsilon)}(\mathcal{N}_{A'B'\to AB},\mathcal{R}^\pi_{A'\to A}).
\end{align}
We determine the exact optimal asymptotic rate (capacity) of the conditional purity formation cost of a tele-covariant quantum channel in \autoref{cor:asym-con-purity}.

We define the conditional purity distillation capacity ${\rm Dist}^{(\infty,0)}(\mathcal{N},\mathcal{R}^\pi)$ of a quantum channel $\n_{A'B'\to AB}$ as the optimal asymptotic rate of the conditional purity distillation yield of the channel $\n$.
\begin{proposition}[Conditional purity distillation capacity (Converse bound)]\label{prop:asy-dist-cap}
      The conditional purity distillation capacity of a bipartite quantum channel $\n_{A'B'\to AB}$ is upper bounded as 
    \begin{align}
    {\rm Dist}^{(\infty,0)}(\mathcal{N},\mathcal{R}^\pi):= \lim_{\varepsilon\to 0^+}\lim_{n\to \infty} \frac{1}{n}\mathrm{Dist}^{(1,\varepsilon)}(\mathcal{N}^{\otimes n}_{A'B'\to AB},{\mathcal{R}^{\pi}}_{{A'}^n\to A^n})&\leq \frac{1}{2}\left(\log|A|-S^{\rm reg}[A|B]_{\n}\right),
    \end{align}
    where $S^{\rm reg}[A|B]_{\m}$ is the regularized von Neumann conditional channel entropy of an arbitrary $\m\in\Ch({A'B',AB})$ based on the regularized relative entropy (Eq.~\eqref{eq:reg-gen-div})
    \begin{equation}
      S^{\rm reg}[A|B]_{\n}:=-\inf_{\Q\in\Ch(B',B)}D^{\rm reg}[\n\|\R\otimes\Q].
    \end{equation}
\end{proposition}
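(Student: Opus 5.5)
This bound is essentially \autoref{cor:asym-dist-lb} specialised to the trivial-Hamiltonian case $\T^\beta=\rp$, combined with the identity $\mathcal{R}^\pi=\frac{1}{|A|}\R$. The plan has four steps.

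\emph{Step 1 (one-shot rate).} By \autoref{cor:one-shot-con-purity-rates} applied to $\n^{\otimes n}$,
\[
\mathrm{Dist}^{(1,\varepsilon)}(\n^{\otimes n},{\rp}^{\otimes n})=\tfrac12\inf_{\Q\in\Ch({B'}^n,B^n)}D_H^{\varepsilon^2}[\n^{\otimes n}\|{\rp}^{\otimes n}\otimes\Q].
\]
Since the side channel may be arbitrary, I restrict it to i.i.d.\ channels $\Q^{\otimes n}$ with a fixed $\Q\in\Ch(B',B)$. This only enlarges the infimum, so it gives an upper bound.

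\emph{Step 2 (asymptotics for fixed $\Q$).} For fixed $\Q$, I invoke the known channel Stein-type bound used in the proof of \autoref{cor:asym-dist-lb}:
\[
\lim_{\varepsilon\to0^+}\lim_{n\to\infty}\tfrac1n D_H^{\varepsilon}[\n^{\otimes n}\|\m^{\otimes n}]=D^{\rm reg}[\n\|\m].
\]
Here the reparametrisation $\varepsilon\mapsto\varepsilon^2$ is harmless. The point needing care is exchanging $\limsup_n\inf_\Q$ with $\inf_\Q\lim_n$. I handle it by bounding for each fixed $\Q$ first, then taking the infimum over $\Q$ of the resulting upper bound, exactly as in \autoref{cor:asym-dist-lb}. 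This yields
\[
{\rm Dist}^{(\infty,0)}(\n,\rp)\le\tfrac12\inf_\Q D^{\rm reg}[\n\|\rp\otimes\Q].
\]

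\emph{Step 3 (rescaling).} Since $\rp\otimes\Q=\frac{1}{|A|}(\R\otimes\Q)$, scaling the second argument by a constant $c$ shifts the state relative entropy by $-\log c$, i.e.\ $D(\rho\|c\sigma)=D(\rho\|\sigma)-\log c$. The same shift passes through the supremum over inputs, giving for the channel divergence
\[
D[\n^{\otimes n}\|(\rp\otimes\Q)^{\otimes n}]=D[\n^{\otimes n}\|(\R\otimes\Q)^{\otimes n}]+n\log|A|.
\]
Dividing by $n$ and letting $n\to\infty$ gives $D^{\rm reg}[\n\|\rp\otimes\Q]=\log|A|+D^{\rm reg}[\n\|\R\otimes\Q]$.

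\emph{Step 4 (conclusion).} Taking the infimum over $\Q$ and using the definition of $S^{\rm reg}[A|B]_\n$ gives $\tfrac12(\log|A|-S^{\rm reg}[A|B]_\n)$, which is the claimed bound.

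The main obstacle is Step 2. The cited asymptotic statement is for channels, while the second argument $\R\otimes\Q$ is only completely positive. This is avoided by working throughout with the normalised channel $\rp\otimes\Q$, and converting to $\R$ only at the end via the exact scaling identity of Step 3. One must also ensure that the limit in $n$ exists, or else work with $\limsup$. The upper bound holds with $\limsup$ in either case, since the converse part $D_H^\varepsilon[\n^{\otimes n}\|\m^{\otimes n}]\le D[\n^{\otimes n}\|\m^{\otimes n}]/(1-\varepsilon)+h(\varepsilon)/(1-\varepsilon)$ holds for every $n$. Only this converse direction is needed.
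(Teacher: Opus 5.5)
Your proposal is correct and follows the paper's route: the paper also invokes \autoref{cor:asym-dist-lb} with $\T^\beta=\rp$ (restricting to i.i.d.\ side channels $\Q^{\otimes n}$ and applying the channel Stein-type asymptotics for each fixed $\Q$), then uses $\rp\otimes\Q=\frac{1}{|A|}(\R\otimes\Q)$ to shift the regularized divergence by $\log|A|$. Your extra care with $\limsup$, and your use of only the weak-converse bound $D_H^\varepsilon\le (D+h(\varepsilon))/(1-\varepsilon)$, tighten the rigor of that same argument without changing it.
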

\begin{proof}
It follows from \autoref{cor:asym-dist-lb},
    \begin{align}
         {\rm Dist}^{(\infty,0)}(\mathcal{N},\mathcal{R}^\pi) & \leq \frac{1}{2}\inf_{\Q\in\Ch(B',B)}D^{\rm reg}[\n\|{\mathcal{R}^\pi}\otimes\Q]\\
         &= \frac{1}{2}\inf_{\Q\in\Ch(B',B)}(\log|A|+D^{\rm reg}[\n\|{\R}\otimes\Q])\\
         &=\frac{1}{2}(\log|A|-S^{\rm reg}[A|B]_{\n}).
    \end{align}
\end{proof}

 \begin{proposition}[Conditional purity distillation capacity (achievability)]\label{prop:asy-dist-lb}
      The conditional purity distillation capacity of a bipartite quantum channel $\n_{A'B'\to AB}$ is lower bounded as 
    \begin{align}
    {\rm Dist}^{(\infty,0)}(\mathcal{N},\mathcal{R}^\pi)\geq \frac{1}{2}\left[\log|A'||A|-{S}(R_AA|R_BB)_{\Phi^{\n}}\right].
    \end{align}
\end{proposition}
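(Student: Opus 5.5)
We need a lower bound on the distillation capacity in terms of the Choi state. By \autoref{cor:one-shot-con-purity-rates},
\[
{\rm Dist}^{(1,\varepsilon)}(\n^{\otimes n},\rp)=\tfrac12\inf_{\Q\in\Ch({B'}^n,B^n)}D^{\varepsilon^2}_H[\n^{\otimes n}\|{\rp}^{\otimes n}\otimes\Q].
\]
So it suffices to lower bound the channel hypothesis-testing divergence, uniformly in $\Q$. The plan is to fix one input, namely the Choi input $\Phi_{R_AA'}^{\otimes n}\otimes\Phi_{R_BB'}^{\otimes n}$, which is feasible in the supremum defining the channel divergence. This gives
\[
D^{\varepsilon^2}_H[\n^{\otimes n}\|{\rp}^{\otimes n}\otimes\Q]\ \ge\ D^{\varepsilon^2}_H\big((\Phi^{\n})^{\otimes n}\,\big\|\,\pi_{R_A^nA^n}\otimes\Phi^{\Q}_{R_B^nB^n}\big).
\]
Here we use that the Choi state of $\rp\otimes\Q$ is $\pi_{R_A}\otimes\pi_A\otimes\Phi^{\Q}$. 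Next, we enlarge the feasible set of the second argument from Choi states $\Phi^\Q$ to arbitrary states $\sigma\in\St(R_B^nB^n)$; since we take an infimum, this only decreases the bound. Pulling out $\pi_{R_A^nA^n}=\mathbbm{1}/(|A'||A|)^n$ then gives
\[
\inf_{\Q}D^{\varepsilon^2}_H[\cdots]\ \ge\ n\log(|A'||A|)-S^{\varepsilon^2}_H(R_A^nA^n|R_B^nB^n)_{(\Phi^\n)^{\otimes n}},
\]
where $S^{\varepsilon}_H(\cdot|\cdot)$ is the state hypothesis-testing conditional entropy in the $\uparrow$ form of Eq.~\eqref{eq:gen-con-st}.

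The remaining step is the i.i.d.\ asymptotics of the state quantity. We need
\[
\lim_{\varepsilon\to0^+}\lim_{n\to\infty}\tfrac1n\, \inf_{\sigma}D^{\varepsilon}_H\big(\rho^{\otimes n}\|\mathbbm{1}\otimes\sigma_{B^n}\big)= -S(A|B)_\rho,
\]
with $\rho=\Phi^\n$. The inequality we actually use is $\ge$. I will get it by invoking a known quantum Stein lemma with composite, permutation-invariant alternative. Concretely, we can restrict $\sigma$ to permutation-invariant states without loss, by symmetrizing (monotonicity plus permutation invariance of $\rho^{\otimes n}$). Then $\sigma\le g_{n}\,\omega_{B}^{\rm univ}$, where $\omega^{\rm univ}$ is the universal symmetric state with $g_{n}\le(n+1)^{|R_BB|^2}$. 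This gives
\[
D_H^\varepsilon(\rho^{\otimes n}\|\mathbbm{1}\otimes\sigma)\ge D_H^\varepsilon(\rho^{\otimes n}\|\mathbbm{1}\otimes\omega^{\rm univ})-O(\log n).
\]
Combining this with the standard bound $D_H^\varepsilon\ge D_\alpha-\frac{\alpha}{\alpha-1}\log\frac1\varepsilon$ for $\alpha<1$ (Petz), and with the additivity of the resulting Rényi quantity against $\omega^{\rm univ}$ up to $O(\log n)$, we obtain
\[
\tfrac1n D_H^\varepsilon \ge -\overline{S}^{\uparrow}_\alpha(A|B)_\rho - o(1).
\]
Letting $\alpha\to1$ yields $-S(A|B)_\rho$ for every fixed $\varepsilon$.

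Putting these together, for every $\varepsilon\in(0,1)$,
\[
\liminf_n\tfrac1n{\rm Dist}^{(1,\varepsilon)}\ge\tfrac12\big[\log|A'||A|-S(R_AA|R_BB)_{\Phi^\n}\big],
\]
and the $\varepsilon\to0^+$ limit preserves this. The main obstacle is the composite-hypothesis Stein step, specifically making the bound uniform over the infimum on $\sigma$ (equivalently over $\Q$). The universal-state trick resolves this, since the polynomial factor is sub-exponential. Alternatively one could cite the known result that $\frac1n S^\varepsilon_H(A^n|B^n)_{\rho^{\otimes n}}\to S(A|B)_\rho$, e.g.\ via the duality/AEP of \cite{TCR09} relating $S^\varepsilon_H$ and smooth max-entropy. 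I would use that citation if it is available and keep the universal-state argument as a self-contained backup.
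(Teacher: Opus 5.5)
Your proposal is correct and is essentially the paper's proof: Choi-state input, relaxation of $\Phi^{\Q}$ to an arbitrary $\sigma\in\St(R_B^nB^n)$, the Petz--R\'enyi lower bound on $D^{\varepsilon^2}_H$, additivity of the optimized Petz--R\'enyi conditional entropy \cite[Lemma 7]{HT16}, and then $n\to\infty$, $\alpha\to1^-$, $\varepsilon\to0^+$.

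The symmetrization/universal-state detour is unnecessary. The bound $D^{\varepsilon}_H(\rho\|\sigma)\geq\overline{D}_{\alpha}(\rho\|\sigma)-\frac{\alpha}{1-\alpha}\log\frac{1}{\varepsilon}$ for $\alpha\in(0,1)$ has a $\sigma$-independent penalty, so the infimum over $\sigma$ passes straight through; your ``additivity against $\omega^{\rm univ}$'' step rests on exactly this additivity of $\inf_{\sigma}\overline{D}_{\alpha}(\rho^{\otimes n}\|\mathbbm{1}\otimes\sigma)$ anyway. Also mind the sign: you wrote $-\frac{\alpha}{\alpha-1}$, which for $\alpha<1$ makes the stated inequality false, though the slip is harmless after dividing by $n$.
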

\begin{proof}
Consider $\varepsilon\in(0,1)$ and $\alpha\in(0,1)$, and let $f(\alpha,\varepsilon):=\frac{\alpha}{1-\alpha}\log\frac{1}{\varepsilon^2}$. For an arbitrary quantum channel $\Q\in\Ch({B'}^n,B^n)$,
    \begin{align}
        D^{\varepsilon^2}_H[\n^{\otimes n}\|\mathcal{R}^\pi_{{A'}^n\to A^n}\otimes \Q_{{B'}^n\to B^n}] & \geq D^{\varepsilon^2}_H({\Phi^\n}^{\otimes n}_{R_A^nA^nR_B^nB^n}\Vert{\Phi^{\mathcal{R}^{\pi}}}^{\otimes n}_{R_A^nA^n}\otimes\Phi^{\Q}_{R_B^nB^n})\\
        & \geq \overline{D}_{\alpha}({\Phi^\n}^{\otimes n}_{R_A^nA^nR_B^nB^n}\Vert{\Phi^{\mathcal{R}^{\pi}}}^{\otimes n}_{R_A^nA^n}\otimes\Phi^{\Q}_{R_B^nB^n})-f(\alpha,\varepsilon)\\
        & \geq \inf_{\sigma\in\St(R_B^nB^n)}\overline{D}_{\alpha}({\Phi^\n}^{\otimes n}_{R_A^nA^nR_B^nB^n}\Vert{\Phi^{\mathcal{R}^{\pi}}}^{\otimes n}_{R_A^nA^n}\otimes\sigma_{R_B^nB^n})-f(\alpha,\varepsilon)\\
        & = n(\log|A'||A|-\overline{S}_{\alpha}(R_AA|R_BB)_{\Phi^\n})-f(\alpha,\varepsilon),
    \end{align}
    where the last equality holds because the Petz-R\'enyi conditional entropy is additive for tensor-product states (see \cite[Lemma 7]{HT16}).
    Therefore, the following holds for all $\alpha\in(0,1)$,
    \begin{align}
          {\rm Dist}^{(\infty,0)}(\mathcal{N},\mathcal{R}^\pi)&:= \lim_{\varepsilon\to 0^+}\lim_{n\to \infty} \frac{1}{n}\mathrm{Dist}^{(1,\varepsilon)}(\mathcal{N}^{\otimes n}_{A'B'\to AB},{\mathcal{R}^{\pi}}_{{A'}^n\to A^n})\\
          &\geq \frac{1}{2}\left[\log|A'||A|-\overline{S}_{\alpha}(R_AA|R_BB)_{\Phi^\n}- \lim_{\varepsilon\to 0^+}\lim_{n\to \infty} \frac{1}{n}\frac{\alpha}{1-\alpha}\log\frac{1}{\varepsilon^2}\right]\\
          &=\frac{1}{2}\left[\log|A'||A|-\overline{S}_{\alpha}(R_AA|R_BB)_{\Phi^\n}\right].
    \end{align}
    Taking limit $\alpha\to 1^{-}$, we arrive at ${\rm Dist}^{(\infty,0)}(\mathcal{N},\mathcal{R}^\pi)\geq\frac{1}{2}\left[\log|A'||A|-{S}(R_AA|R_BB)_{\Phi^{\n}}\right]$.
\end{proof}

We define the conditional purity formation capacity ${\rm Cost}^{(\infty,0)}(\mathcal{N},\mathcal{R}^\pi)$ of a quantum channel $\n_{A'B'\to AB}$ as the optimal asymptotic rate of the conditional purity formation cost of the channel $\n$. An application of \autoref{th:equipartition} and \autoref{cor:one-shot-con-purity-rates} is the following corollary which determines ${\rm Cost}^{(\infty,0)}(\mathcal{N},\mathcal{R}^\pi)$ exactly when $\n$ is tele-covariant and a single-letter lower bound for arbitrary $\n$ in terms of its Choi state.
\begin{corollary}[Conditional purity formation capacity]\label{cor:asym-con-purity}
    The conditional purity formation capacity of a bipartite quantum channel $\n_{A'B'\to AB}$ is lower bounded as
    \begin{align}
    {\rm Cost}^{(\infty,0)}(\mathcal{N},\mathcal{R}^\pi):= \lim_{\varepsilon\to 0^+}\lim_{n\to \infty} \frac{1}{n}\mathrm{Cost}^{(1,\varepsilon)}(\mathcal{N}^{\otimes n}_{A'B'\to AB},{\mathcal{R}^{\pi}}_{{A'}^n\to A^n})&\geq \frac{1}{2}\left[\log|A'||A|-S(R_AA|R_BB)_{\Phi^\n}\right].
    \end{align}
\end{corollary}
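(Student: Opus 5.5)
The plan is to reduce the claim to the general upper bound of \autoref{th:equipartition} via the one-shot formula of \autoref{cor:one-shot-con-purity-rates}. Applying Eq.~\eqref{eq:one-shot-purity-cost} to $\n^{\otimes n}$, whose output system $A^n$ has $|A^n|=|A|^n$ and trivial Hamiltonian, gives
\begin{equation}
\frac{1}{n}\mathrm{Cost}^{(1,\varepsilon)}(\n^{\otimes n},{\mathcal{R}^\pi}^{\otimes n})=\frac{1}{2}\left(\log|A|-\frac{1}{n}S^\varepsilon_\infty[A^n|B^n]_{\n^{\otimes n}}\right).
\end{equation}
Taking $\lim_{n\to\infty}$ and then $\lim_{\varepsilon\to0^+}$, the cost capacity becomes $\frac12\big(\log|A|-\lim_{\varepsilon}\lim_n\frac1n S^\varepsilon_\infty[A^n|B^n]_{\n^{\otimes n}}\big)$. (If the inner limits are not known to exist, I will work with $\liminf$ or $\limsup$ consistently. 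Since the limit enters with a minus sign, an upper bound on the $\limsup$ of the smoothed entropy gives a lower bound on the $\liminf$ of the cost, which is all that is needed.)

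Next, I will invoke the first inequality of \autoref{th:equipartition}, valid for arbitrary $\n$:
\begin{equation}
\lim_{\varepsilon\to0^+}\lim_{n\to\infty}\frac{1}{n}S^\varepsilon_\infty[A^n|B^n]_{\n^{\otimes n}}\le S(R_AA|R_BB)_{\Phi^\n}-\log|A'|.
\end{equation}
Substituting this gives
\begin{equation}
{\rm Cost}^{(\infty,0)}(\n,\mathcal{R}^\pi)\ge\frac12\left(\log|A|+\log|A'|-S(R_AA|R_BB)_{\Phi^\n}\right),
\end{equation}
which is the claim.

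The only step that needs care is the bookkeeping of the limits and of the fact that the one-shot cost for $n$ copies optimizes over side channels $\Q\in\Ch({B'}^n,B^n)$ that need not be of product form. This is already built into the definition of $S^\varepsilon_\infty[A^n|B^n]_{\n^{\otimes n}}$ used in \autoref{th:equipartition}, so no extra argument is needed beyond matching definitions.

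As a sanity check, I note that the same bound also follows from \autoref{prop:asy-dist-lb} combined with inequality~\eqref{eq:asym-dist-cost-ineq}: the cost capacity dominates the distillation capacity, which is at least the stated quantity. I would mention this as an alternative route.
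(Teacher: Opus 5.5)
Your proposal is correct and follows the paper's route: the paper presents this corollary as a direct application of \autoref{cor:one-shot-con-purity-rates} to $\n^{\otimes n}$, combined with the general-channel upper bound in \autoref{th:equipartition}. Your alternative via \autoref{prop:asy-dist-lb} together with $\mathrm{Dist}^{(\infty,0)}\le\mathrm{Cost}^{(\infty,0)}$ is also valid within the paper's framework, and it is essentially the argument the paper uses for the distillation side of \autoref{thm:cap-tel}.
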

We now state one of the main results that showcases the asymptotic reversibility of the conditional purity for tele-covariant quantum channels $\n_{A'B'\to AB}$.
\begin{theorem}[Purity capacity of tele-covariant channels]\label{thm:cap-tel}
        The conditional purity distillation and formation capacities of an arbitrary tele-covariant quantum channel $\n_{A'B'\to AB}$ are the same and exactly determined in terms of its conditional entropy,
    \begin{align}
{\rm Dist}^{(\infty,0)}(\mathcal{N},\mathcal{R}^\pi)={\rm Cost}^{(\infty,0)}(\mathcal{N},\mathcal{R}^\pi)&=\frac{1}{2}\left(\log|A|-S[A|B]_{\n}\right)=\frac{1}{2}\left[\log|A'||A|-S(R_AA|R_BB)_{\Phi^\n}\right].
\end{align}
That is, the resource theory of conditional purity for tele-covariant channels under the action of CUPSs is asymptotically reversible.
\end{theorem}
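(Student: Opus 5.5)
The plan is to sandwich both capacities between the same quantity, using the achievability bound for distillation, the general inequality $\mathrm{Dist}\le\mathrm{Cost}$, and the asymptotic equipartition property for the cost. Concretely, I will establish the chain
\begin{align}
\frac{1}{2}\left[\log|A'||A|-S(R_AA|R_BB)_{\Phi^\n}\right]\leq {\rm Dist}^{(\infty,0)}(\n,\rp)\leq {\rm Cost}^{(\infty,0)}(\n,\rp)\leq \frac{1}{2}\left(\log|A|-S[A|B]_{\n}\right),
\end{align}
and then show that the two ends coincide for tele-covariant $\n$.

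The left inequality is exactly \autoref{prop:asy-dist-lb}, which holds for every channel. The middle inequality is \eqref{eq:asym-dist-cost-ineq}, specialized to $\gamma^\beta_A=\pi_A$.

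For the right inequality, I start from \autoref{cor:one-shot-con-purity-rates} applied to $\n^{\otimes n}$. Since $|A^n|=|A|^n$, it gives
\[
\tfrac{1}{n}\mathrm{Cost}^{(1,\varepsilon)}(\n^{\otimes n},\rp)=\tfrac{1}{2}\left(\log|A|-\tfrac{1}{n}S^\varepsilon_\infty[A^n|B^n]_{\n^{\otimes n}}\right).
\]
For tele-covariant $\n$, \autoref{th:equipartition} gives, for each $\varepsilon\in(0,1)$,
\[
\liminf_{n\to\infty}\tfrac{1}{n}S^\varepsilon_\infty[A^n|B^n]_{\n^{\otimes n}}\geq S[A|B]_\n .
\]
(The theorem is written with $\lim$, but the $\liminf$ version is all that is needed.) This gives $\limsup_n \frac{1}{n}\mathrm{Cost}^{(1,\varepsilon)}\le \frac12(\log|A|-S[A|B]_\n)$ for every $\varepsilon$, and the bound survives $\varepsilon\to0^+$.

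To close the sandwich, I invoke \cite[Theorem 4]{DGP24}: for tele-covariant $\n$, $S[A|B]_\n=S(R_AA|R_BB)_{\Phi^\n}-\log|A'|$. Hence $\log|A|-S[A|B]_\n=\log|A'||A|-S(R_AA|R_BB)_{\Phi^\n}$, the outer bounds agree, all inequalities become equalities, and both capacities equal the stated value. As a by-product, the limits defining the capacities exist.

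The main obstacle is the lower bound $\lim_n\frac1n S^\varepsilon_\infty[A^n|B^n]_{\n^{\otimes n}}\ge S[A|B]_\n$ inside \autoref{th:equipartition}, but here it may be assumed. If I had to justify it, I would use tele-covariance to restrict the infimum over side channels $\Q$ on $B'^n\to B^n$ to covariant ones. I would then build a feasible smoothed channel from the Choi-state AEP, applying the state AEP of \cite{TCR09} to $\Phi^{\n\,\otimes n}$ with smoothing of $\Phi^{\n\,\otimes n}$, and pull it back to a channel by twirling, so that the Choi-state bound transfers to the channel quantity. The only remaining care is the order of limits and the use of $\liminf$ versus $\lim$, which is routine.
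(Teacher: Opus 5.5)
Your proposal is correct and follows essentially the same route as the paper. Both arguments combine three ingredients: the achievability bound of \autoref{prop:asy-dist-lb}, the inequality $\mathrm{Dist}^{(\infty,0)}\le\mathrm{Cost}^{(\infty,0)}$, and the cost formula of \autoref{cor:one-shot-con-purity-rates} with the AEP of \autoref{th:equipartition}, closed off by \cite[Theorem 4]{DGP24}. The only difference is that you use just the lower-bound half of the AEP, letting achievability plus $\mathrm{Dist}\le\mathrm{Cost}$ supply the matching bound, whereas the paper invokes the full AEP equality to evaluate the cost directly.
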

\begin{proof}
Let $\n_{A'B'\to AB}$ be a tele-covariant channel. Asymptotic equipartition property (\autoref{th:equipartition}) of the conditional channel min-entropy implies that
\begin{equation}
    \lim_{\varepsilon\to 0^+}\lim_{n\to \infty}\frac{1}{n}S^{\varepsilon}_{\infty}[A^n|B^n]_{\n}=S[A|B]_{\n}=S(R_AA|R_BB)_{\Phi^{\n}}-\log|A'|,
\end{equation}
and from \autoref{cor:one-shot-con-purity-rates},
\begin{equation}
    {\rm Cost}^{(\infty,0)}(\mathcal{N},\mathcal{R}^\pi)=\frac{1}{2}\left(\log|A|-S[A|B]_{\n}\right)=\frac{1}{2}\left[\log|A'||A|-S(R_AA|R_BB)_{\Phi^\n}\right].
\end{equation}
Whereas, from \autoref{prop:asy-dist-lb}, we have 
\begin{equation}
    {\rm Dist}^{(\infty,0)}(\mathcal{N},\mathcal{R}^\pi)\geq \frac{1}{2}(\log|A'||A|-S(R_AA|R_BB)_{\Phi^{\n}}).
\end{equation}
Observing that ${\rm Dist}^{(\infty,0)}(\mathcal{N},\mathcal{R}^\pi)\leq {\rm Cost}^{(\infty,0)}(\mathcal{N},\mathcal{R}^\pi)$ always holds, we conclude the proof.
\end{proof}

\textit{Superdense coding}: Let Ash and Bob be connected through a tele-covariant quantum channel $\n_{A'B'\to AB}$, where $A',A$ are held by Ash and $B',B$ are held by Bob and we assume $|A'|=|A|=d$. Ash and Bob use this channel to share its Choi state $\Phi^{\n}_{R_AAR_BB}$ among them, where $R_AA$ is with Ash and $R_BB$ is with Bob. Ash and Bob carry out superdense coding protocol~\cite{Hir01,BDL+04}, Ash performs the unitary operations $\U^\n_i$ on ${R_AA}$ with probability $p_i$, following which she can send her system $R_AA$ to Bob through a noiseless channel. The superdense coding capacity ${\rm Sdc}({\Phi^\n})$ captures the maximum amount of classical information that can be transmitted from Ash to Bob by sending quantum systems, utilizing pre-shared entanglement ($\Phi^\n$). For a two-qu$d$it quantum state $\rho_{AB}$, where $|A|=d$, its superdense coding capacity ${\rm Sdc}(A;B)_{\rho}=\log d-S(A|B)_{\rho}$~\cite{BDL+04}; this shows quantum advantage over classical protocol only when $S(A|B)_{\rho}<0$. Using this and \autoref{cor:asym-con-purity}, we find that the superdense coding capacity ${\rm Sdc}({\Phi^\n})$ of the Choi state $\Phi^{\n}_{R_AAR_BB}$ of a tele-covariant $\n$, with $|A'|=|A|$, is
\begin{align}
  {\rm Sdc}(R_AA;R_BB)_{{\Phi^\n}}   =&  \log|A'||A|-S(R_AA\vert R_BB)_{\Phi^\n}\nonumber\\
    = & \log|A|- S[A|B]_{\n}\nonumber\\
    =& 2~{\rm Dist}^{(\infty,0)}(\n,\mathcal{R}^{\pi})= 2~{\rm Cost}^{(\infty,0)}(\n,\mathcal{R}^{\pi}).\label{eq:sdc-dist-cost}
\end{align}
That is, the superdense coding capacity of the Choi state of a tele-covariant channel $\n$ is exactly the double of the conditional purity capacity of $\n$.

\section{Causal structure and conditional channel entropy}\label{sec:signaling}
We now examine relations between the causal structure of a bipartite channel and its conditional entropies, and in turn will provide insights for its conditional athermality formation cost.

\textit{Semicausal channels}: A quantum channel $\n_{A'B'\to AB}$ is called semicausal from $A'\to B$ (or no signaling from $A'\to B$) if operations on $A'$ cannot influence output $B$. Let ${\rm S}_{A'\not\to B}$, where ${\rm S}_{A'\not\to B}\subset\Ch(A'B',AB)$, denote the set of all bipartite channels $\n_{A'B'\to AB}$ that is semicausal (no-signaling) from $A'\to B$. For a bipartite quantum channel $\n\in\NS$, we have a reduced channel $\Q^{\n}_{B'\to B}$ derived from $\n$ such that~\cite{BGNP01,PHHH06}
\begin{equation}\label{eq:ns-con1}
    \tr_A\circ\n_{A'B'\to AB}=\tr_{A'}\otimes\Q^\n_{B'\to B},
\end{equation}
where $\Q^\n_{B'\to B}(\cdot)=\tr_A\circ\n(\rho_{A'}\otimes\cdot)$ holds true for all $\rho\in\St(A')$. All semicasual channels have the semilocalizable structure~\cite{ESW02}: each $\n\in\NS$ can be written as a composition of some quantum channels in $\Ch(B',BC)$ and $\Ch(A'C,A)$,
\begin{equation}
    \n_{A'B'\to AB}=(\widetilde{\n}_{A'C\to A}\otimes\id_{B})\circ(\id_{A'}\otimes\widetilde{\n}_{B'\to BC}).
\end{equation}
For all $\rho\in\St(A')$, we have $\tr_{C}\circ \widetilde{N}_{B'\to BC}=\tr_A\circ\n(\rho_{A'}\otimes\cdot)=\Q^{\n}_{B'\to B}$. We can write following equivalent condition for Eq.~\eqref{eq:ns-con1}: For an arbitrary $\n\in\NS$,
\begin{equation}\label{eq:ns-r-map}
    \mathcal{R}^\omega_{A\to A}\circ\n_{A'B'\to AB}=\mathcal{R}^\omega_{A'\to A}\otimes\Q^\n_{B'\to B}, 
\end{equation}
where $\mathcal{R}^\omega$ denotes a replacer channel.

{\it NS conditional channel entropy}: The sandwiched R\'enyi NS conditional channel entropy $S^{\not\to}_{\alpha}$ of a bipartite quantum channel $\n_{A'B'\to AB}$, for $\alpha\in[\frac{1}{2},\infty]$, is defined as~\cite{DGP24}
\begin{equation}\label{eq:sand-ns-con-ent}
    S^{\not\to}_{\alpha}[A|B]_\n:=-D_{\alpha}[\n_{A'B'\to AB}\|\R_{A\to A}\circ\n_{A'B'\to AB}].
\end{equation}
The von Neumann NS conditional entropy $S^{\not\to}[A|B]_{\n}$ (in $\lim\alpha\to1$ in Eq.~\eqref{eq:sand-ns-con-ent}) of a bipartite quantum channel $\n_{A'B'\to AB}$ satisfies~\cite{DGP24}
\begin{align}
     S^{\not\to}[A|B]_\n & =\inf_{\psi\in\St(RA'B')}S(A|RB)_{\n(\psi)},\label{eq:ns-ns}\\
     S^{\not\to}[A|B]_{\n} &\geq S[A|B]_{\n}.\label{eq:ns-s}
\end{align}
where it suffices to optimize over pure states $\psi_{RA'B'}$, $R\simeq A'B'$, in Eq.~\eqref{eq:ns-s}. The inequality~\eqref{eq:ns-s} is saturated, i.e., $S^{\not\to}[A|B]_\n =S[A|B]_\n$ iff $\n\in\NS$~\cite[Theorem 5]{DGP24}. $S^{\not\to}[A|B]_{\n}$ and related entropic quantities are also studied in the context of entropy accumulation theorem and cryptography in \cite{FKR+26,AT25} and conditional channel merging in \cite{DGP24}. 

\begin{proposition}\label{prop:asy-form-lb}
     For an arbitrary quantum channel $\n_{A'B'\to AB}$ its conditional purity distillation and formation capacities under CUPSs are bounded from below as
    \begin{align}
       \frac{1}{2}\left(\log|A|-S^{\not\to}[A|B]_{\n}\right) \leq   {\rm Dist}^{(\infty,0)}(\n,\mathcal{R}^{\pi})\leq {\rm Cost}^{(\infty,0)}(\n,\mathcal{R}^{\pi}).
    \end{align}
\end{proposition}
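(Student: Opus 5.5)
The plan is to reuse the argument of \autoref{prop:asy-dist-lb}, but to feed the channels an optimized i.i.d.\ input state instead of the maximally entangled (Choi) input. This should give a lower bound on the distillation capacity. The second inequality, ${\rm Dist}^{(\infty,0)}\leq{\rm Cost}^{(\infty,0)}$, is already established in \eqref{eq:asym-dist-cost-ineq}, so only the first inequality needs work.

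Fix an arbitrary pure state $\psi_{RA'B'}$ with $R\simeq A'B'$, and set $\rho_{RAB}:=\n(\psi_{RA'B'})$. By \autoref{cor:one-shot-con-purity-rates} applied to $\n^{\otimes n}$, we have $2\,\mathrm{Dist}^{(1,\varepsilon)}(\n^{\otimes n},\rp)=\inf_{\Q\in\Ch({B'}^n,B^n)}D^{\varepsilon^2}_H[\n^{\otimes n}\|\rp_{{A'}^n\to A^n}\otimes\Q]$. For any such $\Q$, the channel divergence is at least the state divergence evaluated on the input $\psi^{\otimes n}$:
\[
D^{\varepsilon^2}_H[\n^{\otimes n}\|\rp\otimes\Q]\geq D^{\varepsilon^2}_H\big(\rho^{\otimes n}_{RAB}\,\big\|\,\pi_A^{\otimes n}\otimes\sigma^{\Q}_{R^nB^n}\big),
\]
where $\sigma^{\Q}_{R^nB^n}:=(\id_{R^n}\otimes\Q)(\psi^{\otimes n}_{RB'})$. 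Here $\psi_{RB'}=\tr_{A'}\psi_{RA'B'}$, and we use $\tr_{A'}^{\otimes n}$ to trace out ${A'}^n$ (the $\rp$ factor discards ${A'}^n$ and outputs $\pi_A^{\otimes n}$). The key point is that the reference operator is always of the product form $\pi_{A}^{\otimes n}\otimes\sigma$ for some state $\sigma\in\St(R^nB^n)$. We may therefore lower bound by the infimum over all $\sigma\in\St(R^nB^n)$, and this removes all dependence on the (possibly non-i.i.d.) side channel $\Q$.

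Next, as in the proof of \autoref{prop:asy-dist-lb}, we apply $D^{\varepsilon^2}_H\geq\overline{D}_\alpha-\frac{\alpha}{1-\alpha}\log\frac{1}{\varepsilon^2}$ for $\alpha\in(0,1)$. We then use additivity of the Petz--R\'enyi conditional entropy on tensor powers \cite[Lemma 7]{HT16}. This yields
\[
\inf_{\sigma}\overline{D}_\alpha(\rho^{\otimes n}\|\pi_A^{\otimes n}\otimes\sigma)=n\big(\log|A|-\overline{S}_\alpha(A|RB)_\rho\big).
\]
Dividing by $n$ and taking $n\to\infty$ kills the $f(\alpha,\varepsilon)$ term, after which the limit $\varepsilon\to0^+$ is trivial. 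Letting $\alpha\to1^-$, which is valid by continuity of $\overline{S}_\alpha$ in $\alpha$ in finite dimensions, gives $2\,{\rm Dist}^{(\infty,0)}(\n,\rp)\geq\log|A|-S(A|RB)_{\n(\psi)}$.

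Since $\psi$ was arbitrary, we take the supremum over $\psi$. By \eqref{eq:ns-ns}, $\sup_\psi[\log|A|-S(A|RB)_{\n(\psi)}]=\log|A|-S^{\not\to}[A|B]_{\n}$, which proves the first inequality.

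The main step to check carefully is the reduction to a $\Q$-independent bound. The infimum in the one-shot formula ranges over arbitrary, possibly entangling, side channels on ${B'}^n\to B^n$. The argument above handles this because the output of $\rp\otimes\Q$ on $\psi^{\otimes n}$ is always $\pi_A^{\otimes n}$ tensored with some state, whatever $\Q$ is. Passing to the infimum over all $\sigma$ is thus legitimate and compatible with the additivity lemma. A secondary check is the order of limits ($n$, then $\varepsilon$, then $\alpha$). This is harmless because the bound holds for every fixed $\alpha$ and every fixed $\psi$ before any supremum is taken.
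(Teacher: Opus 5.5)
Your proposal is correct and follows essentially the same route as the paper. You feed the i.i.d.\ input $\psi^{\otimes n}$ and relax the reference $\pi_A^{\otimes n}\otimes(\id\otimes\Q)(\psi_{RB'}^{\otimes n})$ to an infimum over states $\sigma_{R^nB^n}$. You then apply $D_H^{\varepsilon^2}\geq\overline{D}_\alpha-\frac{\alpha}{1-\alpha}\log\frac{1}{\varepsilon^2}$ and additivity of the Petz--R\'enyi conditional entropy, take the limits, and invoke \eqref{eq:ns-ns}.

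One small point in your favor: fixing $\psi$ before relaxing to the infimum over $\sigma$ gives the correct $\sup_\psi\inf_\sigma$ ordering. The paper's displayed chain instead writes $\inf_\sigma\sup_\psi$, which does not follow as written, although it ends at the same expression.
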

\begin{proof}
To prove the lower bound, consider $\varepsilon\in(0,1)$ and $\alpha\in(0,1)$, and let $f(\alpha,\varepsilon):=\frac{\alpha}{1-\alpha}\log\frac{1}{\varepsilon^2}$. For an arbitrary quantum channel $\Q\in\Ch({B'}^n,B^n)$,
    \begin{align}
        D^{\varepsilon^2}_H[\n^{\otimes n}\|\mathcal{R}^\pi_{{A'}^n\to A^n}\otimes \Q_{{B'}^n\to B^n}] & \geq \sup_{\psi\in\St(RA'B')} D^{\varepsilon^2}_H((\n(\psi_{RA'B'}))^{\otimes n}\Vert\pi_{A}^{\otimes n}\otimes{\Q(\psi_{RB'}^{\otimes n})})\\
        & \geq \sup_{\psi\in\St(RA'B')} \overline{D}_{\alpha}((\n(\psi_{RA'B'}))^{\otimes n}\Vert\pi_{A}^{\otimes n}\otimes{\Q(\psi_{RB'}^{\otimes n})})-f(\alpha,\varepsilon)\\
        & \geq \inf_{\sigma\in\St(R_B^nB^n)}\sup_{\psi\in\St(RA'B')} \overline{D}_{\alpha}((\n(\psi_{RA'B'}))^{\otimes n}\Vert\pi_{A}^{\otimes n}\otimes{\sigma_{R^nB^n}})-f(\alpha,\varepsilon)\\
        & = n\left(\log|A|-\inf_{\psi\in\St(RA'B')}\overline{S}_{\alpha}(A|RB)_{{\n(\psi)}}\right)-f(\alpha,\varepsilon),
    \end{align}
    where the last equality holds because the Petz-R\'enyi conditional entropy is additive for tensor-product states. Therefore, the following holds for all $\alpha\in(0,1)$,
    \begin{align}
         {\rm Dist}^{(\infty,0)}(\mathcal{N},\mathcal{R}^\pi)&=\lim_{\varepsilon\to 0^+}\lim_{n\to \infty} \frac{1}{n}\mathrm{Dist}^{(1,\varepsilon)}(\mathcal{N}^{\otimes n}_{A'B'\to AB},{\mathcal{R}^{\pi}}_{{A'}^n\to A^n})\\
          &\geq \frac{1}{2}\left(\log|A|-\inf_{\psi\in\St(RA'B')}\overline{S}_{\alpha}(A|RB)_{{\n(\psi)}}- \lim_{\varepsilon\to 0^+}\lim_{n\to \infty} \frac{1}{n}\frac{\alpha}{1-\alpha}\log\frac{1}{\varepsilon^2}\right)\\
          &=\frac{1}{2}\left(\log|A|-\inf_{\psi\in\St(RA'B')}\overline{S}_{\alpha}(A|RB)_{{\n(\psi)}}\right).
    \end{align}
    Taking limit $\alpha\to 1^{-}$, ${\rm Dist}^{(\infty,0)}(\mathcal{N},\mathcal{R}^\pi)\geq\frac{1}{2}\left(\log|A|-\inf_{\psi\in\St(RA'B')}{S}(A|RB)_{{\n(\psi)}}\right)=\frac{1}{2}(\log|A|-S^{\not\to}[A|B]_{\n})$. The other inequality follows from Eq.~\eqref{eq:asym-dist-cost-ineq}.
    \end{proof}
    
\begin{theorem}\label{thm:asym-ns}
For an arbitrary quantum channel $\n_{A'B'\to AB}$, $A'\simeq A$, that is no-signaling from $A'\to B$ ($\n\in\NS$), its conditional purity distillation and formation capacities are the same and exactly determined in terms of its conditional entropy,
    \begin{align}\label{eq:dist-ns-cap}
        {\rm Dist}^{(\infty,0)}(\n,\mathcal{R}^{\pi})={\rm Cost}^{(\infty,0)}(\n,\mathcal{R}^{\pi})=\frac{1}{2}(\log|A|-S[A|B]_{\n})&=\frac{1}{2}(\log|A|-S^{\not\to}[A|B]_{\n}).
    \end{align}
That is, the resource theory of conditional purity for no-signaling quantum channles under the action of CGPSs is asymptotically reversible.
\end{theorem}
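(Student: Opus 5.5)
The argument is a sandwich. Proposition~\ref{prop:asy-form-lb} gives a lower bound on the distillation capacity, and I need a matching upper bound on the formation capacity. For $\n\in\NS$ we know $S^{\not\to}[A|B]_\n=S[A|B]_\n$ by \cite[Theorem 5]{DGP24}, which is the saturation of \eqref{eq:ns-s}. Proposition~\ref{prop:asy-form-lb} then gives
\[
{\rm Dist}^{(\infty,0)}(\n,\rp)\ge \tfrac12\left(\log|A|-S^{\not\to}[A|B]_\n\right)=\tfrac12\left(\log|A|-S[A|B]_\n\right).
\]
Inequality \eqref{eq:asym-dist-cost-ineq} gives ${\rm Dist}^{(\infty,0)}\le{\rm Cost}^{(\infty,0)}$. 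It therefore suffices to show
\[
{\rm Cost}^{(\infty,0)}(\n,\rp)\le \tfrac12\left(\log|A|-S[A|B]_\n\right).
\]
By Corollary~\ref{cor:one-shot-con-purity-rates}, this is equivalent to the AEP lower bound
\[
\lim_{\varepsilon\to0^+}\lim_{n\to\infty}\tfrac1n S^\varepsilon_\infty[A^n|B^n]_{\n^{\otimes n}}\ge S[A|B]_\n .
\]
This is the content of the referenced \autoref{thm:ns-asym-con-ent}. I would prove it as follows.

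The first step is to choose a specific side channel. Since $\n^{\otimes n}\in{\rm S}_{{A'}^n\not\to B^n}$, I take $\Q=(\Q^\n)^{\otimes n}$, the reduced channel from \eqref{eq:ns-con1}. By \eqref{eq:ns-r-map}, $\R\otimes(\Q^{\n})^{\otimes n}=\R_{A\to A}^{\otimes n}\circ\n^{\otimes n}$. This gives
\[
-S^\varepsilon_\infty[A^n|B^n]_{\n^{\otimes n}}\le D^\varepsilon_\infty\!\left[\n^{\otimes n}\,\middle\|\,(\R\circ\n)^{\otimes n}\right].
\]
So the task reduces to a smoothed max-divergence between two i.i.d.\ channel families, where the second channel $\R\circ\n$ is a completely positive post-processing of the first.

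The second step, which I expect to be the main obstacle, is the channel AEP for $D^\varepsilon_\infty$. Channel max-divergences do not in general regularize to the single-letter relative entropy, and smoothing over channels is subtle. The structure here helps, however. For any input $\psi_{RA'^nB'^n}$ we have
\[
D_\infty\!\left(\n^{\otimes n}(\psi)\,\middle\|\,\mathbbm{1}_{A^n}\otimes\tr_{A^n}\n^{\otimes n}(\psi)\right)=-S^{\downarrow}_\infty(A^n|RB^n)_{\n^{\otimes n}(\psi)} .
\]
I would bound $D^\varepsilon_\infty$ above by the sandwiched R\'enyi channel divergence $D_\alpha$, $\alpha>1$, plus a correction $\frac{1}{\alpha-1}\log\frac{1}{\varepsilon^2}$. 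The reduction from channel smoothing to state smoothing would go through a universal (de Finetti / postselection) state, which costs only $O(\log n)$ by permutation covariance of $\n^{\otimes n}$ and $(\R\circ\n)^{\otimes n}$. Then $D_\alpha[\n^{\otimes n}\|(\R\circ\n)^{\otimes n}]=-S^{\not\to}_\alpha[A^n|B^n]$, and I would show this is additive, $=-nS^{\not\to}_\alpha[A|B]_\n$. Additivity should follow as in \cite{DGP24}: the expression is $\sup_\psi -S^{\downarrow}_\alpha(A|RB)$, and duality and chain-rule properties of sandwiched R\'enyi conditional entropy give tensorization, or alternatively a direct operator inequality on Choi operators.

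The third step takes limits. With $n\to\infty$ the correction term vanishes, then $\varepsilon\to0^+$, and finally $\alpha\to1^+$ with continuity of $S^{\not\to}_\alpha$ in $\alpha$, which holds in finite dimensions with a compact optimization set. This yields $\ge S^{\not\to}[A|B]_\n=S[A|B]_\n$. Combined with the lower bound from Proposition~\ref{prop:asy-form-lb}, all four quantities in \eqref{eq:dist-ns-cap} coincide.
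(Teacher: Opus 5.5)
Your proposal is correct and follows essentially the paper's route: the lower bound on ${\rm Dist}^{(\infty,0)}$ comes from Proposition~\ref{prop:asy-form-lb}, and the upper bound on ${\rm Cost}^{(\infty,0)}$ comes from the AEP lower bound, obtained by taking the reduced channel $\Q^{\n}$ as side channel (so $\R\otimes\Q^{\n}=\R_{A\to A}\circ\n$), applying de Finetti/postselection smoothing and the R\'enyi bound, and letting $\alpha\to1^+$, with ${\rm Dist}^{(\infty,0)}\le{\rm Cost}^{(\infty,0)}$ closing the sandwich. The paper additionally proves the distillation converse separately from additivity of $S[A|B]$ on no-signaling channels, which your sandwich makes unnecessary, and you are right to flag explicitly the additivity of $S^{\not\to}_\alpha$ for $\alpha>1$ that the paper's cost step uses only implicitly.
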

We provide detailed proof of \autoref{thm:asym-ns} in Appendix~\ref{app:ns-asym}. We write the asymptotic equipartition property of the conditional channel min-entropy for no-signaling quantum channels as a theorem below, proof of which follows from the proof arguments of \autoref{thm:asym-ns}.
\begin{theorem}[Asymptotic equipartition property]\label{thm:ns-asym-con-ent}
    For $\varepsilon\in(0,1)$ and asymptotically many i.i.d.~uses of an arbitrary bipartite channel $\n_{A'B'\to AB}$, the smoothed conditional channel min-entropy satsifies the following bounds,
    \begin{align}
        \lim_{\varepsilon\to 0^+}\lim_{n\to\infty}\frac{1}{n}S^{\varepsilon}_{\infty}[A^n|B^n]_{\n^{\otimes n}}\leq S^{\not\to}[A|B]_{\n}=\inf_{\psi\in\St(RA'B')}S(A|RB)_{\n(\psi)}.
    \end{align}
    If $\n_{A'B'\to AB}$ is no-signaling, $\n\in\NS$, then
    \begin{align}
        \lim_{\varepsilon\to 0^+}\lim_{n\to\infty}\frac{1}{n}S^{\varepsilon}_{\infty}[A^n|B^n]_{\n^{\otimes n}}&\leq S[A|B]_{\n},\\
        \lim_{n\to\infty}\frac{1}{n}S^{\varepsilon}_{\infty}[A^n|B^n]_{\n^{\otimes n}}&\geq S[A|B]_{\n}.
    \end{align}
\end{theorem}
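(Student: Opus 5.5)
Three bounds need proof. Two upper bounds on the regularized smoothed min-entropy: one for arbitrary $\n$ against $S^{\not\to}[A|B]_\n$, and one for $\n\in\NS$ against $S[A|B]_\n$. Then a lower bound for $\n\in\NS$.

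\emph{Upper bounds.} The plan is to route everything through the operational identities already in hand. By \autoref{cor:one-shot-con-purity-rates}, for every $n$ and $\varepsilon$,
\[
\tfrac{1}{n}S^\varepsilon_\infty[A^n|B^n]_{\n^{\otimes n}}=\log|A|-\tfrac{2}{n}\mathrm{Cost}^{(1,\varepsilon)}(\n^{\otimes n},\rp).
\]
Taking $\lim_{\varepsilon\to0^+}\lim_{n\to\infty}$ gives
\[
\lim_{\varepsilon\to 0^+}\lim_{n\to\infty}\tfrac{1}{n}S^\varepsilon_\infty[A^n|B^n]_{\n^{\otimes n}}=\log|A|-2\,{\rm Cost}^{(\infty,0)}(\n,\rp).
\]
\autoref{prop:asy-form-lb} gives ${\rm Cost}^{(\infty,0)}(\n,\rp)\ge \frac12(\log|A|-S^{\not\to}[A|B]_\n)$, and substituting yields the first upper bound for arbitrary $\n$. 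When $\n\in\NS$, \cite[Theorem 5]{DGP24} (quoted after Eq.~\eqref{eq:ns-s}) gives $S^{\not\to}[A|B]_\n=S[A|B]_\n$, which yields the second bound. If one prefers not to rely on the limits of Cost existing, the same argument works with $\limsup$.

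\emph{Lower bound.} For $\n\in\NS$, I will exhibit one side channel $\Q_n$ on $B'^n\to B^n$ that is good for all $n$ simultaneously. The natural choice is $\Q_n=(\Q^\n)^{\otimes n}$, where $\Q^\n$ is the reduced channel from Eq.~\eqref{eq:ns-con1}. By Eq.~\eqref{eq:ns-r-map} applied with $\omega=\mathbbm 1$,
\[
\R\otimes\Q^\n=\R_{A\to A}\circ\n .
\]
Consequently
\[
-S^\varepsilon_\infty[A^n|B^n]_{\n^{\otimes n}}\le D^\varepsilon_\infty\big[\n^{\otimes n}\,\big\|\,(\R_{A\to A}\circ\n)^{\otimes n}\big],
\]
which is a smoothed max-divergence between $n$-fold tensor powers of two fixed maps. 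I then need an i.i.d.\ upper bound of the form
\[
\limsup_n \tfrac1n D^\varepsilon_\infty[\n^{\otimes n}\|\m^{\otimes n}]\le D^{\rm reg}[\n\|\m]
\]
for every fixed $\varepsilon$. This should follow from the channel AEP/duality results of \cite{FFRS20} or \cite{WW19b}, via $D_\infty^\varepsilon\lesssim D_\alpha+\frac{1}{\alpha-1}\log\frac{1}{\varepsilon^2}$ together with chain-rule additivity of the amortized sandwiched divergence.

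The difficulty is that $D^{\rm reg}$ is not obviously single-letter. Here the special structure $\m=\R\circ\n$ should help. By the semilocalizable decomposition of $\n\in\NS$, for any input $\psi_{RA'^nB'^n}$ the two outputs $\n^{\otimes n}(\psi)$ and $(\R\circ\n)^{\otimes n}(\psi)$ differ only by replacing $A^n$ with $\mathbbm 1$. So the divergence equals $-S_\alpha(A^n|R B^n)$ of a state produced by the channel, with the comparison marginal fixed to $\n^{\otimes n}(\psi)_{RB^n}$, a $\downarrow$-type conditional entropy. Its minimization over inputs should be additive because $\R\circ\n$ is itself semilocalizable with the identical $B$-part. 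I would prove this via a chain rule for conditional entropy along the sequential (semicausal) decomposition, in the spirit of the entropy accumulation arguments of \cite{FKR+26,AT25}. The resulting per-copy rate is $\sup_\psi -S(A|RB)_{\n(\psi)}=-S^{\not\to}[A|B]_\n=-S[A|B]_\n$, and letting $\alpha\to1$ gives the lower bound.

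I expect this additivity/chain-rule step, i.e.\ showing that $D^{\rm reg}[\n\|\R\circ\n]=D[\n\|\R\circ\n]$ for semicausal $\n$, to be the main obstacle. My first attempt would be the chain rule for sandwiched R\'enyi channel divergences with $\R\circ\n$ playing the role of a replacer on $A$ only.
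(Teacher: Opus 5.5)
Your upper bounds are obtained the same way the paper obtains them: Corollary~\ref{cor:one-shot-con-purity-rates} turns $\frac1n S^\varepsilon_\infty$ into $\log|A|-\frac2n\mathrm{Cost}^{(1,\varepsilon)}$, and \autoref{prop:asy-form-lb} then gives the bound, with $S^{\not\to}=S$ on $\NS$. For the lower bound you choose the same side channel as the paper, $(\Q^\n)^{\otimes n}$ (the paper's $\mathcal{K}$). The asymptotic analysis then differs. The paper reuses the de Finetti/post-selection bound from the proof of \autoref{th:equipartition}, $\frac1n S^\varepsilon_\infty[A^n|B^n]_{\n^{\otimes n}}\ge -D_\alpha[\n\|\R\otimes\mathcal K]+\frac1n f(\varepsilon',\delta)$, and sends $n\to\infty$, then $\alpha\to1^+$. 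You instead go through $D^{\rm reg}[\n\|\R\circ\n]$ and an additivity statement.

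\textbf{The step you flag as the main obstacle is elementary and needs no semicausality.} Take $\rho=(\n\otimes\n)(\psi)$. The chain rule gives $S(A_1A_2|RB_1B_2)_\rho=S(A_1|RA_2B_1B_2)_\rho+S(A_2|RB_1B_2)_\rho$.
\begin{itemize}
\item The first term is the first copy of $\n$ acting on $(\id\otimes\n)(\psi)$, with reference $RA_2B_2$.
\item The second term is the second copy acting on $((\tr_A\circ\n)\otimes\id)(\psi)$, with reference $RB_1$.
\end{itemize}
Each term is therefore at least $S^{\not\to}[A|B]_\n$; by strong subadditivity, the infimum in Eq.~\eqref{eq:ns-ns} may be taken over arbitrary references. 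Product inputs give the reverse inequality. Hence $D^{\rm reg}[\n\|\R\circ\n]=D[\n\|\R\circ\n]$ for every channel. The paper invokes the same additivity in the distillation part of its proof. Semicausality is needed only for $\R\otimes\Q^\n=\R\circ\n$ and for $S^{\not\to}=S$.

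\textbf{The weight of the argument sits in the step you treat as routine}, namely $\limsup_n\frac1n D^\varepsilon_\infty[\n^{\otimes n}\|\m^{\otimes n}]\le D^{\rm reg}[\n\|\m]$. The mechanism you sketch does not deliver it as stated:
\begin{itemize}
\item The inequality $D^\varepsilon_\infty\le D_\alpha+\frac{1}{\alpha-1}\log\frac{1}{\varepsilon^2}+\cdots$ smooths a single output state. The channel quantity instead requires one smoothing channel that is close to $\n^{\otimes n}$ in channel purified distance.
\item Smoothing the Choi state and converting to diamond closeness costs a factor $(|A'||B'|)^n$. The standard remedy is the post-selection (de Finetti) technique, which is exactly what the paper's argument supplies.
\item Bounding the $n$-copy R\'enyi divergence through amortized quantities still leaves an $\alpha\to1$ limit after regularization, and that interchange is not automatic.
\end{itemize}
You should either cite a channel-smoothing result of this form or run the post-selection argument yourself. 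Only the $\varepsilon\to0$ version is needed: $D^\varepsilon_\infty$ is nonincreasing in $\varepsilon$, so the fixed-$\varepsilon$ lower bound follows.

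Granting that input, your route is sound, and it buys something. It reduces single-letterization to the von Neumann chain rule. The paper's route gives explicit finite-$n$ corrections and handles channel smoothing directly. However, it bounds the $n$-copy R\'enyi divergence at the de Finetti input by $nD_\alpha[\n\|\R\otimes\mathcal K]$. That step is immediate for tele-covariant channels, but for semicausal ones it rests on additivity at the R\'enyi level $\alpha>1$.
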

A consequence of \autoref{thm:asym-ns} for $\n\in\NS$ is that
\begin{equation}
     \lim_{\varepsilon\to 0^+}\lim_{n\to\infty}\frac{1}{n}S^{\varepsilon}_{\infty}[A^n|B^n]_{\n^{\otimes n}}=S[A|B]_{\n}=S^{\not\to}[A|B]_{\n}=\inf_{\psi\in\St(RA'B')}S(A|RB)_{\n(\psi)}.
\end{equation}
We now inspect the relations between the causal structure of a quantum channel and its conditional min-entropy.

\begin{proposition}\label{prop:ns-downmin-eq}
For an arbitrary bipartite channel $\mathcal{N}_{A'B'\to AB}$, its NS conditional min-entropy satisfies
\begin{equation}
    S^{\not\to}_{\infty}[A|B]_{\mathcal{N}}=S^{\downarrow}_{\infty}(A|R_AR_BB)_{\Phi^{\mathcal{N}}}.
\end{equation}
\end{proposition}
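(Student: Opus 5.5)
The plan is to unfold the definition of $S^{\not\to}_{\infty}[A|B]_{\n}=-D_{\infty}[\n\|\R_{A\to A}\circ\n]$ and use the fact, recalled in the preliminaries, that the channel max-relative entropy is attained at the Choi operator: $D_{\infty}[\n\|\m]=D_{\infty}(\Gamma^{\n}\|\Gamma^{\m})$ for completely positive $\m$. The reference in that identity is a channel, but the same semidefinite argument (the operator inequality $\Gamma^{\n}\le\lambda\Gamma^{\m}$ on the Choi operators is equivalent to complete positivity of $\lambda\m-\n$) holds for any CP map $\m$. Here $\m=\R_{A\to A}\circ\n$ is CP, since $\R$ is CP. So the first step is to write
\begin{equation}
-S^{\not\to}_{\infty}[A|B]_{\n}=D_{\infty}\big(\Gamma^{\n}_{R_AAR_BB}\,\big\|\,\Gamma^{\R\circ\n}_{R_AAR_BB}\big).
\end{equation}

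Next I will compute the Choi operator of $\R_{A\to A}\circ\n$. Since $\R(X_A)=\tr(X)\mathbbm{1}_A$ acts only on the output $A$, we get
\begin{equation}
\Gamma^{\R\circ\n}=\mathbbm{1}_A\otimes\tr_A\Gamma^{\n}_{R_AAR_BB}=\mathbbm{1}_A\otimes\Gamma^{\n}_{R_AR_BB}.
\end{equation}
Normalizing both operators by the same factor $|A'||B'|$ turns them into $\Phi^{\n}_{R_AAR_BB}$ and $\mathbbm{1}_A\otimes\Phi^{\n}_{R_AR_BB}$. Because $D_{\infty}$ is invariant under rescaling both arguments by a common positive constant, this gives
\begin{equation}
-S^{\not\to}_{\infty}[A|B]_{\n}=D_{\infty}\big(\Phi^{\n}_{R_AAR_BB}\,\big\|\,\mathbbm{1}_A\otimes\Phi^{\n}_{R_AR_BB}\big)=-S^{\downarrow}_{\infty}(A|R_AR_BB)_{\Phi^{\n}},
\end{equation}
which is the definition of $S^{\downarrow}_{\infty}$ with conditioning system $R_AR_BB$.

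The only point needing care is the first step: justifying that the supremum over input states in the channel divergence is achieved by the maximally entangled input even though the second argument is merely CP and not a channel. I would settle this directly. For any pure input $\psi_{RA'B'}$ write $\psi=(X_R\otimes\mathbbm{1})\Gamma(X_R\otimes\mathbbm{1})^\dagger$. Then $\id\otimes\n(\psi)\le\lambda\,\id\otimes\m(\psi)$ follows from $\Gamma^{\n}\le\lambda\Gamma^{\m}$ by conjugating with $X$, and the maximally entangled input attains the bound. Mixed inputs reduce to pure ones via quasi-convexity, or simply by purifying. The support conditions are consistent throughout, since $\supp\Gamma^{\n}\subseteq\supp(\mathbbm{1}_A\otimes\Gamma^{\n}_{R_AR_BB})$ always holds. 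So the value is finite, and no separate case analysis is needed.
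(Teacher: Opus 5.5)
Your proposal is correct and takes the same route as the paper. Both pass to Choi operators, use $\Gamma^{\R\circ\n}=\mathbbm{1}_A\otimes\tr_A\Gamma^{\n}$, and read off $S^{\downarrow}_{\infty}(A|R_AR_BB)_{\Phi^{\n}}$ after the common normalization; your argument that the maximally entangled input is optimal even though $\R\circ\n$ is only completely positive justifies a step the paper uses without comment.
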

\begin{proof}
Note that for any channel $\n_{A'B'\to AB}$, we have $
\mathcal{R}^{\mathbbm{1}}_{A\to A}\circ \n(\cdot)=\mathbbm{1}_{A}\otimes \tr_A\circ\n(\cdot)$, and therefore, $\Phi^{\mathcal{R}_{A\to A}\circ\n}=\mathbbm{1}_A\otimes\tr_A\Phi^\n$. From the definition of NS conditional min-entropy, we get 
\begin{align}
S^{\not\to}_{\infty}[A|B]_{\mathcal{N}}&=-D_{\infty}[\mathcal{N}_{A'B'\to AB}\Vert\mathcal{R}^{\mathbbm{1}}_{A\to A}\circ\mathcal{N}_{A'B'\to AB}]\\
&=-D_{\infty}(\Phi^{\mathcal{N}}_{R_AAR_BB}\Vert\mathbbm{1}_A\otimes\Phi^{\mathcal{N}}_{R_AR_BB})\\
&=S_{\infty}^\downarrow(A|R_AR_BB)_{\Phi^{\n}}.
\end{align}
\end{proof}
In the following lemma we show that for a bipartite semicausal causal channel, its NS conditional min-entropy lower bounds its conditional min-entropy.
\begin{lemma}\label{lem:ns-ineq}
    For an arbitrary bipartite semicausal channel $\mathcal{N}_{A'B'\to AB}\in\NS$, we have
    \begin{align}
    S^{\downarrow}_{\infty}(R_AA|R_BB)_{\Phi^{\mathcal{N}}}-\log|A'|= S^{\not\to}_{\infty}[A|B]_{\mathcal{N}}\leq  S_{\infty}[A|B]_{\mathcal{N}}.
    \end{align}
\end{lemma}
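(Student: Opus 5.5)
The plan is to prove the equality directly from \autoref{prop:ns-downmin-eq} together with the no-signaling condition~\eqref{eq:ns-con1}, and then to read off the inequality from the lower bound in \autoref{thm:min-ent-bounds}.

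\emph{Step 1: the marginal of the Choi state.} By \autoref{prop:ns-downmin-eq},
\begin{equation}
S^{\not\to}_{\infty}[A|B]_{\n}=-D_{\infty}(\Phi^{\n}_{R_AAR_BB}\Vert\mathbbm{1}_A\otimes\Phi^{\n}_{R_AR_BB}).
\end{equation}
I will compute the marginal $\Phi^{\n}_{R_AR_BB}=\tr_A\Phi^{\n}$ using the semicausal structure. Since $\n\in\NS$, Eq.~\eqref{eq:ns-con1} gives $\tr_A\circ\n=\tr_{A'}\otimes\Q^{\n}_{B'\to B}$. Applying this to $\Phi_{R_AA'}\otimes\Phi_{R_BB'}$ yields
\begin{equation}
\Phi^{\n}_{R_AR_BB}=\pi_{R_A}\otimes\Phi^{\Q^{\n}}_{R_BB}.
\end{equation}
Tracing out $R_A$ identifies $\Phi^{\Q^{\n}}_{R_BB}=\Phi^{\n}_{R_BB}$. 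Hence
\begin{equation}
\mathbbm{1}_A\otimes\Phi^{\n}_{R_AR_BB}=\frac{1}{|A'|}\,\mathbbm{1}_{R_AA}\otimes\Phi^{\n}_{R_BB}.
\end{equation}

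\emph{Step 2: the equality.} I will use the scaling rule $D_{\infty}(\rho\Vert c^{-1}\sigma)=D_{\infty}(\rho\Vert\sigma)+\log c$ with $c=|A'|$. This gives
\begin{equation}
S^{\not\to}_{\infty}[A|B]_{\n}=-D_{\infty}(\Phi^{\n}_{R_AAR_BB}\Vert\mathbbm{1}_{R_AA}\otimes\Phi^{\n}_{R_BB})-\log|A'|,
\end{equation}
and the right-hand side is by definition $S^{\downarrow}_{\infty}(R_AA|R_BB)_{\Phi^{\n}}-\log|A'|$.

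\emph{Step 3: the inequality.} The bound $S^{\downarrow}_{\infty}(R_AA|R_BB)_{\Phi^{\n}}-\log|A'|\leq S_{\infty}[A|B]_{\n}$ is exactly the lower bound in \autoref{thm:min-ent-bounds}, which holds for every bipartite channel. Equivalently, one can choose the feasible side channel $\Q=\Q^{\n}$ in the infimum defining $S_{\infty}[A|B]_{\n}$. In the no-signaling case $\Q^{\n}$ coincides with the reduced channel $\mathcal{T}^{\n}$ used in that proof.

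The only step requiring care is Step 1: correctly identifying $\tr_A\Phi^{\n}$ as a product with the maximally mixed $\pi_{R_A}$, including the normalization factor $1/|A'|$. Everything else is bookkeeping with earlier results.
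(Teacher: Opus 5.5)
Your proof is correct and is essentially the paper's argument. The paper substitutes $\mathcal{R}^{\mathbbm{1}}_{A\to A}\circ\n=\R_{A'\to A}\otimes\Q^{\n}$ at the channel level and then passes to Choi states, whereas you pass to Choi states first via \autoref{prop:ns-downmin-eq} and then compute $\tr_A\Phi^{\n}=\pi_{R_A}\otimes\Phi^{\n}_{R_BB}$; this is the same identity applied in a different order, and you close with the lower bound of \autoref{thm:min-ent-bounds} exactly as the paper does.
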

\begin{proof}
For a bipartite channel $\mathcal{N}_{A'B'\to AB}\in{\NS}$, there exists a quantum channel $\Q^{\n}_{B'\to B}$ such that $\mathcal{R}^{\mathbbm{1}}_{A\to A}\circ\mathcal{N}=\mathcal{R}^{\mathbbm{1}}_{A'\to A}\otimes\Q^\n$. This implies $\Phi^{\Q^{\mathcal{N}}}_{R_BB}=\Phi^{\mathcal{N}}_{R_BB}$ and
 \begin{align}
       S^{\not\to}_{\infty}[A|B]_{\mathcal{N}}
        &=-D_{\infty}[\mathcal{N}_{A'B'\to AB}\Vert\mathcal{R}^{\mathbbm{1}}_{A\to A}\circ\mathcal{N}_{A'B'\to AB}]\\
        & =- D_{\infty}[\mathcal{N}_{A'B'\to AB}\Vert\mathcal{R}^{\mathbbm{1}}_{A'\to A}\otimes\Q^{\mathcal{N}}_{B'\to B}]\\
         & =- D_{\infty}(\Phi^{\mathcal{N}}\Vert\mathbbm{1}_{R_AA}\otimes \Phi^{\mathcal{N}}_{R_BB})-\log|A'|\\
        & = S^{\downarrow}_{\infty}(R_AA|R_BB)_{\Phi^{\mathcal{N}}}-\log|A'|\\
        &\leq S_{\infty}[A|B],
    \end{align}
 where the last inequality follows from \autoref{thm:min-ent-bounds}.
\end{proof}
A direct consequence of the above two results, \autoref{lem:ns-ineq} and \autoref{prop:ns-downmin-eq}, is the fundamental lower bound on the conditional min-entropy of a semicausal bipartite channel. 
\begin{proposition}\label{prop:nonsig-unit}
For an arbitrary semicausal bipartite channel, its conditional min-entropy is always lower bounded by the negative logarithm of the dimension of the non-connditioning output system. That is, for each $\n_{A'B'\to AB}\in \NS$, we always have $S_{\infty}[A|B]_\n\ge -\log|A|$.
\end{proposition}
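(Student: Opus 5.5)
Combining \autoref{lem:ns-ineq} with \autoref{prop:ns-downmin-eq} gives, for every $\n\in\NS$,
\begin{equation}
S_{\infty}[A|B]_{\n}\;\geq\; S^{\not\to}_{\infty}[A|B]_{\n}\;=\;S^{\downarrow}_{\infty}(A|R_AR_BB)_{\Phi^{\n}}.
\end{equation}
This reduces the claim to a statement about states: we must show that the conditional min-entropy (the $\downarrow$ version) of the Choi state $\Phi^{\n}$, with $A$ conditioned on $R_AR_BB$, is at least $-\log|A|$.

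For the second step we invoke the dimensional bound already used in the proof of \autoref{prop:entropybound1}, namely $S^{\downarrow}_{\infty}(X|Y)_{\rho}\geq -\log\min\{{\rm rank}(\rho_X),{\rm rank}(\rho_Y)\}$ from \cite[Lemma 5.11]{T21}. With $X=A$ we have ${\rm rank}(\Phi^{\n}_A)\leq |A|$, and therefore $S^{\downarrow}_{\infty}(A|R_AR_BB)_{\Phi^{\n}}\geq -\log|A|$. Chaining the two inequalities gives $S_{\infty}[A|B]_{\n}\geq -\log|A|$.

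If one prefers to avoid citing the state lemma, it can be checked directly. Write $\rho=\Phi^{\n}_{AE}$ with $E=R_AR_BB$, and purify it to $\ket{\psi}_{AEF}$. Then
\begin{equation}
\rho_{AE}\leq |A|\,\mathbbm{1}_A\otimes\rho_E,
\end{equation}
which is the standard operator inequality $\rho_{AE}\leq d_A\,\mathbbm{1}_A\otimes\rho_E$. It follows from applying the Heisenberg–Weyl twirl on $A$: averaging $\rho_{AE}$ over the $|A|^2$ Weyl unitaries yields $\pi_A\otimes\rho_E$, and positivity of each term gives $\rho_{AE}\leq |A|^2\,\pi_A\otimes\rho_E=|A|\,\mathbbm{1}_A\otimes\rho_E$. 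Hence $D_{\infty}(\rho_{AE}\|\mathbbm{1}_A\otimes\rho_E)\leq\log|A|$, which is the desired bound.

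The only step needing care is confirming that the semicausality hypothesis enters solely through \autoref{lem:ns-ineq}, i.e.\ through the identity $\mathcal{R}^{\mathbbm{1}}_{A\to A}\circ\n=\R_{A'\to A}\otimes\Q^{\n}$. That identity makes $\mathbbm{1}_A\otimes\Phi^{\n}_{R_AR_BB}$ a feasible comparison operator in the infimum defining $S_{\infty}[A|B]_{\n}$. For signaling channels this fails, consistent with the table, where $S_{\infty}[A|B]_{\n}<-\log|A|$ certifies signaling. The remaining steps are routine.
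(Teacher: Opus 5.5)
Your proof is correct and follows the paper's own argument: it uses the chain $S_{\infty}[A|B]_{\n}\geq S^{\not\to}_{\infty}[A|B]_{\n}=S^{\downarrow}_{\infty}(A|R_AR_BB)_{\Phi^{\n}}\geq-\log{\rm rank}(\Phi^{\n}_A)\geq-\log|A|$, obtained from \autoref{lem:ns-ineq} and \autoref{prop:ns-downmin-eq}. Your Weyl-twirl check of $\rho_{AE}\leq|A|\,\mathbbm{1}_A\otimes\rho_E$ is a valid self-contained replacement for the cited rank lemma, and the purification you introduce there is never used, so it can be dropped.
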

\begin{proof}
From  \autoref{lem:ns-ineq} and \autoref{prop:ns-downmin-eq}, we have
\begin{align}
  S_{\infty}[A|B]_\n
  & \ge S_{\infty}^{\not\to}[A|B]_\n \\&=S_{\infty}^\downarrow(A|R_AR_BB)_{\Phi^\n}\\
  &\ge-\log\mathrm{rank}(\Phi^\n_A)\ge-\log|A|.
\end{align}
\end{proof}

Now we inspect the relations between entangling capabilities~\cite{Das19} of a bipartite quantum channel with its conditional entropies.

{\it Completely PPT-preserving channels}: A bipartite channel $\mathcal{N}_{A'B'\to AB}$ is called completely PPT-preserving~\cite{Rains99} if it completely preserves positivity of states under partial transposition, i.e., $\id_{R_AR_B}\otimes\mathcal{N}_{A'B'\to AB}(\rho_{R_AA'R_BB'})\in\mathrm{PPT}(R_AA;R_BB)$ for all $\rho_{R_AA'R_BB'}\in\mathrm{PPT}(R_AA';R_BB')$, where $R_A\simeq A', R_B\simeq B'$. A bipartite channel $\n_{A'B'\to AB}$ is completely PPT-preserving iff its Choi state $\Phi^{\mathcal{N}}\in\mathrm{PPT}(R_AA';R_BB')$~\cite{Rains01}. These channels do not generate distillable entanglement~\cite{Das19,DBW20} and appears as free operations in the resource theory of entanglement of bipartite quantum channels~\cite{BDW18,BDWW19,GS21}. In Appendix~\ref{app:proof-ppt}, we provide proof of the following proposition that provides lower bound on the conditional min-entropy a completely PPT-preserving channel $\n_{A'B'\to AB}$.
\begin{proposition}\label{prop:ppt}
    The conditional min-entropy of a completely PPT-preserving bipartite channel $\n_{A'B'\to AB}$ is always lower bounded by the negative of logarithm of its non-connditioning input system,
    \begin{align}
        S_{\infty}[A|B]_\n\ge -\log|A'|.
    \end{align}
\end{proposition}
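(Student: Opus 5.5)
The plan is to combine the lower bound of \autoref{thm:min-ent-bounds} with a property of PPT states. That proposition gives, for every bipartite channel,
\begin{equation}
S_{\infty}[A|B]_{\n}\geq S^{\downarrow}_{\infty}(R_AA|R_BB)_{\Phi^{\n}}-\log|A'| .
\end{equation}
So it suffices to show that $S^{\downarrow}_{\infty}(R_AA|R_BB)_{\Phi^{\n}}\geq 0$ whenever $\n$ is completely PPT-preserving. By definition this is the operator inequality
\begin{equation}
\Phi^{\n}_{R_AAR_BB}\leq \mathbbm{1}_{R_AA}\otimes\Phi^{\n}_{R_BB}.
\end{equation}
Equivalently, in the language of the proof of \autoref{thm:min-ent-bounds}, we choose the side channel $\Q=\mathcal{T}^{\n}$, whose Choi state is $\Phi^{\n}_{R_BB}$, and show that $D_{\infty}(\Phi^{\n}\|\mathbbm{1}_{R_AA}\otimes\Phi^{\n}_{R_BB})\leq 0$.

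The first step is to place the Choi state in the PPT set across the right cut. The input $\Phi_{R_AA'}\otimes\Phi_{R_BB'}$ is a product across $R_AA':R_BB'$, so it is PPT. Complete PPT-preservation of $\n$ then gives $\Phi^{\n}\in{\rm PPT}(R_AA;R_BB)$; this is the cut we need, with $R_AA$ on one side and $R_BB$ on the other.

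The second step, and the main obstacle, is to prove the reduction-criterion inequality $\rho_{XY}\leq\mathbbm{1}_X\otimes\rho_Y$ for an arbitrary PPT state $\rho_{XY}$. This criterion is usually stated for separable states. I would first try the distillability route. Suppose the inequality failed. Then there is a vector $\ket{\psi}_{XY}$ with $\bra{\psi}(\mathbbm{1}_X\otimes\rho_Y-\rho_{XY})\ket{\psi}<0$. The Horodecki filtering argument then projects $\rho$ by local operations and classical communication onto a two-qubit state with singlet fraction exceeding $1/2$. Such a state is distillable. This contradicts the fact that PPT states are undistillable, since LOCC preserves PPT and the distillation protocol's output would remain PPT.

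As a fallback I would look for a direct algebraic proof. It would write $\mathbbm{1}_X\otimes\rho_Y-\rho_{XY}$ as the image of the positive operator $\rho^{{\rm T}_Y}$ under a completely positive map composed with partial transposes, using the decomposability of the reduction map $X\mapsto\tr(X)\mathbbm{1}-X$ in the relevant form. Alternatively, I would cite the known fact, compare~\cite{T21}, that PPT states satisfy this inequality.

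Once the inequality holds, we get $D_{\infty}(\Phi^{\n}\|\mathbbm{1}_{R_AA}\otimes\Phi^{\n}_{R_BB})\leq 0$, hence $S^{\downarrow}_{\infty}(R_AA|R_BB)_{\Phi^{\n}}\geq 0$. Substituting this into the first display yields $S_{\infty}[A|B]_{\n}\geq-\log|A'|$.
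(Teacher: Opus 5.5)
Your proposal is correct and matches the paper's proof: the paper also combines the lower bound $S_{\infty}[A|B]_{\n}\geq S^{\downarrow}_{\infty}(R_AA|R_BB)_{\Phi^{\n}}-\log|A'|$ from Proposition~\ref{thm:min-ent-bounds} with the reduction criterion $\Phi^{\n}\leq\mathbbm{1}_{R_AA}\otimes\Phi^{\n}_{R_BB}$ for the PPT Choi state, which it simply cites from~\cite{HH99}. Of your justifications for that criterion, the algebraic fallback is the cleanest: the reduction map decomposes as $\tr(X)\mathbbm{1}-X=\sum_{i<j}A_{ij}X^{\rm T}A_{ij}^\dagger$ with $A_{ij}=\ket{i}\bra{j}-\ket{j}\bra{i}$, so positivity of the partial transpose gives the inequality directly, without any appeal to distillability.
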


{\it Entangling and separable channels}: A subclass of completely PPT-preserving channels are completely separability-preserving channels. A bipartite channel $\mathcal{N}_{A'B'\to AB}$ is called completely separability-preserving if its operator-sum representation is given by~\cite{BDF+99}
\begin{align}
    \n(\rho_{A'B'})=\sum_i (K^i_{A'}\otimes M^i_{B'})\rho_{A'B'}( K^i_{A'}\otimes M^i_{B'})^\dagger,
\end{align}
such that for each $i$, $K_{A'}^i\in\mathrm{L}(A',A)$ and $M_{B'}^i\in\mathrm{L}(B', B)$, and   $\sum_i (K^i_{A'}\otimes M^i_{B'})^\dagger( K^i_{A'}\otimes M^i_{B'})^\dagger=\mathbbm{1}_{A'B'}$. That is, Kraus operators of a completely separability-preserving channel $\n_{A'B'\to AB}$ can be written as tensor-product of operators on $A'\to A$ and $B'\to B$. These channels preserve the set of separable states, and in fact they form the largest set of completely entanglement non-generating channels~\cite{CG19}. The Choi state $\Phi^{\mathcal{N}}_{R_AA:R_BB}\in{\rm SEP}(R_AA;R_BB)$ of a bipartite channel $\mathcal{N}_{A'B'\to AB}$ iff the channel is completely separability-preserving. Any bipartite channel that is not completely separability-preserving is called entangling. The Choi state $\Phi^\n_{R_AAR_BB}$ of an entangling channel $\n_{A'B'\to AB}$ is entangled $\Phi^\n_{R_AAR_BB}\in{\rm Ent}(R_AA;R_BB)$. A quantum channel $\n_{A'B'\to AB}$ is said to be NPT entangling if its Choi state $\Phi^{\n}_{R_AAR_BB}\not\in{\rm PPT}(R_AA;R_BB)$. 

A related observation below follows directly from \autoref{prop:ppt}.
\begin{corollary}\label{prop:seperable}
    For any completely separability-preserving channel $\mathcal{N}_{A'B'\to AB}$, $S_{\infty}[A|B]_{\mathcal{N}}\geq -\log|A'|$ always holds. $S_{\infty}[A|B]_{\mathcal{N}}<-\log|A'|$ implies that the channel $\n_{A'B'\to AB}$ is NPT entangling.
\end{corollary}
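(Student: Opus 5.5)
The corollary has two parts, and I would derive both directly from \autoref{prop:ppt}.

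\emph{First claim.} I would show that every completely separability-preserving channel is completely PPT-preserving. Take $\n_{A'B'\to AB}$ with product Kraus operators $K^i_{A'}\otimes M^i_{B'}$. Its Choi state is
\[
\Phi^{\n}_{R_AAR_BB}=\sum_i \big(\mathbbm{1}_{R_A}\otimes K^i\big)\Phi_{R_AA'}\big(\mathbbm{1}_{R_A}\otimes K^i\big)^\dagger\otimes\big(\mathbbm{1}_{R_B}\otimes M^i\big)\Phi_{R_BB'}\big(\mathbbm{1}_{R_B}\otimes M^i\big)^\dagger .
\]
This is a sum of positive operators, each a product across the cut $R_AA:R_BB$. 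Hence $\Phi^{\n}\in{\rm SEP}(R_AA;R_BB)$, as already stated in the text.

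Separable states are PPT: the partial transpose on $R_BB$ of $\sum_x p_X(x)\,\omega^x\otimes\tau^x$ is $\sum_x p_X(x)\,\omega^x\otimes(\tau^x)^{T}$, and each $(\tau^x)^{T}\geq 0$. So $\Phi^{\n}$ is PPT across $R_AA:R_BB$.

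By the Choi criterion of~\cite{Rains01} quoted above, $\n$ is therefore completely PPT-preserving. \autoref{prop:ppt} then gives $S_{\infty}[A|B]_{\n}\geq-\log|A'|$.

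\emph{Second claim.} This is the contrapositive of \autoref{prop:ppt}. If $S_{\infty}[A|B]_{\n}<-\log|A'|$, then $\n$ cannot be completely PPT-preserving. By the same Choi criterion, $\Phi^{\n}\notin{\rm PPT}(R_AA;R_BB)$, which is exactly the definition of $\n$ being NPT entangling.

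\emph{Main obstacle.} The only point needing care is the Choi criterion for complete PPT preservation. It must be read with the bipartition $R_AA:R_BB$ of the output Choi state; the paper's statement writes "$R_AA';R_BB'$", which I take to be a typo. If one prefers not to rely on that criterion, I would instead verify directly that product-Kraus channels map PPT inputs to PPT outputs. The identity
\[
{\rm T}_{R_BB}\big[(K\otimes M)\,\rho\,(K\otimes M)^\dagger\big]=(K\otimes \overline{M})\,{\rm T}_{R_BB'}(\rho)\,(K\otimes \overline{M})^\dagger
\]
shows that each Kraus term preserves positivity of the partial transpose, and summing over $i$ preserves PPT.
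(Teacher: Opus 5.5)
Your proposal is correct and takes the same route as the paper. The paper states the corollary as following directly from \autoref{prop:ppt} via the inclusion ${\rm C\text{-}SEP\text{-}P}\subset{\rm C\text{-}PPT\text{-}P}$; you make that inclusion explicit (a separable Choi state is PPT) and read the second claim as the contrapositive, correctly noting that the relevant cut is $R_AA:R_BB$. The proof of \autoref{prop:ppt} only uses that $\Phi^{\n}$ is PPT across this cut, so your detour through complete PPT preservation could be skipped.
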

A direct consequence of the above observation is that for any local quantum operations and classical communication (LOCC) channel~\cite{HHH99} $\mathcal{L}_{A'B'\to AB}$, its conditional min-entropy is lower bounded as $S_{\infty}[A|B]_{\mathcal{L}}\geq -\log|A'|$. This follows because ${\rm LOCC}\subset{\rm C\text{-}SEP\text{-}P}\subset{\rm C\text{-}PPT\text{-}P}$, where ${\rm LOCC}$ is the set of all LOCC channels, 
${\rm C\text{-}SEP\text{-}P}$ is the set of all completely separability-preserving channels, and ${\rm C\text{-}PPT\text{-}P}$ is the set of all completely PPT-preserving channels.
\begin{corollary}\label{cor:ent-unit}
    Given a bipartite unitary channel $\U_{A'B'\to AB}$, $S[A|B]_\U<-\log|A'|$ if and only if $\U_{A'B'\to AB}$ is an entangling channel.
\end{corollary}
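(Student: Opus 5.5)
The plan is to use that the Choi state $\Phi^{\U}_{R_AAR_BB}$ of a bipartite unitary channel is a pure state. For pure states, entanglement across $R_AA:R_BB$ is exactly the condition that the conditional von Neumann entropy of the state is negative. The two directions then follow from the Choi-state upper bound on $S[A|B]$ and the min-entropy lower bound for completely separability-preserving channels, both already available.

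\emph{Entangling $\Rightarrow$ $S[A|B]_\U<-\log|A'|$.} First I will invoke the general inequality $S[A|B]_{\n}\leq S(R_AA|R_BB)_{\Phi^\n}-\log|A'|$ from~\cite{DGP24}, recalled before \autoref{th:equipartition}. Since $\Phi^{\U}$ is pure, $S(R_AA|R_BB)_{\Phi^\U}=S(R_AAR_BB)_{\Phi^\U}-S(R_BB)_{\Phi^\U}=-S(R_BB)_{\Phi^\U}$. If $\U$ is entangling, then $\Phi^\U\in{\rm Ent}(R_AA;R_BB)$. A pure entangled state has a nonpure marginal, so $S(R_BB)_{\Phi^\U}>0$. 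Combining these gives $S[A|B]_\U\leq -S(R_BB)_{\Phi^\U}-\log|A'|<-\log|A'|$.

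\emph{Not entangling $\Rightarrow$ $S[A|B]_\U\geq-\log|A'|$.} If $\U$ is not entangling, it is by definition completely separability-preserving. \autoref{prop:seperable} (via \autoref{prop:ppt}) then gives $S_\infty[A|B]_\U\geq-\log|A'|$. Monotonicity in the R\'enyi order (property 5 listed in Section~\ref{section:min-entropy}) gives $S[A|B]_\U\geq S_\infty[A|B]_\U\geq -\log|A'|$. This is the contrapositive of the forward implication, which completes the equivalence. As a sanity check, one can instead note directly that a pure product Choi state forces $U=U_A\otimes U_B$. Property 4 then gives $S[A|B]_\U=S[A]_{\U_A}=-D(\Phi^{\U_A}\|\pi_{R_A}\otimes\mathbbm{1}_A)-\log|A'|\cdot 0$, which evaluates to exactly $-\log|A'|$, so the bound is tight.

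I expect the only mildly delicate step to be the first direction, and specifically making sure the cited upper bound holds for arbitrary (non-tele-covariant) channels, which the paper states. The rest is the elementary fact that pure-state conditional entropy equals minus the entanglement entropy.
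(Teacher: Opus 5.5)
Your proof is correct. The direction entangling $\Rightarrow$ $S[A|B]_\U<-\log|A'|$ is exactly the paper's argument: the Choi-state upper bound $S[A|B]_\n\le S(R_AA|R_BB)_{\Phi^\n}-\log|A'|$, which holds for all channels, combined with $S(R_AA|R_BB)_{\Phi^\U}=-S(R_BB)_{\Phi^\U}<0$ for an entangled pure Choi state.

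The routes differ on the other implication. The paper reads the biconditional off that same upper bound, but the upper bound cannot give it. When $\Phi^\U$ is a product state, it only yields $S[A|B]_\U\le-\log|A'|$, which does not rule out strict inequality. You close this direction with a lower bound, going through \autoref{prop:seperable}. That is, the reduction criterion gives $S^{\downarrow}_\infty(R_AA|R_BB)_{\Phi^\U}\ge 0$ for a PPT Choi state, this feeds into the lower bound of \autoref{thm:min-ent-bounds}, and then $S[A|B]_\U\ge S_\infty[A|B]_\U$. This supplies the step the paper's proof leaves out, so your version establishes both implications. Your alternative via $U=U_A\otimes U_B$ and property 4 also works, and it shows that the value $-\log|A'|$ is attained.

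One cosmetic slip in the sanity check: the term ``$-\log|A'|\cdot 0$'' is spurious. What you mean is $S[A]_{\U_A}=-D(\Phi^{\U_A}\|\pi_{R_A}\otimes\mathbbm{1}_A)=-\log|A'|$, which holds because $\Phi^{\U_A}$ is pure with marginal $\pi_{R_A}$.
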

\begin{proof}
    Note that $S[A|B]_{\U}\le S(R_AA|R_BB)_{\Phi^\U}-\log|A'|$, and given a pure state $\psi$, $S(A|B)_\psi<0$ if and only if $\psi$ is entangled. Therefore, $S[A|B]_{\U}<-\log|A'|$ if and only if $\Phi^\U$ is entangled, equivalently, if and only if $\U$ is entangling.       
\end{proof}
\subsection{Examples: Bipartite unitary channels}\label{sec:unitaries}
In this section, we determine the conditional entropies either analytically or numerically for a broad class of bipartite unitaries, as well as for noisy unitary channels of practical interest in quantum information processing~\cite{Kni05,GC99,CDGP08,Das19,TW25,DLL26}.

Before we begin with our results, let us recall that for a pure state $\psi_{AB}$, its conditional min-entropy satisfies $S_{\infty}(A|B)_{\psi} =-2\log \tr(\sqrt{\psi_B})$~\cite{KRS09}, where $\psi_B:=\tr_B\psi_{AB}$. For a pure state $\psi_{AB}$, the optimal conditional state on $B$ for $S(A|B)_{\psi}$ is $(\tr\sqrt{\psi_B})^{-1}\sqrt{\psi_B}$. That is, for pure state $\psi_{AB}$,
      \begin{align}\label{eq:SminupPure}
              S_{\infty}(A|B)_{\psi}
              & =-\inf_{\sigma\in\St(B)}D_{\infty}(\psi_{AB}\|\mathbbm{1}_{A}\otimes\sigma_B)=- D_{\infty}\left(\psi_{AB}\,\middle\|\,\mathbbm{1}_{A}\otimes
(\tr\sqrt{\psi_B})^{-1}\sqrt{\psi_B}\right)\\
&= -\log \tr(\sqrt{\psi_B})-D_{\infty}\left(\psi_{AB}\,\middle\|\,\mathbbm{1}_{A}\otimes\sqrt{\psi_B}\right).
      \end{align}
The conditional min-entropy $S_{\infty}^\downarrow(A|B)_\psi$ of a pure state $\psi_{AB}$ can also be calculated from the duality relations~\cite{T21}: $S^{\downarrow}_{\infty}(A|B)_\psi=-S_{\infty}(A)_{\psi}=-\log\mathrm{rank}(\psi_A)$. For a pure state $\psi_{AB}$, its Schmidt rank is equal to the rank of the reduced states, $\mathrm{rank}(\psi_A)=\mathrm{rank}(\psi_B)$.  

{\it Bipartite unitary channels}: For an arbitrary unitary channel $\mathcal{U}_{A'B'\to AB}$ corresponding to a unitary operator $U_{A'B'\to AB}$, its Choi state $\Phi^{\U}$ is also a pure state, given by $\Phi^\U=\dket{U}\dbra{U}$, where $\dket{U}=\sum_i \mathbbm{1}_{R_B}\otimes U\ket{ii}$ is the vectorization of the operator $U$. The conditional min-entropy of its Choi state is given by 
\begin{equation}\label{eq:SminupUnit}
    S_{\infty}(R_AA|R_BB)_{\Phi^{\U}}=-2\log\tr\sqrt{\Phi^\U_{BR_B}}.
\end{equation}
Consider a class of unitary channels $\mathcal{U}_{AB\to AB}$ called the controlled unitary channels and denoted as $\mathcal{C}_\U$, with control on $A$ and unitary operations on $B$. The corresponding unitary operator is given by $C_U=\sum_j \op{j}_A\otimes {U_j}$, where $\{\ket{j}\}_j$ are orthonormal basis in $A$ and $\{U_j\}_j$ are unitary operators on $B$. When the operators $\{U_i\}_i$ form a complete set of mutually orthonormal unitary operators, we denote such channels as $\widehat{\mathcal{C}}_\U$. For such channels conditional min-entropy of the Choi state has a relatively simple form, as we show in the following. 
\begin{lemma}\label{lem:SmindownUnit}
    Given an arbitrary controlled unitary channel $\mathcal{C}_\U$, we have 
    \begin{equation}
        S^\downarrow_{\infty}(AR_A|BR_B)_{\Phi^{\mathcal{C}_\U}}=-\log \abs{\mathrm{span}\{\dket{U_j}\}}.
    \end{equation}
\end{lemma}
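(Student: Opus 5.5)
The plan is to reduce the statement to the rank of a reduced state of a pure state. The Choi state $\Phi^{\mathcal{C}_\U}$ of a unitary channel is pure, and for a pure state $\psi_{XY}$ the paper has already recalled the duality fact
\[
S^{\downarrow}_{\infty}(X|Y)_\psi=-\log\mathrm{rank}(\psi_X)=-\log\mathrm{rank}(\psi_Y).
\]
So it suffices to show that the reduced Choi state on $R_BB$ has rank equal to $\dim\mathrm{span}\{\dket{U_j}\}$.

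\emph{Step 1 (compute the Choi vector).} Take the input maximally entangled operators in the same basis $\{\ket{j}\}$ that defines the control. Then $\Gamma_{R_AA'}=\sum_{j,k}\ket{jj}\bra{kk}$, and $C_U$ acts on $\ket{j}_{R_A}\ket{j}_{A'}\otimes\ket{\Gamma}_{R_BB'}$ as
\[
\ket{j}_{R_A}\ket{j}_A\otimes(\mathbbm{1}_{R_B}\otimes U_j)\ket{\Gamma}_{R_BB'}=\ket{jj}_{R_AA}\otimes\dket{U_j}_{R_BB}.
\]
Hence the unnormalized Choi vector is $\sum_j\ket{jj}_{R_AA}\otimes\dket{U_j}_{R_BB}$, and $\Phi^{\mathcal{C}_\U}$ is its normalization by $1/(|A'||B'|)$.

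\emph{Step 2 (reduced state).} The vectors $\{\ket{jj}_{R_AA}\}_j$ are orthonormal. Tracing out $R_AA$ therefore gives
\[
\Phi^{\mathcal{C}_\U}_{R_BB}\propto\sum_j\dket{U_j}\dbra{U_j}_{R_BB}.
\]
The range of a sum of rank-one positive operators is the span of the defining vectors, so
\[
\mathrm{rank}\big(\Phi^{\mathcal{C}_\U}_{R_BB}\big)=\dim\mathrm{span}\{\dket{U_j}\}.
\]
The Schmidt rank of a pure state equals the rank of either marginal, so $\mathrm{rank}\big(\Phi^{\mathcal{C}_\U}_{R_AA}\big)$ takes the same value.

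\emph{Step 3 (conclude).} Applying the pure-state identity with $X=AR_A$ and $Y=BR_B$ gives
\[
S^{\downarrow}_{\infty}(AR_A|BR_B)_{\Phi^{\mathcal{C}_\U}}=-\log\mathrm{rank}\big(\Phi^{\mathcal{C}_\U}_{R_AA}\big)=-\log\dim\mathrm{span}\{\dket{U_j}\},
\]
which is the claim, reading $\abs{\mathrm{span}\{\dket{U_j}\}}$ as the dimension of the span.

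If one prefers not to rely on the duality fact, it can be verified directly. For a pure state $\psi=\ket\psi\bra\psi$ with Schmidt form $\ket\psi=\sum_k\sqrt{\lambda_k}\ket{k}\ket{k}$ and Schmidt rank $r$, the inequality $\psi\leq\lambda\,\mathbbm{1}\otimes\psi_Y$ holds exactly when
\[
\lambda\geq\bra\psi(\mathbbm{1}\otimes\psi_Y^{-1})\ket\psi=r,
\]
where $\psi_Y^{-1}$ is taken on the support. This gives $D_{\infty}(\psi\|\mathbbm{1}\otimes\psi_Y)=\log r$.

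The only delicate point is Step 1: the basis of the maximally entangled operator must be aligned with the control basis. Since the Choi state is basis-independent up to local unitaries on $R_A$, this is a choice made without loss of generality, and local unitaries do not change the ranks involved. Everything else is routine.
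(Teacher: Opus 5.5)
Your proof is correct and follows essentially the same route as the paper: compute the (pure) Choi state of $\mathcal{C}_\U$, identify its $R_BB$ marginal as proportional to $\sum_j\dket{U_j}\dbra{U_j}$, and use the pure-state identity $D_{\infty}(\psi\|\mathbbm{1}\otimes\psi_Y)=\log\mathrm{rank}(\psi_Y)$ to get $-\log\dim\mathrm{span}\{\dket{U_j}\}$. The paper applies that identity in a single unexplained step. Your explicit Schmidt-decomposition check of it, and your cleaner normalization of the Choi vector, make your write-up somewhat more complete.
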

\begin{proof}
The Choi state of the controlled unitary defined above is given by 
\begin{equation}
    \Phi^{\mathcal{C}_\U}=\dfrac{1}{|A|}\Big[ \sum_j \op{jj}\otimes \Phi^{\U_j}+\sum_{i\neq k}\ket{ii}\bra{kk}\otimes \dket{U_i}\dbra{U_k}\Big].
\end{equation}
 The reduced state of $\Phi^{\mathcal{C}_U}$ is given by $\Phi^{\mathcal{C}_U}_{R_BB}=\dfrac{\sum_j\Phi^{\U_j}}{|A|}$.
Therefore,
\begin{align}
    S^\downarrow_{\infty}(AR_A|BR_B)_{\Phi^{\mathcal{C}_U}}&=-D_{\infty}(\Phi^{\mathcal{C}_U}\Vert  \mathbbm{1}_{R_AA}\otimes \dfrac{\sum_j \Phi^{\U_j}}{|A|})\\
    &=-\log\mathrm{rank}(\sum_j\Phi^{\U_j})\\
    &=-\log |\mathrm{span}\{\dket{U_j}\}|.
    \end{align}
\end{proof}
As an example, consider the case where $\{U_j\}_j$ is an irreducible set of unitary operators (they do not have a non-trivial invariant subspace), then $|\mathrm{span}\{\dket{U_j}\}|=|A|$. As a simple corollary of the above Lemma, if the control space is 2-dimensional with the controlled unitary operator $C_U=\op{0}_A\otimes \mathbbm{1}_B+\op{1}_A\otimes U_B$, the min-entropy is $S^\downarrow_{\infty}(AR_A|BR_B)_{\Phi^{\mathcal{C}_U}}=-1$ with the obvious assumption that $U_B\neq \mathbbm{1}_B$.
\begin{proposition}\label{thm:contr-unit}
Let $\mathcal{C}_{\mathcal{U}}$ be a controlled unitary channel,such that the corresponding unitary operators are $C_U=\sum_j \op{j}_A\otimes {U_j}$ where $\{\ket{j}\}_j$ are orthonormal basis in $A$ and $\{U_j\}_j$ are unitary operators on $B$. We have
\begin{align}
    S_{\infty}[A|B]_{\mathcal{C}_\U}\ge -2\log|A|.
\end{align}
The inequality saturates for $\widehat{\mathcal{C}}_{\mathcal{U}}$, i.e., when $\{\dket{U_j}\}_j$ is an orthonormal set, $S_{\infty}[A|B]_{\widehat{\mathcal{C}}_\U}= -2\log|A|$.
\end{proposition}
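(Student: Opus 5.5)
The plan is to sandwich $S_{\infty}[A|B]_{\mathcal{C}_\U}$ between the two bounds of \autoref{thm:min-ent-bounds}, specialised to $A'\simeq A$ (the control system serves as both input and output on Ash's side). The lower bound comes from the $S^{\downarrow}_{\infty}$ side via \autoref{lem:SmindownUnit}. Saturation comes from the $S_{\infty}$ side, using the pure-state formula \eqref{eq:SminupUnit}.

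\emph{Lower bound.} By \autoref{thm:min-ent-bounds},
\begin{equation*}
S_{\infty}[A|B]_{\mathcal{C}_\U}\geq S^{\downarrow}_{\infty}(R_AA|R_BB)_{\Phi^{\mathcal{C}_\U}}-\log|A|.
\end{equation*}
\autoref{lem:SmindownUnit} gives $S^{\downarrow}_{\infty}(R_AA|R_BB)_{\Phi^{\mathcal{C}_\U}}=-\log\abs{\mathrm{span}\{\dket{U_j}\}_j}$. The index $j$ runs over an orthonormal basis of $A$, so there are at most $|A|$ vectors $\dket{U_j}$. Hence the span has dimension at most $|A|$, and $S_{\infty}[A|B]_{\mathcal{C}_\U}\geq -\log|A|-\log|A|=-2\log|A|$. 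This step is routine.

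\emph{Saturation for $\widehat{\mathcal{C}}_\U$.} Here I use the upper bound of \autoref{thm:min-ent-bounds} together with \eqref{eq:SminupUnit}, since $\Phi^{\mathcal{C}_\U}$ is pure:
\begin{equation*}
S_{\infty}[A|B]_{\mathcal{C}_\U}\leq -2\log\tr\sqrt{\Phi^{\mathcal{C}_\U}_{R_BB}}-\log|A|.
\end{equation*}
From the proof of \autoref{lem:SmindownUnit}, $\Phi^{\mathcal{C}_\U}_{R_BB}=\frac{1}{|A|}\sum_j\Phi^{\U_j}$. When the normalized vectors $\dket{U_j}$ are mutually orthogonal, the $\Phi^{\U_j}$ are orthogonal rank-one projectors. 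Then $\Phi^{\mathcal{C}_\U}_{R_BB}$ has exactly $|A|$ nonzero eigenvalues, each equal to $1/|A|$. Thus $\tr\sqrt{\Phi^{\mathcal{C}_\U}_{R_BB}}=|A|\cdot|A|^{-1/2}=\sqrt{|A|}$, and the right-hand side equals $-\log|A|-\log|A|=-2\log|A|$. Combined with the lower bound, this gives equality.

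The only point needing care is the saturation step. One must use the correctly normalized Choi vectors $\dket{U_j}$ (the normalization implicit in $\Phi^{\U_j}$), so that orthonormality makes $\sum_j\Phi^{\U_j}$ a projector of rank $|A|$. One must also check that the upper bound in \autoref{thm:min-ent-bounds} uses the unsmoothed $S_{\infty}$ with optimized conditioning state, so that \eqref{eq:SminupUnit} applies directly. Beyond that, the two bounds of \autoref{thm:min-ent-bounds} coincide in this case, so no optimization over side channels $\Q$ is needed.
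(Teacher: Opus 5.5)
Your proposal is correct and follows the paper's proof essentially step for step. The lower bound comes from the $S^{\downarrow}_{\infty}$ side of \autoref{thm:min-ent-bounds} combined with \autoref{lem:SmindownUnit} and $\dim\mathrm{span}\{\dket{U_j}\}_j\le|A|$. Saturation comes from the $S_{\infty}$ side via the pure-state formula \eqref{eq:SminupUnit}, using $\tr\sqrt{\Phi^{\widehat{\mathcal{C}}_\U}_{R_BB}}=\sqrt{|A|}$. Your remark on normalizing the vectors $\dket{U_j}$, so that the $\Phi^{\U_j}$ are mutually orthogonal rank-one projectors, fills in a detail the paper leaves implicit.
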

\begin{proof}
From \autoref{lem:SmindownUnit} and the bounds on channel min-entropy from \autoref{thm:min-ent-bounds}, we have $S_{\infty}[A|B]_{\mathcal{C}_{\mathcal{U}}}\ge-2\log|A|$ for all controlled unitary channels since $|\mathrm{span}\{\dket{U_j}\}|\le |A|$. The inverse relations follows for channels $\widehat{\mathcal{C}}_\U$ since $\{\dket{U_i}\}\_i$ form an orthonormal basis for system $A$.  From Equation~\eqref{eq:SminupUnit} we have
    \begin{align}
        S_{\infty}(AR_A|BR_B)_{\Phi^{\widehat{\mathcal{C}}_\U}}&=-2\log\tr\sqrt{\frac{\sum_i\Phi^{\U_i}}{|A|}} \\
        &=-\log|A|.
    \end{align}
Again using \autoref{thm:min-ent-bounds}, we get $S_{\infty}[A|B]_{\widehat{\mathcal{C}}_{\mathcal{U}}}=-2\log|A|$. This proves the proposition.
\end{proof}
This bound also holds for von Neumann conditional entropy of $\widehat{\mathcal{C}}_\U$ as we show below
\begin{corollary}\label{cor:contunit-von}
    Given the controlled unitary channel $\widehat{\mathcal{C}}_\U$,  the von Neumann conditional entropy is $S[A|B]_{\widehat{\mathcal{C}}_U}=-2\log|A|$.
\end{corollary}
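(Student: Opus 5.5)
We prove $S[A|B]_{\widehat{\mathcal{C}}_\U}=-2\log|A|$ by sandwiching the von Neumann conditional entropy between two bounds that both evaluate to $-2\log|A|$.

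\emph{Lower bound.} The min-entropy lies below the von Neumann entropy (property 5 of the sandwiched R\'enyi conditional entropies), and \autoref{thm:contr-unit} gives the exact min-entropy of $\widehat{\mathcal{C}}_\U$. Hence
\begin{equation}
S[A|B]_{\widehat{\mathcal{C}}_\U}\geq S_{\infty}[A|B]_{\widehat{\mathcal{C}}_\U}=-2\log|A|.
\end{equation}

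\emph{Upper bound.} For an arbitrary bipartite channel we have $S[A|B]_{\n}\leq S(R_AA|R_BB)_{\Phi^\n}-\log|A'|$~\cite{DGP24}. Here $A'\simeq A$, so $\log|A'|=\log|A|$.

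The Choi state $\Phi^{\widehat{\mathcal{C}}_\U}$ is pure. Its conditional von Neumann entropy is therefore $S(R_AA|R_BB)=-S(R_BB)$. From the proof of \autoref{lem:SmindownUnit}, the marginal is
\begin{equation}
\Phi^{\widehat{\mathcal{C}}_\U}_{R_BB}=\frac{1}{|A|}\sum_j \Phi^{\U_j}.
\end{equation}
Each $\Phi^{\U_j}$ is a pure state $\dket{U_j}\dbra{U_j}$ normalized by $|B|$. Since the vectors $\{\dket{U_j}\}_j$ are orthonormal, these are $|A|$ mutually orthogonal rank-one projectors. The marginal is therefore maximally mixed on an $|A|$-dimensional subspace, which gives $S(R_BB)=\log|A|$. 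This is the same spectral fact that produced $S_\infty=-\log|A|$ for the Choi state in the proof of \autoref{thm:contr-unit}.

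Combining these, $S[A|B]\leq -\log|A|-\log|A|=-2\log|A|$.

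\emph{Equality and main obstacle.} Together the two bounds give $S[A|B]_{\widehat{\mathcal{C}}_\U}=-2\log|A|$. The only delicate point is the spectral computation of $\Phi_{R_BB}$: one must check that orthonormality of $\{\dket{U_j}\}$ (with the correct normalization) makes the spectrum exactly flat of rank $|A|$. The remaining steps are direct citations of earlier results.
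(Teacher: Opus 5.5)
Your proposal is correct and follows the paper's own argument: a lower bound via $S[A|B]\ge S_\infty[A|B]=-2\log|A|$ from \autoref{thm:contr-unit}, and an upper bound via $S[A|B]\le S(R_AA|R_BB)_{\Phi}-\log|A'|$ combined with purity of the Choi state. The only cosmetic difference is which marginal you use. You evaluate $-S(R_BB)$ from $\frac{1}{|A|}\sum_j\Phi^{\U_j}$, whereas the paper evaluates $-S(R_AA)$ from the flat marginal $\frac{1}{|A|}\Pi_A$; for a pure state these give the same value.
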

\begin{proof}
    Note that $S[A|B]_{\widehat{\mathcal{C}}_\U}\le S(R_AA|R_BB)_{\Phi^{\widehat{\mathcal{C}}_\U}}-\log|A|$. Since $\Phi^{\widehat{\mathcal{C}}_\U}$ is a pure state, we have
    \begin{align}   S(R_AA|R_BB)_{\Phi^{\widehat{\mathcal{C}}_\U}}&=-S(\Phi^{\widehat{\mathcal{C}}_\U}_{R_AA})\\
        &=-S\Big(\frac{1}{|A|}\Pi_A\Big)\\
        &=-\log|A|.
    \end{align}
    Therefore, $S[A|B]_{\widehat{\mathcal{C}}_\U}\le -2\log|A|$. Since $S[A|B]_{\widehat{\mathcal{C}}_\U}\ge S_\infty[A|B]_{\widehat{\mathcal{C}}_\U}=-2\log|A|$, we have  $S[A|B]_{\widehat{\mathcal{C}}_U}=-2\log|A|$.
\end{proof}
\begin{corollary}
Given the controlled unitary channel $\widehat{\mathcal{C}}_\U$, we have $S_\infty[B|A]_{\widehat{\mathcal{C}}_\U}=-2\log|A|$.
\end{corollary}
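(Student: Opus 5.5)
The plan is to run the argument of \autoref{thm:contr-unit} with the roles of the two parties exchanged, so that $B$ becomes the non-conditioning system and $A$ the conditioning one. Throughout, I take the setting of \autoref{fig:channel-hierarchy}, $|A'|=|B'|=|A|=|B|$, with the unitaries $\{U_j\}_j$ having pairwise orthogonal vectorizations $\dket{U_j}$.

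\textbf{Step 1 (swapped sandwich bound).} Applying \autoref{thm:min-ent-bounds} to $\mathcal{N}_{B'A'\to BA}$ gives
\begin{equation*}
S^{\downarrow}_{\infty}(R_BB|R_AA)_{\Phi^{\n}}-\log|B'|\leq S_{\infty}[B|A]_{\n}\leq S_{\infty}(R_BB|R_AA)_{\Phi^{\n}}-\log|B'| .
\end{equation*}
Its proof transfers verbatim. The upper bound comes from relaxing $\Phi^{\Q}_{R_AA}$ to an arbitrary state. The lower bound comes from choosing the reduced channel $\tr_B\circ\n(\cdot\otimes\pi_{B'})$, whose Choi state is $\Phi^{\n}_{R_AA}$.

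\textbf{Step 2 (the reduced state on $R_AA$).} Since $\Phi^{\widehat{\mathcal{C}}_\U}$ is pure, both bounds depend only on its marginal on $R_AA$. Using the expression for $\Phi^{\mathcal{C}_\U}$ in the proof of \autoref{lem:SmindownUnit}, I trace out $R_BB$. The off-diagonal terms $\ket{ii}\bra{kk}\otimes\dket{U_i}\dbra{U_k}$ contribute $\tr(\dket{U_i}\dbra{U_k})/|B|=\langle\!\langle U_k|U_i\rangle\!\rangle/|B|$, which vanishes for $i\neq k$ by orthogonality. The diagonal terms have unit trace. Hence
\begin{equation*}
\Phi^{\widehat{\mathcal{C}}_\U}_{R_AA}=\frac{1}{|A|}\sum_j\op{jj},
\end{equation*}
which is maximally mixed on an $|A|$-dimensional subspace.

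\textbf{Step 3 (evaluating both bounds).} By the pure-state duality recalled before \autoref{lem:SmindownUnit}, the lower bound is $S^{\downarrow}_{\infty}(R_BB|R_AA)=-\log\mathrm{rank}(\Phi_{R_AA})=-\log|A|$. By Eq.~\eqref{eq:SminupPure}, the upper bound is $S_{\infty}(R_BB|R_AA)=-2\log\tr\sqrt{\Phi_{R_AA}}=-2\log\sqrt{|A|}=-\log|A|$. The two bounds coincide, so
\begin{equation*}
S_{\infty}[B|A]_{\widehat{\mathcal{C}}_\U}=-\log|A|-\log|B'|=-2\log|A| .
\end{equation*}

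\textbf{Main obstacle: the dimension bookkeeping.} In general the argument yields $-\log|A|-\log|B'|$. This equals $-2\log|A|$ only when $|B'|=|A|$. If instead "complete set" is read as $|A|=|B|^2$ orthonormal unitaries, the value becomes $-\tfrac{3}{2}\log|A|$. That is consistent with the dimensional bound of \autoref{prop:entropybound1}, which with roles swapped reads $-\log\min\{|B'|^2|B|,|B'||A'||A|\}=-3\log|B|$, whereas $-2\log|A|=-4\log|B|$ would violate it. So I would fix the convention $|A|=|B|$, with $\{U_j\}_{j=1}^{|A|}$ orthonormal, for example the shift unitaries $X^j$.
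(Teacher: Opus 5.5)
Your proposal is correct and follows the paper's own argument: you compute the marginal $\Phi^{\widehat{\mathcal{C}}_\U}_{R_AA}=\frac{1}{|A|}\sum_j\op{jj}$, show that $S^{\downarrow}_{\infty}(R_BB|R_AA)$ and $S_{\infty}(R_BB|R_AA)$ both equal $-\log|A|$ for this pure Choi state, and then apply the role-swapped sandwich bound of \autoref{thm:min-ent-bounds}. Your dimension bookkeeping is also well taken: the paper's proof silently replaces $\log|B'|$ by $\log|A|$, i.e., it relies on the convention $|A'|=|B'|=|A|=|B|$ of \autoref{fig:channel-hierarchy}, and under the reading $|A|=|B|^2$ the stated value $-2\log|A|$ would indeed contradict \autoref{prop:entropybound1}.
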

\begin{proof}
Note that $\Phi^{\widehat{\mathcal{C}}_\U}_{R_AA}=\frac{\sum_{i=1}^{|A|} \op{i}}{|A|}$, and the following the steps in~\autoref{lem:SmindownUnit} we see that $S_{\infty}^{\downarrow}(R_BB|R_AA)_{\widehat{\mathcal{C}}_\U}=-\log|A|$. Similarly, we have $S_{\infty}(R_BB|R_AA)_{\widehat{\mathcal{C}}_\U}=-\log|A|$, which proves the assertion that $S_\infty[B|A]_{\widehat{\mathcal{C}}_\U}=-2\log|A|$. Using the arguments similar to \autoref{cor:contunit-von} we can also conclude that $S[A|B]_{\widehat{\mathcal{C}}_\U}=-2\log|A|$.
\end{proof}
As a result, for controlled unitary channels $\widehat{\mathcal{C}}_\U$, we see that the conditional entropy of $A$ conditioned on $B$ is equal to the conditional entropy of $B$ conditioned on $A$,
\begin{align}    S_\infty[A|B]_{\widehat{\mathcal{C}}_\U}=S_\infty[B|A]_{\widehat{\mathcal{C}}_\U},
\end{align}
indicating that such channels have similar signaling power in both directions.

The NS conditional min-entropy of a unitary channel $\U_{A'B'\to AB}$ is completely characterized by $|A|$. Using the unitality of $\U$ we have $\Phi^\U_{AB}=\dfrac{1}{|A'\Vert B'|}\mathbbm{1}_{AB}$, therefore $    \Phi^\U_{A}=\dfrac{|B|}{|A'\Vert B'|}\mathbbm{1}_A$. Since $|A'\Vert B'|=|A\Vert B|$, we have $\Phi^\U_{A}=\pi_A$. Using~\autoref{prop:ns-downmin-eq}, we have $S_{\infty}^{\not\to}[A|B]_\n=-\log|A|.$ 

For no-signaling channel $\n_{A'B'\to AB}\in\NS$, we have the inequality $S_{\infty}[A|B]_\n\ge S^{\not\to}_{\infty}[A|B]_\n$ (\autoref{lem:ns-ineq}). For no-signaling unitary channels $\U_{A'B'\to AB}\in \NS$, we have $S_{\infty}[A|B]_\U\ge -\log|A|$. If we restrict dimensional condition on the unitary channels $\U$ to $|A'|=|A|$, then that puts constrain also on $B$, $|B'|=|B|$. The unitary channels $\U\in \NS$ with $A'\simeq A$ are tensor-product of local unitaries, $\U=\mathcal{U}_{A'\to A}\otimes\mathcal{U}_{B'\to B}$, and
\begin{align}\label{eq:nsu-product}
    S_{\infty}[A|B]_{\mathcal{U}_A\otimes\mathcal{U}_B}=S_{\infty}^{\not\to}[A|B]_{\mathcal{U}_A\otimes\mathcal{U}_B}=S_{\infty}[A]_{\U}=-\log|A'|.
\end{align}
For all conditional isometry channels $\V_{A'\to A}\otimes\Q_{B'\to B}$, where $\V_{A'\to A}$ is an isometry and $\Q\in\Ch(B',B)$, we have $S_{\infty}[A|B]_{\V\otimes\Q}=S_{\infty}[A]_{\V}=-\log|A'|$. All entangling unitaries are capable of generating distillable entanglement~\cite{HHHH09,BDW18} and non-entangling unitaries are tensor-product of local unitaries which cannot generate any kind of entanglement.

When the dimensions of conditioning system is set to $|B'|=|B|=2$, we can show that the conditional min-entropy of the bipartite unitary always saturates its upper limit, and therefore, is completely determined by its Choi state.
\begin{proposition}\label{prop:two-qubit-un}
For an arbitrary unitary channel $\U_{A'B'\to AB}$ such that $|B'|=|B|=2$, we have
\begin{align}
    S_{\infty}[A|B]_\U &= S_{\infty}(R_AA|R_BB)_{\Phi^\U}-\log|A'|.
\end{align}    
\end{proposition}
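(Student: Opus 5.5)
The upper bound $S_{\infty}[A|B]_\U\le S_{\infty}(R_AA|R_BB)_{\Phi^\U}-\log|A'|$ already holds for every bipartite channel by \autoref{thm:min-ent-bounds}. So the plan is to prove only the reverse inequality.

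That proof rests on one observation. In the state formula $S_{\infty}(R_AA|R_BB)_{\Phi^\U}=-\inf_{\sigma\in\St(R_BB)}D_{\infty}(\Phi^\U\|\mathbbm{1}_{R_AA}\otimes\sigma_{R_BB})$, the optimizer can be taken to be a normalized Choi state $\Phi^{\Q}_{R_BB}$ of some channel $\Q\in\Ch(B',B)$. Once that holds, the channel infimum in
\[
-S_{\infty}[A|B]_\U=\inf_{\Q}D_{\infty}(\Phi^\U\|\mathbbm{1}_{R_AA}\otimes\Phi^\Q_{R_BB})+\log|A'|
\]
is at most the state infimum plus $\log|A'|$. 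Together with the reverse inequality this gives equality.

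\emph{Step 1: explicit optimizer.} Since $\Phi^\U$ is pure, Eq.~\eqref{eq:SminupPure} gives the optimal side state explicitly as $\sigma^\star_{R_BB}=(\tr\sqrt{\rho})^{-1}\sqrt{\rho}$, where $\rho:=\Phi^\U_{R_BB}$. A state $\sigma\in\St(R_BB)$ with $R_B\simeq B'$ is the Choi state of some channel $B'\to B$ if and only if $\tr_B\sigma=\pi_{R_B}$. So it suffices to show $\tr_B\sqrt{\rho}\propto\mathbbm{1}_{R_B}$.

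\emph{Step 2: structure of $\rho$.} $\rho$ is a two-qubit state, since $|R_B|=|B'|=2$ and $|B|=2$. Its $R_B$ marginal is $\pi_{R_B}$ because it comes from a maximally entangled input. Its $B$ marginal is also maximally mixed: unitality of $\U$ gives $\Phi^\U_{AB}=\pi_{AB}$ (using $|A'||B'|=|A||B|$), hence $\Phi^\U_B=\pi_B$. A two-qubit state with both marginals maximally mixed has vanishing local Bloch vectors. After local unitaries that diagonalize its correlation matrix (a local-unitary SVD), it is Bell-diagonal:
\[
\rho=(V_1\otimes V_2)\Big(\sum_{i}p_i\,\Phi_i\Big)(V_1\otimes V_2)^\dagger,
\]
where the $\Phi_i$ are orthogonal Bell projectors.

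\emph{Step 3: conclude.} Because the $\Phi_i$ are orthogonal projectors,
\[
\sqrt{\rho}=(V_1\otimes V_2)\Big(\sum_i\sqrt{p_i}\,\Phi_i\Big)(V_1\otimes V_2)^\dagger .
\]
Each Bell projector has maximally mixed marginals, and local unitaries preserve this. Hence $\tr_B\sqrt{\rho}=\tfrac{1}{2}(\tr\sqrt{\rho})\,\mathbbm{1}_{R_B}$, so $\sigma^\star=\Phi^{\Q^\star}_{R_BB}$ for some channel $\Q^\star$. Plugging $\Q^\star$ into the channel infimum gives $S_{\infty}[A|B]_\U\ge S_{\infty}(R_AA|R_BB)_{\Phi^\U}-\log|A'|$, which closes the argument.

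The main obstacle is Step 2, the reduction of $\rho$ to Bell-diagonal form. It relies on the standard correlation-matrix normal form for two-qubit states with maximally mixed marginals, which is specific to qubits. This is exactly why the hypothesis $|B'|=|B|=2$ is needed: in higher dimensions, doubly-maximally-mixed states need not have a square root with maximally mixed marginal, and the argument breaks down.
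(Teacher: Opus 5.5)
Your proposal is correct and takes essentially the same approach as the paper. The paper also uses the explicit pure-state optimizer $(\tr\sqrt{\Phi^\U_{R_BB}})^{-1}\sqrt{\Phi^\U_{R_BB}}$ from Eq.~\eqref{eq:SminupPure}, and shows in Lemma~\ref{lemma:unitaryroot} that, because $\Phi^\U_{R_BB}$ has maximally mixed marginals and is therefore local-unitarily equivalent to a Bell-diagonal state, this normalized square root is the Choi state of a qubit channel; so the channel and state infima coincide, and combining with the general upper bound of \autoref{thm:min-ent-bounds} gives the equality, exactly as you argue.
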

\begin{proof}
From Equation~\eqref{eq:SminupPure}, we see that the optimal state for the relative min-entropy of the Choi state is given by the state $(\tr\sqrt{\Phi^\U_{R_BB}})^{-1}\sqrt{\Phi^U_{R_BB}}$.
\begin{align}
    S_{\infty}(R_AA|R_BB)_{\phi^\U}&=-\inf_{\sigma}D_{\infty}(\Phi^\U|\mathbbm{1}_{R_AA}\otimes\sigma)\\
    &=-D_{\infty}\big(\Phi^\U|\mathbbm{1}_{R_AA}\otimes(\tr\sqrt{\Phi^\U_{R_BB}})^{-1}\sqrt{\Phi^U_{R_BB}}\big).
\end{align}
Due to \autoref{lemma:unitaryroot} presented in Appendix~\ref{app:c4maps}, we see that this state is also a Choi state of a CPTP map when $|B'|=|B|=2$. Therefore, we have
\begin{align}
    \inf_{\m\in\mathrm{Ch}(B',B)}D_{\infty}(\Phi^\U\|\mathbbm{1}_{R_AA}\otimes\Phi^\m)&=\inf_{\sigma}D_{\infty}(\Phi^\U\|\mathbbm{1}_{R_AA}\otimes\sigma)\\
    &=S_{\infty}(R_AA|R_BB)_{\Phi^\U}.
\end{align} 
Since $    S_{\infty}[A|B]_\n=-\inf_{\m\in\mathrm{Ch}(B',B)}D_{\infty}(\Phi^\U\|\mathbbm{1}_{R_AA}\otimes\Phi^\m)-\log|A'|$, the above identities prove the proposition. 
\end{proof}

\begin{figure}[ht]
    \centering
   \includegraphics[height=0.5\columnwidth]{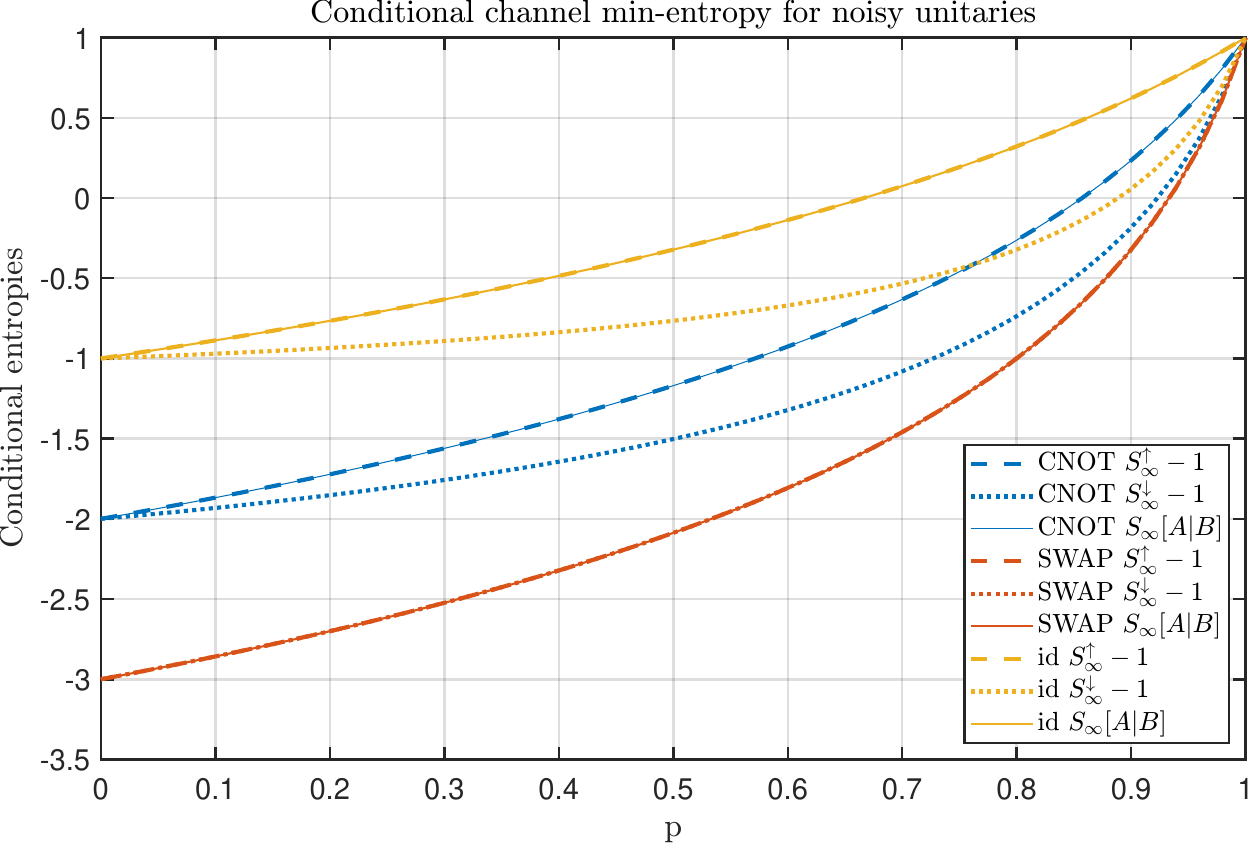}
    \caption{Plot showing the conditional min-entropy $S_{\infty}[A|B]_{\n}$ along with the upper and lower bounds $S_{\infty}(AR_A|BR_B)_{\Phi^\n}-1$ and $S^\downarrow_{\infty}(AR_A|BR_B)_{\Phi^\n}-1$ for two-qubit channels parametrized as $\n_p:=p\mathcal{R}^{\pi}+(1-p)\n$ for $p\in[0,1]$, where $\n$ is chosen to be $\mathrm{CNOT}$, $\mathrm{SWAP}$, $\mathrm{id}$. We see that the conditional channel min-entropy coincides with the upper bound. For the case of a noisy two-qubit swap channel, both the upper and lower bounds coincide for all values of $p$.}
    \label{fig:plots}
\end{figure}

Based on numerical results, we observe that the \autoref{prop:two-qubit-un} will hold for a broad class of channels that can be represented as noisy unitaries on $\mathbbm{C}^2\times\mathbbm{C}^2$: let $\n_p:=p\mathcal{R}^{\pi}+(1-p)\U$ for $p\in[0,1]$, then $S_\infty[A|B]_{\n_p}=S_{\infty} (R_AA|R_BB)_{\Phi^{\n_p}}-\log|A'|$. We plot these instances for some special two-qubit unitaries in \autoref{fig:plots}. We conjecture that \autoref{prop:two-qubit-un} can be extended to all unitaries with added white noise. We leave this as an open problem.

\section{Discussions}\label{section:discussions}
The fundamental information-theoretic notions of conditional entropy and conditional information for quantum states have enabled substantial conceptual and technological advances in quantum information science. These quantities also provide a unifying framework connecting computational, informational, and thermodynamic aspects of quantum systems. A particularly striking feature is that the conditional entropy of quantum states can be negative, which is impossible for classical joint probability distributions~\cite{CA97,HOW05}. This counterintuitive property is now well understood as a signature of entanglement~\cite{CA97}. Another cornerstone result is the strong subadditivity of quantum entropy~\cite{LR73}, which establishes that conditional entropy does not increase under further conditioning.

In this work, we determine the operational meanings of conditional channel entropies in the context of channel transformation under thermodynamic constraints. Too negative conditional channel min-entropy implies that the channel is signaling from non-conditioning input to conditioning output. It also reflects that the channel is able to generate distillable entanglement. We provide framework and tools to study energetics of quantum information processing. More broadly, our results highlight a fundamental connection between thermodynamic resources, causal structure, and the quantum correlations generating abilities in quantum processes.

There are several open questions emerging from our study here. One direction is to consider the conditional athermality and purity distillation and formation under physically motivated free operations that would be subset of CGPSs. It is also of interest to determine the exact optimal rates for other broad classes of bipartite channels and shed light on the asymptotic reversibility (or irreversibility) of the conditional athermality under CGPSs. We leave further study of the properties of conditional channel free energy for future work. The properties of the conditional channel entropies and the strong subadditivity of the channel entropy has potential to provide novel insight and tools for the study of quantum communication and computation, quantum error correction, open quantum systems, and many-body quantum systems, where quantum processes and their causal structure play critical role~\cite{BM01,ADHW09,CDP09,ZCZ+15,DW19a,DBWH21,BGG+25,BGD25b,DYS26,GMZ+25,BBC26}.

\begin{acknowledgments} 
The authors thank Pranab Sen, Uttam Singh, and Mark M Wilde for discussions. H.B. thanks the Ministry of Electronics and Information Technology (MeitY), Government of India for the Visvesvaraya Post-Doctoral Fellowship. S.D. acknowledges support from the Science and Engineering Research Board (now ANRF), Department of Science and Technology, Government of India, under Grant No. SRG/2023/000217 and MeitY, Government of India, under Grant No. 4(3)/2024-ITEA. S.D. also acknowledges partial support from the National Science Centre, Poland, grant Opus 25, 2023/49/B/ST2/02468.
\end{acknowledgments}

\begin{appendix}
\section*{Appendix}
In Appendix~\ref{app:gen-con-ent}, we briefly discuss some properties of the generalized conditional channel entropies. We discuss some results on the channel free energies of single-input, single-output quantum channels in Appendix~\ref{app:properties}. We provide detailed proofs of the results in the main content in Appendix~\ref{app:proofs}: The proof of \autoref{th:dist-for-cost} is in Appendix~\ref{app:proof-dist-for-cost}, the proof of~\autoref{prop:ppt} is in Appendix~\ref{app:proof-ppt}, the proof of~\autoref{thm:conti} is in Appendix~\ref{app:proof-conti}, the proof of~\autoref{th:equipartition} is in Appendix~\ref{app:proof_equipartition}, the proof of~\autoref{thm:asym-ns} is in Appendix~\ref{app:ns-asym}, and the proof of~\autoref{thm:SwapBound} is in Appendix~\ref{app:proof-SwapBound}. In Appendix~\ref{app:c4maps}, we discuss a relevant result on the Choi states of two-qubit unitary channels. In Appendix~\ref{app:freesup}, we discuss the form of conditional Gibbs-preserving superchannels.
 
\section{Generalized conditional channel entropies}\label{app:gen-con-ent}
Let us denote $\mathbb{D}(\cdot\|\cdot)$ denote the generalized divergence for families of Petz-R\'enyi relative entropy $\overline{D}_{\alpha}(\cdot\|\cdot)$ for $\alpha\in[0,2]$ and sandwiched R\'enyi relative entropy $D_{\alpha}(\cdot\|\cdot)$ for $\alpha\in[\frac{1}{2},\infty]$, where $\alpha=1$ means $\lim\alpha\to 1$, i.e., $\overline{D}_1(\cdot\|\cdot)=D_1(\cdot\|\cdot)=D(\cdot\|\cdot)$, and $\alpha=\infty$ means $\lim\alpha\to\infty$. For an arbitrary quantum channel $\n_{A'B'\to AB}$, the conditional channel entropy $\mathbb{S}[A|B]_{\n}$ is~\cite{DGP24} 
\begin{equation}
    \mathbb{S}[A|B]_{\n}=-\inf_{\Q\in\Ch(B',B)}\mathbb{D}[\n_{A'B'\to AB}\|\R_{A'\to A}\otimes\Q_{B'\to B}].
\end{equation}
and the NS conditional channel entropy $\mathbb{S}^{\not\to}[A|B]_{\n}$ is~\cite{DGP24}
\begin{equation}
    \mathbb{S}^{\not\to}[A|B]_{\n}:=\mathbb{D}[\n_{A'B'\to AB}\|\R_{A\to A}\circ\n_{A'B'\to AB}].
\end{equation}

\begin{lemma}\label{lem:gen-con-ent-bounds}
    For a quantum channel $\n_{A'B'\to AB}$,
    \begin{equation}
        \mathbb{S}[A|B]_{\n}\leq \mathbb{S}(R_AA|R_BB)_{\Phi^{\n}}-\log|A'|,
    \end{equation}
    where $\mathbb{S}^{\downarrow}(R_AA|R_BB)_{\Phi^{\n}}:=-\inf_{\sigma\in\St(R_BB)}\mathbb{D}(\Phi^\n_{R_AAR_BB}\|\mathbbm{1}_{R_AA}\otimes\sigma_{R_BB})$. Furthermore, if $\n_{A'B'\to AB}$ is tele-covariant, then 
    \begin{equation}
      \mathbb{S}^{\downarrow}(R_AA|R_BB)_{\Phi^{\n}}-\log|A'|  \leq \mathbb{S}[A|B]_{\n}
    \end{equation}
\end{lemma}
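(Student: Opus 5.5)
The plan follows the two-part structure of \autoref{thm:min-ent-bounds}, now for the generalized divergence $\mathbb{D}$. Two facts about $\mathbb{D}$ are used throughout. First, every member of the Petz and sandwiched R\'enyi families satisfies the scaling rule $\mathbb{D}(\rho\|c\sigma)=\mathbb{D}(\rho\|\sigma)-\log c$ for $c>0$. Second, the channel divergence to a replacer-type map $\R\otimes\Q$ can be evaluated on the Choi state. The second fact is immediate for $D_\infty$, but not for general $\mathbb{D}$, so the two bounds are argued separately.

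\emph{Upper bound.} The channel divergence is a supremum over input states, so we may restrict to the input $\Phi_{R_AA'}\otimes\Phi_{R_BB'}$. This gives
\begin{equation}
\mathbb{D}[\n\|\R\otimes\Q]\geq \mathbb{D}\big(\Phi^\n_{R_AAR_BB}\big\|\pi_{R_A}\otimes\mathbbm{1}_A\otimes\Phi^{\Q}_{R_BB}\big)=\mathbb{D}\big(\Phi^\n\big\|\mathbbm{1}_{R_AA}\otimes\Phi^\Q_{R_BB}\big)+\log|A'|.
\end{equation}
Since $\Phi^\Q_{R_BB}$ is a state, enlarging the infimum from Choi states of channels in $\Ch(B',B)$ to all $\sigma\in\St(R_BB)$ can only lower it. Negating then yields $\mathbb{S}[A|B]_\n\leq \mathbb{S}(R_AA|R_BB)_{\Phi^\n}-\log|A'|$. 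This step is routine.

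\emph{Lower bound for tele-covariant $\n$.} Choose the candidate side channel $\Q=\mathcal{T}^\n_{B'\to B}(\cdot)=\tr_A\circ\n(\pi_{A'}\otimes\cdot)$ from the proof of \autoref{thm:min-ent-bounds}; its Choi state is $\Phi^\n_{R_BB}$. This gives
\begin{equation}
-\mathbb{S}[A|B]_\n\leq \mathbb{D}[\n\|\R\otimes\mathcal{T}^\n].
\end{equation}
The key step is to show that for tele-covariant $\n$ the supremum defining this channel divergence is attained at the maximally entangled input. I would first check that $\R\otimes\mathcal{T}^\n$ is itself bicovariant with the same representations: $\R$ is covariant under any unitaries, and $\mathcal{T}^\n$ inherits covariance from $\n$ after averaging over the one-design on $A'$.

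Once both channels are jointly covariant with input representations forming one-designs, the teleportation-simulation argument of \cite{DBW20,DBWH21} applies (as used in \cite[Theorem 4]{DGP24}). Both channels are simulated by the same LOCC teleportation protocol acting on their respective Choi states. Data processing of $\mathbb{D}$ under that common LOCC map then gives $\mathbb{D}[\n\|\m]\leq\mathbb{D}(\Phi^\n\|\Phi^\m)$, and the reverse inequality holds trivially, so the two are equal.

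Applying this with $\m=\R\otimes\mathcal{T}^\n$, whose Choi operator is $\pi_{R_A}\otimes\mathbbm{1}_A\otimes\Phi^\n_{R_BB}$ (a positive, not normalized, operator, which is why the scaling rule is needed), gives
\begin{equation}
-\mathbb{S}[A|B]_\n\leq \mathbb{D}\big(\Phi^\n\big\|\mathbbm{1}_{R_AA}\otimes\Phi^\n_{R_BB}\big)+\log|A'|=-\mathbb{S}^{\downarrow}(R_AA|R_BB)_{\Phi^\n}+\log|A'|.
\end{equation}

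\emph{Main obstacle.} The hard part is the Choi-state reduction for a divergence between a channel and a completely positive map that need not be trace preserving. There are two points to check. First, teleportation simulation for $\m$ must work when $\m$ is only CP: this holds because $\R=|A|\,\mathcal{R}^\pi$ is a positive multiple of a covariant channel, so I would factor out the $\log|A|$ by the scaling rule and apply the simulation to the genuine channel $\mathcal{R}^\pi\otimes\mathcal{T}^\n$. Second, $\mathcal{T}^\n$ must carry the same output representation $Z^{(g,h)}_B$ as required, which follows from averaging the bicovariance identity of $\n$ over $g$ with $\pi_{A'}$ as input and tracing out $A$.
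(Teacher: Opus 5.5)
Your proposal is correct and follows the paper's proof. The upper bound comes from restricting the channel divergence to the input $\Phi_{R_AA'}\otimes\Phi_{R_BB'}$ and relaxing $\Phi^{\Q}_{R_BB}$ to an arbitrary state; the lower bound comes from choosing the side channel $\mathcal{T}^{\n}(\cdot)=\tr_A\circ\n(\pi_{A'}\otimes\cdot)$ and using that the maximally entangled input is optimal for jointly tele-covariant channels. The only difference is that you justify this last step with an explicit teleportation-simulation and data-processing argument, and you check that $\R\otimes\mathcal{T}^{\n}$ inherits the covariance of $\n$. The paper instead simply invokes the known optimality of maximally entangled inputs for jointly covariant channels from \cite{LKDW18}.
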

\begin{proof}
    The proof argument is similar to the proof of bounds in \autoref{thm:min-ent-bounds}.
    We arrive at the upper bound observing that
    \begin{align}
  - \bS[A|B]_{\mathcal{N}} &= \inf_{\mathcal{Q}\in\mathrm{Ch}(B',B)}\bD[\mathcal{N}_{A'B'\to AB}\Vert\mathcal{R}_{A'\to A}\otimes\mathcal{Q}_{B'\to B}]\nonumber\\
  & \geq \bD(\Phi^\n_{R_AAR_BB}\Vert\mathbbm{1}_{R_AA}\otimes\Phi^\Q_{R_BB})-\log|A'|\nonumber\\
  & \geq \inf_{\sigma\in\St(R_BB)}\bD(\Phi^\n_{R_AAR_BB}\Vert\mathbbm{1}_{R_AA}\otimes\Phi^\Q_{R_BB})-\log|A'|\\
  & = -\bS(R_AA|R_BB)_{\Phi^{\mathcal{N}}}+\log|A'|.
\end{align}

To arrive at the lower bound, we now consider that $\n$ is tele-covariant. Let a reduced channel $\mathcal{T}^{\mathcal{N}}_{B'\to B}$ of $\mathcal{N}_{A'B'\to AB}$ defined as
\begin{equation}
    \mathcal{T}^{\mathcal{N}}_{B'\to B}(\cdot):= \tr_A\circ\mathcal{N}_{A'B'\to AB}(\pi_{A'}\otimes\cdot~).
\end{equation}
Notice that $\Phi^{\mathcal{T}^{\mathcal{N}}}_{R_BB}=\Phi^{\mathcal{N}}_{R_BB}$. Then,
\begin{align}
  - \bS[A|B]_{\mathcal{N}} &= \inf_{\mathcal{Q}\in\mathrm{Ch}(B',B)}\bD[\mathcal{N}_{A'B'\to AB}\Vert\mathcal{R}^{\mathbbm{1}}_{A'\to A}\otimes\mathcal{Q}_{B'\to B}]\nonumber\\
  & \leq \bD[\mathcal{N}_{A'B'\to AB}\Vert\mathcal{R}^{\mathbbm{1}}_{A'\to A}\otimes\mathcal{T}^{\mathcal{N}}_{B'\to B}]\nonumber\\
  & = \bD(\Phi^{\mathcal{N}}_{R_AAR_BB}\Vert\mathbbm{1}_{R_A}\otimes\mathbbm{1}_{A}\otimes\Phi^{\mathcal{N}}_{R_BB})+\log|A'|\nonumber\\
  & = - \bS^{\downarrow}(R_AA|R_BB)_{\Phi^{\mathcal{N}}}+\log|A'|,
\end{align}
where the second equality holds because a maximally entangled state is an optimal input state for the generalized channel divergence between two jointly tele-covariant channels~\cite{LKDW18} (see also~\cite{DW19}).
\end{proof}

\begin{proposition}
    For a quantum channel $\n_{A'B'\to AB}$, the dimensional lower bound is achieved,
    \begin{equation}
        \bS[A|B]_{\n}=-\log\min\{|A'||B'||B|,|A'|^2|A|\},
    \end{equation}
    iff $\n$ is a maximally entangling (swap-like) operation.
\end{proposition}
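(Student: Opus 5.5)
The plan is to sandwich $\bS[A|B]_{\n}$ between the conditional min-entropy and the Choi-state bound of \autoref{lem:gen-con-ent-bounds}, and then to reduce the ``only if'' direction to \autoref{thm:SwapBound}. Throughout, write $d:=\min\{|A'||A|,|B'||B|\}$, so that $-\log d-\log|A'|=-\log\min\{|A'|^2|A|,|A'||B'||B|\}$.

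\emph{Step 1 (lower bound for all channels).} First I show that, for every divergence $\bD$ in the family, $\bD(\rho\|\tau)\le D_{\infty}(\rho\|\tau)$ for $\tau\ge 0$. Applied to the channel divergences, this gives $\bS[A|B]_{\n}\ge S_{\infty}[A|B]_{\n}$.
\begin{itemize}
\item For sandwiched R\'enyi divergences this is the known monotonicity in $\alpha$.
\item For Petz--R\'enyi with $\alpha\in[0,1)$ it follows from $\overline{D}_\alpha\le D\le D_\infty$.
\item For Petz--R\'enyi with $\alpha\in(1,2]$, which I expect to be the only delicate point, I argue directly. Write $\lambda=2^{D_\infty(\rho\|\tau)}$, so that $\rho\le\lambda\tau$. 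Since $t\mapsto t^{1-\alpha}$ is operator anti-monotone for $1-\alpha\in[-1,0)$, on the support of $\rho$ we get $\tr[\rho^\alpha\tau^{1-\alpha}]\le\lambda^{\alpha-1}\tr[\rho^\alpha\rho^{1-\alpha}]$, which yields $\overline{D}_\alpha(\rho\|\tau)\le\log\lambda$. Support issues are handled by restricting to $\supp(\tau)$.
\end{itemize}
Combining with \autoref{prop:entropybound1} gives $\bS[A|B]_{\n}\ge S_\infty[A|B]_{\n}\ge-\log\min\{|A'||B'||B|,|A'|^2|A|\}$.

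\emph{Step 2 (achievability for swap-like operations).} Let $\n\in{\rm SWAP}(A;B)$. By \autoref{lem:gen-con-ent-bounds}, $\bS[A|B]_{\n}\le\bS(R_AA|R_BB)_{\Phi^\n}-\log|A'|$. By definition, $\Phi^\n$ is a pure maximally entangled state of Schmidt rank $d$ across $R_AA:R_BB$.

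For such a state I evaluate $\bS(R_AA|R_BB)=-\inf_\sigma\bD(\Phi\|\mathbbm{1}\otimes\sigma)$. The upper bound $\bS\le-\log d$ follows from the choice $\sigma=\pi_d$ on the Schmidt support, a direct computation giving $\bD=\log d$. Alternatively, since the state value is $\geq$ the min-entropy value $-\log d$ and $\le$ the $\downarrow$ version, which is also $-\log d$ for a maximally entangled state, all versions equal $-\log d$. Hence $\bS[A|B]_{\n}\le-\log d-\log|A'|$, and together with Step 1 we get equality.

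\emph{Step 3 (converse).} Suppose $\bS[A|B]_{\n}$ attains the minimum value. By Step 1, $S_\infty[A|B]_{\n}\le\bS[A|B]_{\n}$, and by \autoref{prop:entropybound1} the min-entropy cannot go lower, so $S_\infty[A|B]_{\n}$ attains the minimum as well. \autoref{thm:SwapBound} then forces $\n\in{\rm SWAP}(A;B)$.

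Thus the whole argument reduces to the comparison $\bD\le D_\infty$ in Step 1, with the Petz range $\alpha\in(1,2]$ as the main obstacle, handled by the operator anti-monotonicity argument above.
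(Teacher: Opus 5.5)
Your architecture matches the paper's proof. You bound $\bS[A|B]_{\n}$ from below by $S_\infty[A|B]_{\n}$, bound it from above by the Choi-state bound of \autoref{lem:gen-con-ent-bounds} evaluated on a maximally entangled Choi state, and get the converse from \autoref{thm:SwapBound}.

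Your Step 1 is more careful than the paper. The paper obtains $\overline{D}_\alpha\le D_\infty$ from the claim that $\overline{D}_\alpha\to D_\infty$ as $\alpha\to\infty$, which is false in general. For pure $\rho$ that limit is $-\log$ of the smallest eigenvalue of $\sigma$ whose eigenvector overlaps $\rho$, and this can strictly exceed $D_{\max}$. So your anti-monotonicity argument is the right one for $\alpha\in(1,2]$. One technical point there: on $\supp(\tau)$ the state $\rho$ need not be invertible, and compressing to $\supp(\rho)$ does not work, because operator Jensen goes the wrong way for $t^{1-\alpha}$. Instead apply anti-monotonicity to $\rho+\epsilon\Pi_\tau\le(\lambda+\epsilon/\mu)\tau$, where $\mu$ is the smallest nonzero eigenvalue of $\tau$. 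Then use that $\rho+\epsilon\Pi_\tau$ commutes with $\rho$, and let $\epsilon\to 0$.

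The genuine gap is in Step 2: both of your justifications of $\bS(R_AA|R_BB)_{\Phi}\le-\log d$ prove the reverse inequality.
\begin{itemize}
\item Inserting the particular choice $\sigma=\pi_d$ into $-\inf_\sigma\bD(\Phi\|\mathbbm{1}\otimes\sigma)$ bounds the infimum from above. That gives $\bS(R_AA|R_BB)_\Phi\ge-\log d$.
\item Since $\Phi_{R_BB}$ is a feasible $\sigma$, the optimized entropy dominates the $\downarrow$ version: $\bS(R_AA|R_BB)_\Phi\ge\bS^{\downarrow}(R_AA|R_BB)_\Phi$, not the other way round.
\end{itemize}
So for swap-like $\n$ you only have lower bounds on $\bS[A|B]_\n$, and the ``if'' direction is unproved. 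The paper simply asserts $\bS(A|B)_\rho=-\log\min\{|A|,|B|\}$ for maximally entangled $\rho$ at this point.

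What you need is $\bD(\Phi\|\mathbbm{1}_{R_AA}\otimes\sigma)\ge\log d$ uniformly over $\sigma\in\St(R_BB)$. This is easy to supply:
\begin{itemize}
\item Every member of the family dominates either $D_{1/2}$ (sandwiched, by monotonicity in $\alpha$) or $\overline{D}_0$ (Petz, by monotonicity in $\alpha$).
\item For the pure state $\Phi$, both equal $-\log\langle\Phi|\mathbbm{1}\otimes\sigma|\Phi\rangle=-\log\tr[\Phi_{R_BB}\sigma]$.
\item Every nonzero eigenvalue of $\Phi_{R_BB}$ equals $1/d$, so $\tr[\Phi_{R_BB}\sigma]\le 1/d$ and the divergence is at least $\log d$.
\end{itemize}
With this inserted, Step 2 gives $\bS[A|B]_\n\le-\log d-\log|A'|$, and the rest of your argument, including the converse in Step 3, goes through.
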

\begin{proof}
    For arbitrary $\rho,\sigma\in\St(A)$, we have $\overline{D}_{\alpha_1}(\rho\|\sigma)\leq \overline{D}_{\alpha_2}(\rho\|\sigma)$ and ${D}_{\alpha_1}(\cdot\|\cdot)\leq D_{\alpha_2}(\rho\|\sigma)$ whenever $\alpha_1\leq \alpha_2$~\cite{T21}. Also, in limit $\alpha\to\infty$, $\overline{D}_{\alpha}(\rho\|\sigma)=D_{\infty}(\rho\|\sigma)$. For an arbitrary channel $\n_{A'B'\to AB}$,
    \begin{equation}
        S_{\infty}[A|B]_{\n}\leq \bS[A|B]_{\n}.
    \end{equation}
    
     Swap is a tele-covariant channel and its Choi state is $\Phi_{R_AB}\otimes\Phi_{R_BA}$. We also have $\bS^{\downarrow}(A|B)_{\rho}=-\log\min\{|A|,|B|\}=\bS(A|B)_{\rho}$ if and only if $\rho_{AB}$ is maximally entangled. From \autoref{thm:SwapBound} and \autoref{lem:gen-con-ent-bounds}, we conclude that $\bS[A|B]_{\n}=-\log\min\{|A'||B'||B|,|A'|^2|A|\}$ iff $\n$ is a swap operation ${\rm SWAP}_{A:B}$.
\end{proof}

We remark that the conditional channel min-entropy can, in general, lie strictly between the bounds in \autoref{thm:min-ent-bounds}; see \autoref{fig:plotRandCh} for an example.

\begin{figure}[h]
    \centering
    \includegraphics[width=0.8\linewidth]{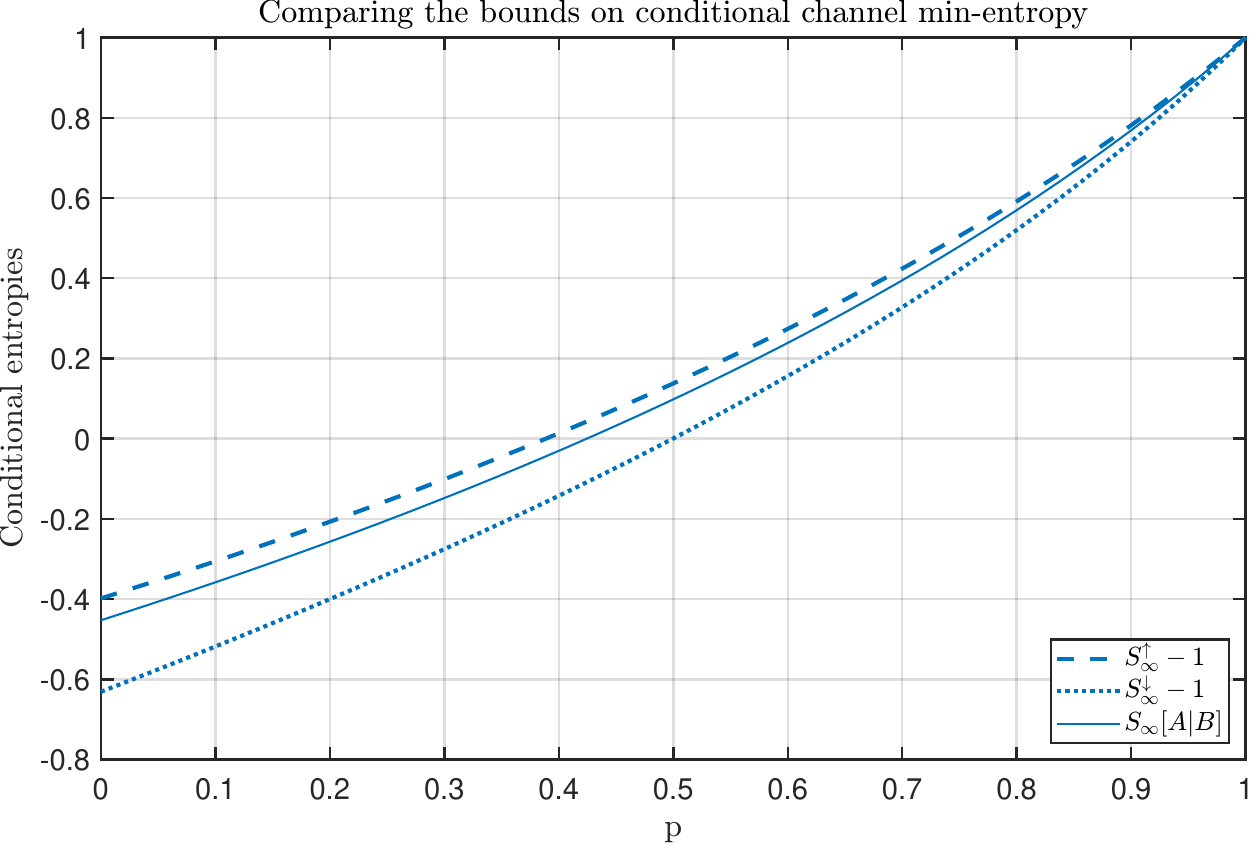}
     \caption{Plot showing the conditional min-entropy $S_{\infty}[A|B]_{\n}$ along with the upper and lower bounds $S{\infty}(AR_A|BR_B)_{\Phi^\n}-1$ and $S_{\infty}(AR_A|BR_B)_{\Phi^\n}-1$ for two-qubit channels parametrized as $\n_p:=p\mathcal{R}^{\pi}+(1-p)\n$ for $p\in[0,1]$, where $\n$ is a numerically generated random channel using Quantinf package~\cite{Cub10}. There is a clear separation between conditional min-entropy and its bounds in \autoref{thm:min-ent-bounds}.}
    \label{fig:plotRandCh}
\end{figure}

\section{Channel free energy}~\label{app:properties}
The channel free energy $F^\beta[{\cal P}]=\beta^{-1}D[\mathcal{P}\|\T^\beta]\ln 2$ of an arbitrary quantum channel $\mathcal{P}_{A'\to A}$ can be expressed as~\cite{BGD25,BGD25b}
 \begin{align}
        \frac{\beta}{\ln2} F^{\beta}[\mathcal{P}]=D[\mathcal{P}\|\mathcal{T}^{\beta}]&=\sup_{\psi\in\St(RA')}D(\mathcal{P}(\psi_{RA'})\|\mathcal{T}^{\beta}(\psi_{RA'}))\label{eq:lem1}\\
        &=\sup_{{\psi}\in\St(RA')}D(\mathcal{P}(\psi_{RA'})\|\psi_R\otimes\gamma^\beta_A)\\
        &=\sup_{\op{\psi}\in\St(RA')}\left[I(R;A)_{\mathcal{P}(\psi)}+D(\mathcal{P}(\psi_{A'})\|\gamma^\beta_A)\right],
    \end{align}
    where it suffices to optimize over pure states $\psi\in\St(RA')$ with $R\simeq A'$ in Eq.~\eqref{eq:lem1}.
    
\begin{lemma}[\cite{BGD25}]\label{lem:aap-free-opt}
The resource-theoretic thermal channel free energy $F^{\beta}[{\cal P}]:=\beta^{-1}D[\mathcal{P}\|\mathcal{T}^\beta]\ln 2$ of an arbitrary quantum channel ${\cal P}_{A'\to A}$ is maximum if it is an isometry channel. 
\end{lemma}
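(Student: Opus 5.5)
We have to prove that $F^\beta[\mathcal{P}]=\beta^{-1}\ln2\, D[\mathcal{P}\|\T^\beta]$ is maximized over $\Ch(A',A)$ when $\mathcal{P}$ is an isometry channel. The strategy is to bound $D[\mathcal{P}\|\T^\beta]$ above by a quantity depending only on the reference system and $\gamma^\beta_A$, and then show that isometry channels attain this bound.

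We start from the pure-state decomposition displayed just above the lemma:
\[
D[\mathcal{P}\|\T^\beta]=\sup_{\psi}\Big[I(R;A)_{\mathcal{P}(\psi)}+D(\mathcal{P}(\psi_{A'})\|\gamma^\beta_A)\Big].
\]
We rewrite the relative entropy term as
\[
D(\mathcal{P}(\psi_{RA'})\|\psi_R\otimes\gamma^\beta_A)=-S(RA)_{\mathcal{P}(\psi)}+S(R)_\psi-\tr[\mathcal{P}(\psi_{A'})\log\gamma^\beta_A].
\]

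For a fixed input $\psi_{RA'}$, the data-processing-type bound we use is that $S(RA)\ge$ something. It is cleaner to use the Stinespring dilation $\mathcal{P}(\cdot)=\tr_E V(\cdot)V^\dagger$. The global state on $RAE$ is pure, so $S(RA)=S(E)\ge 0$. This gives
\[
D\le S(R)_\psi+\tr[\mathcal{P}(\psi_{A'})(-\log\gamma^\beta_A)]\le \log|A'|+\log Z^\beta_A+\beta\,\|\widehat H_A\|_{\infty}/\ln 2 .
\]

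This crude bound is not tight in general, because the energy term depends on where the output sits. The sharper route is therefore to fix the output marginal and optimize. For given $\psi$, equality $S(E)=0$ holds iff the environment is decoupled from $RA$, i.e. iff $\mathcal{P}$ acts isometrically on the support of $\psi_{A'}$. So for any channel $\mathcal{P}$ and input $\psi$ we aim to construct an isometry $\V$ and input $\psi'$ achieving
\[
D(\V(\psi')\|\psi'_R\otimes\gamma^\beta)\ge D(\mathcal{P}(\psi)\|\psi_R\otimes\gamma^\beta).
\]

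The candidate isometry maps $A'$ into the lowest-energy $|A'|$-dimensional subspace of $A$, or in general chooses the eigenbasis that maximizes $-\tr[\rho\log\gamma^\beta]$. Two facts are then needed. First, $S(R)$ is at most $\log|A'|$ and is achieved by the maximally entangled input for any isometry. Second, the energy term $-\tr[\V(\psi_{A'})\log\gamma^\beta]$ combined with $S(R)-S(RA)$ is maximized jointly.

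The main obstacle is the trade-off between the mutual information and the energetic term. The pure-state maximum of $S(R)-S(E)+\langle -\log\gamma\rangle$ is not obviously attained at a single isometry for nontrivial $\widehat H_A$. I would handle it by reducing to a concave optimization over $\rho_{A'}=\psi_{A'}$. For isometries,
\[
D=S(\rho)+\tr[V\rho V^\dagger(-\log\gamma^\beta)],
\]
and one maximizes this over $\rho$ and $V$. For a general $\mathcal{P}$, the same expression minus $S(E)\ge 0$ appears, but with $\mathcal{P}(\rho)$ in place of $V\rho V^\dagger$ in the energy term. The energy term is linear in the output, and the output $\mathcal{P}(\rho)$ is an average of the $V$-images weighted by Kraus structure, i.e. $\mathcal{P}(\rho)=\sum_k K_k\rho K_k^\dagger$.

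I would bound the energy term by its value at the best isometry via a majorization/Ky Fan argument: $\tr[X(-\log\gamma)]$ over $0\le X$ with spectrum dominated by that of $\rho$ in the appropriate sense. Then I would check that the entropy loss $S(E)$ compensates for any spectral spreading of $\mathcal{P}(\rho)$ compared with $\rho$. If this step resists, I would fall back on citing \cite{BGD25} for the precise equality case. The upper bound by the isometry value then follows, and the maximum is attained by an isometry, as claimed.
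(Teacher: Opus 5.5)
Your proposal has a genuine gap, and it sits exactly where the lemma has content. When $\widehat{H}_A\propto\mathbbm{1}_A$, your Stinespring bound already finishes the proof. In that case $-\log\gamma^\beta_A=\log|A|\,\mathbbm{1}_A$, so $D(\mathcal{P}(\psi_{RA'})\|\psi_R\otimes\pi_A)=S(R)-S(E)+\log|A|\le\log|A'||A|$, with equality for every isometry fed a maximally entangled input.

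For a nontrivial Hamiltonian, however, the decisive inequality
\[
S(\rho)-S(E)+\tr\big[\mathcal{P}(\rho)(-\log\gamma^\beta_A)\big]\le\max_{V,\rho'}\Big\{S(\rho')+\tr\big[V\rho'V^\dagger(-\log\gamma^\beta_A)\big]\Big\}
\]
is never established. You announce a majorization/Ky Fan step and otherwise fall back on a citation.

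The two-stage plan (bound the energy term by an isometry value, then let $S(E)$ ``compensate'') also cannot work as stated, because $\mathcal{P}(\rho)$ is not spectrally controlled by $\rho$. A replacer onto the top eigenvector of $\widehat{H}_A$ has energy term $\lambda_{\max}(-\log\gamma^\beta_A)$ for every $\rho$, while $S(E)=S(\rho)$. Bounding the two terms separately only reproduces $\log|A'|+\lambda_{\max}(-\log\gamma^\beta_A)$, which is not attained unless the top $|A'|$ levels are degenerate.

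Your candidate isometry is also the wrong one. The good isometry maps onto the \emph{highest}-energy $|A'|$-dimensional eigenspace (largest eigenvalues of $(\gamma^\beta_A)^{-1}$), not the lowest. Moreover, for $|A'|<|A|$ and non-degenerate $\widehat{H}_A$, different isometries give different values: for $|A'|=1$, preparing the eigenvector with Gibbs weight $g_i$ gives $-\log g_i$. So what can be proved is that the maximum is attained by a suitable isometry and that maximizers are isometric; only when $|A'|=|A|$ is every unitary optimal.

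The missing idea is to fuse entropy and energy into a single relative entropy. Let $\omega_{RAE}$ be the Stinespring purification of $\mathcal{P}(\psi_{RA'})$. Then
$D(\omega_{RA}\|\psi_R\otimes\gamma^\beta_A)=S(AE)-S(E)+\tr[\omega_A(-\log\gamma^\beta_A)]=-D(\omega_{AE}\|(\gamma^\beta_A)^{-1}\otimes\omega_E)$,
and $\mathrm{rank}(\omega_{AE})=\mathrm{rank}(\psi_R)\le|A'|$.

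Using $D\ge\overline{D}_0$ and Ky Fan, this is at most the logarithm of the sum of the $|A'|$ largest eigenvalues $e_j/g_i$ of $(\gamma^\beta_A)^{-1}\otimes\omega_E$, where $g_i$ and $e_j$ are the eigenvalues of $\gamma^\beta_A$ and $\omega_E$. Any $|A'|$ such pairs involve at most $|A'|$ indices $i$, each carrying total weight $\sum_j e_j\le 1$. Hence the bound is at most $\log\sum_{k\le|A'|}\lambda^\downarrow_k((\gamma^\beta_A)^{-1})$.

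This value is exactly $D[\V\|\T^\beta]$ for $V$ onto the top-energy eigenspace with input $\rho\propto V^\dagger(\gamma^\beta_A)^{-1}V$ (cf.\ \autoref{lem:iso-genH}). Tracking the equality cases forces $\omega_E$ to be pure, i.e.\ $\mathcal{P}$ isometric.

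The paper argues differently and much more briefly. After restricting to pure inputs, it invokes convexity of $D(\cdot\|\sigma)$ to conclude that an optimal output is pure, hence $\mathcal{P}$ is an isometry. That is a necessity-type argument: it never compares different isometries, and it glosses over the fact that the output's $R$-marginal is pinned to $\psi_R$, which also enters $\sigma$. Your Stinespring bookkeeping, completed as above, would give a quantitative and rigorous version.
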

\begin{proof}
 Notice that it suffices to optimize over pure states $\psi_{RA'}=\op{\psi}_{RA'}$ with $R\simeq A'$ in the RHS of Eq.~\eqref{eq:lem1}. Due to joint-convexity (quasi-convexity suffices) of relative entropy between states, maximum value of the state relative entropy $D(\rho\Vert\sigma)$ for a fixed state $\sigma$ is achieved only if $\rho$ is a pure state~\cite{Rus02}. This implies that the maximum value for $D[\mathcal{P}\|\mathcal{T}^{\beta}]$ can be achieved only if the channel $\mathcal{P}$ is an isometry channel. 
\end{proof}

\begin{lemma}\label{lem:iso-genH}
  Consider an isometry channel $\V_{A'\to A}$ with $\V(\cdot):=V(\cdot)V^\dag$ and let $\beta$ be finite. The channel free energy of $\V_{A'\to A}$ is $F_{\infty}^\beta[\V]=\beta^{-1}\ln\tr[\Pi^{\mathcal{V}}_A(\gamma^{\beta}_A)^{-1}]$, where $\Pi^\V_A:=VV^\dag$ is a projector on $A$ and ${\rm rank}(\Pi^\V_{A})=|A'|$.
\end{lemma}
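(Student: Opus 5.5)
The plan is to reduce the channel max-relative entropy to a max-relative entropy between Choi operators, and then evaluate it in closed form using the fact that the isometry's Choi operator has rank one. I read $F^\beta_\infty[\V]$ as the max-relative-entropy version of the resource-theoretic free energy, $F^\beta_\infty[\V]=\beta^{-1}\ln 2\, D_\infty[\V\|\T^\beta]$. Since $\ln 2\cdot\log_2 x=\ln x$, it suffices to show
\[
D_\infty[\V\|\T^\beta]=\log\tr[\Pi^\V_A(\gamma^\beta_A)^{-1}].
\]

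\textbf{Step 1 (reduction to Choi operators).} As recalled in the preliminaries, $D_\infty[\n\|\m]=D_\infty(\Gamma^\n_{RA}\|\Gamma^\m_{RA})$ with $R\simeq A'$. Using the Choi operators rather than the Choi states is harmless, because the common normalization $|A'|$ cancels inside $D_\infty$. We have
\[
\Gamma^{\V}_{RA}=\dket{V}\dbra{V},\qquad \dket{V}:=(\mathbbm{1}_R\otimes V)\sum_i\ket{ii}_{RA'},
\]
which is a rank-one positive operator. For the thermal channel, $\Gamma^{\T^\beta}_{RA}=\mathbbm{1}_R\otimes\gamma^\beta_A$. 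Since $\beta$ is finite, $\gamma^\beta_A=\exp(-\beta\widehat H_A)/Z^\beta_A$ is full rank. Hence the support condition holds automatically and $\mathbbm{1}_R\otimes\gamma^\beta_A$ is invertible.

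\textbf{Step 2 (max-relative entropy of a rank-one operator).} I would use the elementary fact that for a vector $\ket{v}$ and an invertible $\sigma>0$,
\[
\inf\{\lambda:\op{v}\le\lambda\sigma\}=\bra{v}\sigma^{-1}\ket{v}.
\]
This follows by conjugating with $\sigma^{-1/2}$: the inequality $\op{v}\le\lambda\sigma$ becomes $\sigma^{-1/2}\op{v}\sigma^{-1/2}\le\lambda\mathbbm{1}$. The left-hand side is rank one with largest eigenvalue $\norm{\sigma^{-1/2}v}^2=\bra{v}\sigma^{-1}\ket{v}$.

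\textbf{Step 3 (compute the overlap).} Apply Step 2 with $\ket{v}=\dket{V}$ and $\sigma=\mathbbm{1}_R\otimes\gamma^\beta_A$:
\[
\dbra{V}(\mathbbm{1}_R\otimes(\gamma^\beta_A)^{-1})\dket{V}
=\sum_{i,j}\bra{i}V^\dag(\gamma^\beta)^{-1}V\ket{j}\braket{i}{j}
=\tr[V^\dag(\gamma^\beta)^{-1}V].
\]
By cyclicity this equals $\tr[VV^\dag(\gamma^\beta)^{-1}]=\tr[\Pi^\V_A(\gamma^\beta_A)^{-1}]$. Taking $\log$ and multiplying by $\beta^{-1}\ln2$ gives the claim. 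The statement $\mathrm{rank}(\Pi^\V_A)=|A'|$ follows from $V^\dag V=\mathbbm{1}_{A'}$.

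The only step that needs care is Step 1: one must make sure that the channel max-divergence is attained at the maximally entangled input, i.e.\ equals the Choi-operator expression. This was already recalled in the preliminaries from \cite{WBHK20}, so it should be no real obstacle. The remaining steps are routine linear algebra.
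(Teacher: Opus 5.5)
Your proposal is correct and follows essentially the same route as the paper: reduce $D_\infty[\V\|\T^\beta]$ to a max-relative entropy between Choi operators, apply the rank-one formula $D_\infty(\psi\|\sigma)=\log\langle\psi|\sigma^{-1}|\psi\rangle$ with $\sigma\propto\mathbbm{1}_R\otimes\gamma^\beta_A$, and evaluate the overlap as $\tr[VV^\dagger(\gamma^\beta_A)^{-1}]$. The differences are cosmetic. You use unnormalized Choi operators, so the factors of $|A'|$ never appear, and you explicitly justify the rank-one formula and the invertibility of $\gamma^\beta_A$ for finite $\beta$, which the paper takes for granted.
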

\begin{proof}
Given a pure state $\psi$, we have $D_\infty(\psi\Vert\sigma)=\log\bra{\psi}\sigma^{-1}\ket{\psi}$ for a positive operator $\sigma$. Using this, we have
\begin{align}
{\beta} F_{\infty}^\beta [\mathcal{V}]  &= D_{\infty}(\Phi^{\mathcal{V}} \|\Phi^\mathcal{T})~{\ln2} \\
 &= \ln \left(\frac{1}{|A'|} \sum_{j} 
(\bra{j} \otimes \bra{j} V^\dagger) 
\left( \frac{1}{|A'|} \mathbbm{1}_R\otimes \gamma^\beta_A \right)^{-1} \sum_{i} (\ket{i} \otimes V \ket{i}) \right) \\
&= \ln \left( \sum_{i,j} 
\delta_{ji}  \bra{j} V^\dagger (\gamma^\beta_A)^{-1} V \ket{i} \right)\\
&= \ln \left( \tr \left[ \gamma_\beta^{-1} VV^\dagger \right] \right).
\end{align}
\end{proof}
A direct consequence of the above lemma is the following corollary.
\begin{corollary}
    For an isometry channel $\V_{A'\to A}$ with $\widehat{H}_A\propto \mathbbm{1}_A$, we have $F^\beta[\V]=F^{\beta}_{\infty}[\mathcal{V}]=\beta^{-1}\ln(|A'||A|)$.
\end{corollary}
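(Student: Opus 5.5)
We compute the max channel divergence directly via Choi operators. For the isometry channel $\V$ with trivial Hamiltonian we have $\gamma^\beta_A=\pi_A$, so $\T^\beta=\rp_{A'\to A}$. The task is therefore to evaluate $D[\V\|\rp]$ and $D_\infty[\V\|\rp]$ and show that both equal $\log(|A'||A|)$; multiplying by $\beta^{-1}\ln 2$ then gives the claim.

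For the max-divergence we apply \autoref{lem:iso-genH} directly with $\gamma^\beta_A=\pi_A=\mathbbm{1}_A/|A|$. This gives $(\gamma^\beta_A)^{-1}=|A|\mathbbm{1}_A$. Since ${\rm rank}(\Pi^\V_A)=|A'|$, we get $\tr[\Pi^\V_A(\gamma^\beta_A)^{-1}]=|A|\tr\Pi^\V_A=|A||A'|$, and hence $F^\beta_\infty[\V]=\beta^{-1}\ln(|A'||A|)$.

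For the relative-entropy free energy we use the expression from Appendix~\ref{app:properties}:
\[
D[\V\|\rp]=\sup_\psi\left[I(R;A)_{\V(\psi)}+D(\V(\psi_{A'})\|\pi_A)\right].
\]
\emph{Upper bound.} The first term is at most $2\log|A'|$, since $R\simeq A'$ and the support of $A$ lies in an $|A'|$-dimensional subspace, so the output state on $A$ has rank at most $|A'|$. The second term equals $\log|A|-S(\V(\psi_{A'}))$. Since $\V$ is an isometry, the output $RA$ is pure, so $S(R)=S(A)$ and $I(R;A)=2S(A)$. The total is therefore $\log|A|+S(A)\le \log|A|+\log|A'|$.
\emph{Achievability.} Choosing $\psi=\Phi_{RA'}$ gives $S(A)=\log|A'|$, so the bound is attained.

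Combining the two computations gives $D[\V\|\rp]=D_\infty[\V\|\rp]=\log(|A'||A|)$, and therefore $F^\beta[\V]=F^\beta_\infty[\V]=\beta^{-1}\ln(|A'||A|)$. The only delicate step is the identity $I(R;A)=2S(A)$ for pure outputs, which makes the optimisation collapse to maximizing $S(A)$. Alternatively, one can bound $D\le D_\infty$ and observe that the value $\log(|A'||A|)$ is attained by $\Phi$.
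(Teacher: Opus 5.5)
Your proof is correct. The $D_\infty$ part coincides with the paper's: substitute $\gamma^\beta_A=\pi_A$ into \autoref{lem:iso-genH}. Your main argument for the relative-entropy part, however, differs from the paper's.

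The paper never optimizes over inputs. It sandwiches
$F^\beta_\infty[\V]\ge F^\beta[\V]\ge\beta^{-1}\ln 2\,D(\Phi^{\V}_{RA}\|\pi_R\otimes\pi_A)=\beta^{-1}\ln(|A'||A|)$,
i.e.\ it uses $D\le D_\infty$ plus a single evaluation at the maximally entangled input. This is precisely the ``alternatively'' remark at the end of your proposal. (The paper's display writes $\mathbbm{1}_R\otimes\pi_A$ as the reference, which would give only $\log|A|$. The correct reference is $\Phi^{\rp}_{RA}=\pi_R\otimes\pi_A$, which is what your $\psi_R\otimes\pi_A$ becomes at $\psi=\Phi_{RA'}$.)

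You instead compute the objective exactly for every pure input, $D(\V(\psi_{RA'})\|\psi_R\otimes\pi_A)=\log|A|+S(A)_{\V(\psi)}$. This step relies on the pure-input reduction quoted in Appendix~\ref{app:properties} to justify $I(R;A)=2S(A)$; you should cite it explicitly. Once purity is used, your preliminary bound $I(R;A)\le 2\log|A'|$ is superfluous.

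The two routes buy different things. Yours is self-contained at $\alpha=1$, does not need the $D_\infty$ value, and identifies all optimal inputs: any pure $\psi_{RA'}$ with $\psi_{A'}=\pi_{A'}$. The paper's sandwich is shorter and generalizes for free. Every sandwiched R\'enyi divergence of the pure state $\Phi^{\V}_{RA}$ from the maximally mixed state $\pi_R\otimes\pi_A$ equals $\log(|A'||A|)$, so monotonicity in $\alpha$ gives $D_\alpha[\V\|\rp]=\log(|A'||A|)$ for all $\alpha\in[\frac12,\infty]$. Your entropic computation is tailored to the von Neumann case.
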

\begin{proof}
    For the trivial Hamiltonian, we have $(\gamma^\beta_A)^{-1}=\mathbbm{1}_A|A|$ and $ \tr \left[ (\gamma^\beta_A)^{-1}\Pi_A^\mathcal{V} \right] = \tr (|A| \Pi_A^\mathcal{V} ) = |A'||A|$ which proves $F^{\beta}_{\infty}[\mathcal{V}]=\beta^{-1}\ln(|A'||A|)$. We also know,
    \begin{align}
        F^\beta_{\infty}[\V]\geq F^\beta[\V]&\geq \beta^{-1}D[\Phi^\V\|\mathbbm{1}_R\otimes\pi_A]\ln 2\\ 
        &=\beta^{-1}\ln|A'||A|,
    \end{align}
    which concludes the proof.
\end{proof}
Our \autoref{lem:iso-genH} also provides an upper bound on the free energy of a quantum channel $\mathcal{P}_{A'\to A}$,
\begin{align}
    F^\beta[\mathcal{P}]\leq F^{\beta}_{\infty}[\mathcal{P}]\leq F^\beta_{\infty}[\mathcal{V}]\leq \beta^{-1}\ln\tr[(\gamma^\beta_A)^{-1}].
\end{align}
We note that the above bound can also be derived from \cite[Theorem A5]{LMB25} due to the relation between the the robustness of athermality and channel max-free energy~\cite{BGD25}.

\section{Proofs}\label{app:proofs}
\subsection{Proof of \autoref{th:dist-for-cost}}\label{app:proof-dist-for-cost}
\begin{theorem*}
    The one-shot conditional athermality distillation yield and formation cost of a quantum channel $\mathcal{N}_{A'B'\to AB}$, up to an error $\varepsilon\in[0,1]$, is given by
\begin{align}
\mathrm{Dist}^{(1,\varepsilon)}(\mathcal{N}_{A'B'\to AB},\T^\beta_{A'\to A})
&=\frac{1}{2}\inf_{\Q\in\Ch({B',B})}D_{\mathrm{H}}^{\varepsilon^2}[\n_{A'B'\to AB}\Vert\T^\beta_{A'\to A}\otimes \Q_{B'\to B}],\\
\mathrm{Cost}^{(1,\varepsilon)}(\mathcal{N}_{A'B'\to AB},\T^\beta_{A'\to A})
&=\frac{1}{2}\inf_{\mathcal{Q}\in\Ch({B',B})}D_{\infty}^{\varepsilon}[\n_{A'B'\to AB}\Vert\T^\beta_{A'\to A}\otimes\mathcal{\mathcal{Q}}_{B'\to B}].\label{eq:athermality-cost-rate}
\end{align}
\end{theorem*}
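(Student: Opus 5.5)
The plan is to adapt the one-shot distillation and dilution arguments for athermality of single channels under Gibbs-preserving superchannels~\cite{BGD25,BGD25b} to the conditional setting. For each direction I will prove a converse from monotonicity of channel divergences under CGPSs, and then build an explicit CGPS that achieves it. Three facts about the resource unit $\id_m\otimes\Q'$ are needed.

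\emph{First}, for any $\Q'$ its Choi operator satisfies $\Gamma^{\id_m}\otimes\Gamma^{\Q'}\leq m^2\,\Gamma^{\rp_m}\otimes\Gamma^{\Q'}$. This gives $D_\infty[\id_m\otimes\Q'\|\rp\otimes\Q']=2\log m$.

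\emph{Second}, $D_H^{\varepsilon^2}[\id_m\otimes\Q'\|\rp\otimes\Q']$ evaluated at purified distance $\varepsilon$ equals $2\log m$ up to the smoothing convention. The optimal test is the projector onto $\Phi_m$, whose overlap with $\pi\otimes\pi$ is $1/m^2$.

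\emph{Third}, minimising over the side channel can only lower these values. Taking $\Q=\Q'$ shows the minimum is attained there.

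\textbf{Converse bounds.} For the cost, suppose a CGPS $\Theta$ maps $(\id_m\otimes\Q'',\rp\otimes\Q'')$ into the $\varepsilon$-ball of $\n$ and sends $\rp\otimes\Q''$ to $\T^\beta\otimes\Q$. By monotonicity of $D_\infty$ under superchannels~\cite[Theorem 1]{DGP24},
\[
D^\varepsilon_\infty[\n\|\T^\beta\otimes\Q]\leq D_\infty[\Theta(\id_m\otimes\Q'')\|\T^\beta\otimes\Q]\leq 2\log m .
\]
The infimum over $\Q$ then gives $\log m\geq\frac12\inf_\Q D^\varepsilon_\infty$.

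For distillation, fix the worst-case $\Q$ and suppose $\Theta(\n)$ is $\varepsilon$-close to $\id_m\otimes\Q'$ with $\Theta(\T^\beta\otimes\Q)=\rp\otimes\Q'$. Monotonicity of $D_H^{\varepsilon^2}$ gives $D_H^{\varepsilon^2}[\n\|\T^\beta\otimes\Q]\geq D_H^{\varepsilon^2}[\Theta(\n)\|\rp\otimes\Q']$. I then lower-bound the right side by testing with $\Phi_m\otimes\Gamma^{\Q'}$-type projectors applied to a maximally entangled input. The purified-distance closeness (fidelity at least $1-\varepsilon^2$) is what forces the type-I error to be at most $\varepsilon^2$, so the bound is $2\log m$.

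\textbf{Achievability.} For distillation, take $\Q^\star$ and an optimal pair $(\psi_{RA'B'},\Lambda)$ attaining $D_H^{\varepsilon^2}[\n\|\T^\beta\otimes\Q^\star]$. I will construct a measure-and-prepare superchannel:
\begin{itemize}
\item the pre-processing feeds $\psi$ into the channel and passes the new conditioning input through a side register;
\item the post-processing measures $\{\Lambda,\mathbbm 1-\Lambda\}$;
\item on outcome $\Lambda$ it routes the fresh $m$-dimensional input through the identity, and otherwise it outputs $\pi$. This mirrors the state-case construction $\Lambda\to\Phi_m$, $\mathbbm 1-\Lambda\to(\mathbbm 1-\Phi_m)/(m^2-1)$, implemented at the level of channels via the teleportation-like simulation of $\id_m$ from $\Phi_m$.
\end{itemize}
The side channel $\Q'$ is produced by $\Q^\star$ acting on the conditioning part. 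Choosing $m^2=2^{D_H}$ makes the CGPS condition hold, because on $\T^\beta\otimes\Q$ the accept probability is at most $1/m^2$ and the mixture can be padded to exactly $\rp$. For the cost, take $\n'$ in the $\varepsilon$-ball with $\Gamma^{\n'}\leq 2^{\lambda}\Gamma^{\T^\beta\otimes\Q}$, $\lambda=D^\varepsilon_\infty$. The superchannel consumes $\id_m$ to teleport/simulate $\n'$ with probability weight $1/m^2$ and outputs the complement $(\,2^\lambda(\T^\beta\otimes\Q)-\n')/(2^\lambda-1)$ otherwise, which is a valid CP map by the operator inequality. It is arranged so that $\rp\otimes\Q''$ goes to $\T^\beta\otimes\Q$.

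\textbf{Main obstacle.} The hard part will be verifying that these constructions are genuine superchannels that are conditional Gibbs-preserving for \emph{every} input side channel, as the definition of a CGPS requires, and not just for the optimal $\Q$. This matters because the side register must be handled uniformly. I would first try to show that the decomposition of CGPSs as post-processing with conditional Gibbs-covariant channels (Appendix~\ref{app:freesup}) reduces the check to the fixed pair $(\rp\otimes\Q'',\T^\beta\otimes\Q)$. Failing that, I would build the side path as an identity wire passing $B'\to B$ untouched and absorb all $\Q$-dependence into that wire.
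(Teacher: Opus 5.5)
Your proposal has the same architecture as the paper's proof. For distillation it uses a measure-and-prepare superchannel whose acceptance probability on $\T^\beta\otimes\Q$ is pinned to $1/m^2$. For the cost it uses the order relation $\Theta(\id_m\otimes\Q'')\le m^2\,\T^\beta\otimes\Q$, which the paper derives from the Weyl decomposition $\rp_m=\frac{1}{m^2}(\id_m+\id_m^{\perp})$ and you derive from monotonicity of $D_\infty$; these are the same fact.

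Where you differ from the paper, your argument is more complete.
\begin{itemize}
\item \emph{Distillation converse.} The paper argues it only within the class of measure-and-prepare superchannels $\Theta^\Lambda_\psi$, which it asserts is without loss of generality. Your argument covers every CGPS: monotonicity of $D_H^{\varepsilon^2}$, then a test of $\Theta(\n)$ on input $\Phi_m\otimes\omega$ with $\Phi_m\otimes\mathbbm{1}$. Its type-I error is at most $\varepsilon^2$ because the fidelity is at least $1-\varepsilon^2$, and its type-II error is exactly $1/m^2$.
\item \emph{Cost achievability.} The paper only asserts the identity $d_{\rm CGPS}=\min_{\mathcal{E}}\{P[\n,\mathcal{E}]:\mathcal{E}\le m^2\,\T^\beta\otimes\Q\}$; your superchannel supplies the missing realization. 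No teleportation is needed. Probe $\id_m$ with half of $\Phi_m$, measure $\{\Phi_m,\mathbbm{1}-\Phi_m\}$, and apply $\n'$ or $(m^2\,\T^\beta\otimes\Q-\n')/(m^2-1)$ to the stored input, with integer $m$ satisfying $m^2\ge 2^{\lambda}$.
\end{itemize}

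Two small corrections are needed.
\begin{itemize}
\item The test should be $\Phi_m\otimes\mathbbm{1}$, not something built from $\Gamma^{\Q'}$, which is not a projector.
\item In the distillation superchannel the reject branch must be $\id_m^{\perp}/(m^2-1)$, the channel with Choi state $(\mathbbm{1}-\Phi_m)/(m^2-1)$, as in your own state analogy. Outputting $\pi$ instead gives $p\,\id_m+(1-p)\rp\neq\rp$ for $p>0$.
\end{itemize}

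Your \emph{main obstacle} is one the paper never confronts, and you do not need to solve it. The conversion distance \eqref{eq:conversion-dist} imposes $\Theta(\T^\beta\otimes\Q)=\T^\beta\otimes\Q'$ only for the given pair, and that is all the paper's proof uses. Under that reading your constructions already complete the proof.

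Under the strict all-$\Q$ reading, the two constructions behave differently.
\begin{itemize}
\item Your cost superchannel still qualifies, because it discards the side input and so sends every $\rp\otimes\tilde\Q''$ to $\T^\beta\otimes\Q$.
\item Your distillation superchannel does not, because $\tr[\Lambda(\gamma^\beta_A\otimes\tilde\Q(\psi_{RB'}))]$ generally depends on $\tilde\Q$.
\end{itemize}
Neither of your remedies repairs the distillation case. Appendix~\ref{app:freesup} characterizes CGPSs through their post-processing, but it does not collapse the family of constraints to a single pair. An identity wire on $B$ stops $\Lambda$ from acting on $B$, which caps $2\log m$ at the unconditional $D_H^{\varepsilon^2}[\tr_B\circ\n\|\T^\beta\otimes\tr_{B'}]$. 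That cap is strictly smaller in general. For $\n$ the two-qudit swap with trivial Hamiltonian and $\varepsilon=0$ it equals $2\log d$, whereas $\inf_{\Q}D_H^{0}[\n\|\rp\otimes\Q]=4\log d$. So do not pursue that fallback.
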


\begin{proof}
Before we begin the proof, let us denote the uniformly mixing channel $\rp_{C'\to C}$ as $\mathcal{R}^{\pi}_m$ whenever $|C'|=|C|=m$.  We can decompose a uniformly mixing channel $\mathcal{R}^\pi_m$ by taking a uniform mixture of Weyl unitary channels $\mathcal{W}^i(\cdot)=W^i(\cdot){W^i}^\dagger$,
\begin{align}
    \mathcal{R}^\pi_m(\cdot) &=\frac{1}{m^2}\sum_{i=0}^{m^2-1} {W}^i(\cdot){W^i}^\dag,
\end{align}
where $\{W^i\}_{i=0}^{m^2-1}$ is the set (group) of Weyl unitaries that form the complete orthonormal basis for the space of all linear operators acting on $m$-dimensional Hilbert space. Let $\mathcal{W}^0=\mathrm{\id}_m$ be the identity operation and $\id_m^{\perp}:=\sum_{i=1}^{m^2-1}\mathcal{W}^i$, then
\begin{align}\label{eq:R-decom}
 \mathcal{R}^\pi_m  & =\frac{1}{m^2}\id_m+\frac{1}{m^2}\sum_{i=1}^{m^2-1}\mathcal{W}^i=\frac{1}{m^2}(\id_m+\id_m^{\perp}).
\end{align}

The single-shot distillation cost of the resource $(\mathcal{N}_{A'B'\to AB},
\Gamma^{\beta}_{A'\to A}\otimes \mathcal{Q}_{B'\to B})$  is given by
\begin{align}
\mathrm{Dist}^{(1,\varepsilon)}
(\mathcal{N}_{A'B'\to AB},&\Gamma^{\beta}_{A'\to A}\otimes \mathcal{Q}_{B'\to B})\nonumber\\[1.5em]
&:=
\sup
\left\{
\log m :
d_{\mathrm{CGPS}}\!\left((
\mathcal{N},
\Gamma^{\beta}\!\otimes\!\mathcal{Q})
\;\to\;
(\mathrm{id}_m\otimes\mathcal{Q}',
\mathcal{R}^{\pi}_m\otimes\mathcal{Q}')\right)
\le \varepsilon\right\}\label{eq:dist}
\\[1.5em]
&=\sup\left\{\log m :
P[\mathrm{id}_m\!\otimes\!\mathcal{Q}',\Theta(\mathcal{N})]
\le \varepsilon,
\;\Theta(\Gamma^{\beta}\!\otimes\!\mathcal{Q})=\mathcal{R}^{\pi}_m\!\otimes\!\mathcal{Q}'\right\}.
\end{align}

A resource distillation process is effectively a measure and prepare channel where an object is processed with a free operation and then a free measurement is performed; the optimal distillation rate is when the processing with free operation and measurement is optimized for maximal distillation of the resource unit. A conditional athermality distillation process from  $(\mathcal{N}_{A'B'\to AB},
\Gamma^{\beta}_{A'\to A}\otimes \mathcal{Q}_{B'\to B})$ under the action of a conditional Gibbs-preserving superchannel $\Theta^\Lambda_\psi$, where $\psi_{RA'B'}$ is a state and $\Lambda_{RAB}$ is an effect operator, s.t. $0\le \Lambda\le \mathbbm{1}$, can be described as 
\begin{align}
\Theta^{\Lambda}_{\psi}(\mathcal{N})
&=
\operatorname{tr}\!\left(
\Lambda\Theta(\mathcal{N})(\psi)
\right)\,
\mathrm{id}_m\!\otimes\!\mathcal{Q}'
+
\operatorname{tr}\!\left(
(\mathbbm{1}-\Lambda)\Theta(\mathcal{N})(\psi)
\right)\,
\frac{\mathrm{id}_m^{\perp}\!\otimes\!\mathcal{Q}'}{m^2-1},
\end{align}
where $\Theta$ is a conditional Gibbs-preserving superchannel. The conditional Gibbs-preserving condition $\Theta^{\Lambda}_{\psi}\bigl(\Gamma^{\beta}\!\otimes\!\mathcal{Q}\bigr)=\mathcal{R}^{\pi}_m\!\otimes\!\mathcal{Q}'$ implies
\begin{align}
\operatorname{tr}
\!\left(\Lambda\Theta(\Gamma^{\beta}\!\otimes\!\mathcal{Q})(\psi)\right)&=\frac{1}{m^2}.
\end{align}
Furthermore, the generalized purified distance of the standard resource unit with respect to output of this superchannel is given by $P[\mathrm{id}_m\otimes\mathcal{Q}',\Theta^{\Lambda}_{\psi}(\mathcal{N})]=\sqrt{1-\operatorname{tr}(\Lambda\Theta(\mathcal{N})(\psi))}$. Hence the conversion distance under such operations can be written as
\begin{align}
d_{\mathrm{CGPS}}
\Big((
\mathcal{N},
\Gamma^{\beta}\!\otimes\!\mathcal{Q}),
(\mathrm{id}_m\!\otimes\!\mathcal{Q}',
\mathcal{R}^{\pi}\!\otimes\!\mathcal{Q}')\Big)
&=
\min_{\Theta,\Lambda,\psi}
\left\{
\sqrt{1-\operatorname{tr}\!\left(\Lambda\Theta(\mathcal{N})(\psi)\right)}:
\operatorname{tr}\!\left(
\Lambda\Theta(\Gamma^{\beta}\!\otimes\!\mathcal{Q})(\psi)
\right)
=
\frac{1}{m^2}
\right\}.
\end{align}
Using this expression in Eq.~\eqref{eq:dist}, we have for a conditional Gibbs-preserving supperchannel $\Theta\in{\rm CGPS}$,
\begin{align}
\mathrm{Dist}^{(1,\varepsilon)}(\mathcal{N},\Gamma^{\beta}\otimes\mathcal{Q})
&=\sup
\left\{
\log m :
\sqrt{1-\operatorname{tr}\left(\Lambda\Theta(\mathcal{N})(\psi)\right)}
\le \varepsilon,
\operatorname{tr}(
(\Lambda\Theta(\Gamma^{\beta}\otimes\mathcal{Q})(\psi)
)
=
\frac{1}{m^2}
\right\}\\
&=
\sup_{\Theta\in{\rm CGPS},\Lambda,\psi}
\left\{
-\frac{1}{2}
\log
\operatorname{tr}(
\Lambda\Theta(\Gamma^{\beta}\otimes\mathcal{Q})(\psi)):
1-\operatorname{tr}(\Lambda\Theta(\mathcal{N})(\psi))
\le \varepsilon^2
\right\}\\
&=
\frac{1}{2}
\sup_{\Theta\in{\rm CGPS}}\sup_{\psi}
\sup_{\Lambda}
\left\{
-\log
\operatorname{tr}(
(\Gamma^{\beta}\otimes\mathcal{Q})(\psi)\Lambda
):
1-\operatorname{tr}(\mathcal{N}(\psi)\Lambda)
\le \varepsilon^2
\right\}\\
&=
\frac{1}{2}
\sup_{\Theta\in{\rm CGPS}}\sup_{\psi}
D_{\mathrm H}^{\varepsilon^2}
\left(
\Theta(\mathcal{N})(\psi)
\;\big\|\;
\Theta(\Gamma^{\beta}\otimes\mathcal{Q})(\psi)
\right)\\
&=
\frac{1}{2}
D_{\mathrm H}^{\varepsilon^2}
\left[
\mathcal{N}
\;\big\|\;
\Gamma^{\beta}\otimes\mathcal{Q}
\right].
\end{align}

The one-shot cost of preparation of a channel resource $(\mathcal{N}_{A'B'\to AB},\Gamma^\beta_{A'\to A}\otimes \mathcal{Q}_{B'\to B})$ under the free operation is defined as
\begin{align}
\mathrm{Cost}^{(1,\varepsilon)}(\mathcal{N},\Gamma^\beta\otimes\mathcal{Q})
=\inf_{m}\Bigl\{\log m :d_{\mathrm{CGPS}}\!\left((
\mathrm{id}_m \otimes \mathcal{Q}',
\mathcal{R}^{\pi} \otimes \mathcal{Q}')
\notag\to(\mathcal{N}, \Gamma^{\beta} \otimes \mathcal{Q})\right)\le \varepsilon\Bigr\}.
\end{align}
Following the definition of conversion distance in Eq.~\eqref{eq:conversion-dist} and using the decomposition of $\mathcal{R}^{\pi}_m$ from Eq.~\eqref{eq:R-decom}, we have
\begin{align}
\Theta(\mathcal{R}^{\pi}_m\otimes \mathcal{Q}')
&=
\frac{1}{m^2}\Theta\left(
\mathrm{id}_m \otimes \mathcal{Q}'
\right)
+
\frac{1}{m^2}\Theta\!\left(
\mathrm{id}_m^{\perp} \otimes \mathcal{Q}'
\right)\\
&=\Gamma^{\beta}\otimes \mathcal{Q}
\end{align}
This yields the bound $m^2 \,\Gamma^{\beta} \otimes \mathcal{Q}\;\ge\;\Theta(\mathrm{id}_m \otimes \mathcal{Q}')$. Replacing $\mathcal{E}=\Theta(\mathrm{id}_m \otimes \mathcal{Q}')$, we have
\begin{align}
d_{\mathrm{CGPS}}\Big(
\bigl(
\mathrm{id} \otimes \mathcal{Q}',\,
\mathcal{R}^{\pi} \otimes \mathcal{Q}'
\bigr)\to\bigl(
\mathcal{N},\,
\Gamma^{\beta}\otimes \mathcal{Q}
\bigr)\Big)
&=\min_{\mathcal{E}}\Bigl\{P[\mathcal{N}, \mathcal{E}]:m^2 \,\Gamma^{\beta} \otimes \mathcal{Q}\;\ge\;\mathcal{E}\Bigr\}\\
&=\min_{\mathcal{E}}\Bigl\{P[\mathcal{N}, \mathcal{E}]:2 \log m\ge D_{\infty}[\mathcal{E} \,\|\,\Gamma^{\beta}\otimes\mathcal{Q}]\Bigr\}.
\end{align}
The conversion cost can now be written as
\begin{align}
\mathrm{Cost}^{(1,\varepsilon)}\bigl(
\mathcal{N}, \Gamma^{\beta} \otimes \mathcal{Q}
\bigr)
&=\inf
\Bigl\{\log m:
P(\mathcal{N}, \mathcal{E}) \le \varepsilon,\log m \ge \frac{1}{2}
D_{\infty}\!\left(
\mathcal{E} \,\|\, \Gamma^{\beta} \otimes \mathcal{Q}
\right) 
\Bigr\}\\
&=\inf
\Bigl\{
\frac{1}{2}
D_{\infty}\!\left(
\mathcal{E} \,\|\, \Gamma^{\beta} \otimes \mathcal{Q}
\right)
:
P(\mathcal{N}, \mathcal{E}) \le \varepsilon
\Bigr\}\\
&=
\frac{1}{2}
D_{\infty}^{\varepsilon}
\bigl[
\mathcal{N}
\,\|\,
\Gamma^{\beta} \otimes \mathcal{Q}
\bigr].
\end{align}
The statements for the optimal distillation and preparation cost follow from their definitions.
\end{proof}
\subsection{Proof of \autoref{prop:ppt}}\label{app:proof-ppt}
\begin{proposition*}
   The conditional min-entropy of a completely PPT-preserving bipartite channel $\n_{A'B'\to AB}$ is always lower bounded by the negative of logarithm of its non-connditioning input system,
    \begin{align}
        S_{\infty}[A|B]_\n\ge -\log|A'|.
    \end{align}
\end{proposition*}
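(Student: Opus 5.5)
The plan is to use the lower bound of \autoref{thm:min-ent-bounds},
\begin{equation*}
S_{\infty}[A|B]_{\n}\geq S^{\downarrow}_{\infty}(R_AA|R_BB)_{\Phi^{\n}}-\log|A'|,
\end{equation*}
which we obtained by choosing the reduced channel $\mathcal{T}^{\n}_{B'\to B}$ as side channel. The proposition then reduces to a statement about states: if $\Phi^{\n}_{R_AAR_BB}\in{\rm PPT}(R_AA;R_BB)$, then $S^{\downarrow}_{\infty}(R_AA|R_BB)_{\Phi^{\n}}\geq 0$. By definition of $D_{\infty}$, this is equivalent to the operator inequality
\begin{equation*}
\Phi^{\n}_{R_AAR_BB}\leq \mathbbm{1}_{R_AA}\otimes\Phi^{\n}_{R_BB},
\end{equation*}
which is the reduction criterion for the bipartition $R_AA:R_BB$. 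By the characterization recalled in the text~\cite{Rains01}, complete PPT preservation of $\n$ is equivalent to $\Phi^{\n}$ being PPT across $R_AA:R_BB$.

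The key step is to show that PPT states satisfy the reduction criterion, using decomposability of the reduction map. Write $X:=R_AA$ and $Y:=R_BB$, and let $\mathcal{R}^{\rm red}_X(Z):=\tr(Z)\mathbbm{1}_X-Z$. I will check that $\mathcal{R}^{\rm red}_X=\Theta_X\circ{\rm T}_X$, where ${\rm T}_X$ is the transpose and $\Theta_X(Z):=\tr(Z)\mathbbm{1}_X-Z^{T}$. This holds because $\tr(Z^T)=\tr(Z)$ and $(Z^T)^T=Z$. The map $\Theta_X$ is completely positive, since its Choi operator is $\mathbbm{1}\otimes\mathbbm{1}-F\geq 0$, where $F$ is the swap operator on $X\otimes X$ (this is the Werner--Holevo map, up to normalization). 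Then
\begin{equation*}
\mathbbm{1}_X\otimes\Phi^{\n}_Y-\Phi^{\n}_{XY}=(\mathcal{R}^{\rm red}_X\otimes\id_Y)(\Phi^{\n})=(\Theta_X\otimes\id_Y)\big({\rm T}_X(\Phi^{\n})\big)\geq 0.
\end{equation*}
Here ${\rm T}_X(\Phi^{\n})\geq 0$, because positivity of the partial transpose does not depend on which side is transposed: ${\rm T}_X(\rho)={\rm T}({\rm T}_Y(\rho))$ with ${\rm T}$ the full transpose, which preserves positivity. Applying the completely positive map $\Theta_X\otimes\id_Y$ then preserves positivity.

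Combining the two steps gives $D_{\infty}(\Phi^{\n}\|\mathbbm{1}_{R_AA}\otimes\Phi^{\n}_{R_BB})\leq 0$, hence $S^{\downarrow}_{\infty}(R_AA|R_BB)_{\Phi^{\n}}\geq 0$, and therefore $S_{\infty}[A|B]_{\n}\geq-\log|A'|$.

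The main point to get right is the decomposability step: checking the identity $\mathcal{R}^{\rm red}=\Theta\circ{\rm T}$ and the complete positivity of $\Theta$. Beyond that, I only need to confirm that the bipartition and the side on which the partial transpose is taken are consistent with the definition of completely PPT-preserving channels, which the side-independence remark above handles. Everything else follows directly from \autoref{thm:min-ent-bounds}.
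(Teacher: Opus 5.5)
Your proposal is correct and follows the paper's proof: both use the lower bound of \autoref{thm:min-ent-bounds} to reduce the claim to $S^{\downarrow}_{\infty}(R_AA|R_BB)_{\Phi^{\n}}\geq 0$ for the PPT Choi state, which follows from the reduction criterion $\Phi^{\n}_{R_AAR_BB}\leq\mathbbm{1}_{R_AA}\otimes\Phi^{\n}_{R_BB}$. The only difference is that the paper cites the fact that PPT states satisfy the reduction criterion (Horodecki and Horodecki), whereas you prove it directly by writing the reduction map as a transpose followed by the completely positive Werner--Holevo-type map. That argument is valid and makes the proof self-contained.
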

As we will show, the proof of the proposition is a staright forward consequence of the following Lemma.
\begin{lemma}
    Given a bipartite PPT state $\rho_{AB}$, we have
    \begin{equation}
        S_{\infty}^\downarrow(A|B)_\rho\ge 0.
    \end{equation}
\end{lemma}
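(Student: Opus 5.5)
The plan is to show that $\rho_{AB}\le \mathbbm{1}_A\otimes\rho_B$ for every PPT state $\rho_{AB}$. This gives $D_{\infty}(\rho_{AB}\|\mathbbm{1}_A\otimes\rho_B)\le 0$, which is exactly $S^{\downarrow}_{\infty}(A|B)_\rho\ge 0$. The natural tool is the reduction criterion, which holds for every PPT state. I would establish it directly through the partial transpose rather than quote it.

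Let $\rho^{T_B}:={\rm T}_B(\rho_{AB})\ge 0$. The key identity is that partial transposition intertwines the swap-type structure. For any operator $X_{AB}$ one has $\rho_A\otimes\mathbbm{1}_B-\rho_{AB}=$ (partial transpose of) the map $\mathbbm{1}\otimes\tr - \id$ applied appropriately. Concretely, I would use the reduction map $\Lambda(Y)=\tr(Y)\mathbbm{1}-Y$, which can be written as $\Lambda(Y)=|A|\,\tr_{1}$-type twirl. More useful is the decomposition $\Lambda = {\rm T}\circ\mathcal{C}$, where $\mathcal{C}(Y)=\tr(Y)\mathbbm{1}-Y^{T}$ is completely positive. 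Its Choi operator is $\mathbbm{1}\otimes\mathbbm{1}-F$, with $F$ the swap, which is $\ge 0$ since $F$ has eigenvalues $\pm1$.

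Then
\[
\mathbbm{1}_A\otimes\rho_B-\rho_{AB}=(\Lambda_A\otimes\id_B)(\rho_{AB})=(\mathcal{C}_A\otimes\id_B)\bigl({\rm T}_A(\rho_{AB})\bigr).
\]
Since ${\rm T}_A(\rho)$ is the full transpose of ${\rm T}_B(\rho)$, it is positive by the PPT assumption. Applying the completely positive map $\mathcal{C}_A\otimes\id_B$ keeps it positive, so $\rho_{AB}\le\mathbbm{1}_A\otimes\rho_B$.

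The main obstacle is keeping the bookkeeping of which subsystem is transposed consistent. That is, one must verify ${\rm T}\circ\mathcal{C}=\Lambda$ exactly on the correct system and check that positivity of ${\rm T}_A(\rho)$ follows from that of ${\rm T}_B(\rho)$. This holds because ${\rm T}_A\otimes{\rm T}_B$ is the full transpose, which preserves the spectrum.

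For the proposition, note that complete PPT preservation means $\Phi^{\n}\in{\rm PPT}(R_AA;R_BB)$. Apply the lemma with $A\to R_AA$ and $B\to R_BB$ to get $S^{\downarrow}_{\infty}(R_AA|R_BB)_{\Phi^\n}\ge 0$. Then \autoref{thm:min-ent-bounds} gives
\[
S_{\infty}[A|B]_\n\ge S^{\downarrow}_{\infty}(R_AA|R_BB)_{\Phi^\n}-\log|A'|\ge-\log|A'|.
\]
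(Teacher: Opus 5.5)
Your proposal is correct and follows the paper's route: both reduce the claim to the reduction criterion $\rho_{AB}\le\mathbbm{1}_A\otimes\rho_B$ for PPT states, which makes $\lambda=1$ feasible for $D_{\infty}(\rho_{AB}\|\mathbbm{1}_A\otimes\rho_B)$ and hence gives $S^{\downarrow}_{\infty}(A|B)_\rho\ge 0$. The only difference is that the paper simply cites the reduction criterion, whereas you derive it from the complete co-positivity of the reduction map ($\Lambda=\mathcal{C}\circ{\rm T}$, where the Choi operator of $\mathcal{C}$ is $\mathbbm{1}\otimes\mathbbm{1}-F\ge 0$); this derivation is valid, and since ${\rm T}\circ\mathcal{C}=\mathcal{C}\circ{\rm T}=\Lambda$, the mismatch between the composition order you state and the one you use does no harm.
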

\begin{proof}
    A bipartite PPT state $\rho_{AB}$ satisfies the following reduction criteria~\cite{HH99},
    \begin{align}
        \mathbbm{1}_A\otimes\rho_B\ge \rho_{AB}.
    \end{align}
    From the primal problem for $S_{\infty}^\downarrow(A|B)_\rho$ given by
    \begin{align}
         2^{-  S_{\infty}^\downarrow(A|B)_{\rho}}&= \inf \{\lambda;\ \lambda \mathbbm{1}_A\otimes \rho_B\ge \rho,\ \lambda > 0 \},
    \end{align}
    we see that for PPT states, the optimal value satisfies $\lambda^*\le 1$, or $S_{\infty}^\downarrow(A|B)_{\rho}\ge 0$. 
\end{proof}
Given a completely PPT-preserving channel $\n_{A'B'\to AB}$, the corresponding Choi state $\phi^{\n}_{R_AAR_BB}$ is a PPT state such that $S^\downarrow_{\infty}(R_AA|R_BB)_{\phi^\n}\ge 0$. Using~\autoref{thm:min-ent-bounds}, we have
\begin{align}
    S_{\infty}[A|B]_\n&\ge S^\downarrow_{\infty}(R_AA|R_BB)_{\phi^\n}-\log|A'|\\
    &\ge -\log|A'|.
\end{align}
\subsection{Proof of \autoref{thm:conti}}\label{app:proof-conti}
\begin{theorem*}
  Consider two arbitrary bipartite channels $\n_{A'B'\to AB}$ and $\m_{A'B'\to AB}$. For $\frac{1}{2}\norm{\n-\m}_\diamond\le\delta\in[0,1]$, the difference of their conditional min-entropy is bounded as
\begin{equation}
    \abs{S_{\infty}[A|B]_\n-S_{\infty}[A|B]_\m}\le \frac{1}{\ln 2}{|A|\min\{|A'||A|,|B'||B|\}\delta}.
\end{equation}
\end{theorem*}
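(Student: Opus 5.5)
We work from the semidefinite-program form Eq.~\eqref{eq:con-min-ent-sdp}. Set $\lambda_\n:=2^{-S_\infty[A|B]_\n}$, the optimal value of
\begin{equation}
\min\Bigl\{\tfrac{\tr M}{|B'|}:\ \Gamma^\n\le \mathbbm{1}_{R_AA}\otimes M,\ M\ge 0,\ \tr_B M=\tfrac{\tr M}{|B'|}\mathbbm{1}_{R_B}\Bigr\},
\end{equation}
and define $\lambda_\m$ in the same way. The bound follows from two ingredients: a perturbation bound on the optimal values, $|\lambda_\n-\lambda_\m|\lesssim\delta\cdot(\text{dimension factor})$, and a uniform lower bound on $\lambda$ that controls the derivative of $\log$. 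Since $\log\lambda$ is $\frac{1}{\lambda\ln 2}$-Lipschitz, this gives $|S_\infty[A|B]_\n-S_\infty[A|B]_\m|\le \frac{|\lambda_\n-\lambda_\m|}{\ln 2\,\min\{\lambda_\n,\lambda_\m\}}$.

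\emph{Step 1 (feasibility transfer).} Take $M$ optimal for $\n$. We have $\Gamma^\m\le\Gamma^\n+(\Gamma^\m-\Gamma^\n)_+\le \mathbbm{1}_{R_AA}\otimes M+\|\Gamma^\m-\Gamma^\n\|_\infty\mathbbm{1}$. We therefore use $M':=M+\|\Gamma^\m-\Gamma^\n\|_\infty\,\mathbbm{1}_{R_BB}$. This is feasible for $\m$: the constraint $\tr_BM'\propto\mathbbm{1}_{R_B}$ is preserved, because adding a multiple of the identity on $R_BB$ keeps the $R_B$-marginal proportional to $\mathbbm{1}_{R_B}$. Its objective is $\lambda_\n+\|\Gamma^\m-\Gamma^\n\|_\infty|B|$.

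\emph{Step 2 (operator-norm bound).} We bound $\|\Gamma^\m-\Gamma^\n\|_\infty\le\|\Gamma^\m-\Gamma^\n\|_1=|A'||B'|\,\|\Phi^\m-\Phi^\n\|_1\le 2|A'||B'|\delta$, using the diamond-norm hypothesis. By symmetry this yields $|\lambda_\n-\lambda_\m|\le 2|A'||B'||B|\delta$.

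\emph{Step 3 (lower bound on $\lambda$).} By Proposition~\ref{prop:entropybound1}, $S_\infty[A|B]\le\log|A|$, so $\lambda\ge 1/|A|$. Combining, $|\Delta S|\le \frac{|A|\cdot 2|A'||B'||B|\delta}{\ln 2}$.

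This has the right shape but not yet the claimed $\min\{|A'||A|,|B'||B|\}$ factor, and tightening it is the step I expect to be the main obstacle. The plan is to sharpen Step 2 by not bounding the operator norm through the trace norm. Instead one should dominate the positive part of $\Gamma^\m-\Gamma^\n$ more cleverly, for instance via a decomposition $\Gamma^\m-\Gamma^\n=P-N$ with $\tr P\le|A'||B'|\delta$. One can then use $P\le \mathbbm{1}_{R_AA}\otimes \tr_{R_AA}P$ (valid up to a factor $\min\{|R_AA|,|R_BB|\}$ via the reduction-type inequality $X_{XY}\le d_X\,\mathbbm{1}\otimes X_Y$), choosing whichever side gives the smaller dimension. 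Rescaling the added term to satisfy the marginal constraint costs a factor of $|B'|$, which cancels the $1/|B'|$ in the objective. This should reproduce $|\lambda_\n-\lambda_\m|\le\min\{|A'||A|,|B'||B|\}\,\delta$ up to constants, and together with Step 3 this gives the stated bound. Should the constant differ, I would adjust the $\delta$ normalization ($\frac12\|\cdot\|_\diamond$ versus $\|\cdot\|_\diamond$) accordingly.
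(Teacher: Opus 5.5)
Your Steps 1--3 are correct. $M'=M+\|\Gamma^{\mathcal M}-\Gamma^{\mathcal N}\|_\infty\,\mathbbm{1}_{R_BB}$ is feasible for $\mathcal M$, and with $\lambda\ge 1/|A|$ this proves $|S_\infty[A|B]_{\mathcal N}-S_\infty[A|B]_{\mathcal M}|\le 2|A||A'||B'||B|\delta/\ln 2$. The factor $2$ can be dropped: $\Gamma^{\mathcal M}-\Gamma^{\mathcal N}$ is traceless, so its operator norm is at most $\frac12\|\Gamma^{\mathcal M}-\Gamma^{\mathcal N}\|_1\le|A'||B'|\delta$.

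The gap is the sharpening paragraph, and it cannot be repaired. Set $P=(\Gamma^{\mathcal M}-\Gamma^{\mathcal N})_+$ and $d=\min\{|A'||A|,|B'||B|\}$. Any $K\ge d\,P_{R_BB}$ with $\tr_B K=c\,\mathbbm{1}_{R_B}$ satisfies $c\,\mathbbm{1}_{R_B}\ge d\,P_{R_B}$, i.e.\ $c\ge d\|P_{R_B}\|_\infty$. This $c$ is exactly the increase of the objective $\tr K/|B'|$. But $\|P_{R_B}\|_\infty$ can equal $\tr P=\frac{|A'||B'|}{2}\|\Phi^{\mathcal M}-\Phi^{\mathcal N}\|_1$, so the factor $|A'||B'|$ carried by the unnormalized Choi operator survives. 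Nothing cancels, and this route gives at best $|\lambda_{\mathcal N}-\lambda_{\mathcal M}|\le d\,|A'||B'|\delta$.

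In fact both your intermediate target $|\lambda_{\mathcal N}-\lambda_{\mathcal M}|\lesssim d\,\delta$ and the stated inequality are false. No choice of constants or $\delta$ normalization will close the gap. Take $B',B$ trivial, $|A'|=|A|=k$, $\mathcal N=\mathcal R^{\pi}$ and $\mathcal M=(1-p)\mathcal R^{\pi}+p\,\id$.
\begin{itemize}
\item The largest eigenvalue of $\Gamma^{\mathcal M}$ is $\frac{1-p}{k}+pk$, so $S_\infty[A|B]_{\mathcal N}-S_\infty[A|B]_{\mathcal M}=\log(1+p(k^2-1))$.
\item Meanwhile $\frac12\|\mathcal N-\mathcal M\|_\diamond\le p$, and the claimed bound with $\delta=p$ is $kp/\ln 2$.
\item For $k=2$, $p=0.1$ this gives $\log 1.3\approx 0.379>0.289\approx 0.2/\ln 2$.
\end{itemize}
Tensoring both channels with a qubit identity channel leaves $S_\infty[A|B]$ and the diamond distance unchanged. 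Taking $k=5$, $p=10^{-3}$ then gives a violation with nontrivial $B$ as well ($0.0342>0.0289$).

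The paper reaches its constant through two unjustified steps.
\begin{itemize}
\item In its property 4, the comparison map $\mathcal M_i$ with Choi operator $\tr_{R_AA}\Phi^{\mathcal K_i}$ is not trace-preserving, hence not feasible. Indeed, $\widetilde{\mathcal E}$ of a product pure state equals $|B'|$, which exceeds $d$ when $|B'|>|A'||A|$.
\item In the last step it drops the $|A'|^{-1}$ in $\mathcal E(\mathcal N)=|A'|^{-1}2^{-S_\infty[A|B]_{\mathcal N}}$. The correct lower bound on $\mathcal E(\mathcal M)$ is $1/(|A'||A|)$, not $1/|A|$.
\end{itemize}
What your Steps 1--3 actually establish, $|S_\infty[A|B]_{\mathcal N}-S_\infty[A|B]_{\mathcal M}|\le |A'||A||B'||B|\delta/\ln 2$, is a valid continuity bound. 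It matches the example above to first order in $p$, where $\frac12\|\mathcal N-\mathcal M\|_\diamond=p(1-1/k^2)$.
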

\begin{proof}
Consider the functional $\mathcal{E}:\mathrm{Ch}(A'B',AB)\to \mathbb{R}^+$ given by  $\mathcal{E}(\n):= |A'|^{-1}2^{-S_{\infty}[A|B]_\n}$, or equivalently
\begin{equation}
\begin{aligned}
    \mathcal{E}(\n)&=\inf_{\m\in \mathrm{Ch}(B',B) } \{\lambda,\ \lambda(\mathbbm{1}_{AA'}\otimes \Phi^{\m})\ge \Phi^\n \}.
\end{aligned}
\end{equation}
One can extend this functional to the space of Hermiticity preserving maps, 
\begin{equation}
\begin{aligned}
    \widetilde{\mathcal{E}}(\n)&=\inf_{\m\in \mathrm{HPTP}(B',B) } \{\lambda,\ \lambda(\mathbbm{1}_{AA'}\otimes\Phi^\m)\ge \Phi^\n \}.
\end{aligned}
\end{equation}
Note that we can equate $\widetilde{\mathcal{E}}(\n)=\widehat{\mathcal{E}}(\Phi^\n)$ as a map on the Choi matrices of the Hermiticity preserving maps $\n$. The map $\widetilde{\mathcal{E}}(\n)$ satisfies the following properties:
\begin{enumerate}
    \item $\widetilde{\mathcal{E}}(a\n)=a\widetilde{\mathcal{E}}(\n)$ $\forall\ a\in \mathbb{R}^+$, since, $\Phi^{(a\n)}=a\Phi^{\n}$. 
    \item $\Phi^\n\ge \Phi^\m\implies \widetilde{\mathcal{E}}(\n)\ge \widetilde{\mathcal{E}}(\m)$. 
    \item  $\widetilde{\mathcal{E}}(\n+\m)\le \widetilde{\mathcal{E}}(\n)+\widetilde{\mathcal{E}}(\m)$.
    \item  $\widetilde{\mathcal{E}}(\n)\le  \min\{|AA'|,|BB'| \}\tr\{\Phi^\n P^+\}$. Here, $\tr(XP^+)$ is the sum of the positive eigenvalues of $X$. To show this, we note that for any Hermiticity preserving map $\n$, we have the spectral decomposition of its Choi matrix as $\Phi^\n=\sum_i\lambda_i\Phi^{\mathcal{K}_i}$ where $\{\mathcal{K}_i\}_i$ are rank 1 CP maps. This is because every bipartite pure state can be written as a Choi matrix of a CP map. Using the property 1 and 3, we have the convexity of $\widetilde{\mathcal{E}}(\n)$, such that,
\begin{align}
    \widetilde{\mathcal{E}}(\n)\le \sum_i\lambda_i\widetilde{\mathcal{E}}(\mathcal{K}_i)\\
    \le \sum_{\lambda_i\ge 0}\lambda_i\widetilde{\mathcal{E}}(\mathcal{K}_i)
\end{align}
We have
\begin{align}
    \widetilde{\mathcal{E}}(\mathcal{K}_i)&=\inf_{\m\in \mathrm{HPTP}(B',B) } \{\lambda,\ \lambda(\mathbbm{1}_{A'A}\otimes\Phi^\m)\ge \Phi^{\mathcal{K}_i} \}\\
    &\le \inf~\{\lambda,\ \lambda(\mathbbm{1}_{A'A}\otimes\Phi^{\m_i})\ge \Phi^{\mathcal{K}_i} \}\\
    &=2^{-S_{\infty}^{\downarrow}(R_AA|R_BB)_{\Phi^{\mathcal{K}_i}}}\\
    &\le\min\{|A'||A|,|B'||B|\}.
\end{align}
Here $\m_i(\cdot)=\tr_A\circ\mathcal{K}_i(\mathbbm{1}_A\otimes \cdot)$ is the reduced CP map and $\Phi^{\m_i}=\tr_{A'A}(\Phi^{\mathcal{K}_i})$.  In the last inequality, we have used the lower bound on the min-conditional entropy: $S_{\infty}^{\downarrow}(A|B)_\rho\ge-\log\min\{|A|,|B|\}$. Furthermore, the sum of the positive part of the spectrum of the state $\Phi^{\n}$ is given by $\sum_{\lambda_i\ge 0}\lambda_i=\tr(\Phi^\n P_+)$, which gives us the required bound.
\end{enumerate}
The diamond norm between two maps is given as
\begin{equation}
\Vert \n-\m\Vert_\diamond=\max_{\rho} \Vert \id\otimes\n(\rho)-\id\otimes\m(\rho) \Vert_1\ge \Vert\Phi^\n-\Phi^\m\Vert_1.
\end{equation}
Therefore, $\frac{1}{2}\Vert \n-\m\Vert_\diamond\le \delta$ implies $\frac{1}{2}\Vert_1\ge \Vert\Phi^\n-\Phi^\m\Vert_1\le \delta$. Using the properties of the functional $\widetilde{\mathcal{E}}$, we have the following for two arbitrary CPTP map $\n$ and $\m\in \mathrm{Ch}(A'B',AB)$
        \begin{align}
            \widetilde{\mathcal{E}}(\n)&\le\widetilde{\mathcal{E}}(\m)+\widetilde{\mathcal{E}}(\n-\m)\\
            &\le \widetilde{\mathcal{E}}(\m)+\min\{|AA'|,|BB'|\}\tr((\Phi^\n-\Phi^\m)P^+)\\
            &= \widetilde{\mathcal{E}}(\m)+\min\{|AA'|,|BB'|\}\Vert\Phi^\n-\Phi^\m\Vert_1\\
             &\le \widetilde{\mathcal{E}}(\m)+\min\{|AA'|,|BB'|\}\delta
        \end{align}
Here we have used equality $\tr(XP^+)=\dfrac{1}{2}(\Vert X \Vert_1+\tr(X))$. Consequently, if $\n$ and $\m$ are quantum channels, $\tr((\Phi^\n-\Phi^\m)P^+)=\dfrac{\Vert\Phi^\n-\Phi^\m\Vert_1}{2}$. Using $\ln(a+x)-\ln(a)\le \dfrac{x}{a}$ and substituting $a+x=\widetilde{\mathcal{E}}(\n)$, and $a=\widetilde{\mathcal{E}}(\m)$, we have 
\begin{align}
     \widetilde{\mathcal{E}}(\n)-\widetilde{\mathcal{E}}(\m)& \le \min\{|AA'|,|BB'|\}\delta\\
     \implies \dfrac{2^{-S_{\infty}[A|B]_\n}-2^{-S_{\infty}[A|B]_\m}}{2^{-S_{\infty}[A|B]_\n}}&\le \dfrac{\delta}{2^{-S_{\infty}[A|B]_\n}}\min\{|AA'|,|BB'|\}\\
     \implies S_{\infty}[A|B]_\n-S_{\infty}[A|B]_\m & \le  \dfrac{|A|\min\{|AA'|,|BB'|\}\delta}{\tr(\Phi^\m)\ln 2}
\end{align}
In the last step, we have used the inequality $2^{-S_{\infty}[A|B]_\n}\ge \dfrac{\tr(\Phi^\n)}{|A|}$.  
\end{proof}

\subsection{Proof of \autoref{th:equipartition}}\label{app:proof_equipartition}
\begin{theorem*}
   For $\varepsilon\in(0,1)$ and asymptotically many i.i.d.~uses of a bipartite channel $\n_{A'B'\to AB}$, the smoothed conditional channel min-entropy satisfies the following bounds in terms of the conditional von Neumann entropy of the Choi state $\Phi^{\n}$,
\begin{align}
    \lim_{\varepsilon\to 0^+} \lim_{n\to \infty}\frac{1}{n} S_{\infty}^\varepsilon [A^n|B^n]_{\mathcal{N}^{\otimes n}}  \leq  S(R_AA|R_BB)_{\Phi^{\mathcal{N}}}-\log|A'|.\label{tp:eq1}
\end{align}
If $\n$ is tele-covariant, then
\begin{align}
    \lim_{\varepsilon\to 0^+} \lim_{n\to \infty}\frac{1}{n} S_{\infty}^\varepsilon [A^n|B^n]_{\mathcal{N}^{\otimes n}}  \leq  S[A|B]_{\n},\label{tp:eq2}\\
   \lim_{n\to \infty}\frac{1}{n} S_{\infty}^\varepsilon [A^n|B^n]_{\mathcal{N}^{\otimes n}}  \geq  S[A|B]_{\n}.\label{tp:eq3}
\end{align}
\end{theorem*}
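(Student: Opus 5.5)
The plan is to reduce the channel statements to the state AEP of~\cite{TCR09} applied to the Choi state. The key tools are the Choi-state sandwich of \autoref{thm:min-ent-bounds} and the fact that Choi states tensorize: $\Phi^{\n^{\otimes n}}=(\Phi^{\n})^{\otimes n}$ with $R_A^n,R_B^n$ as references.

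\emph{Upper bound~\eqref{tp:eq1}.} First I would prove a smoothed version of the upper bound in \autoref{thm:min-ent-bounds}:
\begin{equation}
S^\varepsilon_\infty[A|B]_\n\le S^{\varepsilon}_\infty(R_AA|R_BB)_{\Phi^\n}-\log|A'| .
\end{equation}
Take any channel $\m\in\mathcal{B}^\varepsilon[\n]$. Since $P[\n,\m]\ge P(\Phi^\n,\Phi^\m)$, the Choi state $\Phi^\m$ lies in the state $\varepsilon$-ball around $\Phi^\n$. Applying the unsmoothed bound to $\m$ and then taking the supremum over $\m$ gives the inequality. Now apply it to $\n^{\otimes n}$, for which $|A'|\mapsto|A'|^n$ and the Choi state is $(\Phi^{\n})^{\otimes n}$. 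Dividing by $n$ and using the fully quantum AEP $\lim_{\varepsilon\to0}\lim_n\frac1n S^\varepsilon_\infty(\cdot)_{\rho^{\otimes n}}=S(\cdot)_\rho$ yields~\eqref{tp:eq1}. For tele-covariant $\n$, \cite[Theorem 4]{DGP24} gives $S[A|B]_\n=S(R_AA|R_BB)_{\Phi^\n}-\log|A'|$, so~\eqref{tp:eq2} follows immediately.

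\emph{Lower bound~\eqref{tp:eq3}.} This is the main step. I would use the lower half of \autoref{thm:min-ent-bounds}, in the form exploited in \autoref{lem:gen-con-ent-bounds}. Choosing the side channel $\mathcal{T}^{\n^{\otimes n}}=(\mathcal{T}^\n)^{\otimes n}$ gives
\begin{equation}
-S_\infty[A^n|B^n]_{\mathcal{M}}\le D_\infty[\mathcal{M}\|\R\otimes(\mathcal{T}^{\n})^{\otimes n}]
\end{equation}
for any $\mathcal{M}$. The issue is that the smoothing must be done over channels, not states: I need a channel $\mathcal{M}$ that is $\varepsilon$-close to $\n^{\otimes n}$ in purified diamond distance and whose $D_\infty$ against $\R\otimes(\mathcal{T}^\n)^{\otimes n}$ is about $-n(S(R_AA|R_BB)_{\Phi^\n}-\log|A'|)$.

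For this I would use tele-covariance of $\n^{\otimes n}$ together with tele-covariance of $\R\otimes(\mathcal{T}^\n)^{\otimes n}$; the latter holds because $\mathcal{T}^\n$ inherits covariance after twirling. Two consequences follow:
\begin{itemize}
\item the channel max-divergence (and its smoothing over covariant channels) is attained on the maximally entangled input, so $D_\infty[\cdot\|\cdot]$ reduces to a state divergence of Choi states;
\item a smoothing state $\rho'$ near $(\Phi^\n)^{\otimes n}$ can be replaced by its group twirl, which keeps $P$ and $D_\infty$ no worse and has maximally mixed $R_AR_B$ marginal after renormalising.
\end{itemize}
Such a twirled $\rho'$ is the Choi state of a channel (up to subnormalisation, handled by generalised fidelity) within $\varepsilon$ of $\n^{\otimes n}$ in channel purified distance. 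For Choi states of jointly tele-covariant channels, closeness of the Choi states implies closeness of the channels, via teleportation simulation.

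This yields
\begin{equation}
S^\varepsilon_\infty[A^n|B^n]_{\n^{\otimes n}}\ge -D^\varepsilon_\infty\bigl((\Phi^\n)^{\otimes n}\|\mathbbm{1}_{R_AA}^{\otimes n}\otimes(\Phi^\n_{R_BB})^{\otimes n}\bigr)-n\log|A'| .
\end{equation}
The i.i.d. AEP for $D^\varepsilon_\infty$ (Tomamichel–Colbeck–Renner / quantum Stein) then gives $\frac1n D^\varepsilon_\infty\to D(\Phi^\n\|\mathbbm{1}\otimes\Phi^\n_{R_BB})=-S(R_AA|R_BB)_{\Phi^\n}$ for every fixed $\varepsilon\in(0,1)$. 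Combining with \cite[Theorem 4]{DGP24} gives~\eqref{tp:eq3}.

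The main obstacle is the channel-level smoothing. I must ensure that the optimal smoothed Choi operator can be taken covariant, that it corresponds to a legitimate (trace-preserving) channel in $\mathcal{B}^\varepsilon[\n^{\otimes n}]$, and that teleportation simulation transfers state purified distance to channel purified distance without loss. If trace preservation of the twirled operator fails, I would fall back to a slightly enlarged smoothing radius, or to the Petz/sandwiched Rényi route, bounding $D^\varepsilon_\infty\le D_\alpha+\frac{1}{\alpha-1}\log\frac{2}{\varepsilon^2}$ with additivity and then letting $\alpha\to1^+$.
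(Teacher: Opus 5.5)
Your upper bounds are the paper's argument: the Choi state of any $\m\in\mathcal{B}^\varepsilon[\n]$ lies in the state ball, $\Phi^{\n^{\otimes n}}=(\Phi^\n)^{\otimes n}$, and then you apply the state AEP and \cite[Theorem 4]{DGP24}.

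For the lower bound you take a genuinely different, correct route.

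\textbf{What the paper does.} It adapts \cite[Theorem 21]{GW21}. It keeps the same side channel $\mathcal{K}^{\otimes n}$ (your $(\mathcal{T}^\n)^{\otimes n}$) and restricts to permutation-covariant channels. It evaluates them on a purified de Finetti state (post-selection), and turns channel smoothing into state smoothing with a fixed reference marginal via a symmetrizing superchannel and \cite[Lemma 10]{FWT+18}. It then bounds the smoothed $D_\infty$ by $D_\alpha$. Tele-covariance enters only at the end, to identify $D[\n\|\R\otimes\mathcal{K}]$ with $-S(R_AA|R_BB)_{\Phi^\n}+\log|A'|$.

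\textbf{What you do.} You use the $G^n\times H^n$ symmetry of $\n^{\otimes n}$ from the start, and the obstacles you flag all resolve:
\begin{itemize}
\item The group twirl $\mathcal{G}$ fixes $(\Phi^\n)^{\otimes n}$. It also fixes $\mathbbm{1}_{R_AA}^{\otimes n}\otimes(\Phi^\n_{R_BB})^{\otimes n}$, because $\Phi^\n_{R_BB}$ is invariant under $\overline{V}^h\otimes Z^{g,h}$.
\item $\mathcal{G}$ maps the $R_A^nR_B^n$ marginal of $\rho'$ to $\tr(\rho')\,\pi$, since $\{\overline{U}^g\otimes\overline{V}^h\}$ is a one-design. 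Hence $\mathcal{G}(\rho')/\tr\rho'$ is the Choi state of a channel $\m$ covariant under the same representations as $\n^{\otimes n}$.
\item The purified distance is monotone, so joint tele-covariance (teleportation simulation, or \cite{LKDW18}) gives $P[\m,\n^{\otimes n}]=P(\Phi^\m,\Phi^{\n^{\otimes n}})\le\varepsilon$.
\end{itemize}

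One small correction: renormalizing costs $\log\frac{1}{\tr\rho'}\le\log\frac{1}{1-\varepsilon^2}$, because $\tr\rho'\ge F(\rho',(\Phi^\n)^{\otimes n})\ge 1-\varepsilon^2$. Your displayed inequality omits this term, but it vanishes after dividing by $n$.

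\textbf{What each approach buys.} Your route gives a clean one-shot bound for every tele-covariant $\n$: $S^\varepsilon_\infty[A|B]_\n\ge S^{\downarrow,\varepsilon}_\infty(R_AA|R_BB)_{\Phi^\n}-\log|A'|-\log\frac{1}{1-\varepsilon^2}$. The asymptotic claim then follows directly from the i.i.d.\ bound on $D^\varepsilon_\infty$ at each fixed $\varepsilon$. You need no post-selection polynomial factors and no additivity of channel R\'enyi divergences. The paper's heavier machinery is covariance-agnostic until the final step. That yields the bound \eqref{eq:lb-asym-min-smth}, which is reused for no-signaling channels in \autoref{thm:ns-asym-con-ent}; there your twirling argument has no analogue, since there is no symmetry group making the twirled operator a valid Choi state.
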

\begin{proof}
For an arbitrary channel $\n_{A'B'\to AB}$, using the upper bound on conditional channel min-entropy from \autoref{thm:min-ent-bounds}, we get 
\begin{equation}
S^\varepsilon_{\infty} [A|B]_{\mathcal{N}}  \leq \sup_{\m\in B^\varepsilon[\n]} S_{\infty}(R_AA|R_BB)_{\Phi^{\mathcal{M}}}-\log|A'|,
\end{equation}
Note that
\begin{align}\label{eq:smooth-channel-ineq}
    \sup_{\m\in \mathcal{B}^{\varepsilon}[\n]} S_{\infty}(R_AA|R_BB)_{\Phi^\m}&\le\sup_{\Phi^\m\in \mathcal{B}^{\varepsilon}(\Phi^\n)} S_{\infty}(R_AA|R_BB)_{\Phi^\m}\\
   & \le \sup_{\rho\in \mathcal{B}^{\varepsilon}(\Phi^\n)} S_{\infty}(R_AA|R_BB)_{\Phi^\m}\\
   &= S_{\infty}^{\varepsilon}(R_AA|R_BB)_{\Phi^\n},
\end{align}
where the smoothed conditional min-entropy is defined for $\varepsilon\in[0,1]$ as
\begin{align}
    S_{\infty}^{\varepsilon}(A|B)_\rho &:=\sup_{\widetilde{\rho}\in B^\varepsilon(\rho)} S_{\infty}(R_AA|R_BB)_{\widetilde{\rho}},\\
    \text{and},~~S_{\infty}^{\downarrow,\varepsilon}(A|B)_\rho &:=-\inf_{\widetilde{\rho}\in B^\varepsilon(\rho)} D_{\infty}(\widetilde{\rho}\Vert \mathbbm{1}_{A}\otimes \rho_B).
\end{align}
We now use the asymptotic convergence of the conditional min-entropy to prove the first inequality of the theorem. For $\varepsilon\in (0,1)$, the smoothed conditional min-entropy functions follows the asymptotic equipartition property~\cite[Theroem 1]{TCR09},
\begin{align}
\lim_{\varepsilon\to 0^+}\lim_{n\to \infty} &    \dfrac{1}{n}S_{\infty}^{\varepsilon}(A^n|B^n)_{\rho^{\otimes n}}=S(A|B)_{\rho}.
\end{align}
Noting $\Phi^{\n^{\otimes n}}=({\Phi^\n})^{\otimes n}$, we get
\begin{align}
    \lim_{\varepsilon\to 0^+}\lim_{n\to \infty}\frac{1}{n} S_{\infty}^\varepsilon [A^n|B^n]_{\mathcal{N}^{\otimes n}}  \leq  S(R_AA|R_BB)_{\Phi^{\mathcal{N}}}-\log|A'|,
\end{align}
which concludes the proof of inequality~\eqref{tp:eq1}.

Inequality~\eqref{tp:eq2} follows from inequality~\eqref{tp:eq1} and the fact that for a tele-covariant channel $\n$,
\begin{equation}
    S[A|B]_{\n}=S(R_AA|R_BB)_{\Phi^{\n}}-\log|A'|.
\end{equation}

To show the inequality~\eqref{tp:eq3}, we adapt the techniques used in the proof argument of~\cite[Theorem 21]{GW21}. Let $\m_{{A'}^n{B'}^n\to A^nB^n}$ be a channel in $\varepsilon$-neighborhood of $\n^{\otimes n}$, then
\begin{align}
    \sup_{\m\in \mathcal{B}^\varepsilon[\n^{\otimes n}]}S_{\infty}[A|B]_\m =  \sup_{\m\in \mathcal{B}^\varepsilon[\n^{\otimes n}]}-\inf_{\mathcal{P}}D_\infty [\m\Vert \R_{{A'}^n\to A^n}\otimes \mathcal{P}_{{B'}^n\to B^n}]
\end{align}
Consider the following choice of the channel $\mathcal{P}_{{B'}^n\to B^n}=\mathcal{K}_{B'\to B}^{\otimes n}$ defined as
\begin{align}
    \mathcal{K}_{B'\to B}(\cdot):=\tr_A\circ \n_{A'B'\to AB}(\pi_{A'}\otimes\cdot).
\end{align}
A de Finetti state $\omega_{R^n{A'}^n{B'}^n}$, where $R\cong A'B'$, is a state of a multipartite system that is invariant under any permutation of its subsystems. It can be decomposed as 
\begin{equation}
    \omega_{R^n{A'}^n{B'}^n}=\int d\mu(\sigma)\sigma_{RA'B'}^{\otimes n},
\end{equation}
where $\sigma_{RA'B'}$ are pure states and $d\mu(\sigma)$ is the Haar measure on the space of pure states. Let $\mathrm{Perm}({A'}^n,A^n)$ denote the set of permutation covaraint channels from ${A'}^n\to A^n$. Action of such channels on de Finetti states would leave them permutation invariant.
\begin{align}
    \sup_{\m\in \mathcal{B}^\varepsilon[\n^{\otimes n}]}S_{\infty}[A|B]_\m &\ge \sup_{\m\in \mathcal{B}^\varepsilon[\n^{\otimes n}]}-D_\infty [\m\Vert \R_{{A'}^n\to A^n}\otimes {\mathcal{K}}^{\otimes n}]\\
    &\ge \sup_{\m\in \mathcal{B}^\varepsilon[\n^{\otimes n}],~\m\in \mathrm{Perm}((A'B')^n,(AB)^n)}-D_\infty [\m\Vert \R_{{A'}^n\to A^n}\otimes {\mathcal{K}}^{\otimes n}]\\
    &=\sup_{\m\in \mathcal{B}^\varepsilon[\n^{\otimes n}],~\m\in \mathrm{Perm}((A'B')^n,(AB)^n)}-D_\infty (\m(\omega)\Vert \R_{{A'}^n\to A^n}\otimes {\mathcal{K}}^{\otimes n}(\omega)),\label{eq:three}
\end{align}
where $\omega_{R'R^n{A'}^n{B'}^n}$ is the purification of the de Finette state $\omega_{R^n{A'}^n{B'}^n}$. The equality holds since $\omega_{{A'}^n{B'}^n}$ is a full rank state. From~\cite{CKR09}, we note that if $\m\in \mathrm{Perm}((A'B')^n,(AB)^n)$ such that $\m\in \mathcal{B}^\varepsilon[\n^{\otimes n}]$, then $\m(\omega)\in \mathcal{B}^{\varepsilon'}(\n^{\otimes n}(\omega))$ where $\varepsilon'=\varepsilon(n+1)^{2(|A|^2-1)}$. We have the final expression on the RHS satisfy the following
\begin{align}
    \eqref{eq:three}&\ge \sup_{\m(\omega)\in \mathcal{B}^{\varepsilon'}(\n^{\otimes n}(\omega)),~\m\in \mathrm{Perm}((A'B')^n,(AB)^n)}-D_\infty (\m(\omega)\Vert \R\otimes {\mathcal{K}}^{\otimes n}(\omega))\label{eq:one}\\
    &=\sup_{\sigma\in \mathcal{B}^{\varepsilon'}(\n^{\otimes n}(\omega)),~\sigma_{R{R'}^n}=\omega_{R{R'}^n}}-D_\infty (\sigma\Vert \R\otimes {\mathcal{K}}^{\otimes n}(\omega))\label{eq:two}
\end{align}
This equality can be shown as follows: clearly, \eqref{eq:one}~$\le$~\eqref{eq:two} since the reduced state of $\m(\omega)$ is $\omega_{R'R}$. \eqref{eq:one}~$\ge$~\eqref{eq:two} follows because $\n^{\otimes n}$ is invariant under the action of the superchannel $\Theta(\m)=\frac{1}{n!}\mathrm{\pi}_{(AB)^n}\circ\m\circ\mathrm{\pi}_{{(A'B')}^n}\in \mathrm{Perm}((A'B')^n,(AB)^n)$ and 
\begin{align}
    \m(\omega)\in \mathcal{B}^{\varepsilon'}(\n^{\otimes n}(\omega))\implies\Theta(\m)(\omega)\in \mathcal{B}^{\varepsilon'}(\n^{\otimes n}(\omega)).
\end{align}
Let $\sigma$ be the state that optimizes the expression in Eq.~\eqref{eq:two}. From the~\cite[Lemma 10]{FWT+18}, there exists a CPTP map $\widetilde{\m}$ s.t. $\sigma=\widetilde\m(\omega)\in \mathcal{B}^\varepsilon(\n^{\otimes n}(\omega))$. Due to the $\Theta$-invariance of $\n^{\otimes n}$, we can show that $\Theta(\widetilde{\m})(\omega)\in \mathcal{B}^\varepsilon(\n^{\otimes n}(\omega))$ and since $\Theta(\widetilde{\m})\in \mathrm{Perm}((A'B')^n,(AB)^n)$, we have 
\begin{align}
    \eqref{eq:one}&\ge -D_{\infty}(\Theta(\widetilde{\m})(\omega)\Vert \R\otimes\mathcal{K}^{\otimes n}(\omega))\\
    &=-D_{\infty}(\Theta(\widetilde{\m})(\omega)\Vert \Theta(\R\otimes\mathcal{K}^{\otimes n})(\omega))\\
    &\ge -D_{\infty}(\widetilde{\m}(\omega)\Vert \R\otimes\mathcal{K}^{\otimes n}(\omega))=\eqref{eq:two}.
\end{align}
The equality follows since the channel $\mathcal{R}^\pi_{A^n}\otimes\mathcal{K}^{\otimes n}$ is invariant under $\Theta$. This shows \eqref{eq:one}=\eqref{eq:two}. Using~\cite[Theorem 3]{ABJ+20}, for $\varepsilon'/2-\delta\le 1$ and $\delta\ge 0$, we get 
\begin{align}
      \eqref{eq:two}&=\sup_{\sigma\in \mathcal{B}^{\varepsilon'/2-\delta}(\n^{\otimes n}(\omega))}-D_\infty (\sigma\Vert \R_{{A'}^n\to A^n}\otimes {\mathcal{K}}^{\otimes n}(\omega))-\log\dfrac{8+\delta^2}{\delta^2}\\
      &\ge -D_\alpha(\n^{\otimes n}(\omega)\Vert \R_{{A'}^n\to A^n}\otimes {\mathcal{K}}^{\otimes n}(\omega)))\nonumber\\
      &\qquad -\log\dfrac{8+\delta^2}{\delta^2}-\frac{1}{\alpha-1}\log\frac{1}{(\varepsilon'/2-\delta)^2}-\log\frac{1}{1-(\varepsilon'/2-\delta)^2}\\
      &\ge -nD_\alpha[\n\Vert\R_{A'\to A}\otimes\mathcal{K}] +f(\varepsilon',\delta),
\end{align}
where $\alpha>1$ and we have used the results from~\cite[Theorem 3]{ABJT19}. Then, for $\alpha>1$, we have
\begin{align}
    S_{\infty}^\varepsilon[A^n|B^n]_{\n^{\otimes n}}\ge  -nD_\alpha[\n\Vert\R_{A'\to A}\otimes\mathcal{K}] +f(\varepsilon',\delta),\\
     \frac{1}{n} S_{\infty}^\varepsilon[A^n|B^n]_{\n^{\otimes n}}\ge  -D_\alpha[\n\Vert\R_{A'\to A}\otimes\mathcal{K}] +\frac{1}{n}f(\varepsilon',\delta).\label{eq:lb-asym-min-smth}
\end{align}
Taking the limit $n\to \infty$, then $\lim \alpha\to 1^+$ and noting that $\Phi^{\mathcal{K}}=\tr_{R_AA}\Phi^{\mathcal{N}}$ for a tele-covariant channel $\n$, we get
\begin{align}
   \lim_{n\to \infty}\frac{1}{n} S_{\infty}^\varepsilon [A^n|B^n]_{\mathcal{N}^{\otimes n}}  \geq  S(R_AA|R_BB)_{\Phi^{\mathcal{N}}}-\log|A'|=S[A|B]_\n.
\end{align}
\end{proof}

\subsection{Proof of \autoref{thm:asym-ns}}\label{app:ns-asym}
\begin{theorem*}[Asymptotic equipartition property]
    For $\varepsilon\in(0,1)$ and asymptotically many i.i.d.~uses of an arbitrary bipartite channel $\n_{A'B'\to AB}$, the smoothed conditional channel min-entropy satsifies the following bounds,
    \begin{align}
        \lim_{\varepsilon\to 0^+}\lim_{n\to\infty}\frac{1}{n}S^{\varepsilon}_{\infty}[A^n|B^n]_{\n^{\otimes n}}\leq S^{\not\to}[A|B]_{\n}=\inf_{\psi\in\St(RA'B')}S(A|RB)_{\n(\psi)}.
    \end{align}
    If $\n_{A'B'\to AB}$ is no-signaling, $\n\in\NS$, then
    \begin{align}
        \lim_{\varepsilon\to 0^+}\lim_{n\to\infty}\frac{1}{n}S^{\varepsilon}_{\infty}[A^n|B^n]_{\n^{\otimes n}}&\leq S[A|B]_{\n},\\
        \lim_{n\to\infty}\frac{1}{n}S^{\varepsilon}_{\infty}[A^n|B^n]_{\n^{\otimes n}}&\geq S[A|B]_{\n}.
    \end{align}
\end{theorem*}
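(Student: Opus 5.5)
The plan is to mirror the proof of \autoref{th:equipartition}, with the Choi-state bound replaced by the NS-type bound built from an arbitrary input state.

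\emph{General upper bound.} For any $\m\in\mathcal{B}^\varepsilon[\n^{\otimes n}]$ and any pure $\psi_{RA'B'}$, restricting the channel divergence to the input $\psi^{\otimes n}$ gives
\[
S_\infty[A^n|B^n]_\m\le -\inf_{\Q}D_\infty\bigl(\m(\psi^{\otimes n})\big\|\mathbbm{1}_{A^n}\otimes\Q(\psi^{\otimes n}_{R^nB'^n})\bigr)\le S_\infty(A^n|R^nB^n)_{\m(\psi^{\otimes n})},
\]
where the last step relaxes $\Q(\psi^{\otimes n})$ to an arbitrary state $\sigma_{R^nB^n}$. Since purified distance is monotone under the input, $\m(\psi^{\otimes n})\in\mathcal{B}^\varepsilon\bigl(\n(\psi)^{\otimes n}\bigr)$. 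Hence
\[
S^\varepsilon_\infty[A^n|B^n]_{\n^{\otimes n}}\le S^\varepsilon_\infty(A^n|R^nB^n)_{\n(\psi)^{\otimes n}}.
\]
The state AEP~\cite{TCR09} then gives $\lim_{\varepsilon\to0^+}\lim_{n\to\infty}\frac1n(\cdot)\le S(A|RB)_{\n(\psi)}$. Taking the infimum over $\psi$ and using Eq.~\eqref{eq:ns-ns} yields the first inequality. For $\n\in\NS$, the second inequality follows immediately, because $S^{\not\to}[A|B]_\n=S[A|B]_\n$ in that case~\cite[Theorem 5]{DGP24}.

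\emph{Lower bound for $\n\in\NS$.} Here I would reuse the de Finetti/postselection argument from the proof of \autoref{th:equipartition}, with one change of side channel. Instead of $\mathcal{K}$ built from $\pi_{A'}$, I would fix $\Q=(\Q^\n)^{\otimes n}$, where $\Q^\n$ is the reduced channel of Eq.~\eqref{eq:ns-con1}. Since $\R\otimes\Q^\n=\R_{A\to A}\circ\n$ by Eq.~\eqref{eq:ns-r-map}, the comparison channel $\R^{\otimes n}\otimes(\Q^\n)^{\otimes n}$ is a tensor power and is permutation invariant. The same steps then apply in order:
\begin{itemize}
\item[(i)] restrict to permutation-covariant $\m$ via the twirling superchannel $\Theta$;
\item[(ii)] evaluate $D_\infty$ on the purified de Finetti state $\omega$, which is full rank;
\item[(iii)] convert the channel $\varepsilon$-ball to a state $\varepsilon'$-ball using~\cite{CKR09};
\item[(iv)] lift an optimal state back to a channel via~\cite[Lemma 10]{FWT+18} and re-twirl;
\item[(v)] bound smoothed $D_\infty$ by $D_\alpha$, $\alpha>1$, as in~\cite{ABJ+20}.
\end{itemize}
This yields
\[
\frac1n S^\varepsilon_\infty[A^n|B^n]_{\n^{\otimes n}}\ge -\frac1n D_\alpha\bigl[\n^{\otimes n}\big\|(\R\otimes\Q^\n)^{\otimes n}\bigr]+\frac1n f(\varepsilon',\delta).
\]

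\emph{Main obstacle.} The crux is single-letterizing the $D_\alpha$ term. For $\alpha>1$ the sandwiched channel divergence is additive when the second argument has the form $\R_{A\to A}\circ\n$, because $D_\alpha[\n\|\R_{A\to A}\circ\n]=-S^{\not\to}_\alpha[A|B]_\n$ and the NS Rényi conditional entropy is additive~\cite{DGP24}. Equivalently, one can view this as a conditional-entropy-type quantity $-\inf_\psi S^{\downarrow}_\alpha(A|RB)$, which is additive by standard duality and multiplicativity arguments. This gives the bound $-nD_\alpha[\n\|\R\otimes\Q^\n]=nS^{\not\to}_\alpha[A|B]_\n$.

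Finally, let $n\to\infty$ (the $\frac1n f$ term vanishes, provided $\varepsilon'$ stays admissible as in the tele-covariant proof), then $\alpha\to1^+$. Continuity $S^{\not\to}_\alpha\to S^{\not\to}$ holds in finite dimensions. Combined with $S^{\not\to}[A|B]_\n=S[A|B]_\n$ for $\n\in\NS$, this gives the third inequality.

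If the additivity step is problematic, the fallback is to bound $D_\alpha$ on the de Finetti input directly: decompose $\omega=\int\sigma^{\otimes n}d\mu$ and use quasi-convexity together with an $n^{O(1)}$ postselection factor. This reduces to i.i.d.\ inputs, whose state divergence is additive.
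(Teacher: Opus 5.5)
\textbf{Comparison.} Your treatment of the second and third inequalities follows the paper. The side channel $(\Q^\n)^{\otimes n}$ is exactly the reduced channel the paper uses: for $\n\in\NS$ it coincides with $\mathcal{K}(\cdot)=\tr_A\circ\n(\pi_{A'}\otimes\cdot)$, so nothing actually changes there. It is followed by the same de Finetti/postselection chain, the limit $\alpha\to1^+$, and $S^{\not\to}[A|B]_\n=S[A|B]_\n$.

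For the first inequality you take a genuinely different, more direct route. The paper obtains it only implicitly. \autoref{prop:asy-form-lb} gives $\frac12(\log|A|-S^{\not\to}[A|B]_\n)\le\mathrm{Dist}^{(\infty,0)}(\n,\rp)$ via a Petz--R\'enyi bound on $D_H^{\varepsilon^2}$ at i.i.d.\ inputs. Combining this with \eqref{eq:asym-dist-cost-ineq} and the one-shot cost formula of \autoref{cor:one-shot-con-purity-rates} forces $\lim_{\varepsilon\to0^+}\lim_{n\to\infty}\frac1n S^\varepsilon_\infty[A^n|B^n]_{\n^{\otimes n}}\le S^{\not\to}[A|B]_\n$.

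You instead restrict the channel divergence to $\psi^{\otimes n}$, relax the side channel to an arbitrary conditioning state, note that $\m(\psi^{\otimes n})$ lies in the state $\varepsilon$-ball, and invoke the state AEP \cite{TCR09}. This is the natural generalization of the paper's proof of \eqref{tp:eq1}, with an arbitrary i.i.d.\ input and the reference moved into the conditioning. Your argument is self-contained and purely entropic, and it depends neither on \autoref{th:dist-for-cost} nor on the distillation--cost comparison \eqref{eq:asym-dist-cost-ineq}. The paper's route gets the bound as a by-product of operational statements it needs anyway for \autoref{thm:asym-ns}.

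\textbf{Caveats on the lower bound.} The crux you single out is real, but your justification for it is not the right one.

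\emph{Additivity at $\alpha>1$.} The paper does not take this from \cite{DGP24}. The only additivity it invokes comes from \cite{FKR+26}, stated for the optimized quantity $\inf_\rho S_\alpha(A|RB)$ (with $S_\alpha=S_\alpha^{\uparrow}$ in the paper's notation). By contrast, $-D_\alpha[\n\|\R_{A\to A}\circ\n]=\inf_\psi S^{\downarrow}_\alpha(A|RB)_{\n(\psi)}$. In its own NS proof the paper simply writes the single-letter $D_\alpha$ without justification. At $\alpha=1$, additivity for $\n\in\NS$ is elementary: the chain rule splits $S(A_1A_2|RB_1B_2)$ into two single-copy terms, because $B_2$ arises from $B_2'$ through $\Q^\n$ alone. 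That alone does not give $\lim_{\alpha\to1^+}\limsup_{n\to\infty}\frac1n D_\alpha[\n^{\otimes n}\|(\R_{A\to A}\circ\n)^{\otimes n}]\le -S^{\not\to}[A|B]_\n$, since the two limits would have to be exchanged. You therefore need a precise statement here.

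\emph{Your fallback.} As written, it does not supply that statement. The state $\omega$ in step (ii) must be the purified de Finetti state, so $\n^{\otimes n}(\omega)$ is not a mixture of the $\n(\sigma)^{\otimes n}$ and quasi-convexity does not apply. Discarding the purifying register lowers $D_\alpha$, which is the wrong direction for an upper bound.

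\emph{Admissibility of $\varepsilon'$.} The point you leave open is handled by using postselection in the state-to-channel direction. Smooth on the de Finetti state with radius of order $\varepsilon^2/\mathrm{poly}(n)$, so that the lifted and twirled channel lies in $\mathcal{B}^\varepsilon[\n^{\otimes n}]$. This adds only $O(\log n)/(\alpha-1)$, which vanishes after dividing by $n$.
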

\begin{proof}
        We prove Eq.~\eqref{eq:dist-ns-cap} in parts. Let us observe that $S^{\rm reg}[A|B]_{\n}$ for a channel $\n\in\NS$ is additive. It follows because, for any quantum channel $\m_{C'_i\to A_iB_i}$, $i\in\{1,\ldots,n\}$, we have for $\alpha\geq 1$~\cite{FKR+26}
    \begin{equation}
        \inf_{\rho\in\St(R{C'}^n)}S_{\alpha}(A^n|RB^n)_{\m^{\otimes n}(\rho)}=\sum_i\inf_{\rho\in\St(R_iC'_i)}S_{\alpha}(A_i|R_iB_i)_{\m(\rho)}.
    \end{equation}
That makes $S^{\not\to}[A|B]_{\n}$ to be additive under tensor-product of bipartite quantum channels which in turn makes $S[A|B]_{\n}$ to be additive under tensor-product of bipartite quantum channels that are no-signaling from $A'\to B$. In particular, for quantum channels $\n^i_{A_i'B_i'\to A_iB_i}$, such that $\n^i\in{\rm S}_{A_i'\to B_i}$ for $i\in\{1,\ldots,n\}$, then
\begin{equation}
    S[A^n|B^n]_{\otimes_i\n^i}=S^{\not\to}[A^n|B^n]_{\otimes_i\n^i}=\sum_{i=1}^nS^{\not\to}[A_i|B_i]_{\n^i}=\sum_{i=1}^n\inf_{\psi\in\St(R_iA_i'B_i')}S(A_i|R_iB_i)_{\n^i}.
\end{equation}
Therefore, for $\n\in\NS$, we have
\begin{align}
      {\rm Dist}^{(\infty,0)}(\n,\mathcal{R}^{\pi})&\leq \frac{1}{2}\left(\log|A|-\lim_{n\to\infty}\frac{1}{n}S[A^n|B^n]_{\n^{\otimes n}}\right)\\
      &= \frac{1}{2}\left(\log|A|-\lim_{n\to\infty}\frac{1}{n}nS[A|B]_{\n}\right)\\
      &=\frac{1}{2}(\log|A|-S[A|B]_{\n})=\frac{1}{2}(\log|A|-S^{\not\to}[A|B]_{\n}),
\end{align}
which concludes the proof for the conditional distillation capacity because $  {\rm Dist}^{(\infty,0)}(\n,\mathcal{R}^{\pi})\geq \frac{1}{2}(\log|A|-S^{\not\to}[A|B]_{\n})$ holds for arbitrary $\n$.

For an arbitrary quantum channel $\n_{A'B'\to AB}$ and its reduced channel $\Q_{B'\to B}$, where for all states $\rho_{A'}$,
\begin{equation}
    \Q_{B'\to B}(\cdot)=\tr_A\circ\n_{A'B'\to AB}\circ\mathcal{R}^\pi_{A'\to A}(\rho_{A'}\otimes\cdot),
\end{equation}
we have from Eq.~\eqref{eq:asym-dist-lb},
for $\alpha>1$,
\begin{align}
     \frac{1}{n} S_{\infty}^\varepsilon[A^n|B^n]_{\n^{\otimes n}}\ge  -D_\alpha[\n\Vert\R_{A'\to A}\otimes\mathcal{Q}] +\frac{1}{n}f(\varepsilon',\delta).\label{eq:asym-dist-lb2}
\end{align}
We note that $\tr_{RA}({\mathcal{N}(\psi_{RA'B'})}=\Q(\psi_{RB'})$, for all input states $\psi_{RA'B'}$ and $\psi_{RB'}=\tr_{A'}(\psi_{RA'B'})$, if $\n\in\NS$. For $\n\in\NS$, taking the limit $n\to \infty$ and then $\lim \alpha\to 1^+$ in Eq.~\eqref{eq:asym-dist-lb2}, we get
\begin{equation}
     \frac{1}{n} S_{\infty}^\varepsilon[A^n|B^n]_{\n^{\otimes n}}\ge S[A|B]_{\n}=\inf_{\psi\in\St(RA'B')}S(A|RB)_{\n(\psi)}.
\end{equation}
That is, for $\n\in\NS$,
\begin{equation}
    {\rm Cost}^{(\infty,0)}(\n,\mathcal{R}^\pi)\leq \frac{1}{2}(\log|A|-S^{\not\to}[A|B]_{\n}),
\end{equation}
concluding the proof as $  {\rm Dist}^{(\infty,0)}(\n,\mathcal{R}^\pi)\leq{\rm Cost}^{(\infty,0)}(\n,\mathcal{R}^\pi)$.
\end{proof}

\subsection{Proof of \autoref{thm:SwapBound}}\label{app:proof-SwapBound}
The swap operation on a bipartite system simply exchanges the subsystems, such that the action of the corresponding unitary is given by $U^{\rm SWAP}_{A'B'\to AB}(\ket{i}_{A'}\ket{j}_{B'})=\ket{j}_A\ket{i}_B$. The Choi state of the swap operation is maximally entangled across the $R_AA:R_BB$ partition. We refer to any channel as a swap-like or maximally entangling channel whose Choi state is equivalent to $\Phi_{R_AA:R_BB}$. Below we provide the dimensional constraints between the input and output systems under any isometry that maps two maximally entangled states.
\begin{lemma}\label{lemma:max_ent}
Given two maximally entangled states $\Phi_{A:B}$ and $\Phi_{A:C}$, there exists  $\mathcal{E}_{B\to C}\in \mathrm{Ch}(B,C)$ such that $\mathcal{E}_{B\to C}(\Phi_{A:B})=\Phi_{A:C}$, only if either $|A|\le |B|\le |C|$, or $|A|\ge |B|=|C|$.
\end{lemma}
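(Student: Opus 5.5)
The plan is to compare reduced states on $A$. A channel $\mathcal{E}_{B\to C}$ acts only on $B$, so it cannot change the $A$-marginal:
\begin{equation}
\tr_C\big(\id_A\otimes\mathcal{E}_{B\to C}(\Phi_{A:B})\big)=\tr_B\Phi_{A:B}.
\end{equation}
First step: for the states $\Phi_{RA'}=\frac{1}{d}\Gamma_{RA'}$ with $d=\min\{|R|,|A'|\}$, as defined in the preliminaries, the $A$-marginal is $\frac{1}{d}$ times a rank-$d$ projector. Hence $\Phi_{A:B}$ has $A$-marginal of rank $\min\{|A|,|B|\}$, and $\Phi_{A:C}$ has $A$-marginal of rank $\min\{|A|,|C|\}$. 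If $\mathcal{E}(\Phi_{A:B})=\Phi_{A:C}$, the two marginals must coincide, so in particular
\begin{equation}
\min\{|A|,|B|\}=\min\{|A|,|C|\}.
\end{equation}
An alternative route to the same rank condition is to use $S^{\downarrow}_\infty(A|B)_\psi=-\log{\rm rank}(\psi_A)$ for pure states. Together with monotonicity of the conditional min-entropy under channels on the conditioning system, this gives ${\rm rank}$ inequalities in both directions.

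Second step, a case split. If $|A|\ge |B|$, then $|B|=\min\{|A|,|C|\}$. Either $|C|\le |A|$, which forces $|C|=|B|$ and gives the branch $|A|\ge|B|=|C|$. Or $|C|>|A|$, which forces $|B|=|A|$; then $|A|=|B|\le|C|$ is covered by the first branch. If $|A|<|B|$, the rank condition gives $|A|=\min\{|A|,|C|\}$, i.e.\ $|C|\ge|A|$.

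The remaining and hardest step is upgrading $|C|\ge |A|$ to $|C|\ge |B|$ in the case $|A|<|B|$. The marginal argument alone does not give this. If $\Phi_{A:B}$ is the standard state supported on a $|A|$-dimensional subspace of $B$, an isometry from that subspace into $C$ (completed to a channel on the complement) already works whenever $|C|\ge|A|$. So I would first try to make precise the convention that "maximally entangled" on $A:B$ means Schmidt rank $|B|$ is used fully. This is the situation in the swap application: the Choi state of a swap-like operation has full-rank marginal on the smaller side. Under that reading one has $|A|\ge |B|$, or else full support on $B$ is required, and applying the rank argument to the $B$-side marginal together with trace preservation should give $|B|\le|C|$.

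If that convention cannot be justified, I would restate the lemma with the condition $\min\{|A|,|B|\}=\min\{|A|,|C|\}$. This is precisely what the marginal argument proves, and it appears to be all that the subsequent proof of \autoref{thm:SwapBound} needs.
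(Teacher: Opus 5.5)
Your first step is correct and sharper than what the paper uses. The paper only invokes that a local map cannot increase Schmidt rank, i.e.\ $\min\{|A|,|B|\}\ge\min\{|A|,|C|\}$. Invariance of the $A$-marginal gives equality, and your case split then settles $|A|\ge|B|$ with no further input. (The data-processing route via $S^{\downarrow}_\infty$ gives only the one-sided inequality $\min\{|A|,|C|\}\le\min\{|A|,|B|\}$, which is the paper's Schmidt-rank step, not ``both directions''.)

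The case you could not close is not a gap in your argument. Under the paper's definition $\Phi_{RA'}=\frac{1}{d}\Gamma_{RA'}$ with $d=\min\{|R|,|A'|\}$, your isometry-on-the-support construction is a counterexample to the lemma. Concretely, let $|A|=|C|=2$, $|B|=3$, and let $\mathcal{E}_{B\to C}$ have Kraus operators $K_1=\ket{0}\bra{0}+\ket{1}\bra{1}$ and $K_2=\ket{0}\bra{2}$. Then $K_1^\dagger K_1+K_2^\dagger K_2=\mathbbm{1}_B$, $K_2$ annihilates the support of $\Phi_{A:B}$, and $\id_A\otimes\mathcal{E}(\Phi_{A:B})=\Phi_{A:C}$. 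Yet neither $|A|\le|B|\le|C|$ nor $|A|\ge|B|=|C|$ holds.

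The paper's proof obtains $|B|\le|C|$ from ``the Choi state of $\mathcal{E}$ is pure, so $\mathcal{E}$ is an isometry''. However, $\id_A\otimes\mathcal{E}(\Phi_{A:B})$ is (up to a local isometry on $A$) the Choi state of $\mathcal{E}$ only when $\Phi_{A:B}$ has full support on $B$, i.e.\ $|A|\ge|B|$. For $|A|<|B|$, purity of the output constrains $\mathcal{E}$ only on the $|A|$-dimensional support, exactly as you observed. So the paper closes that case with an invalid step.

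You should therefore drop the hedge and the convention route. Requiring full support on $B$ contradicts the preliminaries, which define $\Phi$ precisely so that $|R|\neq|A'|$ is allowed. It would make the case $|A|<|B|$ vacuous rather than proved. Imposed on $\Phi_{A:C}$ as well, it would collapse the lemma's first branch to $|A|=|B|=|C|$.

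State instead what your argument actually proves, which is in fact a characterization. With both states built on the same vectors $\ket{i}_A$, as in the paper's proof, a channel with $\id_A\otimes\mathcal{E}(\Phi_{A:B})=\Phi_{A:C}$ exists if and only if $\min\{|A|,|B|\}=\min\{|A|,|C|\}$.
\begin{itemize}
\item Necessity is your marginal argument.
\item For sufficiency, let $d$ be the common value. Send $\ket{i}_B\mapsto\ket{i}_C$ for $i<d$, and add Kraus operators $\ket{0}_C\bra{k}_B$ for $d\le k<|B|$.
\end{itemize}
The lemma's disjunction implies this equality but is strictly stronger, and fails exactly when $|A|\le|C|<|B|$.

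Your remark about the downstream use is also right. Where the lemma is invoked (with systems relabelled) in the proof of \autoref{thm:SwapBound}, only the resulting value $d=\min\{|A|,|B|\}$ is used, and the corrected equality delivers it.
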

\begin{proof}
Since the Choi state of the map $\mathcal{E}_{\mathcal{B\to C}}$ is a pure state, it must be an isometry, and hence  $|B|\le |C|$ . Let $\{a_i\}_i $, $\{b_j\}_j$ and $\{c_k\}_k$ be orthonormal basis of the Hilbert spaces $A$, $B$ and $C$ respectively. We can write the maximally entangled in $AB$ state $\Phi_{A:B}=\dfrac{1}{d}\sum_{i,j}\ket{a_i,b_i}\bra{a_j,b_j}$, where $d=\min\{|A|, |B|\}$. Let $\mathcal{E}_{B\to C}$ be a channel that maps $\Phi_{A:B}$ to $\Phi_{A:C}$, that is 
    \begin{equation}
        \mathcal{E}_{B\to C}\big(\dfrac{1}{d}\sum_{i.j}^d \ket{a_i,b_i}\bra{a_j,b_j}\big)=\dfrac{1}{d'}\sum_{k,l}^{d'} \ket{a_k,c_k}\bra{a_l,c_l},
    \end{equation}
where, $d'=\min\{|A|, |C|\}$. Since a local map cannot increase the Schmidt rank of a state, we must have $d\ge d'$. Combining this with the observation $|B|\le|C|$, if $|A|\ge |B|$, we must have $|C|=|B|$, else we must have $|A|\le |B|$.\end{proof}
Now we proceed to prove the main theorem.
\begin{theorem*} The conditional min-entropy of a bipartite quantum channel $\n_{A'B'\to AB}$ achieves the minimum if and only if the channel is a maximally entangling unitary operation. For a quantum channel $\n_{A'B'\to AB}\in{\rm SWAP}(A;B)$, we have 
\begin{equation}
    S_{\infty}[A|B]_{\n\in\mathrm{SWAP}(A;B)}=-\log \min\{|A'|^2|A|,|A'||B'||B|\}.
\end{equation}
\end{theorem*}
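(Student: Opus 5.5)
The argument runs through the two-sided bound of \autoref{thm:min-ent-bounds},
\[
S^{\downarrow}_{\infty}(R_AA|R_BB)_{\Phi^{\n}}-\log|A'|\le S_{\infty}[A|B]_{\n}\le S_{\infty}(R_AA|R_BB)_{\Phi^{\n}}-\log|A'|,
\]
together with the saturation conditions for the state conditional min-entropies. Write $d:=\min\{|A'||A|,|B'||B|\}$ for the smaller of the two local dimensions of the Choi state across $R_AA:R_BB$. Then $-\log d-\log|A'|=-\log\min\{|A'|^2|A|,|A'||B'||B|\}$, which is exactly the bound in \eqref{eq:dim-bound}.

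\emph{Achievability (``if'').} Let $\n\in{\rm SWAP}(A;B)$. By definition $\Phi^{\n}$ is a pure maximally entangled state of Schmidt rank $d$ across $R_AA:R_BB$. For a pure state, $S_{\infty}(A|B)_\psi=-2\log\tr\sqrt{\psi_B}$ by \eqref{eq:SminupPure}. Here $\Phi^{\n}_{R_BB}$ is a projector of rank $d$ scaled by $1/d$, so $\tr\sqrt{\Phi^{\n}_{R_BB}}=\sqrt d$ and $S_{\infty}(R_AA|R_BB)_{\Phi^\n}=-\log d$. The upper bound of \autoref{thm:min-ent-bounds} then gives $S_{\infty}[A|B]_\n\le -\log d-\log|A'|$. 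The lower bound of \autoref{prop:entropybound1} gives the reverse inequality, so equality holds. (One can also use the duality $S^\downarrow_\infty(A|B)_\psi=-\log{\rm rank}(\psi_A)=-\log d$, so the two sides of \autoref{thm:min-ent-bounds} coincide.)

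\emph{Necessity (``only if'').} Suppose $S_{\infty}[A|B]_\n=-\log d-\log|A'|$. The lower bound of \autoref{thm:min-ent-bounds} forces
\[
S^{\downarrow}_{\infty}(R_AA|R_BB)_{\Phi^\n}\le -\log d .
\]
The chain in the proof of \autoref{prop:entropybound1} gives $S^\downarrow_\infty\ge-\log\min\{{\rm rank}\,\Phi^\n_{R_AA},{\rm rank}\,\Phi^\n_{R_BB}\}\ge -\log d$, so every inequality in it is tight. I would then prove a state lemma: if $S^\downarrow_\infty(A|B)_\rho=-\log\min\{|A|,|B|\}$, then $\rho$ is pure and maximally entangled of full Schmidt rank $\min\{|A|,|B|\}$.

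For this lemma, assume without loss of generality $|A|\ge|B|$, and let $\lambda:=2^{-S^\downarrow_\infty}$. The operator inequality $\rho_{AB}\le\lambda\,\mathbbm{1}_A\otimes\rho_B$, applied to vectors $\ket{v}_A\otimes\ket{e_i}_B$ with $\ket{e_i}$ eigenvectors of $\rho_B$, gives $\lambda\ge|B|$ always. Tightness requires $\rho_B=\pi_B$ and $\rho_{AB}$ to be a rank-one projection after scaling by $\rho_B^{-1/2}$. Concretely, $\lambda=\|(\mathbbm{1}\otimes\rho_B^{-1/2})\rho(\mathbbm{1}\otimes\rho_B^{-1/2})\|_\infty$, and this operator has trace equal to ${\rm rank}\,\rho_B$. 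So $\lambda\ge{\rm rank}\,\rho_B$, with equality iff that operator is rank one. Rank one means $\rho$ is pure. Equality $\lambda=|B|$ then forces ${\rm rank}\,\rho_B=|B|$, so $\rho$ is pure with full-rank marginal. Finally, the eigenvalue of $\rho_B^{-1/2}\psi\rho_B^{-1/2}$ equals $\sum_i p_i/p_i$ over the Schmidt coefficients, which is automatically the rank. Maximal entanglement must therefore come from combining this with the upper bound of \autoref{thm:min-ent-bounds}: $S_\infty(R_AA|R_BB)_{\Phi^\n}-\log|A'|\ge S_\infty[A|B]_\n=-\log d-\log|A'|$. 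For a pure state, $-2\log\tr\sqrt{\psi_B}\le-\log d$ holds iff $\tr\sqrt{\psi_B}\ge\sqrt d$. Since $\tr\sqrt{\psi_B}\le\sqrt{d}$ with equality iff the spectrum is flat, $\Phi^\n$ is maximally entangled of Schmidt rank $d$.

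\emph{From Choi state to channel.} A pure Choi state means $\n$ is an isometry channel $V_{A'B'\to AB}$. Maximal entanglement of rank $d$ across $R_AA:R_BB$ is, by definition, membership in ${\rm SWAP}(A;B)$. I would use \autoref{lemma:max_ent} to check dimensional consistency, e.g.\ that the isometry is unitary when $|A'||B'|=|A||B|$, and to match the paper's description of swap-like operations as swaps with local pre-unitaries and post-isometries.

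\textbf{Main obstacle.} The delicate step is showing that saturation of the lower bound forces purity of $\Phi^\n$ and full Schmidt rank $d$. The $S^\downarrow_\infty$ bound only controls ranks, so flatness of the spectrum has to come from the upper-bound side. I will argue carefully that $S_\infty[A|B]_\n$ sits between the two Choi-state quantities and that both are pinned to $-\log d$ only for maximally entangled pure states.
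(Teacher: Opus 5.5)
Your achievability direction is fine. It is essentially the paper's squeeze between the two bounds of \autoref{thm:min-ent-bounds}. You are also right that $S^{\downarrow}_{\infty}$ cannot detect flatness: for a pure state it equals $-\log$ of the Schmidt rank. So the equivalence asserted in the paper's closing sentence, that $S^{\downarrow}_{\infty}(R_AA|R_BB)_{\Phi^{\mathcal{N}}}=-\log d$ only for maximally entangled $\Phi^{\mathcal{N}}$, fails: every pure Choi state of Schmidt rank $d$ attains it.

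\textbf{The flatness step has a genuine gap.} Combining $S_{\infty}[A|B]_{\mathcal{N}}\le S_{\infty}(R_AA|R_BB)_{\Phi^{\mathcal{N}}}-\log|A'|$ with $S_{\infty}[A|B]_{\mathcal{N}}=-\log d-\log|A'|$ gives only $S_{\infty}(R_AA|R_BB)_{\Phi^{\mathcal{N}}}\ge-\log d$, i.e.\ $\tr\sqrt{\Phi^{\mathcal{N}}_{R_BB}}\le\sqrt d$. Every state satisfies this. In the next line you reversed it to $\ge\sqrt d$. At the minimum only the lower Choi-state bound is pinned, so no argument using just the two state quantities of \autoref{thm:min-ent-bounds} can force a flat Schmidt spectrum.

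\textbf{The state lemma also needs repair.} The correct inequality is $\lambda=\|X\|_\infty\le\tr X=\mathrm{rank}\,\rho_B$, not $\ge$; and ``$\lambda\ge|B|$ always'' is false, e.g.\ for product states. Moreover, ``without loss of generality $|A|\ge|B|$'' is not harmless, since $S^{\downarrow}_{\infty}(A|B)$ is not symmetric. Purity follows only when the conditioning side is the smaller one.

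\textbf{The missing ingredient is the constraint $\tr_B\Phi^{\mathcal{Q}}_{R_BB}=\pi_{R_B}$ on the side channel.} When $|B'||B|\le|A'||A|$ it suffices, as follows.
\begin{itemize}
\item The corrected lemma makes $\Phi^{\mathcal{N}}$ pure, with $P:=\Phi^{\mathcal{N}}_{R_BB}$ of full rank $d$. Then $2^{-S_{\infty}[A|B]_{\mathcal{N}}}/|A'|=\inf_{\mathcal{Q}}\tr\bigl(P(\Phi^{\mathcal{Q}})^{-1}\bigr)$.
\item The choice $\Phi^{\mathcal{Q}}=P$ is feasible with value $d$.
\item For Hermitian $\Delta$ with $\tr_B\Delta=0$, one has $\tr\bigl(P(P+t\Delta)^{-1}\bigr)=d-t\tr(P^{-1}\Delta)+O(t^2)$. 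Minimality therefore forces $P^{-1}=Y_{R_B}\otimes\mathbbm{1}_B$ for some Hermitian $Y_{R_B}$.
\item Together with $\tr_BP=\pi_{R_B}$, this gives $P=\mathbbm{1}_{R_BB}/d$, which is maximal entanglement.
\end{itemize}

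\textbf{When $|A'||A|<|B'||B|$, no proof can exist, because the ``only if'' is false there.} The paper's uniqueness sentence, which rests on the same $S^{\downarrow}_{\infty}$ equivalence, overlooks this. Take $|A'|=|A|=|B'|=2$ and $B=B_1B_2$ with $|B_1|=|B_2|=2$. Let $\mathcal{N}$ send $A'\to B_1$ and $B'\to A$, and append $\pi_{B_2}$. Its Choi state $\Phi_{R_AB_1}\otimes\Phi_{R_BA}\otimes\pi_{B_2}$ is mixed and $\mathcal{N}$ is not unitary. Yet both bounds of \autoref{thm:min-ent-bounds} equal $-\log 4-\log 2=-\log\min\{|A'|^2|A|,|A'||B'||B|\}$, so $\mathcal{N}$ attains the minimum.
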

\begin{proof}
We will first show that the upper and lower bound of $S_{\infty}[A|B]_\n$ as given in \autoref{thm:min-ent-bounds} coincide at $\Phi^\n$ if it is a maximally entangled state, that is 
\begin{equation}
   \min_\n S^{\downarrow}_{\infty}(R_AA|R_BB)_{\Phi^{\mathcal{N}}} =\min_\n S_{\infty}(R_AA|R_BB)_{\Phi^{\mathcal{N}}} =S^{\downarrow}_{\infty}(R_AA|R_BB)_{\Phi^{\mathrm{SWAP}_{A:B}}} 
\end{equation} Consider the SDP for $ S_{\infty}(A|B)_{\rho_{AB}}$ which can be written as~\cite{KRS09}
\begin{equation}\label{eq:sminup_bound}
        S_{\infty}(A|B)_{\rho_{AB}}=-\log \sup_{X_{AB}}\{\tr(\rho_{AB} X_{AB}): \tr_A(X_{AB})= \mathbbm{1}_{B}, X_{AB}\ge 0\}.  
    \end{equation}
We can write $X_{AB}=\mathcal{E}_{B'\to A}(\Gamma_{BB'})$ as a Choi matrix of a channel $\mathcal{E}_{B'\to A}$, which gives us 
\begin{equation}
\begin{aligned}
   S_{\infty}(A|B)_{\rho_{AB}}&=-\log |d|-\log \sup_{\mathcal{E}_{B'\to A}}\{\tr(\rho_{AB}\mathcal{E}_{B'\to A} (\Phi_{BB'})): \mathcal{E}_{B'\to A} \in \mathrm{Ch}(B',A)\}
\end{aligned}    
\end{equation}
Here $d=\min\{|B|,|B'|\}$. Note that for density matrices $\rho$ and $\sigma$, $\tr(\rho\sigma)\le 1$ and it is maximum iff $\rho=\sigma=\ket{\psi}\bra{\psi}$. Therefore, we have 
\begin{equation}
     S_{\infty}(A|B)_{\rho_{AB}}\ge -\log d,
\end{equation}
and the bound is achieved iff $\rho_{AB}=\mathcal{E}_{B'\to A} (\Phi_{BB'})=\Phi_{AB}$. From \autoref{lemma:max_ent} we see that this is possible in two cases: case 1: $|B|\le |B'|\le |A|$, which implies $d=|B|$ and case 2: $|B|\ge |B'|=|A|$, which implies $d=|A|$. Hence,
\begin{equation}
    \min_{\rho_{AB}} S_{\infty}(A|B)_{\rho_{AB}}= -\log d= \begin{cases}
      -\log|B| & \text{if $|B|\le |A|$}.\\
      -\log|A| & \text{if $|A|\le |B|$}.
    \end{cases} 
\end{equation}
Equivalently,
\begin{equation}
    \min_{\rho_{AB}} S_{\infty}(A|B)_{\rho_{AB}}= S_{\infty}(A|B)_{\Phi_{AB}}=-\log\min\{|A|,|B|\}.
\end{equation}
For $S_{\infty}^\downarrow(A|B)_{\rho}$, the SDP can be written as 
\begin{equation}\label{eq:smindown_bound}
       S_{\infty}^\downarrow(A|B)_{\rho_{AB}}=-\log\sup_{X_{AB}}\{\tr(\rho_{AB} X_{AB}): \tr((\mathbbm{1}_A \otimes\rho_{B})X_{AB})= 1, X_{AB}\ge 0\}. 
    \end{equation}
Here, $X_{AB}$ being a positive operator, can be written as a Choi matrix of a CP map: $X_{AB}=\mathcal{E}_{A'\to B}(\Gamma_{AA'})$. This makes the dual map $\mathcal{E}^*_{B\to A'}$ a CP map such that $\mathcal{E}^*_{B\to A'}(\rho_B)$ is a density matrix. This also implies that $\mathcal{E}^*_{B\to A'}(\rho_{AB})$ is also a density matrix. We have
\begin{equation}
\begin{aligned}
        \min_{\rho_{AB}} S_{\infty}^\downarrow(A|B)_{\rho_{AB}}&=-\log\sup_{\rho_{AB}}\{|d|\tr(\mathcal{E}^*_{A\to B'}(\rho_{AB})\Phi_{AA'})\}\\
        &=  S_{\infty}^\downarrow(A|B)_{\Phi_{AB}}\\
        &= -\log\min \{|A|,|B| \}.
    \end{aligned}
    \end{equation}
From Theorem\ \ref{thm:min-ent-bounds} we have the identity
\begin{align}
    S^{\downarrow}_{\infty}(R_AA|R_BB)_{\Phi^{\mathcal{N}}}-\log|A'|  \leq S_{\infty}[A|B]_{\mathcal{N}} & \leq  S_{\infty}(R_AA|R_BB)_{\Phi^{\mathcal{N}}}-\log|A'|.
    \end{align}
using the above results, we have
\begin{equation}
    \begin{aligned}
        \min_\n S^{\downarrow}_{\infty}(R_AA|R_BB)_{\Phi^{\mathcal{N}}} &\le S^{\downarrow}_{\infty}(R_AA|R_BB)_{\Phi^{\mathrm{SWAP}_{A:B}}}\\
        &= S^{\downarrow}_{\infty}(R_AA|R_BB)_{\Phi_{R_AA:R_BB}}\\
        &= -\log\min\{|A'||A|,|B'||B|\}\\
        &\le   \min_\n S^{\downarrow}_{\infty}(R_AA|R_BB)_{\Phi^{\mathcal{N}}}
    \end{aligned}
\end{equation}
Therefore, for swap-like operations $\n=\mathrm{SWAP}_{A:B}$, where $\Phi^\n=\Phi_{R_AA:R_BB}$, we have
\begin{align}
  \inf_\n S^{\downarrow}_{\infty}(R_AA|R_BB)_{\Phi^{\mathcal{N}}} = -\log \min\{|A'||A|,|B'||B|\}  
\end{align}
and similarly 
\begin{align}
    \inf_\n S_{\infty}(R_AA|R_BB)_{\Phi^{\mathcal{N}}} = -\log \min\{|A'||A|,|B'||B|\}
\end{align}
Since the upper and lower bounds meet, we have the following relation
\begin{equation}
   \inf_\n S_{\infty}[A|B]_{\mathcal{N}}=S_{\infty}[A|B]_{\mathrm{SWAP}_{A:B}}=-\log|A'|-\log \min\{|A'||A|,|B'||B|\}.
\end{equation}
The uniqueness follows since $S^{\downarrow}_{\infty}(R_AA|R_BB)_{\Phi^\n}=-\log \min\{|A'||A|,|B'||B|\}$ if and only if $\Phi^\n=\Phi_{R_AA:R_BB}$.
\end{proof}\
\section{Some results on two-qubit channels}\label{app:c4maps}
\begin{lemma}\label{lemma:unitaryroot}
 Given a unitary channel $\mathcal{U}_{A'B'\to AB}$, with the Choi state $\Phi^\U$ such that $|B'|=|B|=2$, the state $\rho_{R_BB}:= \frac{1}{\tr\sqrt{\Phi^\U_{R_BB}}}\sqrt{\Phi^\U_{R_BB}}$ is a Choi state of a unital qubit channel.  
\end{lemma}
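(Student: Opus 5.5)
The plan is to show that the normalized operator $\rho_{R_BB}=\sqrt{\omega}/\tr\sqrt{\omega}$, where $\omega:=\Phi^\U_{R_BB}$, has both single-qubit marginals equal to $\pi$. Having $\rho_{R_B}=\pi_{R_B}$ makes $\rho$ the (normalized) Choi state of a trace-preserving CP map $B'\to B$. Having $\rho_B=\pi_B$ makes that map unital.

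\emph{Step 1: marginals of $\omega$.} Since $\U$ is trace preserving, $\tr_{R_AAB}\,\U(\Phi_{R_AA'}\otimes\Phi_{R_BB'})=\pi_{R_B}$, so $\omega_{R_B}=\pi_{R_B}$. Since $\U$ is a unitary channel with $|A'||B'|=|A||B|$, it is unital. Hence $\Phi^\U_{AB}=\U(\pi_{A'B'})=\pi_{AB}$, and in particular $\omega_B=\pi_B$. So $\omega$ is a two-qubit state with both marginals maximally mixed.

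\emph{Step 2: normal form.} We use the Pauli expansion. Because both marginals are maximally mixed, the local Bloch vectors vanish and
\[
\omega=\tfrac14\Bigl(\mathbbm{1}\otimes\mathbbm{1}+\textstyle\sum_{i,j=1}^3 T_{ij}\,\sigma_i\otimes\sigma_j\Bigr),
\]
with $T$ a real $3\times 3$ matrix. We take a singular value decomposition $T=O_1DO_2$ and absorb determinant signs into $D$, so that $O_1,O_2\in SO(3)$. Each $O_k$ is then implemented by conjugation with some $V,W\in SU(2)$. It follows that $\omega=(V\otimes W)\,\omega_{\rm Bd}\,(V\otimes W)^\dag$ with $\omega_{\rm Bd}$ having diagonal correlation matrix. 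A state of the form $\frac14(\mathbbm{1}+\sum_i d_i\sigma_i\otimes\sigma_i)$ is diagonal in the Bell basis, so
\[
\omega=(V\otimes W)\Bigl(\textstyle\sum_{k}p_k\,\beta_k\Bigr)(V\otimes W)^\dag,
\]
where $\{\beta_k\}$ are the Bell projectors and $p_k\ge0$. This reduction is the only nontrivial step, and it is the one I would write carefully. The key points are the sign bookkeeping in the SVD and the fact that $SU(2)\to SO(3)$ is surjective.

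\emph{Step 3: the square root.} Since $V\otimes W$ is unitary and the $\beta_k$ are orthogonal projectors,
\[
\sqrt{\omega}=(V\otimes W)\Bigl(\textstyle\sum_k\sqrt{p_k}\,\beta_k\Bigr)(V\otimes W)^\dag .
\]
Normalizing gives $\rho=(V\otimes W)\bigl(\sum_k q_k\beta_k\bigr)(V\otimes W)^\dag$ with $q_k=\sqrt{p_k}/\sum_j\sqrt{p_j}$. Every Bell-diagonal state has maximally mixed marginals, and local unitary conjugation preserves this, so $\rho_{R_B}=\pi$ and $\rho_B=\pi$.

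Concretely, $\rho$ is the Choi state of $\mathcal{W}\circ\mathcal{P}\circ\mathcal{V}^{T}$, where $\mathcal{P}$ is the Pauli channel with probabilities $q_k$. Here the transpose of $V$ accounts for moving a unitary from $R_B$ to $B'$ on $\Phi_{R_BB'}$. This map is a unital qubit channel, which completes the argument.
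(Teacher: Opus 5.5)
Your proposal is correct and follows the paper's proof essentially step for step. Both arguments show that $\Phi^\U_{R_BB}$ has maximally mixed marginals (from trace preservation and unitality), rotate this locally maximally mixed state by local unitaries into Bell-diagonal form, note that the square root stays Bell-diagonal, and use that local unitaries preserve maximally mixed marginals. The only differences are that you derive the Bell-diagonal normal form yourself, via the signed SVD of the correlation matrix and the surjection $SU(2)\to SO(3)$, where the paper simply cites it, and that you also write down the explicit channel $\mathcal{W}\circ\mathcal{P}\circ\mathcal{V}^{T}$.
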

\begin{proof}
Since $\Phi^{\U}$ is a Choi state of a unital CP map, we have $\Phi^{\U}_{R_AR_B}=\Phi^{\U}_{AB}=\frac{1}{2|A|}\mathbbm{1}_{2|A|}$. As a consequence, the state $\Phi^\U_{R_BB}\in \mathbb{C}^2\otimes \mathbb{C}^2$, is a Choi matrix of a unital qubit channel, since $\tr_{R_B}\Phi^\U_{R_BB}=\frac{1}{2}\mathbbm{1}_2$ and $\tr_{B}\Phi^\U_{R_BB}=\frac{1}{2}\mathbbm{1}_2$. In fact $\Phi^\U_{R_BB}\in \mathbb{C}^2\otimes \mathbb{C}^2$ is a Choi state of a unistochastic channel~\cite{MKZ13}. These conditions also imply that the state $\Phi^\U_{R_BB}$ can be written as a locally maximally mixed (LMM) state as follows
    \begin{align}
        \Phi^\U_{R_BB}=\dfrac{1}{4}\left(\mathbbm{1}_4+\sum_{i,j=1}^3R_{ij}\sigma_i\otimes \sigma_j\right).
    \end{align}
We know that a locally maximally mixed state on $\mathbbm{C}^4$ is equivalent to a Bell-diagonal state up to local unitary~\cite{Gam16,VDD01}. Consequently, we have 
\begin{align}
   \sigma_{R_BB}:=\U_{R_B}\otimes \U_B(\Phi^\U_{R_BB})=\dfrac{1}{4}(\mathbbm{1}_4+\sum_{i=1}^3S_{ii}\sigma_i\otimes \sigma_i).
\end{align}
Clearly, the square root $\sqrt{\sigma_{R_BB}}$ is also diagonal in the Bell basis. Since marginals of a Bell-diagonal operator are proportional to the Identity, we have 
\begin{align}
    \frac{1}{\tr\sqrt{\sigma_{R_BB}}}\tr_{R_B}\sqrt{\sigma_{R_BB}}&=\frac{1}{2}\mathbbm{1}_2,\\
\text{and},~~~   \frac{1}{\tr\sqrt{\sigma_{R_BB}}}\tr_{B}\sqrt{\sigma_{R_BB}}&=\frac{1}{2}\mathbbm{1}_2.
\end{align}
The Choi state condition for the state $\rho_{R_BB}=\frac{1}{\tr\sqrt{\Phi^\U_{R_BB}}}\sqrt{\Phi^\U_{R_BB}}$ follows since
\begin{align}
    \sqrt{\Phi^\U_{R_BB}}&=\U^\dagger_{R_B}\otimes \U^\dagger_B(\sqrt{\sigma_{R_BB}}),\\
    \text{which implies}~~~  \frac{1}{\tr\sqrt{\Phi^\U_{R_BB}}}\tr_{B}\sqrt{\Phi^\U_{R_BB}}&=\U^\dagger_{R_B}(\frac{1}{\tr\sqrt{\sigma_{R_BB}}}\tr_B\sqrt{\sigma_{R_BB}})\\
    &=\frac{1}{2}\mathbbm{1}_2.
\end{align}
Similarly, we can show that $\tr_{R_B}\rho_{R_BB}=\frac{1}{2}\mathbbm{1}_2$, proving the unitality of the qubit channel. This proves that the state $\rho_{R_BB}=\frac{1}{\tr\sqrt{\Phi^\U_{R_BB}}}\sqrt{\Phi^\U_{R_BB}}$ is a Choi state of a unital qubit map.
\end{proof}
Since a convex combination of LMM states is an LMM state, we have the following corollary of the above Lemma.
\begin{corollary}\label{corr:unitarycomb}
Given a channel $\n_{A'B'\to AB}$ that can be expressed as a convex combination of unitary channels $\mathcal{U}_{A'B'\to AB}$ such that $|B'|=|B|=2$, the state $\rho_{R_BB}:=\frac{1}{\tr\sqrt{\Phi^{\n}_{R_BB}}}\sqrt{\Phi^{\n}_{R_BB}}$ is the Choi state of a unital qubit channel.
\end{corollary}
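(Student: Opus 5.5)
The statement is \autoref{corr:unitarycomb}. Write $\n=\sum_k p_k\,\U_k$ with each $\U_k\in\Ch(A'B',AB)$ unitary and $|B'|=|B|=2$. The plan is to reproduce the argument of \autoref{lemma:unitaryroot}. That argument never uses purity of $\Phi^{\U}$. It only uses that $\Phi^{\U}_{R_BB}$ is a two-qubit state whose two marginals are maximally mixed, i.e.\ a locally maximally mixed (LMM) state.

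First, by linearity of the Choi map and of the partial trace, $\Phi^{\n}_{R_BB}=\sum_k p_k\,\Phi^{\U_k}_{R_BB}$. In the proof of \autoref{lemma:unitaryroot} we saw that each $\Phi^{\U_k}_{R_BB}$ satisfies $\tr_{R_B}\Phi^{\U_k}_{R_BB}=\tr_B\Phi^{\U_k}_{R_BB}=\tfrac12\mathbbm{1}_2$, using unitality of $\U_k$ and $|B'|=|B|$. The set of LMM states is convex, so $\Phi^{\n}_{R_BB}$ is LMM. Equivalently, it has the Pauli form $\tfrac14(\mathbbm{1}_4+\sum_{i,j}R_{ij}\sigma_i\otimes\sigma_j)$ with $R=\sum_k p_kR^{(k)}$. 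Note that $\n$ is itself unital and trace preserving, so the marginal property could also be read off directly.

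Second, we invoke the fact already used in \autoref{lemma:unitaryroot}, that any two-qubit LMM state is locally unitarily equivalent to a Bell-diagonal state~\cite{Gam16,VDD01}. Concretely, take a singular value decomposition of the correlation matrix $R$, adjusting signs so that the rotations lie in $SO(3)$. This gives local unitaries $V_{R_B}\otimes W_B$ with $(V\otimes W)\Phi^{\n}_{R_BB}(V\otimes W)^\dagger=\sigma$, where $\sigma$ is Bell-diagonal. The square root of a Bell-diagonal operator is again Bell-diagonal, and every Bell-diagonal operator has both marginals proportional to $\mathbbm{1}_2$. Since $\sqrt{\cdot}$ commutes with unitary conjugation, $\sqrt{\Phi^{\n}_{R_BB}}=(V\otimes W)^\dagger\sqrt{\sigma}(V\otimes W)$. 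Hence its marginals are $V^\dagger(c\mathbbm{1})V=c\mathbbm{1}$ and similarly on $B$. After normalizing by $\tr\sqrt{\Phi^{\n}_{R_BB}}$, both marginals of $\rho_{R_BB}$ equal $\tfrac12\mathbbm{1}_2$.

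Finally, $\rho_{R_BB}\ge0$, and the condition $\tr_B\rho=\tfrac12\mathbbm{1}_{R_B}$ means that $\rho$ is the Choi state of a CPTP qubit map. The condition $\tr_{R_B}\rho=\tfrac12\mathbbm{1}_B$ means that this map is unital. The only point needing care is the local-unitary reduction to Bell-diagonal form, including the sign fixing in the SVD so that the rotations come from genuine $SU(2)$ unitaries; this is standard and is cited exactly as in \autoref{lemma:unitaryroot}, so I do not expect a real obstacle.
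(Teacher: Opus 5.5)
Your proposal is correct and follows essentially the same route as the paper: the paper deduces the corollary from the convexity of the set of locally maximally mixed two-qubit states and then reuses the argument of \autoref{lemma:unitaryroot}, namely local-unitary reduction to Bell-diagonal form, Bell-diagonality of the square root, and maximally mixed marginals after normalization. Your side remark is also right: the argument only needs $\Phi^{\n}_{R_BB}$ to be locally maximally mixed, and this holds for any unital bipartite channel with $|B'|=|B|=2$, not only for mixtures of unitaries.
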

\begin{lemma}\label{lemma:convex_purestate}~\cite[Lemma 10]{BGD25} Given a state with spectral decomposition $\rho=\sum_ip_i\op{\psi_i}{\psi_i}$, the max-relative entropy $ D_{\infty}(\rho\Vert\sigma)$ where $\sigma\ge 0$, is bounded as follows
    \begin{align}
        \log(\max_ip_i\bra{\psi_i}\sigma^{-1}\ket{\psi_i})\le D_{\infty}(\rho\Vert\sigma)\le\log(\sum_ip_i\bra{\psi_i}\sigma^{-1}\ket{\psi_i}).
    \end{align}
\end{lemma}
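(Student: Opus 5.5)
The plan is to reduce everything to the operator-norm formula for the max-relative entropy. For a positive operator $\rho$ with $\supp(\rho)\subseteq\supp(\sigma)$,
\begin{equation}
D_{\infty}(\rho\Vert\sigma)=\log\norm{\sigma^{-1/2}\rho\,\sigma^{-1/2}}_{\infty},
\end{equation}
where $\sigma^{-1}$ denotes the inverse on the support of $\sigma$. I would start by handling the support issue separately.

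If some eigenvector $\ket{\psi_i}$ with $p_i>0$ is not contained in $\supp(\sigma)$, then $\supp(\rho)\not\subseteq\supp(\sigma)$ and $D_\infty(\rho\Vert\sigma)=+\infty$. In that case I would interpret $\bra{\psi_i}\sigma^{-1}\ket{\psi_i}=+\infty$, consistent with the convention already used in \autoref{lem:iso-genH}. Then both the lower and upper bounds are $+\infty$ and the statement holds trivially. From here on I assume every $\ket{\psi_i}$ with $p_i>0$ lies in $\supp(\sigma)$.

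The key single-term computation is the pure-state formula already used in the proof of \autoref{lem:iso-genH}. For a unit vector $\ket{\psi}\in\supp(\sigma)$ and $p\geq 0$, the operator $\sigma^{-1/2}(p\op{\psi})\sigma^{-1/2}$ has rank one. Its only nonzero eigenvalue is $p\norm{\sigma^{-1/2}\ket{\psi}}^2=p\bra{\psi}\sigma^{-1}\ket{\psi}$, so
\begin{equation}
D_\infty(p\op{\psi}\Vert\sigma)=\log\bigl(p\bra{\psi}\sigma^{-1}\ket{\psi}\bigr).
\end{equation}

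For the lower bound, I use that $\rho\geq p_i\op{\psi_i}$ for every $i$, since the remaining spectral terms are positive. The map $X\mapsto\sigma^{-1/2}X\sigma^{-1/2}$ preserves the operator order, and $\norm{\cdot}_\infty$ is monotone on positive operators. Hence $D_\infty(\rho\Vert\sigma)\geq\log\bigl(p_i\bra{\psi_i}\sigma^{-1}\ket{\psi_i}\bigr)$ for each $i$, and maximizing over $i$ gives the left inequality.

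For the upper bound, I write $\sigma^{-1/2}\rho\,\sigma^{-1/2}=\sum_i p_i\,\sigma^{-1/2}\op{\psi_i}\sigma^{-1/2}$, a sum of rank-one positive operators. The triangle inequality for the operator norm then bounds its largest eigenvalue by $\sum_i p_i\bra{\psi_i}\sigma^{-1}\ket{\psi_i}$. Equivalently, each term satisfies $p_i\op{\psi_i}\leq p_i\bra{\psi_i}\sigma^{-1}\ket{\psi_i}\,\sigma$, and summing shows that $\lambda=\sum_ip_i\bra{\psi_i}\sigma^{-1}\ket{\psi_i}$ is feasible in the definition of $D_{\max}$. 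Taking logarithms gives the right inequality.

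I do not expect a genuine obstacle here. The only delicate point is the bookkeeping for $\sigma$ not of full rank: one must check that $\sigma^{-1/2}\ket{\psi_i}$ is taken within $\supp(\sigma)$, and that the rank-one operator inequality $\op{\psi}\leq\bra{\psi}\sigma^{-1}\ket{\psi}\,\sigma$ holds on that support. The latter follows from Cauchy–Schwarz applied to $\langle\phi|\psi\rangle=\langle\sigma^{1/2}\phi|\sigma^{-1/2}\psi\rangle$.
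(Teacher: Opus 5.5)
Your proof is correct. The paper gives no proof of this lemma; it is imported from \cite{BGD25} (Lemma 10), so there is no in-paper argument to compare against. Your argument is the standard self-contained one: the operator-norm form of $D_{\max}$, the rank-one eigenvalue $p\bra{\psi}\sigma^{-1}\ket{\psi}$, order-monotonicity for the lower bound, and the triangle inequality for the upper bound.

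Two small remarks. First, your convention $\bra{\psi_i}\sigma^{-1}\ket{\psi_i}=+\infty$ for $\ket{\psi_i}\notin\supp(\sigma)$ with $p_i>0$ is more than bookkeeping. With the Moore--Penrose inverse the right-hand inequality is false: for $\sigma=\op{0}$ and $\rho=\op{1}$ one has $D_\infty(\rho\Vert\sigma)=+\infty$ but $\log\bra{1}\sigma^{-1}\ket{1}=\log 0$. That convention should therefore be stated as part of the lemma.

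Second, your argument never uses orthogonality of the $\ket{\psi_i}$. Both $\rho\geq p_i\op{\psi_i}$ and the triangle inequality hold for any pure-state ensemble decomposition of $\rho$, so the bounds hold for every such decomposition, not only the spectral one. The upper bound is also just $\lambda_{\max}\leq\tr$ applied to the positive operator $\sigma^{-1/2}\rho\,\sigma^{-1/2}$, since $\sum_ip_i\bra{\psi_i}\sigma^{-1}\ket{\psi_i}=\tr(\sigma^{-1}\rho)$.
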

\begin{corollary}\label{corr:pure-optimal}
The max-relative entropy $D_{\infty}(\rho\Vert\sigma)$ for a fixed operator $\sigma\ge 0$ maximizes when the state $\rho$ is a pure state. That is,
    \begin{align}
        \max_\rho D_{\infty}(\rho\Vert\sigma)=\max_{\op{\psi}} D_{\infty}(\op{\psi}\Vert\sigma),
    \end{align}
    where $\op{\psi}$ are pure states.
\end{corollary}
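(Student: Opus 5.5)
The statement to prove is \autoref{corr:pure-optimal}: for a fixed $\sigma\ge 0$, the maximum of $D_{\infty}(\rho\Vert\sigma)$ over states $\rho$ is attained on pure states. I would derive it from \autoref{lemma:convex_purestate} together with quasi-convexity of $D_\infty$ in its first argument.

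\emph{Inequality ``$\ge$''.} This direction is immediate, since pure states form a subset of all states. Hence $\max_\rho D_{\infty}(\rho\Vert\sigma)\ge \max_{\op{\psi}}D_{\infty}(\op{\psi}\Vert\sigma)$.

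\emph{Inequality ``$\le$''.} Fix $\rho$ with spectral decomposition $\rho=\sum_i p_i\op{\psi_i}$.
\begin{itemize}
\item If $\supp(\rho)\not\subseteq\supp(\sigma)$, then some eigenvector $\ket{\psi_i}$ with $p_i>0$ lies outside $\supp(\sigma)$. Then $D_\infty(\op{\psi_i}\Vert\sigma)=+\infty$ as well, and both sides are $+\infty$.
\item Otherwise, I would not lean on the upper bound of \autoref{lemma:convex_purestate}, since $\sum_i p_i\bra{\psi_i}\sigma^{-1}\ket{\psi_i}$ is an average of pure-state values, not of their exponentials of $D_\infty$. Instead I would argue directly from the definition $D_\infty(\rho\Vert\sigma)=\log\inf\{\lambda:\rho\le\lambda\sigma\}$.
\end{itemize}

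In the second case, let $\lambda_i:=2^{D_\infty(\op{\psi_i}\Vert\sigma)}$ and $\lambda^\ast:=\max_i\lambda_i$. Then $\op{\psi_i}\le\lambda_i\sigma\le\lambda^\ast\sigma$ for every $i$. Summing with weights $p_i$ gives $\rho\le\lambda^\ast\sigma$, so
\[
D_\infty(\rho\Vert\sigma)\le\max_i D_\infty(\op{\psi_i}\Vert\sigma)\le\max_{\op{\psi}}D_\infty(\op{\psi}\Vert\sigma).
\]
This is the quasi-convexity step. If one instead prefers the lemma's form, one notes that for pure states $D_\infty(\op{\psi}\Vert\sigma)=\log\bra{\psi}\sigma^{-1}\ket{\psi}$. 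Then the chain
\[
\sum_ip_i\bra{\psi_i}\sigma^{-1}\ket{\psi_i}\le\max_i\bra{\psi_i}\sigma^{-1}\ket{\psi_i}
\]
gives the same conclusion via the upper bound of \autoref{lemma:convex_purestate}.

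Combining both directions yields equality. The maximum over pure states is attained because the unit sphere is compact and $\psi\mapsto\bra{\psi}\sigma^{-1}\ket{\psi}$ is continuous on $\supp(\sigma)$; if some pure state falls outside $\supp(\sigma)$, both sides are $+\infty$.

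The only delicate point is the support and invertibility bookkeeping for singular $\sigma$, where $\sigma^{-1}$ must be read as the inverse on its support. Everything else is a one-line operator inequality.
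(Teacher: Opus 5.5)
Your proof is correct, but your main argument takes a different route from the paper's.

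For the ``$\le$'' direction, the paper uses the upper bound of \autoref{lemma:convex_purestate}, namely $D_\infty(\rho\Vert\sigma)\le\log\tr(\sigma^{-1}\rho)$. It then notes that this functional is linear in $\rho$, so over states it is maximized at a pure state: an eigenvector of $\sigma^{-1}$ with the largest eigenvalue. At that pure state the bound is tight, because $D_\infty(\op{\psi}\Vert\sigma)=\log\bra{\psi}\sigma^{-1}\ket{\psi}$. Your secondary remark is essentially this route.

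Your primary argument proves quasi-convexity directly from the operator inequality: $\op{\psi_i}\le\lambda^\ast\sigma$ for all $i$ implies $\rho\le\lambda^\ast\sigma$. This has three advantages:
\begin{itemize}
\item it needs no auxiliary lemma;
\item it handles singular $\sigma$ explicitly, whereas the paper tacitly treats $\sigma$ as invertible;
\item it gives a sharper pointwise statement, $D_\infty(\rho\Vert\sigma)\le\max_{i:p_i>0}D_\infty(\op{\psi_i}\Vert\sigma)$, for any pure-state decomposition of $\rho$.
\end{itemize}
In exchange, the paper's route identifies the maximizer and the optimal value $\log\norm{\sigma^{-1}}_\infty$ explicitly.

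Two small fixes. First, your stated reason for avoiding the lemma is mistaken. Since $\bra{\psi_i}\sigma^{-1}\ket{\psi_i}=2^{D_\infty(\op{\psi_i}\Vert\sigma)}$, the sum in the lemma \emph{is} an average of exponentiated pure-state values, so there was no obstacle to using it, as your own final remark shows. Second, take $\lambda^\ast$ only over indices with $p_i>0$. Otherwise an eigenvector with zero weight lying outside $\supp(\sigma)$ makes $\lambda^\ast=+\infty$.
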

\begin{proof}
Since pure states are a subset of the set of states, we naturally have $\max_\rho D_{\infty}(\rho\Vert\sigma)\ge\max_\psi D_{\infty}(\psi\Vert\sigma).$ To show the converse relation, let us have a Hermitian operator $\Lambda$ with spectral decomposition $
\Lambda = \sum_i \lambda_i \, |\lambda_i\rangle\langle \lambda_i| $, and $\lambda_{\max} := \max_i \lambda_i $.  Then, for every density operator $\rho$, $\operatorname{tr}(\Lambda \rho) \le \tr(\Lambda\op{\lambda_{\max}})$. As a consequence, since $\sigma^{-1}$ is a Hermitian operator, we have the following relation
\begin{align}
    \max_{\{p_i,\psi_i\}}\sum_ip_i\bra{\psi_i}\sigma^{-1}\ket{\psi_i}=\max_{\varphi}\bra{\varphi}\sigma^{-1}\ket{\varphi} 
\end{align}
where $\{p_i\}_i$ forms a probability distribution and $\{\psi_i\}_i$ forms an orthonormal basis. From \autoref{lemma:convex_purestate} we see that the upper and lower bounds for $D_{\infty}(\rho\Vert\sigma)$ coincide when $\rho$ is a pure state.
    \begin{align}
        \max_{\rho}D_{\infty}(\rho\Vert\sigma)
        &\le \max_{\{p_i,\psi_i\}}\log\sum_ip_i\bra{\psi_i}\sigma^{-1}\ket{\psi_i}\\
        &=\max_{\varphi}\log\bra{\varphi}\sigma^{-1}\ket{\varphi} \\
        &=\max_{\op{\psi}}D_{\infty}(\op{\psi}\Vert\sigma).
    \end{align}
\end{proof}
\section{The conditional Gibbs-preserving superchannels}\label{app:freesup}
Since the action of any superchannel $\Theta$ on a channel $\n$ can be written as a pre-processing and post-processing of $\n$, the conditional Gibbs-preserving superchannel $\Theta^\beta$ can be characterized by any superchannel where the post-processing is conditional Gibbs covariant as we see below,
\begin{align}
    \Theta^\beta(\mathcal{T}^\beta_{A'\to A}\otimes\mathcal{Q}_{B'\to B})&= \mathcal{T}^\beta_{C'\to C}\otimes\mathcal{Q}_{D'\to D}\\
    \text{or},~~~ \mathcal{A}_{AB\to CD}\circ \big(\mathcal{T}^\beta_{A'\to A}\otimes\mathcal{Q}_{B'\to B}\big)\circ \mathcal{B}_{C'D'\to A'B'}&= \mathcal{T}^\beta_{C'\to C}\otimes\mathcal{Q}_{D'\to D}\\
    \text{equivalently,}~~~ \mathcal{A}_{AB\to CD}\circ \big(\mathcal{T}^\beta_{C''\to A}\otimes\mathcal{Q}_{D''\to B}\big)&= \mathcal{T}^\beta_{C'\to C}\otimes\mathcal{Q}_{D'\to D}
\end{align}
such that $C''D''\cong C'D'$. Action of channel $\mathcal{A}_{AB\to CD}$ is therefore governed by the constraint
\begin{align}
    \mathcal{A}_{AB\to CD}(\gamma_A^\beta\otimes \rho_B)=\gamma^\beta_C\otimes\sigma_D.
\end{align}
for arbitrary states $\rho_B$ and $\sigma_D$. These are exactly the free operations in the resource theory of conditional athermality as defined in~\cite[Proposition 3]{JGW25}. Such channel are no-signaling from $A$ to $D$ and are conditionally covariant with respect to the absolutely thermal channel: $\mathcal{T}^\beta_{C\to C}\circ\mathcal{A}_{AB\to CD}=\mathcal{A}_{AB\to CD}\circ\mathcal{T}^\beta_{A\to A}$.
\end{appendix}
\bibliography{output}{}
\end{document}